\newtheorem{conj}{Conjecture}
\newtheorem{thm}[conj]{Theorem}
\newtheorem{cor}[conj]{Corollary}
\newtheorem{prop}[conj]{Proposition}
\newtheorem{lemma}[conj]{Lemma}
\newtheorem{ass}{Assumption}
\newtheorem{fact}{Fact}
\providecommand{\customgenericname}{}
\newcommand{\newcustomtheorem}[2]{%
	\newenvironment{#1}[1]
	{%
		\renewcommand\customgenericname{#2}%
		\renewcommand\theinnercustomgeneric{##1}%
		\innercustomgeneric
	}
	{\endinnercustomgeneric}
}
\theoremstyle{definition}\newtheorem{remark}{Remark}
\def\sumsum{\mathop{\sum\sum}}
\def\Cov{\text{Cov}}   
\def\PP{\mathbb{P}}
\def\EE{\mathbb{E}}
\def\E{\mathcal{E}}
\def\X{\mathbf{X}}
\def\Z{\mathbf{Z}}
\def\W{\mathbf{W}}
\def\R{\mathbb{R}}
\def\y{\mathbf{y}}
\def\wh{\widehat}
\def\eps{\varepsilon}
\def\I{\mathcal{I}}
\def\w{\gamma_w}
\def\z{\gamma_z}
\def\H{\mathcal{H}}
\def\t{\tau}
\def\i{\infty}
\def\bI{\bf{I}}
\def\e{{\bf{e}}}
\def\beps{\mathbf{\varepsilon}}
\def\D{\Delta}
\def\wt{\widetilde}
\let\emptyset\varnothing
\def\oW{\overline{\mathbf{W}}}
\def\oX{\overline{\mathbf{X}}}
\def\S{\mathcal{S}}
\def\C{\Sigma_Z}
\def\whC{\wh \Sigma_Z}
\def\cu{C_{\max}}
\def\cl{C_{\min}}
\def\0{{\ell_0}}
\def\1{\bm{1}}
\def\pn{(p\vee n)}
\def\bt{\overline{\tau}^2}
\def\N{\mathcal{N}}
\def\sz{\Sigma_Z}
\newcommand{\sbt}{\,\begin{picture}(-1,1)(-0.5,-2)\circle*{2.3}\end{picture}\ }
\begin{document}
	
	\title{Inference in latent factor regression with clusterable features}
	
	\author{Xin Bing\thanks{Department of Statistics and Data  Science, Cornell University, Ithaca, NY. E-mail: \texttt{xb43@cornell.edu}.}~~~~~Florentina Bunea\thanks{Department of Statistics and Data Science, Cornell University, Ithaca, NY. E-mail: \texttt{fb238@cornell.edu}.}~~~~~Marten Wegkamp\thanks{Department of Mathematics and Department of Statistics and Data Science, Cornell University, Ithaca, NY. E-mail: \texttt{mhw73@cornell.edu}. }~~~~~
	}
	\date{}
	\maketitle
	\vspace{-0.5in}

	\begin{abstract}
		
Regression models,  in which
the observed  features $X \in \R^p$ and the response  $Y \in \R$  depend, jointly, on a lower dimensional, unobserved, latent vector $Z \in \R^K$, with $K\ll p$, are popular in a large array of applications, and mainly used for predicting a response from correlated features.  In contrast, methodology and theory for inference on the  regression coefficient $\beta\in \R^K$ relating $Y$ to $Z$ are  scarce, since typically   the un-observable factor $Z$ 
is hard to interpret. Furthermore, the determination of the asymptotic variance of an  estimator of $\beta$ is a long-standing problem, with solutions known only in  a few particular cases. 

To address some of these outstanding questions, we develop inferential tools for $\beta$ in a class of factor regression models in which the observed features are signed mixtures of the latent factors. The model specifications are  practically desirable, in a large array of applications, render interpretability to the components of $Z$,  and are sufficient for parameter identifiability. 

Without assuming that the number of latent factors $K$ or the structure of the mixture is known in advance, we construct computationally efficient estimators of $\beta$, along with estimators of other important model parameters. We benchmark the rate of convergence of $\beta$ by first establishing its $\ell_2$-norm  minimax lower bound, and show that our proposed estimator $\wh \beta$ is minimax-rate adaptive. Our main contribution is the provision of a unified analysis of the component-wise Gaussian asymptotic distribution of $\wh \beta$ and, especially, the derivation of a  closed form expression of its asymptotic variance, together with consistent variance estimators.  The resulting inferential tools can be used when both $K$ and $p$ are independent of the sample size $n$, and also when both, or either, $p$ and $K$ vary with $n$, while allowing for $p > n$. This complements the only asymptotic normality results obtained for a particular case of the model under consideration, in the regime $K = O(1)$ and $p \rightarrow\infty$, but without a variance estimate. 

		As an application, we provide, within our model specifications, a statistical platform for inference in regression on latent cluster centers, thereby increasing the scope of our theoretical results. 
		
		We benchmark the   newly developed  methodology on a recently collected data  set for  the study of the effectiveness of a new SIV vaccine. Our analysis enables the determination of the top latent  antibody-centric mechanisms associated with the vaccine response.

		
		
	\end{abstract}

	\noindent {\bf Keywords:}  { \small High dimensional regression, latent factor model, identification, uniform inference, minimax estimation, pure variables, post clustering inference/regression, adaptive estimation}

	\section{Introduction}
	
	Latent factor models have been used successfully for several decades for modeling data with embedded low dimensional structures. In particular, they provide a natural framework for regression problems in which the covariate vector $X \in \R^p$ and the response $Y \in \R$ are {\it jointly} low dimensional. In a  latent factor regression model, this is formalized by assuming that there exists a random vector $Z \in \R^K$, for some {\em unknown} $K < p$, that is connected to
	the observed pair $(X,Y)\in \R^p\times \R$ via the  model
	\begin{align}\label{model}
	Y& = Z^\top \beta + \eps \\
	X &= AZ + W. \label{love}
	\end{align}
	The dimension $K$,  matrix $A\in\R^{p\times K}$ and vector $\beta\in \R^K$ are unknown.
	The random vectors 
	$Z$ and $W$ and random variable
	$\eps$ are independent,
	with zero means, $\EE[Z]={\bf 0}$, $\EE[W]={\bf 0}$ and $\EE[\eps]=0$, and covariance matrices $\C:=\Cov{(Z)}$ and $\Gamma:=\Cov{(W)}$, and variance $\sigma^2:= \EE[\eps^2]$, respectively.

	Factor regression models, and their many variants \citep{Connor-Koraj,Forni1996,Forni2000,SW2002_JASA,SW2002_JB,Bai-Ng-CI,Bair_JASA, BoivinNg,Blei:2007,Bai-Ng-forecast, Partial_Factor_Modeling, Kelly-2015,rei2016nonasymptotic,fan2017,bunea2020interpolation}
	have been introduced  to motivate and analyze   prediction  schemes for  $Y\in \R$ from $X \in \R^p$, when $p$ is very large and the components of $X$ are highly correlated. Parameter identifiability is not required for prediction purposes, as unique predictors can still be constructed when the assignment matrix  $A$ and  the covariance matrix $\C$ of $Z$  are identifiable only up to orthogonal transformations. 
	
	Substantially   less work has been devoted to inference in factor models, the  problem treated in this work. Classical factor analysis for an observable  vector $U \in \R^d$ postulates the existence of factors $Z \in \R^K$ such that $U = BZ + E$, for some $d \times K$ factor loading matrix.  Factor regression models are  an instance of factor models, where one emphasizes the different roles, response and covariates, respectively,  of the  observable 
	variables $U = (X, Y)$, and $B$ consists in the matrix $A$ augmented by the vector $\beta$. 
	
	This paper proposes and analyses computationally efficient estimators for inference on the regression coefficient $\beta$ in identifiable 
	and interpretable factor regression models, an under-explored problem. 
	
	\subsection{A framework for regression on interpretable latent factors}\label{framew}
	
	
	We begin by summarizing the  model parameters, the nature of the data, as well as the relation between parameter dimensions and sample size. Throughout this work we assume that we have access to an i.i.d. sample $(X_1, Y_1),\dots,(X_n,Y_n)$ of $(X, Y) \in {\R}^p \times \R$, and that $(X, Y)$  have mean zero and satisfy (\ref{model}) and (\ref{love}).
	
	We allow for $p > n$, while  $K < p$.
	In this work, we consider the case of non-sparse $\beta$, and { $K<\sqrt{n}$}, but  allow $K$ to grow with the sample size $n$. The complementary cases of { $K>\sqrt{n}$} and $\beta$ sparse will be studied in a follow-up work.

	
	Our central interest is on valid inference for $\beta$, which  first requires establishing its identifiability. Restrictions on  generic factor models of the type $U = BZ + E$ under which the model parameters are identifiable can be traced back to \cite{lawley_1940}.  A very detailed exposition of possible identifiability restrictions was first collected in the seminal work of \cite{anderson1956}. They have been  revisited in several works, for instance, \cite{lawyley1971,bollen1989,yalcin2001,bai2012}. Of those restrictions on $B$, some are of purely  mathematical convenience \citep[Sections 5 and 6]{anderson1956}, whereas, as considered in this work,  others are practically  interpretable, \cite{anderson1956,anderson1988}.

	
	We focus on a class of identifiable factor regression models in which the observed covariates $X$ are signed mixtures of the latent vector's components $Z_k$, $1 \leq k \leq K$,   with  unknown $K$. 
	A  latent $Z_k$ can be interpreted as the representative of one of the mixtures. Inference on $\beta$ is thus inference for the mixture representatives.  The nature of a representative $Z_k$ is the nature of those few observed $X_j$'s that are connected only to that $Z_k$, justifying their name, pure variables, indexed by $I_k$, with their totality  indexed by $I := \cup_{k = 1}^{K}I_k \subseteq \{1, \ldots, p\}$. In Section \ref{sec_model_assump} we formalize this model class, and show that it is identifiable.

	
	
	Versions of this factor model class are  routinely used in educational and psychological testing, where the latent variables are viewed as aptitudes or psychological states \citep{thurstone,anderson1956,bollen1989}. The $X$-variables are test results, with some tests  
	specifically designed to measure only one single aptitude $Z_k$, for each aptitude, whereas others test mixtures of aptitudes. By experimental design, $I$ and $K$ {\it are  known} in  this classical literature. 
	
	Another  important  application of this factor regression model, in which both $I$ and $K$ {\it are unknown}, is to the analysis of biological data sets with hidden signatures.  The data sets that we discuss in Section \ref{sec_real_data} have, by design, what we termed pure variables. Furthermore, 
	because of the inherent biological redundancy built into the multi-omic screens that generate the  components of $X$, one expects at least two of the  $X$-variables effectively measuring the same biological signature, and only that one.  For instance, there can be two  or more paralogous genes with one very specific function, or  two  or more different subsets of immune cells carrying out the same specific niche immunological function. Such signatures are known to exist, but cannot be measured directly, and correspond to the components of the latent vector $Z$. Whereas some of these functions are known, it is one of the purposes of  the analysis to discover new ones, as well as new $X$-variables solely associated with them. Thus,  neither $I$ nor $K$ can be treated as known, nor can $K$ be treated as fixed, since the  number of functions $K$ can grow as  $p$ grows, which can in turn grow with $n$.

	Our first contribution is to propose this flexible and, in many applications, more realistic, framework  for estimation and inference on $\beta$ in factor models with pure variables, in which the index set $I$ and the number of factors $K$ are  not known and have to be estimated from the data. All the results of this paper are derived in this context, which is the first  point of departure from previous results for inference in factor models derived for models with $I$ and $K$ known, for instance in  \cite{anderson1956,anderson1988,yalcin2001,bai2012}. 
	
	To emphasize the specific usage of factor models for regression and inference on mixture representatives, under the model specifications formally given in Section \ref{sec_model_assump}, we refer to it as {\sl Essential Regression}.

	Figure \ref{ER} below gives an instance of Essential Regression. 
	A response $Y$ depends on three  latent factors $(Z_1, Z_2, Z_3)$, which in turn  are connected to $(X_1,\ldots,X_{10})$.  The measured variables $X_1$ and $X_2$ 
	have only (100\%)  function $Z_1$. The  $\pm$ 1  edge weights indicate that  this function activates $X_1$ and inhibits $X_2$. Variable $X_3$ has mixed functions, $50\%$ is devoted to function $Z_1$, and the sign indicates that $Z_1$ is an inhibitor, while  $30\%$ is devoted to function $Z_2$, an activator. The fact that the weights, in absolute value,  do not sum up to 1 increases the model flexibility, by allowing  free association between $X_3$ and  other functions that are not explained by this model.


	\begin{figure}[ht]
		\centering
		\includegraphics[width=.8\textwidth]{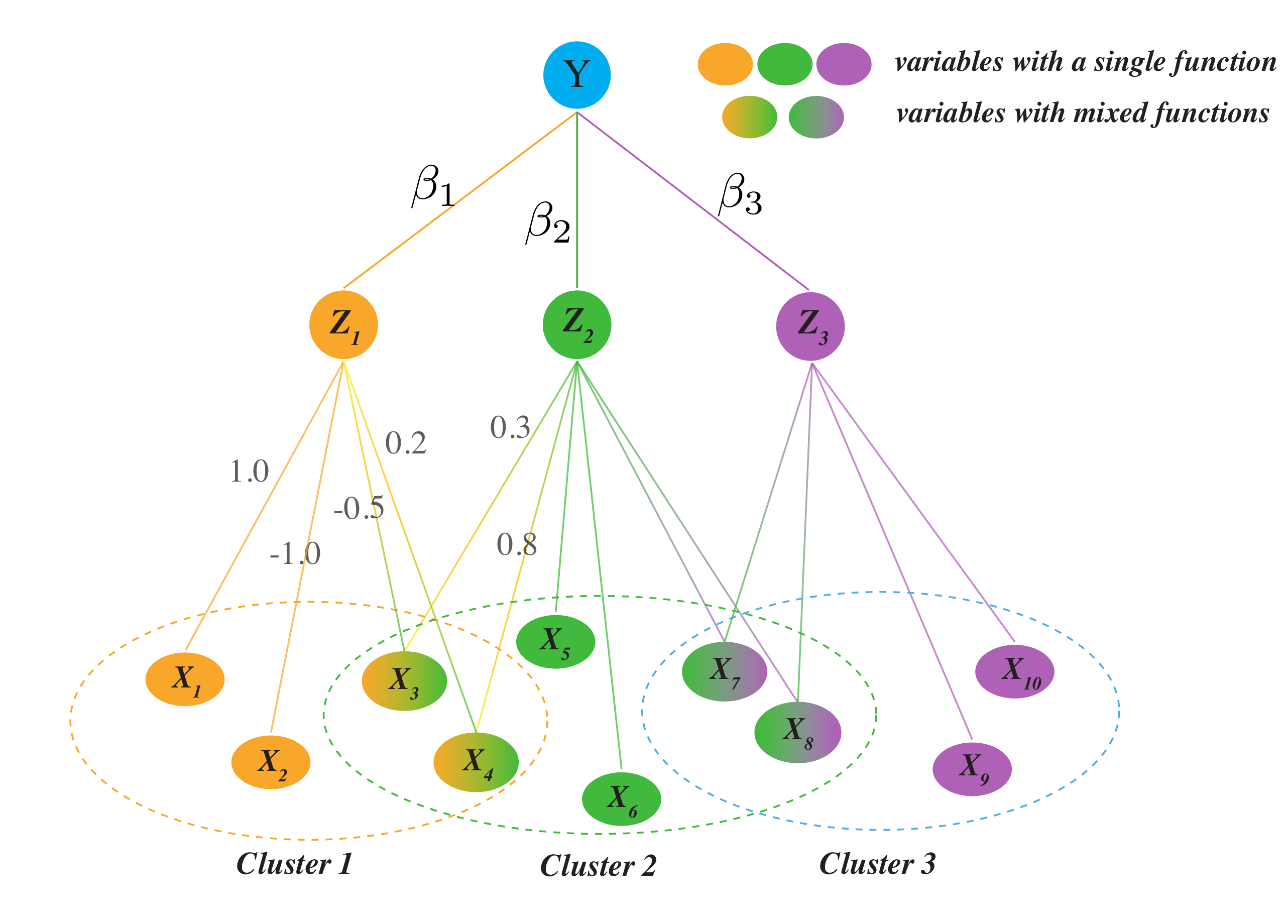}
		\caption{An illustrative example of Essential Regression}
		\label{ER}
	\end{figure}

	A similar, data-driven,  figure is presented in Section \ref{sec_real_data}, in which we   show that the  Essential Regression model fits the data collected in a new SIV-study (SIV is the non-human primate equivalent of HIV), and offers insights into immunological signatures   driving the vaccine response. This example illustrates a scientifically-desirable way of modeling a response $Y$ directly at the function ($Z$) level,  when the  observed $X$-variables  have either single or mixed functions.

	\subsection{Our contributions }

	\paragraph{Minimax adaptive estimation of $\beta$.} 	In Section \ref{sec_est_beta}, we  construct a computationally tractable estimator of $\beta$.
	As part of the estimation procedure, the following unknown quantities are also estimated from the data, under the Essential Regression model: $A$, $I$, $K$, $\C:= \Cov(Z)$, $\sigma^2 := \EE[\eps^2]$ and $\Gamma := \Cov(W)$. To benchmark the quality of our estimator we derive first, in  Theorem \ref{thm_beta_lower} of Section \ref{sec_beta_lower}  the minimax optimal rate of estimating $\beta$ in 
	$\ell_2$-norm in an Essential Regression model. We show in Theorem \ref{thm_beta} of Section \ref{sec_consistency_beta} that the proposed estimator $\wh \beta$ is minimax  rate adaptive, up to logarithmic factors in         $n$ and $p$.
	Our result uses the fact that  only the estimation of the $|I|\times K$ sub-matrix $A_{I \sbt}$,
	instead of the entire $p \times K$ matrix $A$, is 
	involved in the construction of $\wh \beta$. In Section \ref{sec_other_estimates} we 
	introduce and discuss  various competitors, including an estimator of $\beta$ that utilizes estimators of  the full matrix $A$. These estimators are natural consequences of rewriting the identity for $\beta=\C^{-1} \text{Cov}(Z,Y)$, see (\ref{def_beta}), (\ref{iden_beta_full}), (\ref{iden_beta_I_full}),
	(\ref{iden_beta_mm}) and (\ref{iden_beta_I}).
	We give insight into why these estimators are less efficient than the   estimator proposed and studied, and confirm this in our simulation study in Section \ref{sec_sims}.
	
	
	To estimate $I$ and $K$  we use the method proposed in  \cite{LOVE}, as it  guarantees that we can consistently estimate $K$,  without imposing any restrictions on our target for inference, $\beta$. Furthermore, this method also guarantees that $ I \subseteq \wh I \subseteq I \cup J_1$, where $J_1$ is an index  set of what we term quasi-pure variables, defined formally in Section \ref{sec_ass}. As the name suggests, a quasi-pure variable is a measured $X$-variable that is very strongly  associated with only one $Z_k$, while having very small, but non-zero, association with other latent factors. A signal strength assumption on the entries of $A$ would render $J_1 = \emptyset$, which would simplify the analysis of $\wh \beta$ considerably. 
	
	To maintain a flexible modeling framework, the proofs of all our results, rate optimality and asymptotic distribution, allow for the presence of quasi-pure variables, $J_1 \neq \emptyset$, while controlling their relative number via Assumptions \ref{ass_J1} and \ref{ass_J1_prime}. The price to pay for considering a more  realistic scenario  is an increase in the  technical difficulty of the proofs of Theorems \ref{thm_beta} and \ref{thm_distr} and Proposition \ref{prop_est_V}, for instance in Lemmas \ref{lem_quad}, \ref{lem_H_op}, \ref{lem_Rem_2}, \ref{lem_Z_wt_W_wt}, \ref{lem_wt_WW_D}, \ref{lem_1}.


	\paragraph{Inference for $\beta$:  component-wise limiting distribution, asymptotic variance and its estimates.} Within various classes of identifiable factor models, and in the classical set-up $K$ and $p$ fixed, \cite{anderson1956, anderson1988}
	proposed  MLE-based estimators of the rows of identifiable loading factors $B$, in a generic factor model $U = BZ + E$.  They  pointed out that the asymptotic covariance matrix of the Gaussian limit of their estimators has a very involved expression, and left its derivation open. 
	
	In the regime $K$ fixed  and $p \to \infty$, \cite{bai2012} offered a solution to this problem, two decades later.  They derived the asymptotic distribution, including the expression of the  limiting covariance, of  MLE-inspired estimators of the rows of $B$, under various identifiability restrictions on $B$, including a version of the conditions given in Section \ref{sec_inference_beta}  below, corresponding to $I$ and $K$ known. Their proof uses a linearization argument, and requires $p \rightarrow \infty$ to establish that the corresponding  remainder term converges in probability to zero. The practical implementation of the estimator involves  an EM-type algorithm that is very sensitive to initialization, and becomes problematic in high dimensions.  The estimation of the limiting covariance is not considered in their work. 
	
	Computationally feasible estimators of the rows of $B$, and in particular of the entries of $\beta$,  with closed form, estimable, asymptotic  variances continue to be lacking in the classical regime $K, p$ fixed, and also when both dimensions are allowed to grow. Furthermore, no results of this type have been established when $K$ and $I$ are unknown. 
	
	As our main contribution, we address these open questions in this work, via a unifying analysis, by studying the component-wise distribution of estimators of $\beta_k$, for $1 \leq k \leq K$
	
	Theorem \ref{thm_distr} of Section \ref{sec_inference_beta} shows that the  computationally tractable estimator proposed in Section \ref{sec_est_beta} is asymptotically normal, with consistently estimable variance, under all scenarios of interest. A consistent  estimator of this variance is given in Section \ref{sec_inference_beta} and its consistency is proved in Proposition \ref{prop_est_V}.  Table \ref{tab_infer} below offers  a snap shot of our asymptotic normality results, relative to  existing results.

	\begin{table}[H]
		\centering
		{\renewcommand{\arraystretch}{2}{
				\resizebox{\textwidth}{!}{
					\begin{tabular}{|c!{\vrule width 1.5pt}c|c|c!{\vrule width 1.5pt}c|c|c|}
						\hline
						& \multicolumn{3}{c!{\vrule width 1.5pt}}{$K$ and $I$ both known} & \multicolumn{3}{c|}{$K$ and $I$ both unknown}\\\cline{2-7}
						& $K, p$ fixed & $K$ fixed, $p\to \i$ & $K,p\to \i$ & $K, p$ fixed & $K$ fixed, $p\to \i$ & $K, p\to \i$\\\cline{1-7}
						\thead{Existing\\  results} & \thead{\cite{anderson1956}, \\ MLE estimator,  \\ no closed form of the \\ asymptotic variance.}
						&  \thead{\cite{bai2012}, \\ MLE-inspired estimator,  \\
							computationally involved; \\ closed form  asymptotic\\ variance $Q_k$
							when $\lambda_K \asymp p$.}
						& NA & NA & NA & NA\\
						\hline 
						\multirow{2}{*}{\thead{\\
								Theorem \ref{thm_distr},\\
								Section  \ref{sec_inference_beta}
						}} & 
						\multicolumn{6}{c|}{Computationally tractable $\wh\beta_k$ and  asymptotic variance $V_k$}\\\cline{2-7}
						& 
						\checkmark  & \thead{$V_k$ reduces to $Q_k$ when\\ $\lambda_K \to\i$ and
							\\
							$\lambda_K\gtrsim ({p}/{\sqrt{n}}) \log (p \vee n)$.} & \checkmark & \checkmark& \checkmark & \checkmark\\
						\hline 
					\end{tabular}
				}
		}}
		\caption{Asymptotically normal estimators of $\beta_k$ in a class of factor models: existing  and new results (Theorem \ref{thm_distr}).}
		\label{tab_infer}
	\end{table}

	The quantity  $\lambda_K :=\lambda_K(A \C A^{\top})$  quantifies  the size of the signal in $X=AZ+W$. 
	Theoretical analyses under the regime $p>n$ are performed under a  conservative signal strength assumption, $\lambda_K \asymp p$, in the 
	existing literature on factor models.
	See,  for instance, 	\citep{Chamberlain-1983,Connor-Koraj,Bai-Ng-forecast, Bai-factor-model-03, fan2011, fan2013large}. This includes results pertaining to inference on $\beta$ of \cite{bai2012}, which are most closely related to  our work.  
	
	In Section \ref{sec_inference_beta} we prove that our proposed  estimators  of $\beta$   attain a Gaussian limit under a considerably relaxed condition, $ \lambda_K \gtrsim{p/\sqrt{n}} $
	(up to  logarithmic $\log\pn$ factors), within a framework in which  $K$ can  grow  as fast as $O(\sqrt n/\log\pn)$. A technical discussion of this condition is provided in Section \ref{sec_ass}, and in Remark \ref{rem_inference_results} of Section \ref{sec_inference_beta}.
	
	Table 1 offers a complete picture, to the best of our knowledge, of the existing literature on   inference for $\beta$, under the modelling framework considered in this work.  
	For completeness, we summarize in Remark \ref{rem_K} of Section \ref{sec_consistency_beta},  other  approaches proposed in the literature for the selection of $K$, in other identifiable factor models.    We summarize them, and state sufficient conditions for their consistency in Table \ref{tab_K}.\\
	

	For clarity of presentation, we give below the expressions of the limiting variances in the particular case when  $\Cov(W)=\tau^2\bI_p$. By letting $\sigma^2 = \EE[\eps^2]$, $\Cov(W)=\tau^2\bI_p$, $\Theta := A\C$ and $\Theta^+ = (\Theta^\top \Theta)^{-1}\Theta^{\top}$, the asymptotic variance derived in \cite{bai2012}, in the regime  $K = O(1)$ and $\lambda_K \asymp p \rightarrow \infty$,  is 	
	$${Q_k = \left(\sigma^2 + \t^2\|\beta\|_2^2 \right)\left[\C^{-1}\right]_{kk}}.$$	 The assumption that  $
	\lambda_K \asymp p 
	$ is made in \cite{bai2012} indirectly, 
	as a consequence of their assumption (A)
	($\C$ is positive definite and $K$ is fixed) and of their   Assumption (C) ($\| A_{i\sbt}\|_2 \le C$, $c\le \Gamma_{ii} \le C$ and $p^{-1} A^\top \Gamma^{-1} A    $ converges to some positive definite matrix as $p\to\i$), see page 438 of \cite{bai2012}. 
	
	The asymptotic variance of our proposed estimator, valid for all the regimes represented in Table \ref{tab_infer} above, has the  formula  derived  in Theorem \ref{thm_distr} of Section \ref{sec_inference_beta},  	
	$${V_k = \left(\sigma^2 + {\t^2\over m}\|\beta\|_2^2 \right)\left[\left[\C^{-1}\right]_{kk} + \tau^2\e_k^\top \left(\Theta^\top \Theta\right)^{-1}\e_k\right]+ {\tau^4\over m(m-1)}\sum_{a=1}^K \beta_a^2 \sum_{i\in I_a}\left[\e_k^\top \Theta^+ \e_i\right]^2,}	$$
	where $\{\e_1, \ldots, \e_K\}$ is the canonical basis of $\R^K$.

	To contrast the two asymptotic variance expressions, we consider the common regime $K = O(1)$  and $p \rightarrow \infty$, in which case we note that  the signal strength requirement under which our Theorem \ref{thm_distr} is established reduces to  $\lambda_K \gtrsim (p/\sqrt{n}) \log \pn$. Theorem \ref{thm_distr} shows that, in this case, if we further assume that  $\lambda_K\to\i$, the asymptotic variance of our estimator reduces to 
	\[
	V_k = \left(\sigma^2 + {\t^2\over m}\|\beta\|_2^2 \right)\Omega_{kk},
	\] 
	and thus 	 $\lim_{n\to \i} Q_k / V_k \geq 1$.  The two asymptotic variances coincide when $m = 1$, which is the minimum  identifiability requirement for a factor model with pure variables in which the pure variable set $I$ is  known. Hence
	we recover, in this regime,  the asymptotic variance derived  in   \cite{bai2012},  while at the same time relaxing  their signal strength conditions required for this derivation.

	As noted above, and as we prove formally in Theorem \ref{thm_distr}, the expression of the asymptotic variance $V_k$ is valid in all the regimes presented in this table. In particular, in the classical regime in which $K$ and $p$ do not vary with $n$, its derivation requires only $\lambda_K \gtrsim n^{-1/2}$. \\

	In order to provide a unifying analysis, valid for both fixed and growing dimensions, we  use the classical Lyapunov CLT for triangular arrays. The verification of the third moment condition of this theorem requires the lengthy, technical, derivations  in Lemmas \ref{lem_third_moment} and \ref{lem_var_UI}.
	Finally, although the expression of the asymptotic variance $V_k$ is involved, it  can be estimated consistently for each $1\le k\le K$, by a computationally efficient estimator. This result is given in Proposition \ref{prop_est_V} of Section \ref{sec_inference_beta} and its proof, which  requires considerable attention,  is presented in Appendix \ref{sec_proofs_prop_V}, followed by a list of the many technical lemmas used in this proof,  Lemmas \ref{lem_tau} -- \ref{lem_1}.

	\paragraph{An application to regression on latent cluster centers}
	
	The identifiable factor model $X = AZ + W$ satisfying Assumption \ref{ass_model} in Section \ref{sec_model_assump} below can be used to define, uniquely,  overlapping clusters of the coordinates of $X$. The clusters are  centered around the components of the latent vector  $Z$,  and $X$-variables  in cluster $k$ have indices in the set $G_k := \{ j\in [p] :  | A_{jk}| > 0\},\ \text{for $ 1\le k \le K$}.$ A procedure for estimating consistently $K$ and the corresponding clusters has been developed recently  in \cite{LOVE}.  With this interpretation, the  Essential Regression framework can be employed for inference on the latent cluster centers. We show in Section \ref{sec_app_to_clustering} that although it may be tempting to  replace the components of $Z$ by weighted  averages of  variables within a cluster, and subsequently regress $Y$ onto them, this procedure would not estimate $\beta$ in  (\ref{model}). 
	However, we further show that this can be immediately corrected by regressing on the best linear predictor of $Z$ from these weighted averages. With this correction, we obtain 
	exactly the estimator  of $\beta$  constructed in Section \ref{sec_est_beta}, 
	and  the inferential tools developed in Section \ref{sec_inference_beta} can be used  for inference in regression on unobserved, latent, cluster centers. In the context of the applications to biological data sets mentioned in Section \ref{framew}, this will be inference at the biological signature level, as illustrated in Section \ref{sec_real_data}.

	\subsection{Organization of the paper}
	
	The rest of the paper is organized as follows. 
	
	Section \ref{sec_est} gives a set of  modeling assumptions under which the model  given by (\ref{model}) and (\ref{love}) is identifiable. Section \ref{sec_model_assump} introduces and discusses these assumptions, including parameter interpretability. Section \ref{sec_ident_beta} shows that our central parameter, $\beta$, along with other important parameters, is identifiable.  
	
	Section \ref{sec_est_beta} introduces our proposed estimator $\wh \beta$ of $\beta$. Section \ref{sec_other_estimates} discusses other natural estimators, and  explains why they should be expected to have inferior theoretical and practical  perfomance relative to $\wh \beta$. The  numerical  performance of these alternate estimators is presented in Section \ref{sec_sims}. 
	
	The  performance of estimators of $\beta$ in factor regression models satisfying  Assumption \ref{ass_model} is benchmarked in  Section \ref{sec_beta_lower}.  Theorem \ref{thm_beta_lower} provides the minimax lower bound for estimating $\beta$ in this class of models, with respect to the $\ell_2$ loss. 
	
	Section \ref{sec_consistency_beta} shows that the estimator $\wh \beta$ proposed in Section \ref{sec_est_beta} is $\ell_2$-norm  consistent, and  minimax-rate  adaptive, under assumptions collected in Section \ref{sec_ass}. 
	
	Section \ref{sec_inference_beta} is devoted to the component-wise asymptotic normality of $\wh \beta$ and to the estimation of the asymptotic variance, as well as to a comparison with existing literature. 
	
	Section \ref{sec_app_to_clustering} presents an application of the framework, methodology and theory developed in previous sections to regression on latent cluster centers, when the clusters are allowed to overlap. 
	
	Section \ref{sec_sims} verifies numerically our theoretical results.  Section \ref{sec_real_data} shows how our methodology can be used to make inference on unobserved, latent,  immunological modules, using a data set collected during a study on the effectiveness of a new SIV-type vaccine. 
	
	
	All  proofs are deferred to the supplement.
	Appendix \ref{sec_proofs_ident_lower} gives the proofs of Proposition \ref{prop_iden} and Theorem \ref{thm_beta_lower}, on identification and minimax lower bounds, respectively.
	Appendix \ref{sec_auxiliary_lemma} provides necessary preliminary results, and could be skipped at first reading.
	The proof of Theorem \ref{thm_beta} concerning the convergence rates of $\wh \beta$ is given in Appendix \ref{sec_proof_rate_beta}. The proof of Theorem  \ref{thm_distr} on  the asymptotic normality of $\wh \beta$ is given in Appendix \ref{sec_proof_asn_beta}, while Proposition \ref{prop_est_V} on  consistent estimation of the asymptotic variance $V_k$ is proved in Appendix \ref{sec_proofs_prop_V}.

	\subsection{Notation}
	For any positive integer $q$, we let $[q] = \{1, 2, \ldots, q\}$. For two numbers $a$ and $b$, we write $a\vee b := \max\{a, b\}$ and $a\wedge b:= \min\{a, b\}$.
	For a set $S$, we use $|S|$ to denote its cardinality. 
	We use $\H_d$ to denote the set of all $d\times d$ signed permutation matrices and $\S^{d-1}$ to represent the space of the unit vectors in $\R^d$. We denote by ${\bI}_d$ the $d\times d$ identity matrix, by $\bm{1}_d$ the $d$-dimensional vector with entries equal to $1$ and by $\{\e_j\}_{1\le j\le d}$ the canonical basis in $\R^d$. 
	For a generic vector $v$, we let $\|v\|_q = \left (\sum_i|v_i|^q\right)^{1/q}$ denote its $\ell_q$ norm for $1\le q< \i$. 
	We also write $\|v\|_\i = \max_i |v_i|$ and $\|v\|_0= |\textrm{supp}(v)|$. Let $Q$ be any matrix. We use $\|Q\|_{{\rm op}} = \sup_{v\in \S^{d-1}}\|Qv\|$ 
	and $\|Q\|_\i = \max_{i,j}|Q_{ij}|$ for its operator norm 
	and element-wise maximum norm, respectively. For a symmetric matrix $Q\in \R^{d\times d}$,  we denote by $\lambda_k(Q)$ its $k$th largest eigenvalue for $k\in[d]$. For a positive semi-definite symmetric matrix, we will frequently use the fact that $\lambda_1(Q) = \|Q\|_{{\rm op}}$. For an arbitrary real valued matrix $M$, we let $\sigma_K(M)$ denote its $k$th singular value (in decreasing order).

	For any two sequences $a_n$ and $b_n$, $a_n \lesssim b_n$ stands for there exists constant $C>0$ such that $a_n \le Cb_n$. We write $a_n \asymp b_n$ if $a_n \lesssim b_n$ and $b_n \lesssim a_n$. We also use $a_n = o(b_n)$ to denote $a_n / b_n \to 0$ as $n\to \i$.

	\section{Modeling assumptions and identifiability}\label{sec_est}
	
	\subsection{Modeling assumptions} \label{sec_model_assump}
	
	We begin by formalizing and explaining the  set of model identifiability assumptions that will be used in this work.

	\begin{ass}\label{ass_model}\mbox{}
		\begin{enumerate}
			\setlength\itemsep{0mm}
			\item[(A0)] $\|A_{j\sbt}\|_1 \le 1$ for all $j\in [p]$.
			\item[(A1)] For every $k\in [K]$, there exists at least two $j\ne \ell \in [p]$, such that $|A_{j\sbt}| = |A_{\ell \sbt}| = \e_k$.
			\item [(A2)] $\C := \Cov(Z)$ is positive definite. There exists a constant $\nu>0 $ such that 
			$$
			\min_{1\le a<b\le K} \Bigl(\left[\C\right]_{aa}\wedge \left[\C\right]_{bb} - \left|\left[\C\right]_{ab}\right| \Bigr)> \nu.
			$$  
		\end{enumerate}
	\end{ass}
	\noindent In \textit{(A1)}, the absolute value is taken entry-wise and we use $\{\e_1,\ldots, \e_K\}$ to denote the canonical basis in $\R^K$. 
	For future reference, we denote the index set corresponding to pure variables as 
	\begin{equation}\label{def_I}
	I = \bigcup_{k=1}^KI_k,\qquad I_k = \left\{i\in [p]: |A_{i\sbt}|=\e_k\right\}.
	\end{equation}
	Its complement set is called the non-pure variable set $J := [p] \setminus I$.

	Assumption \ref{ass_model} guarantees that $A$ and $\C$ are identifiable, up to 
	signed permutations \citep[Theorem 2]{LOVE}. 
	We refer to the  second assumption \textit{(A1)} as  {\it the pure 
		variable assumption}.  It states that   every $Z_k$, $1\le k \le K$, must have at least two components of $X$, the \textit{pure variables}, solely associated with it, up to additive noise with possibly different variance levels.
	An in-depth comparison with the rich literature on factor models of type (\ref{love}) and a detailed explanation of assumptions\textit{ (A0) -- (A2)} can be found in \cite{LOVE}, and thus we only offer a brief set of comments here.

	If  $A\sz A^{T}$ is identifiable,  we show in Corollary \ref{cor_ident} in Appendix \ref{app_cor}
	that $A$ and $\sz$ are identifiable up to signed permutations under Assumption \ref{ass_model}, but  when {\em (A1)} is relaxed to
	\begin{flalign*}
	\text{{\em (A1')} {\it For each $k\in [K]$, there exists at least one index  $i\in [p]$ such that $A_{i\cdot}  = \e_k$.}}
	\end{flalign*}
	However, the existing conditions under which $A\sz A^{T}$ can be identified from the decomposition $\Sigma = A\sz A^{T} + \Gamma$  can be incompatible with factor models with pure variables, for instance the incoherence condition in \cite{Chandrasekaran}, or can be  very stringent growth conditions on the eigenvalues of $A\sz A^{T}$. For an instance of the latter we refer to \cite{fan2013large,Bai-Ng-K} and  also to  Table \ref{tab_K} below. These difficulties can be bypassed under {\em (A1)} of Assumption \ref{ass_model}, using  the approach taken in  \cite{LOVE}, which does not rely on  separating out $A\sz A^{T}$ from $\Sigma$ in the  first step. 
	In Assumption \ref{ass_model}, {\em (A0)}, we set the scale to 1 to aid the interpretation of  matrix $A$ as a cluster membership matrix, and thus view the model $X = AZ + W$  as a latent clustering model, following \cite{LOVE}. The equality between the weights of  two pure variables, $|A_{j\cdot}| = |A_{\ell\cdot}| = \e_k$ has been relaxed in a recent work, \citep{bing2020detecting}, but  under a  slightly different scaling condition than {\it (A0)}, and we do not pursue that approach here. 
	
	Furthermore, we mention, for completeness, that a more rigid form of assumption 	{\em (A1')}, specifically 
	\[
	\textrm{{\em (A1'')} {\it For each $k\in [K]$, there exists a {\bf known} index $i\in [p]$ such that $A_{i\cdot}  = \e_k$,}}
	\]
	has had a long history,   as it is one of the few ``user-interpretable" parametrizations of $A$ that  eliminates the rotation ambiguity of the latent factors.
	In psychology, the ``pure" variables induced by the parametrization are called factorially simple items \cite{mcdonaldbook}. A similar condition on $A$ can be traced back to the econometrics literature, and an early reference is \cite{koopmans1950}, further discussed in \cite{anderson1956}, who called it ``zero elements (of $A$) in specified positions''. We refer to \cite{koopmans1950,anderson1956,thurstone} for more examples in psychology, sociology, etc. This parametrization is also called the errors-in-variable parametrization and has wide applications in  structural equation models, see, \cite{Joreskog1970,Joreskog1973}. The more recent review paper \cite{yalcin2001}, and  the references therein, provide  a nice overview of many other concrete applications that support interest in factor models under a parametrization of this type. We provide another example below.

	In the context of Assumption \ref{ass_model}, we interpret the entries in $A$ as (signed) mixture weights. Under model (\ref{model}), each $X_j$ is a signed mixture of $Z_1, \ldots, Z_K$, according to these weights. This assumption, which is sufficient for identifiability, is also a a desirable modelling assumption.

	As an illustration, assume that  $X \in \R^p$ contains gene-level measurements, and that  $Z \in \R^K$ corresponds to their biological functions. Then, \textit{(A0)} enables, 
	in this example, 
	to associate a gene with multiple biological functions, in different proportions per function. The inequality sign in \textit{(A0)} further allows some genes not to be associated with any of the functions captured by this model, thereby increasing the robustness of the model. The second requirement, \textit{(A1)},  simply says 
	that the measured $X_j$ and $X_\ell$ have  the same biological function $Z_k$, and only that function.  We considered signed mixtures to increase the flexibility of the model. In this example,  signs correspond to the nature of the function. For instance, if gene $X_j$ activates a signaling pathway, and $Z_k$ has positive sign, then $Z_k$ has a function associated with the activation of this pathway, whereas a negative sign indicates a function associated with its inhibition.

	
	
	Assumption \textit{(A2)}  allows us to depart from the widely used, and restrictive, assumption of independence among the  latent factors. We require the variability of the factors to be strictly larger than that between factors. This implies the minimal desideratum  that the factors  be different.

	We first discuss the identifiability of $\beta$
	in Section \ref{sec_ident_beta}. Then in Section \ref{sec_est_beta},  we propose our estimator of $\beta$ which uses its identifiability constructively.
	
	\subsection{Identifiability of $\beta$: a constructive approach}\label{sec_ident_beta}
	
	Under model (\ref{model}), we have  $Y = Z^\top \beta + \eps$, and thus the  coefficient $\beta$ satisfies
	\begin{equation}\label{def_beta}
	\beta = [\Cov(Z)]^{-1}\Cov(Z,Y) = \C^{-1}\Cov(Z,Y).
	\end{equation}
	Since model (\ref{love}) and Assumption \ref{ass_model} imply $\Cov(Z,Y) = (A^\top A)^{-1}A^\top \Cov(X,Y)$, we have 
	\begin{align}\label{iden_beta_full}
	\beta &~=~ 	\C^{-1}(A^\top A)^{-1}A^\top \Cov(X,Y)\\\label{iden_beta_I_full} 
	&~=~ (\Theta^\top \Theta)^{-1}\Theta^\top \Cov(X,Y)
	\end{align}
	with $\Theta = A\C$. 
	Therefore, $\beta$ is uniquely defined  whenever $\Theta$ is unique. By partitioning the $p\times K$ matrix $A$ as 
	$A_{I \sbt} \in \R^{|I|\times K}$ and $A_{J \sbt}\in \R^{|J|\times K}$ corresponding to $I$ and $J$, respectively, model (\ref{love}) and Assumption \ref{ass_model} imply the following decomposition of $\Sigma$, 
	\begin{equation*}
	\Sigma = \begin{bmatrix}
	\Sigma_{II} & \Sigma_{IJ} \\ \Sigma_{JI}  & \Sigma_{JJ} 
	\end{bmatrix} = \begin{bmatrix}
	A_{I \sbt}\C A_{I \sbt}^\top  & A_{I \sbt}\C A_{J \sbt}^\top \vspace{2mm}\\
	A_{J \sbt}\C A_{I \sbt}^\top  & A_{J \sbt}\C A_{J \sbt}^\top 
	\end{bmatrix} + \begin{bmatrix}
	\Gamma_{II}  & \\ & \Gamma_{JJ}
	\end{bmatrix}.
	\end{equation*}
	In particular, we have $\Sigma_{II} = A_{I \sbt}\C A_{I \sbt}^\top  + \Gamma_{II}$ and
	\begin{equation}\label{iden_AC}
	\Theta= A \C = \left(\Sigma_{\sbt I} - \Gamma_{\sbt I}\right)A_{I \sbt}^\top (A_{I \sbt}^\top A_{I \sbt})^{-1}.
	\end{equation}
	The uniqueness of $\Theta$ is thus implied by  that of $A_{I \sbt}$ and $\C$. 
	Theorem 1 in \cite{LOVE} shows that, under Assumption \ref{ass_model}, the matrices  $A_{I \sbt}$  and $\C$ can be uniquely determined, up to a signed permutation matrix $P$, from $\Sigma := \Cov(X)$. As a result, $\Theta$ can also be recovered from (\ref{iden_AC}) up to $P^\top $, hence $\beta$ is identifiable from (\ref{iden_beta_I_full}) up to $P^\top $. We remark that the permutation matrix $P$ will not affect either inference  or prediction. Indeed, writing $\wt A = AP$, $\wt Z = P^\top Z$ and $\wt \beta = P^\top \beta$, one still has $Y = \wt Z^\top \wt \beta+\eps$ and $X = \wt A\wt Z+W$. We summarize the identifiability of $\beta$ in the proposition below.  Its proof can be found in Appendix \ref{sec_proofs_prop_iden}.
	\begin{prop}\label{prop_iden}
		Under models (\ref{model}) -- (\ref{love}) and Assumption \ref{ass_model}, the quantities $\Sigma$ and  $\Cov(X, Y)$ define $\beta$ uniquely, via (\ref{iden_beta_I_full}) and (\ref{iden_AC}),  up to a signed permutation matrix.
	\end{prop}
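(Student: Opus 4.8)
The plan is to verify the two claimed identifiability ingredients in sequence, since the formulas $(\ref{iden_beta_I_full})$ and $(\ref{iden_AC})$ have already been derived from the model in the text, and the only remaining work is to argue that each object appearing on their right-hand sides is determined (up to the advertised signed permutation) by $\Sigma$ and $\Cov(X,Y)$. First I would invoke Theorem 1 of \cite{LOVE}: under Assumption \ref{ass_model}, from $\Sigma = \Cov(X)$ alone one recovers the pure variable index set $I$ (together with its partition $I_1,\dots,I_K$, hence also $K$), as well as the submatrix $A_{I\sbt}$ and the covariance $\C$, all uniquely up to a common $K\times K$ signed permutation matrix $P$. This is the substantive input; everything else is bookkeeping. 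In particular, $\Gamma_{II} = \Sigma_{II} - A_{I\sbt}\C A_{I\sbt}^\top$ is then also determined, and so is $\Gamma_{\sbt I}$ once we note that off the pure-variable block $\Gamma$ is zero on the relevant entries, i.e.\ $\Gamma_{JI} = 0$ by the block structure of $\Gamma$ displayed above $(\ref{iden_AC})$, so $\Gamma_{\sbt I}$ is the obvious extension of $\Gamma_{II}$ by zeros.

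Next I would plug these recovered quantities into $(\ref{iden_AC})$: since $\Sigma_{\sbt I}$ is an observable block of $\Sigma$, and $\Gamma_{\sbt I}$ and $A_{I\sbt}$ are determined as above (the matrix $A_{I\sbt}^\top A_{I\sbt}$ is invertible because each $I_k$ is nonempty so $A_{I\sbt}$ has full column rank $K$), the matrix $\Theta = A\C$ is determined by $\Sigma$. One must track how $P$ propagates: replacing $(A_{I\sbt},\C)$ by $(A_{I\sbt}P, P^\top \C P)$ in $(\ref{iden_AC})$ multiplies $\Theta$ on the right by $P$, i.e.\ $\Theta \mapsto \Theta P$, consistent with $\Theta = A\C \mapsto (AP)(P^\top\C P) = \Theta P$. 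Finally, substituting into $(\ref{iden_beta_I_full})$, $\beta = (\Theta^\top\Theta)^{-1}\Theta^\top \Cov(X,Y)$ is a fixed function of $\Theta$ (well-defined since $\Theta$ has full column rank, inherited from $A_{I\sbt}$ having full column rank and $\C$ being positive definite by \textit{(A2)}) and the observable vector $\Cov(X,Y)$; under $\Theta \mapsto \Theta P$ one computes $(\Theta^\top\Theta)^{-1}\Theta^\top \mapsto P^\top(\Theta^\top\Theta)^{-1}\Theta^\top$, hence $\beta \mapsto P^\top\beta$. This is exactly the claimed signed-permutation ambiguity, and it is harmless for inference and prediction as already noted in the text preceding the proposition.

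I do not anticipate a genuine obstacle here, as the heavy lifting — identifiability of $A_{I\sbt}$ and $\C$ from $\Sigma$ under the pure-variable assumption — is quoted from \cite{LOVE}. The one point requiring a little care is making sure that all the intermediate quantities ($\Gamma_{\sbt I}$, in particular the vanishing of $\Gamma_{JI}$, and the full-rank claims needed for the two matrix inverses) are legitimately consequences of model $(\ref{love})$ and Assumption \ref{ass_model} rather than being assumed, and that the signed permutation $P$ is carried consistently through the chain $(A_{I\sbt},\C) \leadsto \Theta \leadsto \beta$ so that the final ambiguity is precisely $\beta \mapsto P^\top\beta$ and not some larger group. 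Once these are spelled out, the proof is a short concatenation of $(\ref{iden_AC})$, $(\ref{iden_beta_I_full})$, and Theorem 1 of \cite{LOVE}.
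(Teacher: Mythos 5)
Your proposal is correct and follows essentially the paper's own argument: invoke Theorem 1 of \cite{LOVE} to identify $A_{I\sbt}$ and $\C$ (hence $K$, $I$ and its partition) from $\Sigma$ up to a signed permutation $P$, and then track $P$ through the chain to conclude $\beta \mapsto P^\top \beta$. The only cosmetic difference is that the paper's appendix proof recovers $\C$ explicitly from pure-variable covariances ($[\C]_{ab} = A_{ia}A_{jb}\Sigma_{ij}$ for $i\in I_a$, $j\in I_b$, $i\ne j$) and applies the equivalent pure-variable identity $\beta = \C^{-1}(A_{I\sbt}^\top A_{I\sbt})^{-1}A_{I\sbt}^\top\Cov(X_I,Y)$, rather than passing through $\Gamma_{\sbt I}$, $\Theta$ and (\ref{iden_beta_I_full}) as you do; both routes amount to the same bookkeeping.
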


	Our estimator, given  in the next section,  is based on the representations (\ref{iden_beta_I_full}) and (\ref{iden_AC}), followed by appropriate plug-in  estimators. 

	\section{Estimation of $\beta$}\label{sec_est_beta}
	We assume that the data consists of $n$ independent observations $(X_1,Y_1),\ldots,(X_n,Y_n)$ 
	that satisfy model (\ref{model}) and (\ref{love}). We write $\X:=(X_1,\ldots,X_n)^\top$ for the observed $n\times p$ data matrix and $\y:=(Y_1,\ldots,Y_n)^\top$ for the observed response vector.
	Let $\wh \Sigma = n^{-1}\sum_{i = 1}^n X_i X_i^\top
	$ denote the sample covariance matrix. Motivated by equations (\ref{iden_beta_I_full}) and (\ref{iden_AC}), we consider the plug-in estimator of $\beta$ via the following steps:
	\begin{enumerate}
		\item[(1)] Obtain estimates $\wh K$ and  $\{\wh I_1,\cdots, \wh I_{\wh K}\}$ from $\wh \Sigma$ with tuning parameter $\delta$ by using Algorithm 1 in \cite{LOVE}. For the reader's convenience, we state the procedure in Algorithm \ref{alg1} below. 
		
		\item[(2)] 
		Next, for each $a\in[\wh K]$ and $b\in [\wh K]\setminus\{a\}$, we compute 
		\begin{equation}\label{Chat}
		[\whC]_{aa} = 
		\frac{1}{|\wh I_a|(|\wh I_a|-1)}\sum_{i\ne j\in \wh I_a}|\wh \Sigma_{ij}|,
		\quad 
		~ [\whC]_{ab} = 
		\frac{1}{|\wh I_a||\wh I_b|}\sum_{i\in \wh I_a, j\in \wh I_b} \wh A_{ia}\wh A_{ib}\wh \Sigma_{ij},
		\end{equation}
		to form the estimator  $\whC$ of $\C$. Furthermore, the estimation of $A_{I \sbt}$ follows the procedure in \cite{LOVE}. For each $k\in [\wh K]$ and the estimated pure variable set $\wh I_k$,  
		\begin{align}\label{est_AI_a}
		&\textrm{Pick an element $i\in \wh I_k$ at random, and set $\wh A_{i\sbt}=\e_k$;}\\\label{est_AI_b}
		&\textrm{For the remaining $j\in \wh I_k\setminus\{i\}$, set $\wh A_{j\sbt} = \textrm{sign}(\wh \Sigma_{ij})\cdot \e_k$.}
		\end{align}
		
		\item[(3)] Estimate $\Gamma_{\sbt I}$ by $\wh \Gamma_{{\sbt \wh I}}$ with 
		\begin{equation}\label{est_Gamma_I}
		\wh \Gamma_{ii} = \wh\Sigma_{ii} - \wh A_{i\sbt}^\top \whC \wh A_{i\sbt},\quad \forall\ i\in \wh I,\qquad \wh \Gamma_{ji} = 0,\quad \forall\ j\ne i.
		\end{equation}
		\item[(4)] Compute
		\begin{equation}\label{est_Theta}
		\wh \Theta = \left(
		\wh\Sigma_{\sbt \wh I} - \wh \Gamma_{{\sbt \wh I}}
		\right)\wh A_{{\wh I\sbt}}\left(\wh A_{{\wh I\sbt}}^\top \wh A_{{\wh I\sbt}}\right)^{-1}.
		\end{equation}
		Provided that $\wh \Theta^\top \wh \Theta$ is non-singular, estimate $\beta$ by 
		\begin{equation}\label{est_beta}
		\wh \beta = \left(\wh \Theta^\top  \wh \Theta\right)^{-1}\wh \Theta^\top  {1\over n}\X^\top \y.
		\end{equation}
	\end{enumerate}
	{\begin{algorithm}[ht]
			\caption{Estimate the partition of the pure variables $\I$ by $\wh \I$}\label{alg1}
			\begin{algorithmic}[1]
				\Procedure {PureVar}{$\wh \Sigma$, $\delta$}
				\State $\wh \I \gets \emptyset$.
				\ForAll {$i\in [p]$} 
				\State $\wh I^{(i)} \gets \bigl\{l\in [p]\setminus\{i\}: \max_{j\in [p]\setminus\{i\}}|\wh \Sigma_{ij}| \le |\wh \Sigma_{il}|+2\delta\bigr\}$
				\State $Pure(i) \gets True$.
				\ForAll {$j \in \wh I^{(i)}$}
				\If {$\bigl||\wh \Sigma_{ij}|- \max_{k\in [p]\setminus\{j\}}|\wh\Sigma_{jk}|\bigr| > 2\delta$}   
				\State $Pure(i) \gets False$,
				\State \textbf{break}
				\EndIf	
				\EndFor
				\If {$Pure(i)$}
				\State $\wh I^{(i)} \gets \wh I^{(i)}\cup \{i\}$
				\State $\wh\I \gets$ \textsc{Merge($\wh I^{(i)},\ \wh \I$)}
				\EndIf
				\EndFor
				\State \Return $\wh \I$ and $\wh K$ as the number of sets in $\wh \I$
				\EndProcedure
				\Statex
				\Function {Merge}{$\wh I^{(i)}$, $\wh \I$}\label{alg2}
				\ForAll {$G \in \wh \I$}
				\Comment $\wh \I$ is a collection of sets
				\If {$G \cap \wh I^{(i)}\ne \emptyset$} 
				\State  $G\gets G\cap \wh I^{(i)}$
				\Comment Replace $G\in \wh \I$ by $G\cap \wh I^{(i)}$
				\State \Return $\wh \I$
				\EndIf
				\EndFor
				\State $\wh I^{(i)} \in \wh \I$
				\Comment add $\wh I^{(i)}$  in $\wh \I$
				\State \Return $\wh \I$
				\EndFunction
			\end{algorithmic}
	\end{algorithm}}
	
	The above procedure requires a single tuning parameter $\delta$, and that $K<n$. The theoretical order of $\delta$ is given in (\ref{def_delta}) of Section \ref{sec_ass} under the sub-Gaussian distributional assumptions. A practical data-driven procedure of selecting $\delta$ is stated in Appendix \ref{app_cv}.
	We   show in Section \ref{sec_app_to_clustering} that $\wh{\beta}$ coincides with the ordinary least squares estimator  that minimizes $\|\y - \wh \Z\beta \|_2^2$ over $\beta$, based on  an appropriately constructed  predictor $\wh \Z$ of the latent data matrix $\Z:=(Z_1,\ldots,Z_n)^\top$.   We prove in Theorem \ref{thm_beta} of Section \ref{sec_consistency_beta} that $\wh \Theta^\top  \wh \Theta$ is non-singular with high probability. In practice, in case that $\wh\Theta^\top\wh\Theta$ is singular or ill-conditioned, we propose to invert $\wh\Theta^\top\wh\Theta + t\cdot {\bf I}_{\wh K}$ instead, for any small $t>0$.  
	
	In Section \ref{sec_other_estimates} 
	we discuss other possible estimators based on the alternative representations of $\beta$ given in 
	(\ref{def_beta}), (\ref{iden_beta_full}) and (\ref{iden_beta_I_full}).

	\section{Statistical guarantees}

	\subsection{Minimax lower bounds for estimators of $\beta$ in Essential Regression}\label{sec_beta_lower}
	To benchmark our estimator of $\beta$, we derive the minimax optimal rate of $\|\wh\beta -\beta\|_2$ over the parameter space 
	$(\beta, \C, A) \in \S(R, m)$ with 
	\begin{align*}
	\S(R, m) :=\big\{ (\beta, \C, A): \ 
	&\|\beta\|_2\le R,\ \cl \le \lambda_{\min}(\C)\le \lambda_{\max}(\C)\le \cu,\\
	&A \text{ satisfies Assumption \ref{ass_model} with }\min_k|I_k| = m
	\big\},
	\end{align*}
	where $I_k$ is defined in (\ref{def_I}). 
	For the purpose of the minimax result, it suffices to consider the joint distribution of $(X, Y)$ as 
	\begin{equation}\label{distr_normal}
	\begin{bmatrix}
	X \\ Y 
	\end{bmatrix} \sim  N_{p+1}\left(0, \begin{bmatrix}
	A\C A^\top +\tau^2{\bI}_p & A\C\beta\\
	\beta^\top \C A^\top  & \beta^\top \C \beta + \sigma^2
	\end{bmatrix}\right)
	\end{equation}
	for $(\beta, \C, A)\in \S(R, m)$ and some positive constants $\sigma^2$ and $\tau^2$.
	\begin{thm}\label{thm_beta_lower}
		Let $K\le \bar{c}(R^2\vee m)n$ for some positive constant $\bar{c}$. 
		Let $(X_1, Y_1),\ldots,(X_n,Y_n)$ be i.i.d. random variables from the normal distribution in (\ref{distr_normal}).
		Then, there exist constants $c'>0$, $c''\in (0,1]$ depending only on $\bar{c}$, $\cu$, $\cl$, $\sigma^2$ and $\tau^2$, such that 
		\begin{equation}\label{ondergrens}
		\inf_{\wh\beta}\sup_{(\beta, \C, A) \in \S(R, m)}\PP\left\{\|\wh \beta - \beta\|_2 \ge c'\left(1\vee {R\over \sqrt m}\right)\cdot \sqrt {K\over n}\right\} \ge c''.
		\end{equation}
		The inf is taken over all estimators $\wh \beta$ of $\beta$.
	\end{thm}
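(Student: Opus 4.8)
The plan is to establish the minimax lower bound via Fano's method (or a local packing / Assouad-type argument), by constructing a rich finite family of parameters in $\S(R,m)$ that are statistically indistinguishable from $n$ samples of the Gaussian model (\ref{distr_normal}), yet are $\gtrsim (1\vee R/\sqrt m)\sqrt{K/n}$ separated in $\ell_2$. I would fix the ``nuisance'' parts of the construction as simply as possible: take $\C=\bI_K$ (so $\cl=\cu=1$ is allowed up to constants), take $A$ to be a fixed matrix satisfying Assumption \ref{ass_model} with $\min_k|I_k|=m$ — e.g.\ stack $m$ copies of $\bI_K$ for the pure block $A_{I\sbt}$ and append a trivial non-pure block (or none, so $p=mK$) — and let only $\beta$ vary. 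With $\C=\bI_K$ and this $A$, one has $\Theta=A$, $\Theta^\top\Theta = m\bI_K$, and $A\C\beta = A\beta$; the only part of the joint covariance in (\ref{distr_normal}) that changes with $\beta$ is the cross-covariance block $A\beta$ and the scalar $\|\beta\|_2^2+\sigma^2$. This makes the KL divergence between two parameter values $\beta,\beta'$ explicitly computable.

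The core computation is the per-sample KL divergence. For two $(p+1)$-dimensional centered Gaussians differing only through $\beta$, a direct calculation (or the standard formula for Gaussians with a shared block structure, exploiting that the $X$-marginal $N(0,AA^\top+\tau^2\bI_p)$ is identical for both) gives $\mathrm{KL}\big(P_{\beta}\,\|\,P_{\beta'}\big) \asymp \|\beta-\beta'\|_2^2$, with constants depending only on $\tau^2,\sigma^2$ and $m$; crucially the leading term is of order $\|A(\beta-\beta')\|^2/(\tau^2\sigma^2)\asymp m\|\beta-\beta'\|_2^2$, so over $n$ i.i.d.\ samples the KL is $\asymp n\,m\,\|\beta-\beta'\|_2^2$. (The $\|\beta\|_2^2$ terms entering through the $Y$-variance contribute only lower-order corrections when the $\beta$'s are taken in a ball of radius comparable to $R$, which one handles by restricting to $\|\beta\|_2 \asymp R$ or, for the ``$1\vee$'' part, to a ball of constant radius.) I would then invoke the Gilbert--Varshamov / Varshamov--Gilbert bound to produce a packing $\{\beta^{(1)},\dots,\beta^{(M)}\}$ of a Euclidean ball of radius $\rho$ in $\R^K$ with $\log M \gtrsim K$ and pairwise distances $\asymp \rho$; choosing $\rho \asymp (1\vee R/\sqrt m)\sqrt{K/(nm)}\cdot\sqrt{m} = $ — more carefully, picking $\rho$ so that $n m \rho^2 \asymp K \asymp \log M$ forces $\rho \asymp \sqrt{K/(nm)}$, and then scaling: to also get the $R/\sqrt m$ factor one uses that with $\|\beta\|_2\le R$ available we may take the packing inside a ball of radius $\min(R, \text{const})$ and rescale, yielding separation $\asymp (1\wedge R)\wedge\big(R\sqrt{K/n}/\sqrt m\big)$, i.e.\ $(1\vee R/\sqrt m)\sqrt{K/n}$ after matching cases. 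Fano's inequality then gives the constant lower bound $c''$ on the testing error provided $\max_{j,l}\mathrm{KL}^{(n)}(\beta^{(j)}\|\beta^{(l)}) \le c\log M$, which holds by the choice of $\rho$.

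The main obstacle I anticipate is bookkeeping the two regimes encoded by the factor $1\vee (R/\sqrt m)$ and making sure the construction stays in $\S(R,m)$ while the separation and KL bounds are simultaneously tight. When $R/\sqrt m \le 1$ the binding constraint is $\|\beta\|_2\le R$, so the packing must live in the radius-$R$ ball and one gets separation $\asymp R\sqrt{K/n}/\sqrt m$ after the KL calibration $nm\rho^2\asymp K$ — but one must check $K\le \bar c(R^2\vee m)n$ is exactly what guarantees a radius-$\rho$ packing with $\rho\le R$ exists (this is where the hypothesis $K\le\bar c(R^2\vee m)n$ is used: it ensures the required $\rho=\sqrt{K/(nm)}$ does not exceed the available radius $R\vee\sqrt m$ up to constants). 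When $R/\sqrt m \ge 1$ the constant-radius regime kicks in and the ``$1$'' term dominates. A secondary technical point is verifying the KL expansion is genuinely two-sided ($\asymp$, not just $\lesssim$) — i.e.\ that the $\beta$-dependent block is non-degenerate — which follows from $\Theta^\top\Theta = m\bI_K$ being well-conditioned and $\tau^2,\sigma^2$ being fixed positive constants; I would write the $(p+1)\times(p+1)$ inverse covariance in block form via the Schur complement with respect to the $X$-block to get a clean closed form. Everything else — Fano's lemma, Varshamov--Gilbert — is standard and I would cite it rather than reprove it.
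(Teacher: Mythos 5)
Your high-level skeleton (fix $A$, $\Sigma_Z$, vary only $\beta$; exploit that the $X$-marginal does not depend on $\beta$ so the KL reduces to the conditional $Y\,|\,X$; Varshamov--Gilbert packing plus Fano/Tsybakov) is exactly the paper's route, but the two quantitative steps that actually produce the rate have gaps. First, the per-sample KL scaling you assert is wrong. With $\C=\bI_K$ and $A^\top A=m\bI_K$, the conditional law is $Y\,|\,X\sim N\bigl(\beta^\top A^\top\Sigma^{-1}X,\ \sigma^2+\beta^\top G^{-1}\beta\bigr)$ with $G=\bI_K+\tau^{-2}A^\top A$, and the mean-difference contribution involves $(\beta-\beta')^\top\bigl(\C-G^{-1}\bigr)(\beta-\beta')=\frac{m}{m+\tau^2}\|\beta-\beta'\|_2^2$. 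The factor $\frac{m}{m+\tau^2}$ saturates at a constant; it does \emph{not} grow like $m/\tau^2$, so the claim $\mathrm{KL}\asymp m\|\beta-\beta'\|_2^2/(\tau^2\sigma^2)$ overstates the information by a factor of order $m$ (the shrinkage $A^\top(AA^\top+\tau^2\bI_p)^{-1}A$ is precisely what you cannot ignore). With your calibration $nm\rho^2\asymp K$ you would only obtain a separation $\rho\asymp\sqrt{K/(nm)}$, which is weaker than even the ``$1$'' part of the target.

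Second, and more fundamentally, you discard the mechanism that generates the $(1\vee R/\sqrt m)$ factor. That factor comes from the conditional variance $\sigma^2+\beta^\top G^{-1}\beta\asymp\sigma^2+\tau^2\|\beta\|_2^2/m$ appearing in the \emph{denominator} of the KL: if all hypotheses have norm $\asymp R$, the KL per sample is $\asymp\|\beta-\beta'\|_2^2/(\sigma^2+\tau^2R^2/m)$, and the enlarged denominator is what permits a separation of order $(R/\sqrt m)\sqrt{K/n}$. You explicitly treat this variance inflation as a ``lower-order correction'' and propose packing inside a ball of radius $\min(R,\mathrm{const})$; in such a packing the hypotheses' norms are comparable to the separation, the inflation is negligible, and the argument can deliver at best $\sqrt{K/n}$. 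The paper instead centers the packing at $\beta^{(0)}=(R,0,\ldots,0)$ and perturbs the remaining $K-1$ coordinates at a small scale $\eta$ with $\eta^2\asymp(\sigma^2+\tau^2R^2/m)/(nR^2\cu)$, so every hypothesis has norm $\asymp R$ while the pairwise distances are $\asymp R\eta\sqrt{K}/\sqrt{1+\eta^2(K-1)}$; the assumption $K\le\bar c(R^2\vee m)n$ is used exactly to guarantee $\eta^2(K-1)\lesssim 1$, keeping the hypotheses inside the $R$-ball and the separation of the right order (not, as you guessed, to compare $\sqrt{K/(nm)}$ with $R\vee\sqrt m$). A minor point: two-sidedness of the KL is not needed --- Fano only requires an upper bound on the KL and a lower bound on the separation.
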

	\begin{proof}
		The proof is deferred to Appendix \ref{sec_proofs_thm_beta_lower}. It uses the classical technique in \cite{Intro_non_para} for proving minimax lower bounds. After carefully constructing a set of ``hypotheses'' of $\beta$, we observe that the Kullback-Leibler (KL) divergence between joint distributions of $(X, Y)$ parametrized by two hypotheses of $\beta$ can be calculated from the log ratio of corresponding conditional densities of $Y|X$. This observation greatly simplifies the proof. 
	\end{proof}
	
	The factor $\sqrt{K/n}$ in (\ref{ondergrens}) 
	is the standard minimax rate of estimation in linear regression with observed $Z$ and sub-Gaussian errors. The factor multiplying it can be viewed as the price to pay for not observing $Z$. It quantifies the trade-off between not observing $Z$, with strength $\|\beta\|_2$, and the number of times, given by $m=\min_k|I_k|$, each component of $Z$ is partially observed, up to additive error. The ratio $\|\beta\|_2/\sqrt{m}$ indicates that, under the Essential Regression framework, the fact that $Z$ is not observed can be alleviated by the existence of pure variables, and the quality of estimation is expected to increase as $m$ increases. Theorem \ref{thm_beta_lower} above shows that, from the point of view of estimating $\beta$ consistently in Essential Regression, the number of factors $K$ can grow with $n$, a scenario not treated in the more classical factor regression  literature.
	It also reveals that, as in the classical regression set-up, although $K$ can grow with $n$ in Essential Regression,  consistent estimation  of unstructured $\beta$ cannot be guaranteed when $K > n$.  
	This will be treated in follow-up work.
	
	To the best of our knowledge, the minimax lower bound established above  is a new result in the factor regression model literature and it is interesting to place our results in a broader, related, context. For this, note that under the Essential Regression framework, {\it  if  $I$ and $A_{I \sbt}$ were known}, the pure variable assumption implies
	\begin{align}\label{bel}
	Y = Z^\top  \beta + \eps,\qquad 
	\bar X_I = Z + \bar W_I
	\end{align} 
	with $\bar X_I := (A_{I \sbt}^\top A_{I \sbt})^{-1}A_{I \sbt}^\top X_{I}$ and $\bar W_I :=(A_{I \sbt}^\top A_{I \sbt})^{-1}A_{I \sbt}^\top W_I$. Model (\ref{bel}) becomes an instance of an errors in variables model where the covariance structure of the error term $\bar W_I$ is diagonal.
	The minimax optimal  lower bound for estimating  $\beta$ in such models has been derived recently in  \cite{ Belloni17}, under sparsity assumptions on $\beta$. In the particular case of non-sparse $\beta$, which  we treat here, their lower bound agrees with that given by our Theorem \ref{thm_beta_lower}, although their bound is derived over a larger class, and can only be compared with (\ref{ondergrens}) when $I$ is known. The closest result to that of Theorem \ref{thm_beta_lower} is the minimax lower bound on rows of $A$  bounded in $\ell_1$ norm, and has been derived in \cite{LOVE}. We complement this here, in the latent factor regression context, by providing a minimax lower bound on $\beta$ with a  $\ell_2$ norm allowed to increase with $n$.  

	\subsection{Assumptions}\label{sec_ass}
	
	In this section we collect the assumptions under which we evaluate the performance of our proposed estimator $\wh \beta$.
	
	We first  make	the following distributional specifications  for $\eps$, $W$ and $Z$ defined in model (\ref{model}):
	\begin{ass}\label{ass_subg}
		Let $\gamma_{\eps}, \w, \z$ and $B_z$ be positive finite constants.
		Assume $\eps$ is $\gamma_{\eps}$-sub-Gaussian\footnotemark\footnotetext{A mean zero random variable $x$ is called $\gamma$-sub-Gaussian if $\EE[\exp(tx)]\le \exp(t^2\gamma^2/2)$ for all $t\in \R$.} and $W$ has independent $\w$-sub-Gaussian entries. Further assume $\|\C\|_\i \le B_z$ and the random vector $\C^{-{1/2}}Z$ is $\z$-sub-Gaussian\footnotemark\footnotetext{A mean zero random vector $x$ is called $\gamma$-sub-Gaussian if $v^\top x$ is $\gamma$-sub-Gaussian for any $\|v\|_2=1$.}.	\end{ass}
	
	The quality of our estimator $\wh \beta$ given by (\ref{est_beta}) depends on how well we estimate $K$, $I$, its partition $\{I_k\}_ {1\leq k \leq K}$, as well as $\Theta$. Our goal is to estimate $K$ consistently, under minimal assumptions. However, consistent estimation of  the partition  requires a stronger set of assumptions that we would like to avoid. We introduce and discuss below a set of assumptions under which the partition is recovered sufficiently well for the purpose of inference on $\wh \beta$.
	
	Assumption \ref{ass_subg} implies that $X_j$ is $\gamma_x$-sub-Gaussian with $\gamma_x=(\z\sqrt{B_z}+\w)$, as shown in Lemma \ref{lem_Z} in Appendix \ref{sec_proof_prelim_lemmas}, and it is well known (see, for instance, \cite[Lemma 1]{bien2016convex}) that in this case,
	\begin{equation}\label{def_delta}
	\PP\left\{
	\max_{1 \le j< \ell \le p} |\wh \Sigma_{j\ell} - \Sigma_{j\ell}| \le \delta
	\right\} \ge 1-(p\vee n)^{-c'}
	\end{equation}
	with $\delta = c\sqrt{\log (p\vee n)/n}$, for some constant $c'>0$ and $c = c(\gamma_x)>0$ sufficiently large.
	
	Under Assumptions \ref{ass_model} and \ref{ass_subg}, and when $\log p  \le c'' n$ for some constant $c''>0$, \cite{LOVE} provides an algorithm for estimating $K$ and $I$,  and prove in their Theorem 3 the following: 
	\begin{enumerate}\label{thm_I}
		\item[(1)] $\wh K = K$;
		\item[(2)] $I_k \subseteq \wh I_{\pi(k)} \subseteq I_k \cup J_1^k$, for all $k\in [K]$, 
	\end{enumerate}
	where  $\pi: [K] \rightarrow [K]$ is a permutation and $J_1^k := \{ j\in J: |A_{jk}| \ge 1-4\delta/\nu \}$ with  constant $\nu$ defined in Assumption \ref{ass_model} of Section \ref{sec_model_assump} above.
	
	Since we do not impose any separation condition between the pure variable rows $A_{I \sbt}$ and the remaining rows in $A_{J \sbt}$, the sets $\{J_1^k\}_{k=1}^K$ are typically not empty, and as formalized in (2) above, we cannot expect to recover $I$ perfectly in the presence of \emph{quasi-pure} variables with indices belonging  to  the set $J_1\coloneqq \bigcup_{k=1}^K J_1^k$. Indeed, when $\log p = o(n)$, for any $j\in J_1^k$ we have $|A_{jk}| \approx 1$ and $A_{jk'}\approx 0$ for any $k'\ne k$, so variables corresponding to $J_1^k$ are very close to the pure variables in $I_k$,   and possibly indistinguishable from one another, in finite samples. 
	
	While allowing for $J_1 \neq \emptyset$ increases the flexibility of the model, it also  poses significant technical difficulties in the analysis of the asymptotic distribution of $\wh \beta_k$, for each $k$, evidenced in the proofs of Theorem \ref{thm_beta}, Theorem \ref{thm_distr} and Proposition \ref{prop_est_V}. Nevertheless, the limiting distribution can still be derived when  $|J_1|$ is small relative to $|I|$. 
	The influence of the misclassified $X$-variables with entries in  $J_1$ becomes 
	negligible in  both the finite sample rate analysis of $\wh \beta$ provided in Section \ref{sec_consistency_beta} and the  asymptotic analysis  of Section \ref{sec_inference_beta} under the condition given below. We introduce  
	\begin{equation}\label{def_rho}
	\bar\rho^2=  
	\sum_{k=1}^K  \left(\frac{|J_1^k|}{|I_k| + |J_1^k|}\right)^2
	\end{equation}
	to quantify the influence of quasi-pure variables on the quality of our estimation.  Theorem \ref{thm_beta} shows that optimal estimation of $\beta$ is possible in the presence of quasi-pure variables as long as their number is negligible relative to the number of  pure variables in the same group, in that the following assumption holds.

	\begin{ass}{3}\label{ass_J1}
		The overall proportion	$\bar{\rho}^2$ satisfies   $m \bar{\rho}^2 =O(1)$ with $m:=\min_{k\in[K]} |I_k|$.
	\end{ass}
	
	We briefly discuss  this assumption below. 
	Let $s\le K$ be the number of factors that have quasi-pure variables, that is,
	$$	        s = |S|, \quad S = \{1\le k\le K: |J_1^k| \ge 1\}.
	$$
	
	\begin{enumerate}
		\item  Note that, by (\ref{def_rho}),  we  have  $\bar \rho^2  \le s$, and therefore  $m\bar \rho^2  \le ms$. Thus,  if both $s$ (and in particular $K$, when $s = K$) and  $m$ remain bounded, as $n\to\i$, Assumption \ref{ass_J1} holds. 
		In other words, in factor models with a possibly large, but fixed,  number of factors $K$, such that one of the factors has very few $X$-variables solely associated with it, the assumption holds.

		\item    Otherwise, if $m s\to\i$, we note that if  we assume that 
		\begin{eqnarray}\label{restr}
		|J_1^k| \le c( | J_1^k| + | I_k|) /\sqrt{ms},\quad \end{eqnarray}
		holds  for all $k\in S$
		where $c>0$ is some universal constant,  then 
		$\bar \rho^2 \le c^2/m$,  and Assumption \ref{ass_J1} holds. 
		To gain insight into  (\ref{restr}),  note that it also  implies that,
		for each cluster $k\in S$, 
		$| J_1^k|  = o (|I_k|)$,
		and therefore  $| J_1|= o(|I|) $. 
		Thus,  in factor models with a growing number of factors and a growing number of pure variables per factor,  (\ref{restr})  prevents 
		the   number of quasi-pure variables  to  grow  faster than the number of pure variables.
		
		\item  In support of the above discussion, we also offer a calculation of  $\bar \rho^2$ in a  particular case.  Assume that  each $I_k$ has the same size $m$, and each $J_1^k$ has the same size $m'$. Then $\bar \rho^2 = s (m'/(m+m'))^2 $, and we can verify Assumption 3 explicitly in terms of $m, m'$ and $s$.  Consider
		\begin{enumerate}
			\item  $|I_k|=m$ with $m\ge 1$, $m$ fixed, and $|J_1^k|=m'\ge 1$, for all $k\in S$. Then $\bar \rho^2 = s (m'/(m+m'))^2$, and Assumption 3 is   met  if and only if $s$ remains bounded. 
			
			
			\item $|I_k|= m$ and $| J_1^k|= m'= O( m^\alpha)$,   for all $k\in S$, with $0\le \alpha\le 1$ and $m\to \i$. A simple calculation shows that  Assumption 3 is  met only if $s=O(m^{1-2\alpha})$. In particular,  $\alpha<1/2$ allows $s\to\i$;    $\alpha=1/2$   requires $s$ to be bounded, and the case $\alpha>1/2$ forces $s=0$ (all the stated limits are in terms of  $n\to\i$).\\
		\end{enumerate} 
	\end{enumerate}

	Another quantity that needs to be controlled is the covariance matrix $\C$. It plays the same role as the Gram matrix in  classical linear regression with random design.
	
	\begin{ass}\label{ass_C}
		The smallest eigenvalue $\lambda_{\min}(\C)  > C_{\min}$ for some constant $C_{\min}$ bounded away from $0$.
	\end{ass}
	
	Assumptions \ref{ass_subg} -- \ref{ass_C}   allow for a cleaner presentation of our results.
	We can  trace  explicitly the  dependency of the estimation  rate  for $\beta$ on $\bar\rho$ and $\cl$ in the proofs. 
	An important feature of this framework is that under Assumptions \ref{ass_model} -- \ref{ass_C}
	and   
	$	K = O\left( n / \log \pn \right),$
	the matrix
	$\wh\Theta^\top \wh\Theta$   
	can be inverted,   with probability tending to 1.
	
	We require one more  condition, 
	which measures the strength of the signal $A\C A^\top$ retained by the low rank approximation of $\Sigma=A \C A ^\top +\Gamma$.

	\begin{ass}\label{ass_H}
		The $K$th eigenvalue $\lambda_K := \lambda_K(A\C A^\top )$ of the signal $A\C A^\top$ satisfies
		\begin{equation}\label{cond_2}
		\lambda_K \geq c \  p\sqrt{\log\pn \over n}
		\end{equation} 
		for some sufficiently small constant $c>0$.
		This implies $K\lesssim \sqrt{n/\log\pn}$.
	\end{ass}

	The implication $K\lesssim \sqrt{n/\log\pn}$ follows immediately from (\ref{cond_2}) and
	the bound $\lambda_K\le B_z(p/K)$ in
	Lemma \ref{lem_signal}.

	We recall that the quantity  $\lambda_K :=\lambda_K(A \C A^{\top})$  quantifies  the size of the signal in $X=AZ+W$. It is a key quantity in the well studied problem of signal recovery from a  $n \times p$ matrix of noisy observations ${\bf X}$, 
	with rows corresponding to $n$ i.i.d. copies of $X$. The signal can be recovered from $n^{-1}\X^\top\X  $
	as soon as $\lambda_K$ is above the noise level 
	\citep{bunea2011,bunea2015,giraud-book,wegkamp2016,wainwright_2019}.
	The noise level is quantified by the largest eigen-value $\lambda_{1}(n^{-1} \W^\top\W )$, based on the $n \times p$ data matrix  
	${\W}$ with rows corresponding to $n$ i.i.d. copies of $W$.
	Standard random matrix theory shows that
	$\lambda_1 (n^{-1} \W^\top\W )$  concentrates with overwhelming probability  around its mean, which is of order $(n+p) /n$, see \cite{vershynin_2012}.
	Therefore, 
	one  needs at least $\lambda_K \gtrsim (n+ p) /n$
	to distinguish the signal from the noise. For the more specific task of optimal estimation of $\beta$,  we require Assumption \ref{ass_H} (see Lemma \ref{lem_Rem2} in Appendix \ref{sec_proof_rate_beta}). The investigation of its optimality is beyond the scope of this paper, and will be studied in a follow up work. However, we emphasize that, as mentioned in the Introduction, the consistent estimation of rows of the factor loadings in factor models, especially when  $p > n$, has only been established under the  stricter condition  $\lambda_K \asymp p$ \citep{Bai-factor-model-03, fan2011, fan2013large, bai2012}. The intuition behind this more restrictive  assumption is as follows. If $\C$ is positive definite, with finite eigenvalues, and the rows of $A$ are $p$ i.i.d. draws of a $K$-dimensional sub-Gaussian random vector, $p > K$, then reasoning as above and using the results in \cite{vershynin_2012},   $ \lambda_K \asymp p$, with high probability. However, for a generic, deterministic,  $A$, it would be an assumption, that we show can be considerably relaxed to Assumption \ref{ass_H}.   More details are provided  in Remark \ref{rem_inference_results} of Section \ref{sec_inference_beta}.

	\subsection{Consistency of $\wh{\beta}$ in $\ell_2$-norm: Rates of convergence and optimality} \label{sec_consistency_beta}
	
	The following theorem states the convergence rate of $\min_{P\in \H_K}\|\wh\beta-P\beta\|_2$. 
	\begin{thm}\label{thm_beta}
		Suppose 
		Assumptions \ref{ass_model} -- \ref{ass_C} hold and assume $K\log\pn \le cn$ for some sufficiently small constant $c>0$. 
		Then,
		with probability greater than $1-\pn^{-c'}$ for some constant $c'>0$, $\wh K=K$,
		the matrix	$\wh \Theta^\top  \wh \Theta$ is non-singular and  the estimator
		$\wh \beta$ given by (\ref{est_beta}) satisfies:
		\begin{equation}\label{rate_init_beta}
		\min_{P\in \H_K}\left\|\wh \beta - P\beta\right\|_2 \lesssim   \left(1\vee {\|\beta\|_2\over \sqrt m}\right)\sqrt{K\log \pn \over n}
		\left(1\vee {p\over \lambda_K}\sqrt{\log\pn \over n}\right)
		\end{equation}
		If additionally Assumption \ref{ass_H} holds, then with the same probability, $\wh \beta$ given by (\ref{est_beta}) satisfies
		\begin{equation}\label{rate_beta}
		\min_{P\in \H_K}\left\|\wh \beta - P\beta\right\|_2 \lesssim   \left(1\vee {\|\beta\|_2\over \sqrt m}\right)\sqrt{K\log \pn \over n}.
		\end{equation}
	\end{thm}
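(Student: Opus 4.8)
The plan is to decompose the estimation error $\wh\beta - P\beta$ into terms coming from (i) the error in estimating $\Theta$, and (ii) the error in estimating $\Cov(X,Y)$ by $n^{-1}\X^\top\y$, and then control each term using the representation $\wh\beta = (\wh\Theta^\top\wh\Theta)^{-1}\wh\Theta^\top (n^{-1}\X^\top\y)$ versus $P\beta = (\Theta^\top\Theta)^{-1}\Theta^\top\Cov(X,Y)$ (after absorbing the signed permutation $P$ arising from Algorithm \ref{alg1}, which we fix once and for all and then suppress). First I would invoke Theorem 3 of \cite{LOVE} (quoted in Section \ref{sec_ass}) to obtain, on an event of probability $1-\pn^{-c'}$, that $\wh K = K$ and $I_k \subseteq \wh I_{\pi(k)} \subseteq I_k\cup J_1^k$, and intersect this with the event in (\ref{def_delta}) that $\|\wh\Sigma - \Sigma\|_\i \le \delta$ with $\delta \asymp \sqrt{\log\pn/n}$. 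On this event I would first establish the deterministic-type bounds for $\|\wh A_{\wh I\sbt} - A_{I\sbt}\|$, $\|\whC - \C\|$, and $\|\wh\Gamma_{\sbt\wh I} - \Gamma_{\sbt I}\|$ in an appropriate norm (operator or Frobenius), each controlled by $\delta$ up to factors involving $K$, $m$, $\cl$, $\cu$ and the quasi-pure quantity $\bar\rho$ entering through the $J_1$-misclassification; these are exactly the places where the auxiliary lemmas of Appendix \ref{sec_auxiliary_lemma} (e.g., Lemmas \ref{lem_quad}, \ref{lem_H_op}) do the bookkeeping.

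Next I would propagate these into a bound on $\|\wh\Theta - \Theta\|_{\rm op}$ via the identity (\ref{iden_AC}) and its empirical analogue (\ref{est_Theta}), writing $\wh\Theta - \Theta$ as a sum of products in which one factor is an error ($\wh\Sigma_{\sbt\wh I}-\Sigma_{\sbt I}$, $\wh\Gamma-\Gamma$, $\wh A_{\wh I\sbt}-A_{I\sbt}$, or the difference of the $(A_{I\sbt}^\top A_{I\sbt})^{-1}$ factors) and the others are bounded; here $\|\Sigma_{\sbt I}\|_{\rm op}$ and $\|\Theta\|_{\rm op}$ are of order $\sqrt{\lambda_1}\lesssim \sqrt{p/K}\cdot\sqrt{K}=\sqrt p$-type quantities controlled by Lemma \ref{lem_signal}, so the term $(p/\lambda_K)\sqrt{\log\pn/n}$ emerges naturally as $\|\wh\Theta-\Theta\|_{\rm op}\,\|\Theta^+\|_{\rm op}$-type products, using $\|\Theta^+\|_{\rm op} = 1/\sigma_K(\Theta) = \lambda_K^{-1/2}$. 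The non-singularity of $\wh\Theta^\top\wh\Theta$ then follows by Weyl's inequality once $\|\wh\Theta-\Theta\|_{\rm op} = o(\sqrt{\lambda_K})$, which holds under $K\log\pn\le cn$ for $c$ small (giving the first claim (\ref{rate_init_beta})) and comfortably under the stronger Assumption \ref{ass_H} (giving (\ref{rate_beta})). For the response side, I would bound $\|n^{-1}\X^\top\y - \Cov(X,Y)\|_2$ and, after projecting by $\wh\Theta^+$, the relevant quantity is $\|\wh\Theta^+(n^{-1}\X^\top\y)\|$; using $Y = Z^\top\beta+\eps$, $X = AZ+W$, the leading stochastic term reduces to $\C^{-1}(A^\top A)^{-1}A^\top$ acting on $n^{-1}\sum_i X_i(Z_i^\top\beta+\eps_i) - \Cov(X,Y)$, which is a mean-zero average of sub-Gaussian products; a standard Bernstein/sub-exponential concentration gives a bound of order $(1\vee\|\beta\|_2)\sqrt{K/n}$ up to logs, and the $\|\beta\|_2/\sqrt m$ (rather than $\|\beta\|_2$) scaling comes from the averaging over the $m$ pure variables in each block, exactly as in the discussion following Theorem \ref{thm_beta_lower}.

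Assembling, $\|\wh\beta-\beta\|_2$ is bounded by $\|(\wh\Theta^\top\wh\Theta)^{-1}\wh\Theta^\top\|_{\rm op}\,\|n^{-1}\X^\top\y - \Cov(X,Y)\|_2$ plus a term of the form $\|\Theta^+\|_{\rm op}^2\|\wh\Theta-\Theta\|_{\rm op}\|\Cov(X,Y)\|_2$-type contributions, and collecting the factors yields (\ref{rate_init_beta}); under Assumption \ref{ass_H} the correction factor $(1\vee(p/\lambda_K)\sqrt{\log\pn/n})$ equals $1$, giving (\ref{rate_beta}) and matching the lower bound of Theorem \ref{thm_beta_lower} up to the $\sqrt{\log\pn}$ factor. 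The main obstacle I anticipate is the careful control of the quasi-pure contamination: because $\wh I$ may contain indices in $J_1$, every perturbation bound for $\wh A_{\wh I\sbt}$, $\whC$, $\wh\Gamma_{\sbt\wh I}$, and hence $\wh\Theta$, carries an extra $\bar\rho$-dependent term, and one must show — using Assumption \ref{ass_J1} ($m\bar\rho^2 = O(1)$) together with the block structure — that these terms do not dominate the $\delta$-sized terms; this is precisely what Lemmas \ref{lem_quad}, \ref{lem_H_op}, \ref{lem_Rem_2} and \ref{lem_Rem2} are designed to handle, and keeping the dependence on $m$, $\bar\rho$, $\cl$ and $\lambda_K$ explicit throughout is the delicate part of the argument.
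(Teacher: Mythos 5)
Your overall architecture (work on the LOVE event, control perturbations of $\wh A_{\wh I\sbt}$, $\whC$, $\wh\Gamma$, $\wh\Theta$ via the auxiliary lemmas, split into a main term and a remainder involving $\wh\Theta^+-\Theta^+$, and track $\bar\rho$-contamination through Assumption~\ref{ass_J1}) matches the paper's, and your diagnosis of where the difficulties sit is accurate. However, the key algebraic step is not the one you chose, and the difference is not cosmetic: with your decomposition the announced rate cannot be recovered.

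You center everything on $n^{-1}\X^\top\y - \Cov(X,Y)$ and then apply $\wh\Theta^+$ (or $\Theta^+$). Expanding $\y=\Z\beta+\eps$ and $\X=\Z A^\top + \W$ gives
\begin{equation*}
n^{-1}\X^\top\y - \Cov(X,Y) \;=\; A\bigl(n^{-1}\Z^\top\Z-\C\bigr)\beta \;+\; A\,n^{-1}\Z^\top\eps \;+\; n^{-1}\W^\top\Z\beta \;+\; n^{-1}\W^\top\eps,
\end{equation*}
and, since $\Theta^+A=\C^{-1}$, the first term projects to $\C^{-1}(n^{-1}\Z^\top\Z-\C)\beta$, whose $\ell_2$ norm, by~(\ref{rate_zz_c}), is of order $\delta_n\sqrt{K}\,\|\beta\|_2$ \emph{without} the $1/\sqrt m$ factor. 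Your claim that ``the $\|\beta\|_2/\sqrt m$ scaling comes from the averaging over the $m$ pure variables'' is therefore incorrect for this term: the $m$-averaging only attenuates the $W$-noise (through $\Pi$, $\wh\Pi$), not the $\Z^\top\Z$ fluctuation, and $n^{-1}\Z^\top\Z-\C$ sits squarely in front of $\beta$. Consequently, a separate treatment of $\|n^{-1}\X^\top\y-\Cov(X,Y)\|$ and of $(\wh\Theta^+-\Theta^+)\Cov(X,Y)$ can at best yield $(1\vee\|\beta\|_2)\sqrt{K\log\pn/n}$, which is larger than the target $(1\vee\|\beta\|_2/\sqrt m)\sqrt{K\log\pn/n}$ by a $\sqrt m$ factor whenever $1<\|\beta\|_2<\sqrt m$. (There is actually a hidden cancellation between the $\C^{-1}(n^{-1}\Z^\top\Z-\C)\beta$ piece of your ``main term'' and a matching piece inside $(\wh\Theta^+-\Theta^+)\Cov(X,Y)$, because $\wh\Theta$ itself carries $A\,n^{-1}\Z^\top\Z A_{\wh I\sbt}^\top\wh\Pi$, but your ``operator norm times $\ell_2$ norm'' assembling step throws this cancellation away.)

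The paper's decomposition avoids the problem altogether by using the exact identity $\wh\beta-\beta=\wh\Theta^+\bigl(n^{-1}\X^\top\y-\wh\Theta\beta\bigr)$, i.e.\ subtracting $\wh\Theta\beta$ rather than $\Theta\beta=\Cov(X,Y)$. Since $\wh\Theta\beta$ contains $A\,n^{-1}\Z^\top\Z A_{\wh I\sbt}^\top\wh\Pi\beta$ and $A_{\wh I\sbt}^\top\wh\Pi=\bI_K+\D$ with $\|\D v\|_1\lesssim\bar\rho\delta_n\|v\|_2$ (Lemma~\ref{lem_deltaA_I}), the dangerous $A\,n^{-1}\Z^\top\Z\beta$ term in $n^{-1}\X^\top\y$ nearly cancels, leaving the residual $A\,n^{-1}\Z^\top\Z\,\D\beta$ of size $\bar\rho\delta_n\sqrt K\|\beta\|_2\lesssim(\|\beta\|_2/\sqrt m)\delta_n\sqrt K$ under Assumption~\ref{ass_J1}. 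This is the step that produces the $1/\sqrt m$, and it is what $\D_Z$ in~(\ref{D_Z}) encodes. A second, smaller issue with your plan: bounding $\wh\Theta^+$ (or $\Theta^+$) applied to a $p$-vector by $\|\wh\Theta^+\|_{\rm op}$ times the $\ell_2$ norm of that vector is also lossy (e.g.\ $\|n^{-1}\W^\top\eps\|_2\asymp\sqrt p\,\delta_n$ and $\|\Theta^+\|_{\rm op}\asymp\lambda_K^{-1/2}$, giving $\sqrt{p/\lambda_K}\,\delta_n$, which exceeds $\sqrt K\delta_n$ unless $\lambda_K\asymp p$); you need to bound the $K$-dimensional projections $\e_k^\top\Theta^+\D_W$ directly using the structure of $\Theta^+=\Omega^{1/2}H^+$, as in Lemma~\ref{lem_Delta_W} via~(\ref{rate_hweps})--(\ref{rate_hww}).
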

	\begin{proof}
		The proof is given in Appendix \ref{sec_proof_rate_beta}.
	\end{proof}
	
	\begin{remark}{\rm
			The estimator $\wh\beta$ achieves the minimax rate in Theorem \ref{thm_beta_lower} up to a logarithmic  $\log\pn$ term. 
			Inspection of the proof, when $K$ grows with $n$, shows  that the $\log\pn$ terms appearing in the condition $K\log\pn \le cn$ and in the upper bound 
			(\ref{rate_beta}) can be improved to $\log K$, but in that case the probability tail in Theorem \ref{thm_beta} will change to $1-K^{-c}$. This additional $\log K$ is the price to pay for not observing $Z$. On the other hand, comparing (\ref{rate_beta}) with the convergence rate of the oracle least squares estimator (OLS) {\em when $Z$ is observable}, the extra factor  $\|\beta\|_2/\sqrt{m}$ is due to the   error term $W$ in $X=AZ+W$. This extra factor becomes negligible
			when  the coefficients of $\beta$ are uniformly bounded ($\|\beta\|_\i \lesssim 1$) and 
			the number of latent factors cannot grow much faster than the cardinality of the smallest subgroup   of pure variables 
			($K\lesssim m$). In the worst case scenario the rate of $\wh\beta$ is slower than the aforementioned OLS by a factor of the order  of $\|\beta\|_2$. }
	\end{remark}
	
	\begin{remark}\label{rem_K}{\rm 
			The selection of $K$, the number of latent factors, has been thoroughly studied, in general factor models.  For completeness, we provide Table \ref{tab_K} below  summarizing the existing approaches of selecting $K$,  as well as the conditions under which the resulting estimates are consistent.   As the table shows, consistent estimation of $K$  via the existing methods is proved under the assumption that there exists a large  gap between  the eigenvalues of $A\sz A^T$ and $\Gamma$, respectively.  Although in the  inference results derived in this manuscript  we also need
			$\lambda_K = \lambda_K(A\sz A^T)$ to satisfy Assumption \ref{ass_J1}, that is,
			\[
			{\lambda_K \over p} ~ \gtrsim  ~ {\log (p\vee n) \over \sqrt n},
			\]
			this is {\em not} used to guarantee the correct  selection of  $K$ (this can be readily seen from the discussion in the middle of page 15), hence it is much milder than the conditions in the existing literature  \cite{Ahn-2013,Bai-Ng-K,Onatski09}, 
			as seen from the table.  We do, however, also establish that $K$ can be consistently estimated by our procedure. Leveraging  the particular class of factor models treated in this work, we show that our method is consistent, for  both growing $K$  and fixed $K$, but under  a  set of conditions that is very different than those previously considered  (please see Table \ref{tab_K} below). In particular, we do not require $\lambda_1$ and $\lambda_K$ to be of equal order $p$, but we do consider only diagonal error structures $\Gamma$.  
			\begin{table}[H]
				\centering
				{\renewcommand{\arraystretch}{2}{
						\resizebox{\textwidth}{!}{
							\begin{tabular}{|c|c|c|}
								\hline 
								$\wh K = K$ w.h.p.   & $K$ is fixed & $K$ grows with $n$\\\hline
								\multirow{1}{*}{\thead{Existing literature: \cite{Ahn-2013},\\
										\cite{Bai-Ng-K,Onatski09}}
									} &  $\lambda_1\asymp \lambda_K \asymp  p$, $\|\Gamma\|_{\i,1}=O(1)$ & \multirow{1}{*}{NA}\\\cline{2-2}
								\hline 
								Proposed method  &   \multicolumn{2}{c|}{Assumption 1, $\Gamma$ is diagonal with bounded entries, $\log p = o(n)$}\\
								\hline 
							\end{tabular}
				}}}
				\caption{\small
					Selection of the number of factors $K$: methods and sufficient conditions for consistency.  We write $\lambda_1,\ldots, \lambda_K$ for the  top $K$  eigenvalues of $A\sz A^T$.}
				\label{tab_K} 
			\end{table}
		}
	\end{remark}

	\subsection{Other possible estimators}
	\label{sec_other_estimates}
	We discuss other natural estimators that could be considered in this model,  based on equivalent representations of $\beta$. Each of these representations offer  a valid basis for estimating $\beta$ via plug-in estimation of the unknown quantities.   However, we recommend the  estimator $\wh\beta$ in (\ref{est_beta}) above, as it  has several advantages over these other candidates,  both theoretically and numerically.

	\begin{enumerate}
		\item[(a)] Recall that $\beta$ can be uniquely defined via identity  (\ref{iden_beta_full}) as long as $A$ and $\C$ are unique  up to signed permutations.
		The  expression (\ref{iden_beta_full}) suggests the following estimator 
		\begin{equation}\label{est_beta_full}
		\wt\beta^{(A)} = \whC^{-1}\left(\wh A^\top \wh A\right)^{-1}\wh A^\top {1\over n}\X^\top   \y
		\end{equation}
		based on  some estimates $\wh A$ of $A$ and $\whC$ of $\C$. For instance, one can estimate $A$ and $\C$ by the procedure given in (\ref{Chat}), (\ref{est_AI_a}), (\ref{est_AI_b}) and (\ref{est_AJ}),  as these estimates have optimal convergence rates \citep{LOVE}.

		\item[(b)] Building on identity (\ref{iden_beta_full}),  using $\Sigma - \Gamma = A\C A^\top $ and writing $B = A(A^\top A)^{-1}$, we can show that
		\begin{equation}\label{iden_beta_mm}
		\beta = \left[
		B^\top (\Sigma - \Gamma) B
		\right]^{-1}B^\top \Cov(X,Y).
		\end{equation}
		If $A$ and $\C$ are unique up to signed permutations, then so is $\Gamma = \Sigma - A\C A^\top $.   Equipped with $\wh A$, one can further estimate $\Gamma$ via (\ref{est_Gamma}) and estimate $B$ by $\wh B = \wh A (\wh A^\top \wh A)^{-1}$.  Then expression (\ref{iden_beta_mm})  provides another way of estimating $\beta$ via
		\begin{equation}\label{est_beta_mm}
		\wh\beta^{(A)} = \left[
		\wh B^\top (\wh\Sigma - \wh\Gamma) \wh B
		\right]^{-1}\wh B^\top {1\over n}\X^\top \y.
		\end{equation}
	\end{enumerate}
	The two estimates are the same, $\wt \beta^{(A)} = \wh\beta^{(A)}$, if  $\wh\Gamma = \wh\Sigma - \wh A\whC \wh A^\top $ in (\ref{est_beta_mm}). Since $\wh\beta^{(A)}$ uses the diagonal structure of $\Gamma$, it is expected to have better performance than $\wt \beta_{(A)}$.  However, both (\ref{est_beta_full}) and (\ref{est_beta_mm})  
	require  {\it separate} estimation of  $A$ and $\C$ by   $\wh A$ and $\whC$, respectively.  In contrast, our estimator $\wh\beta$ in (\ref{est_beta}) estimates $\Theta = A\C$ {\it as a whole}, leading to better rate performance as evidenced  in the simulation section. Furthermore, as we mentioned in Section \ref{sec_est_beta}, the proposed $\wh \beta$  has a simple interpretation, as a  ordinary least squares estimator, relative to appropriately constructed predictors of $Z$,  which makes it more appealing in practice.  We give the details in Section \ref{sec_app_to_clustering}.
	
	\begin{enumerate}
		\item[(c)] Recall that $X_I = A_{I \sbt}Z + W_I$ and $A_{I \sbt}$ has full rank under model  (\ref{love}) and Assumption \ref{ass_model}.  By using (\ref{def_beta}) and $\Cov(Z,Y) = (A_{I \sbt}^\top A_{I \sbt})^{-1}A_{I \sbt}^\top \Cov(X_I, Y)$, we find the identity 
		\begin{equation}\label{iden_beta_I}
		\beta = \C^{-1}(A_{I \sbt}^\top A_{I \sbt})^{-1}A_{I \sbt}^\top \text{Cov}(X_I, Y) = \left(\C A_{I \sbt}^\top A_{I \sbt}\C\right)^{-1} \C A_{I \sbt}^\top \text{Cov}(X_I, Y).
		\end{equation}
		This expression of $\beta$ relies  only on $A_{I \sbt}\C$ rather than the full matrix $\Theta = A\C$ as in (\ref{iden_beta_full}).  This is yet a different way of estimating  $\beta$ and
		identity (\ref{iden_beta_I}) suggests the following estimator of $\beta$
		\begin{equation}\label{est_beta_I}
		\wh\beta^{(I)} = \whC^{-1}\left(\wh A_{{\wh I\sbt}}^\top \wh A_{{\wh I\sbt}} \right)^{-1}\wh A_{{\wh I\sbt}}^\top {1\over n}\X^\top   \y.
		\end{equation}
	\end{enumerate}
	In Appendix \ref{sec_thm_beta_I}, we  state, without its lengthy proof due to space restrictions,
	that $\wh\beta^{(I)}$ has the same convergence rate as (\ref{rate_beta}) under Assumptions \ref{ass_model} -- \ref{ass_C}.
	However, we still recommend $\wh\beta$ over  $\wh\beta^{(I)}$ as $\wh\beta$ has better numerical  performance and has  a smaller asymptotic variance (see Theorem \ref{thm_distr} and Theorem \ref{thm_distr_I} in the Appendix). 
	We also verify these points in the simulation study presented in Section \ref{sec_sims}.

	\subsection{Component-wise asymptotic normality of $\wh \beta$}\label{sec_inference_beta}
	In this section, for ease of  presentation, we assume that the signed permutation matrix $P$ is identity and we  consider $\Gamma = \tau^2{\bI}_p$ and $|I_k| = m$ for all  $k\in [K]$, but our proof holds for the general case and the corresponding explicit, general,  expression of the asymptotic variance is given in display (\ref{def_V_k_general}) of the Appendix. 
	
	The component-wise asymptotic normality of $\wh \beta$ is proved under the challenging, but realistic, scenario in which some of the non-pure variables are very close to the pure variables, justifying their name,  quasi-pure variables,  introduced in Section \ref{sec_ass}
	above. Allowing for this situation is similar to relaxing the signal strength conditions used in the literature on support recovery. In our  context, they would  correspond to requiring that the  pure and non-pure variables are well separated, in that 
	\[
	\min_{j\in J}\min_{P\in \H_K}\left\|A_{j\sbt} - P \e_1\right\|_1 \ge c\sqrt{\log\pn / n}
	\]
	for some universal constant $c>0$, {\it an assumption that we do not make}.  We note that such assumption would be  equivalent 
	to requiring $\bar{\rho}  = 0$, 
	for $\bar{\rho} $ defined in (\ref{def_rho}).
	
	In Section \ref{sec_ass} we  established the convergence rate of $\wh \beta$  when $\bar{\rho} \neq 0$, but satisfies Assumption \ref{ass_J1}.  For the asymptotic normality result, we can still allow for $\bar{\rho}  \neq 0$, but require it to be of the smaller size stated in Assumption \ref{ass_J1_prime}.
	\begin{customAss}{3$\bm{^\prime}$}\label{ass_J1_prime}
		The overall proportion 	$\bar\rho$ satisfies	$\bar\rho^2\log\pn= o(1 / m)$ as $n\to\infty$.
	\end{customAss}

	
	\begin{customAss}{5$\bm{^\prime}$}\label{ass_H_prime}
		The $K$th eigenvalue $\lambda_K = \lambda_K(A\C A^\top )$ of $A\C A^\top$ satisfies
		\begin{equation*}
		\lambda_K \Bigg / {p\log\pn \over  \sqrt n} \to \i, \ \text{as} \ n \rightarrow \infty. 
		\end{equation*} 
	\end{customAss}

	\begin{thm}\label{thm_distr}
		Under 
		Assumptions \ref{ass_model}, \ref{ass_subg}, \ref{ass_J1_prime}, \ref{ass_C} and \ref{ass_H_prime},  assume 
		$\gamma_w / \tau = O(1)$ and $\gamma_{\eps} / \sigma = O(1)$. 
		Then, with probability tending to one, $\wh K=K$,
		$\wh\Theta^\top \wh \Theta$ is non-singular and 
		for any $1\le k\le K$,
		\[
		\sqrt{n/V_k}(\wh \beta_k - \beta_k) \overset{d}{\to} N\left(0,1\right),\qquad \text{as}\quad n\to \i,
		\]
		where, with $\Theta^+ = (\Theta^\top\Theta)^{-1}\Theta^\top$, 
		\begin{equation}\label{def_Vk}
		V_k = \left(\sigma^2 + {\t^2\over m}\|\beta\|_2^2 \right)\left[\Omega_{kk} + \tau^2\e_k^\top \left(\Theta^\top \Theta\right)^{-1}\e_k\right] + {\tau^4\over m(m-1)}\sum_{a=1}^K \beta_a^2 \sum_{i\in I_a}\left[\e_k^\top \Theta^+ \e_i\right]^2 .
		\end{equation}
		Furthermore, if additionally $\lambda_K/ \tau^2 \to \i$ holds, then 
		\begin{equation}\label{def_Vk_simp}
		V_k = \left(\sigma^2 + {\t^2\over m}\|\beta\|_2^2 \right)\Omega_{kk} .
		\end{equation}
	\end{thm}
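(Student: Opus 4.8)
The plan is to combine a linearization of $\wh\beta-\beta$ with the classical Lyapunov CLT for triangular arrays, taking care that the estimation error of the nuisance matrix $\Theta=A\C$ contributes exactly what is needed for $V_k$ and nothing more. Throughout I would work in the simplified setting of the statement ($P=\bI_K$, $\Gamma=\tau^2\bI_p$, $|I_k|=m$ for all $k$), the general case being only heavier bookkeeping with variance given by (\ref{def_V_k_general}). \emph{Step 1 (master decomposition).} Condition on the event $\mathcal{E}$ on which $\wh K=K$, $I\subseteq\wh I\subseteq I\cup J_1$ (Theorem 3 of \cite{LOVE}), $\wh\Theta^\top\wh\Theta$ is non-singular (Theorem \ref{thm_beta}), and the concentration bounds of the auxiliary lemmas hold; since $\PP(\mathcal{E})\to1$, this event may be assumed. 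Writing $U_i:=X_iY_i-\Theta\beta$, which are i.i.d., mean zero (as $\EE[X_1Y_1]=A\C\beta=\Theta\beta$) and of degree at most two in $(Z_i,W_i,\eps_i)$, and using $\tfrac1n\X^\top\y=\Theta\beta+\tfrac1n\sum_iU_i$, one gets
\[
\wh\beta-\beta=-\,(\wh\Theta^\top\wh\Theta)^{-1}\wh\Theta^\top(\wh\Theta-\Theta)\beta\;+\;(\wh\Theta^\top\wh\Theta)^{-1}\wh\Theta^\top\Bigl(\tfrac1n\sum_{i=1}^nU_i\Bigr).
\]
In the second term I would replace $(\wh\Theta^\top\wh\Theta)^{-1}\wh\Theta^\top$ by $\Theta^+=(\Theta^\top\Theta)^{-1}\Theta^\top$ at a cost $O_P(\|\wh\Theta-\Theta\|_{\mathrm{op}}\,\lambda_K(\Theta^\top\Theta)^{-1}\,\|\tfrac1n\sum_iU_i\|)$, shown to be $o_P(\sqrt{V_k/n})$ via the rate for $\wh\Theta$ and Assumption \ref{ass_H_prime} (this is precisely why $\lambda_K$ must be made large there).

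\emph{Step 2 (linearizing $\wh\Theta-\Theta$; the main obstacle).} On $\mathcal{E}$, using the defining formula (\ref{est_Theta}) together with (\ref{Chat})--(\ref{est_Gamma_I}) and the fact that the sign pattern $\wh A_{\wh I\sbt}$ of the recovered pure rows is correct, I would decompose $\wh\Theta-\Theta=\Delta^{(\Sigma)}+\Delta^{(C)}+\Delta^{(J_1)}+\Delta^{(\mathrm{r})}$, where $\Delta^{(\Sigma)}$ is linear in $\wh\Sigma_{\sbt I}-\Sigma_{\sbt I}$ (an average of centered rank-one terms over the sample), $\Delta^{(C)}$ is the part linear in $\whC-\C$ and in the diagonal correction $\wh\Gamma_{\sbt I}-\Gamma_{\sbt I}$ propagated through (\ref{Chat})--(\ref{est_Gamma_I}), $\Delta^{(J_1)}$ collects the contribution of the misclassified quasi-pure coordinates $\wh I\setminus I\subseteq J_1$, and $\Delta^{(\mathrm{r})}$ is a higher-order remainder (products of two small factors and the error of inverting $\wh A_{\wh I\sbt}^\top\wh A_{\wh I\sbt}$). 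The heart of the argument is to show $\|\Theta^+\Delta^{(J_1)}\beta\|=o_P(\sqrt{V_k/n})$ under Assumption \ref{ass_J1_prime} (the strengthening of \ref{ass_J1}, where Lemmas \ref{lem_quad}, \ref{lem_H_op}, \ref{lem_Rem_2}, \ref{lem_Z_wt_W_wt}, \ref{lem_wt_WW_D}, \ref{lem_1} are invoked) and $\|\Theta^+\Delta^{(\mathrm{r})}\beta\|=o_P(\sqrt{V_k/n})$ under Assumption \ref{ass_H_prime}. What survives is linear: gathering the linear terms of $\Delta^{(\Sigma)}$, of $\whC-\C$ inside $\Delta^{(C)}$ (itself an average over pairs of pure coordinates by (\ref{Chat})), and of $\tfrac1n\sum_iU_i$, and reading off the $k$-th coordinate, one obtains
\[
\wh\beta_k-\beta_k=\frac1n\sum_{i=1}^nT_{i,k}+o_P\bigl(\sqrt{V_k/n}\,\bigr),
\]
with $T_{1,k},\dots,T_{n,k}$ i.i.d., mean zero, each a polynomial of degree at most two in $(Z_i,W_i,\eps_i)$. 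This step is delicate precisely because $\wh\Theta$ is built from the same sample that enters $\tfrac1n\X^\top\y$, through the nonlinear maps $\wh\Sigma\mapsto(\wh A_{\wh I\sbt},\whC,\wh\Gamma_{\sbt I})\mapsto\wh\Theta$.

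\emph{Step 3 (variance and CLT).} A direct second-moment computation (Lemma \ref{lem_var_UI}) gives $\mathrm{Var}(T_{1,k})\to V_k$, with the three summands of (\ref{def_Vk}) arising as: the ``effective regression noise'' piece, where $\sigma^2$ comes from $\eps_i$ and $\tfrac{\t^2}{m}\|\beta\|_2^2$ from the projected measurement error $\bar W_I^\top\beta$ (recall $\Cov(\bar W_I)=\tfrac{\t^2}{m}\bI_K$ here), multiplying $\Omega_{kk}=[\C^{-1}]_{kk}$; the piece from the full-$X$ noise $\Theta^+W_i$ carried by $\tfrac1n\X^\top\y$, with $\Cov(\Theta^+W_i)=\tau^2(\Theta^\top\Theta)^{-1}$, giving $(\sigma^2+\tfrac{\t^2}{m}\|\beta\|_2^2)\,\tau^2\e_k^\top(\Theta^\top\Theta)^{-1}\e_k$; and the piece from the $W_{\sbt i}W_{\sbt j}$ cross-products inside the estimator (\ref{Chat}) of $\C$, which contributes the fourth power of $\tau$ and the $\tfrac1{m(m-1)}$ from averaging over pairs, giving $\tfrac{\tau^4}{m(m-1)}\sum_a\beta_a^2\sum_{i\in I_a}[\e_k^\top\Theta^+\e_i]^2$. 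Since $V_k\ge\sigma^2\Omega_{kk}\gtrsim1$ is bounded away from $0$, there is no degeneracy and it suffices to check Lyapunov's condition with third moments, $n^{-1/2}\,\EE|T_{1,k}|^3/V_k^{3/2}\to0$, which follows from the sub-Gaussian moment bounds of Lemma \ref{lem_third_moment}, using $\gamma_w/\tau=O(1)$, $\gamma_{\eps}/\sigma=O(1)$ and $\|\C\|_\i\le B_z$. The Lyapunov CLT for triangular arrays then gives $\sqrt{n/\mathrm{Var}(T_{1,k})}\,(\wh\beta_k-\beta_k)\overset{d}{\to}N(0,1)$; replacing $\mathrm{Var}(T_{1,k})$ by its limit $V_k$ (along subsequences if $V_k$ does not converge, which is harmless since $V_k$ is bounded and bounded away from $0$) and invoking Slutsky to absorb the $o_P(\sqrt{V_k/n})$ remainder yields $\sqrt{n/V_k}(\wh\beta_k-\beta_k)\overset{d}{\to}N(0,1)$.

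\emph{Step 4 (the reduction (\ref{def_Vk_simp})).} Suppose in addition $\lambda_K/\tau^2\to\i$. Since $\Theta^\top\Theta=\C^{1/2}(\C^{1/2}A^\top A\C^{1/2})\C^{1/2}$ and $\C^{1/2}A^\top A\C^{1/2}$ shares its nonzero eigenvalues with $A\C A^\top$, we have $\lambda_K(\Theta^\top\Theta)\ge\lambda_{\min}(\C)\,\lambda_K$; hence $\tau^2\e_k^\top(\Theta^\top\Theta)^{-1}\e_k\le\tau^2/(\lambda_{\min}(\C)\lambda_K)\to0$, and since $\|\Theta^\top\e_i\|_2\le\|\C\|_{\mathrm{op}}\le\cu$ for pure $i$ (as $|A_{i\sbt}|=\e_a$ and $\|A_{i\sbt}\|_1\le1$), $\sum_a\beta_a^2\sum_{i\in I_a}[\e_k^\top\Theta^+\e_i]^2\lesssim m\|\beta\|_2^2/\lambda_K^2$. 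Both are negligible relative to $(\sigma^2+\tfrac{\t^2}{m}\|\beta\|_2^2)\Omega_{kk}\gtrsim\sigma^2+\tfrac{\t^2}{m}\|\beta\|_2^2$, so the last two summands of (\ref{def_Vk}) vanish in the limit and $V_k$ reduces to (\ref{def_Vk_simp}). I expect the main obstacle to be Step 2: isolating a single scalar average and proving that the quasi-pure-variable contribution $\Delta^{(J_1)}$ and the nonlinear remainder $\Delta^{(\mathrm{r})}$ (including the effect of estimating $\C$, $\Gamma_{\sbt I}$ and of inverting $\wh\Theta^\top\wh\Theta$) are $o_P(\sqrt{V_k/n})$ — which is exactly where the strengthened Assumptions \ref{ass_J1_prime} and \ref{ass_H_prime} and the long supporting lemmas are needed.
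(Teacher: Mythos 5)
Your route is essentially the paper's: your split $\wh\beta-\beta=-(\wh\Theta^\top\wh\Theta)^{-1}\wh\Theta^\top(\wh\Theta-\Theta)\beta+(\wh\Theta^\top\wh\Theta)^{-1}\wh\Theta^\top(n^{-1}\sum_iU_i)$ is algebraically the same as the paper's $\Theta^+\bigl(\tfrac1n\X^\top\y-\wh\Theta\beta\bigr)+(\wh\Theta^+-\Theta^+)\bigl(\tfrac1n\X^\top\y-\wh\Theta\beta\bigr)$, your $\Delta^{(\mathrm{r})}$ and replacement of $(\wh\Theta^\top\wh\Theta)^{-1}\wh\Theta^\top$ by $\Theta^+$ play the role of ${\rm Rem}_1$ (Lemma \ref{lem_Rem2} under Assumption \ref{ass_H_prime}), your $\Delta^{(J_1)}$ is ${\rm Rem}_2$ (Lemma \ref{lem_Rem_2} under Assumption \ref{ass_J1_prime}), and the leading i.i.d.\ degree-two average, its variance bookkeeping (including the $\tau^4/(m(m-1))$ term coming from the $W$-cross-products inside the pure-variable estimator of $\C$, cf.\ Lemma \ref{lem_WW}), the Lyapunov third-moment check and Slutsky all match the paper's Steps 1--4.

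The one concrete flaw is in your Step 4. You invoke $\|\C\|_{\mathrm{op}}\le\cu$, but the theorem only assumes $\|\C\|_\i\le B_z$ and $\lambda_{\min}(\C)>\cl$; $\|\Theta^\top\e_i\|_2=\|\C\e_a\|_2$ can grow like $\sqrt{K}B_z$, and with that your bound for the third summand becomes of order $\tau^4K\|\beta\|_2^2/(m\lambda_K^2)$, whose negligibility relative to $(\sigma^2+\tau^2\|\beta\|_2^2/m)\Omega_{kk}$ does not follow from $\lambda_K/\tau^2\to\i$ alone when $K$ grows (it would require something like $K\tau^2=o(\lambda_K^2)$). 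The paper avoids this by Cauchy--Schwarz together with the identity $\C(\Theta^\top\Theta)^{-1}\C=(A^\top A)^{-1}$ and $\lambda_{\min}(A^\top A)\ge m$ (display (\ref{bd_minor_var})), which shows the third summand of (\ref{def_Vk}) is at most the second summand divided by $m-1$; combined with (\ref{disp_ThetaTheta_kk}), i.e.\ $\tau^2\e_k^\top(\Theta^\top\Theta)^{-1}\e_k\le\tau^2\Omega_{kk}/\lambda_K$, and $\lambda_K/\tau^2\to\i$, both extra summands are $o\bigl((\sigma^2+\tau^2\|\beta\|_2^2/m)\Omega_{kk}\bigr)$, giving (\ref{def_Vk_simp}). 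Replace your $\cu$-based estimate with this argument and the sketch is complete.
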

	\begin{proof}
		The proof is deferred to Appendix \ref{sec_proof_asn_beta}, and we offer insights into its main steps below.
	\end{proof}

	\paragraph{Outline of the proof of Theorem \ref{thm_distr}:}
	There are four main steps in the proof. We briefly explain them below and highlight the difficulties in each step. 
	
	In the classical approach of establishing the asymptotic normality for $\wh \beta_k - \beta_k$, a crucial step is to decompose the expression of $\wh \beta_k - \beta_k$ as a sum of independent mean-zero random variables, that serves as the leading term,  plus a remainder  term of smaller order. The asymptotic variance of the main term determines the asymptotic variance of $\wh\beta_k - \beta_k$.

	Under our setting (\ref{model}) and (\ref{love}), the first step of proving Theorem \ref{thm_distr} is to establish such a decomposition on the event that the dimensions   of $\wh\beta$ and $\beta$ are equal. This event holds with overwhelming probability tending to one. In display (\ref{disp_clt_term}) of the proof, we show that indeed, on this event, 
	\begin{align*}
	\sqrt{n}(\wh \beta_k - \beta_k) & = {1\over \sqrt n} \sum_{i=1}^n {\xi_{ik}} + \sqrt{n}([{\rm Rem}_1]_k +[{\rm Rem}_2]_k), 
	\end{align*}
	with ${\rm Rem}_1$ and ${\rm Rem}_2$ defined in display (\ref{def_Rem_2}). 
	Each summand $\xi_{ik}$ is in turn a sum of four terms that are   
	bi-linear combinations of  $\eps$, $Z$ and $W$. 
	The $\xi_{ik}$ are independent and form a triangular array  since $K$ and $p$ may grow in $n$.
	Interestingly, if $X=AZ$ and $W = 0$, $\xi_{ik}$ reduces to $\e_k^\top \C^{-1} \Z_{i\sbt} \eps$, the usual error term in the analysis of  the ordinary least squares estimator based on observed  $Z$. 
	For the two remainder terms,  we note that ${\rm Rem}_1$ depends on the error of estimating $\Theta^+:= (\Theta^\top \Theta)^{-1}\Theta^\top$,  while ${\rm Rem}_2$ is induced by the existence of quasi-pure variables indexed by  $J_1$.
	
	The second step of the proof is to calculate the first two moments of $\xi_{ik}$ via Lemmas \ref{lem_moment} and \ref{lem_WW}, which is relatively straightforward algebra.
	
	In the third step, we apply  Lyapunov's central limit theorem to $\sum_{i=1}^n \xi_{ik}$
	for triangular arrays.
	Verification of the Lyapunov condition requires calculation of the  third moments of $\xi_{ik}$.
	We rely on  Rosenthal's inequality and  a careful analysis to accomplish this in Lemma \ref{lem_third_moment}.  
	
	The final, fourth step is to show that both $[{\rm Rem}_1]_k$ and $[{\rm Rem}_2]_k$ are negligible as $n\to\i$. This requires a fair amount of work. 
	
	To control the remainder term ${\rm Rem}_1$, a key step is to provide upper bound for
	the quantity $(\wh \Theta - \Theta)^\top \wh \Theta (\wh \Theta^\top \wh\Theta)^{-1}$ and even establishing the existence of $(\wh \Theta^\top \wh \Theta)^{-1}$ requires a delicate analysis. Lemmas \ref{lem_H_op} and \ref{lem_Rem2} are devoted to this goal.
	In order to ensure that $[{\rm Rem_1}]_k/\sqrt{V_k} = o_p(1)$, we need the  signal strength condition in 
	Assumption \ref{ass_H_prime}, which is slightly stronger relative  to Assumption \ref{ass_H}.  
	Assumption \ref{ass_H_prime} implies
	$K\log\pn = o(\sqrt{n})$, needed for analyzing the estimator
	of $(\Theta^\top\Theta)^{-1}$,  as we recall that we allow for $\Theta^\top \Theta$ to be general, in particular  we  do not impose any sparsity assumption on it.
	
	The remainder term 
	$[{\rm Rem}_2]_k$ is a complicated function of random quantities that involve sums or maxima over the quasi-pure variable index set $J_1$. If no such variable exists  ($\bar\rho = 0$), then ${\rm Rem}_2 = 0$, and the proof ends.
	However, since we allow for  $\bar\rho \neq 0$,  it turns out to be challenging  to show that 
	$\sqrt{n/V_k}[{\rm Rem}_2]_k $ 
	still vanishes asymptotically under Assumption \ref{ass_J1_prime}. This is done in Lemmas \ref{lem_Rem_2} and \ref{lem_diag_gamma_hat_td}.
	\bigskip
	
	
	In practice, estimation of  $V_k$ is required   to construct   valid confidence intervals for individual coordinates of $\beta$. We propose a simple  plug-in estimator $\wh V_k$ by substituting  $\sigma^2$, $\t_i^2$, $|I_k|$, $\beta$ and $\Omega$ by their estimates. Specifically, we use $\wh \Omega = \whC^{-1}$ and estimate
	$\sigma^2$ and  $\tau_i^2$ by
	\begin{align}\label{est_Gamma}
	\wh\tau_i^2 &= \wh \Sigma_{ii} - \wh A_{i\sbt}^\top \whC\wh A_{i\sbt},\quad \text{for all }i\in [p]; \\\label{est_sigma}
	\wh\sigma^2  &= {1\over n} \y^\top \y - 2\wh\beta^\top \wh h + \wh \beta^\top \whC\wh \beta
	\end{align} 
	with 
	\begin{equation}\label{est_h}
	{\wh h} = {1\over n}\left(\wh A_{{\wh I\sbt}}^\top \wh A_{{\wh I\sbt}}\right)^{-1}\wh A_{{\wh I\sbt}}^\top \X_{\sbt \wh I}^\top \y.
	\end{equation}
	If either $\wh\tau_i^2$ or $\wh\sigma^2$ is negative, we set it to 0. We estimate $A_{I\sbt}$ according to  (\ref{est_AI_a}) -- (\ref{est_AI_b}) and estimate $A_{J \sbt}$ by using the Dantzig-type estimator $ \wh A_D$ proposed in \cite{LOVE} given by
	\begin{equation}\label{est_AJ}
	\wh A_{j\sbt} = \arg\min_{\beta^j\in \R^K}\left\{\|\beta^j\|_1:\  \left\|\whC \beta^j - (\wh A_{{\wh I\sbt}}^\top \wh A_{{\wh I\sbt}})^{-1}\wh A_{{\wh I\sbt}}^\top \wh\Sigma_{\wh Ij}\right\|_\i \le c\sqrt{\log\pn / n}\right\}
	\end{equation}
	for any $j\in \wh J$, with some constant $c>0$.
	The estimator $\wh A$ enjoys the optimal convergence rate of $\max_{j\in [p]}\|\wh A_{j\sbt} -A_{j\sbt}\|_q$ for any $1\le q\le \i$ \citep[Theorem 5]{LOVE}.
	Finally, $\Theta$ is estimated by (\ref{est_Theta}).  
	
	The next proposition shows that the plug-in estimator $\wh V_k$ consistently estimates 
	the asymptotic variance $V_k$ of $\wh \beta_k$.
	\begin{prop}\label{prop_est_V}
		Under the same conditions of Theorem \ref{thm_distr}, we have with probability tending to one, $\wh K=K$ and
		\[
		\left|{\wh V_k^{1/2}/ V_k^{1/2}} - 1\right|  = o_p(1).
		\]
		Consequently, we have with probability tending to one, $\wh K=K$ and
		\[
		\sqrt{n/\wh V_k}(\wh \beta_k - \beta_k )\overset{d}{\to} N(0, 1),\qquad \text{as }n\to\i, \quad k\in [K].
		\]
	\end{prop}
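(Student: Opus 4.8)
Work throughout on the event $\mathcal{E}_n$ on which $\wh K = K$, the estimated partition satisfies $I_k \subseteq \wh I_{\pi(k)} \subseteq I_k \cup J_1^k$ for every $k$ (after relabelling by the permutation $\pi$, so that $m \le |\wh I_k| \le m + |J_1^k|$), and $\wh\Theta^\top\wh\Theta$ is non-singular; by Theorem \ref{thm_beta} and Theorem 3 of \cite{LOVE}, $\PP(\mathcal{E}_n)\to 1$, and on $\mathcal{E}_n$ all the plug-in quantities entering $\wh V_k$ are well defined. Since $V_k \ge \sigma^2\Omega_{kk} \ge \sigma^2/\cu$ is bounded away from zero, it suffices to prove $\wh V_k/V_k \to 1$ in probability; the displayed conclusion then follows from Theorem \ref{thm_distr} and Slutsky's theorem, writing $\sqrt{n/\wh V_k}(\wh\beta_k - \beta_k) = (V_k/\wh V_k)^{1/2}\cdot\sqrt{n/V_k}(\wh\beta_k-\beta_k)$. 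The strategy is to show that every ingredient of $\wh V_k$ is close to its population counterpart, at rates that are negligible \emph{relative to} $V_k^{1/2}$, and then to combine these estimates by a perturbation argument; the bookkeeping is nontrivial because $K$, $p$, $m$ and $\|\beta\|_2$ may all grow with $n$.

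I would first dispatch the elementary ingredients. From (\ref{def_delta}) and the estimation-rate bounds of \cite{LOVE}, $\|\whC - \C\|_\i = O_p(\sqrt{\log\pn/n})$ and $\max_{j\in[p]}\|\wh A_{j\sbt} - A_{j\sbt}\|_1 = o_p(1)$; together with Assumption \ref{ass_C}, these give $\|\wh\Omega - \Omega\|_{\rm op} = o_p(1)$ for $\wh\Omega = \whC^{-1}$, and in particular $\wh\Omega_{kk}/\Omega_{kk} \to 1$. The noise estimates satisfy $\max_{i\in\wh I}|\wh\tau_i^2 - \tau^2| = o_p(1)$, since $\wh\tau_i^2 = \wh\Sigma_{ii} - \wh A_{i\sbt}^\top\whC\wh A_{i\sbt}$ while $\Sigma_{ii} - A_{i\sbt}^\top\C A_{i\sbt} = \tau^2$ for $i\in I$ and the discrepancy on $\wh I\setminus I \subseteq J_1$ is controlled by Assumption \ref{ass_J1_prime}; this is Lemma \ref{lem_tau}. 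For $\wh\sigma^2$ I would use the representation $\wh\sigma^2 = n^{-1}\y^\top\y - 2\wh\beta^\top\wh h + \wh\beta^\top\whC\wh\beta$ together with the identities $\EE[Y^2] = \beta^\top\C\beta + \sigma^2$ and $\Cov(Z,Y) = \C\beta$: the law of large numbers gives $n^{-1}\y^\top\y \to \beta^\top\C\beta + \sigma^2$, the quantity $\wh h$ in (\ref{est_h}) is consistent for $\Cov(Z,Y)=\C\beta$, and $\wh\beta$ is consistent for $\beta$ at rate (\ref{rate_beta}), which under Assumption \ref{ass_H_prime} is $o_p(V_k^{1/2})$ because $V_k^{1/2}\gtrsim 1\vee\|\beta\|_2/\sqrt m$ and $K\log\pn/n = o(1)$; cancelling the leading terms yields $\wh\sigma^2 \to \sigma^2$. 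Finally $\|\wh\beta\|_2^2/|\wh I_k| \to \|\beta\|_2^2/m$ since $|\wh I_k|/m \to 1$ on $\mathcal{E}_n$ under Assumption \ref{ass_J1_prime}. Plugging these into the first bracket of (\ref{def_Vk}) already handles the whole of $\wh V_k$ in the regime $\lambda_K/\tau^2\to\i$, where $V_k$ reduces to (\ref{def_Vk_simp}).

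The substantive part concerns the two terms of (\ref{def_Vk}) built from $(\Theta^\top\Theta)^{-1}$ and $\Theta^+ = (\Theta^\top\Theta)^{-1}\Theta^\top$, with $\wh\Theta\in\R^{p\times K}$ given by (\ref{est_Theta}). Controlling $\e_k^\top(\wh\Theta^\top\wh\Theta)^{-1}\e_k$ requires a relative-error bound $\|(\wh\Theta^\top\wh\Theta)^{-1} - (\Theta^\top\Theta)^{-1}\|_{\rm op} = o_p(\|(\Theta^\top\Theta)^{-1}\|_{\rm op})$, which is where Assumption \ref{ass_H_prime} (equivalently $\lambda_K \gg p\log\pn/\sqrt n$, forcing $K\log\pn = o(\sqrt n)$) enters; I would import the operator-norm control of $(\wh\Theta - \Theta)^\top\wh\Theta(\wh\Theta^\top\wh\Theta)^{-1}$ and the invertibility of $\wh\Theta^\top\wh\Theta$ already established, for the proof of Theorem \ref{thm_distr}, in Lemmas \ref{lem_H_op} and \ref{lem_Rem2}. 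The triple-sum term $\sum_{a=1}^K\wh\beta_a^2\sum_{i\in\wh I_a}[\e_k^\top\wh\Theta^+\e_i]^2$ is the hardest: the inner index set $\wh I_a$ may contain quasi-pure variables $i\in J_1^a\setminus I_a$, for which $\wh A_{i\sbt}$ deviates from $\pm\e_a$, so one must show that the perturbation of this double sum relative to $\sum_{a=1}^K\beta_a^2\sum_{i\in I_a}[\e_k^\top\Theta^+\e_i]^2$ is $o_p(V_k)$ --- and this is exactly what the calibration $\bar\rho^2\log\pn = o(1/m)$ in Assumption \ref{ass_J1_prime} is designed to buy. I expect this last step --- propagating the quasi-pure-variable error and the $\wh\Theta$-estimation error through the $\Theta^+$-dependent quadratic forms while maintaining control \emph{relative to} the possibly-vanishing population quantities, with $K$ and $m$ growing --- to be the main obstacle; it is where the bulk of the auxiliary work (Lemmas \ref{lem_tau}--\ref{lem_1}) is spent. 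Once all the pieces are assembled, a routine perturbation bound combines them to give $\wh V_k/V_k \to 1$. The extension to a general diagonal $\Gamma$ and the explicit variance (\ref{def_V_k_general}) is obtained verbatim, using $\wh\tau_i^2$ in place of $\wh\tau^2$ and the uniform control from Lemma \ref{lem_tau}.
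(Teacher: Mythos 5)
Your proposal follows essentially the same route as the paper's proof: reduce to ratio consistency $\wh V_k/V_k\to_p 1$ on the event $\{\wh K=K\}$, split $V_k$ into its constituents ($\wh\sigma^2$, $\wh\tau^2$, $\|\wh\beta\|_2^2/\wh m$, $\wh\Omega_{kk}$, $\e_k^\top(\wh\Theta^\top\wh\Theta)^{-1}\e_k$, and the $\wh\Theta^+$-based double sum), establish relative-error consistency of each plug-in piece via the same auxiliary lemmas (Lemmas \ref{lem_tau}, \ref{lem_sigma}, \ref{lem_op_C}, \ref{lem_H_op}, \ref{lem_Delta_c}) with Assumptions \ref{ass_J1_prime} and \ref{ass_H_prime} handling the quasi-pure variables and the $(\Theta^\top\Theta)^{-1}$ terms, and finish with Slutsky — exactly the paper's decomposition into $\D_a$, $\D_b$, $\D_c$. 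One caveat: the consistency of $\wh\sigma^2$ cannot be obtained from plain consistency of $n^{-1}\y^\top\y$ and $\wh h$ when $\|\beta\|_2$ grows, since the cancelled leading terms are of order $\|\beta\|_2^2$ while the tolerance is $o(\sigma^2+\tau^2\|\beta\|_2^2/m)$; the exact algebraic cancellation and the $1/\sqrt m$ gain from within-group noise averaging in the paper's Lemma \ref{lem_sigma} are what make this work, which your sketch defers to but should be made explicit.
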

	\begin{proof}
		The proof of the consistency of $\wh V_k$ is given in Appendix \ref{sec_proofs_prop_V} and the rest of the proof follows from  Theorem \ref{thm_distr} and an application of the Slutsky's theorem.
	\end{proof}
	
	\begin{remark} \label{rem_inference_results}
		{\rm	If we treat model (\ref{model}) and (\ref{love}) as an augmented factor model, the vector $\beta$ simply corresponds to a particular row of the augmented matrix $\wt A = [A^\top , \beta^\top ]^\top $. As mentioned in the Introduction, and to the best of our knowledge, the only asymptotic normality results, with explicit asymptotic variance, have been derived in 
			\cite{bai2012}, when $I$ and $K$ {\it are known}, $K$ is a fixed constant, and $p \rightarrow \infty$.  In this framework, 	\cite{bai2012} show  
			\begin{equation}\label{uk}
			\sqrt{n/Q_k}(\wt \beta_k - \beta_k)  \overset{d}{\to} N(0, 1),\qquad \text{with}\qquad 
			Q_k  = \left(\sigma^2 + \t^2\|\beta\|_2^2 \right)\Omega_{kk},
			\end{equation} as $n\to\i$,
			for $1\le k\le K$, for an MLE-type estimator $\wt \beta$ of $\beta$. 
			
			We discuss below the relative computational and theoretical enhancements  offered  by  Theorem \ref{thm_distr} above, first within their framework, and then beyond it.

			At the computational level, \cite{bai2012} propose to estimate $\beta$ via an alternating EM-algorithm, but offer no guarantees that this estimator coincides with the MLE-type estimator $\wt \beta$ that is  theoretically studied. In contrast, the estimator $\wh \beta$ constructed in Section \ref{sec_est_beta} is also the estimator  analyzed theoretically in this work. 
			
			At the theoretical level, the estimator $\wt \beta$  of \cite{bai2012} is analyzed under
			
			\begin{equation}\label{conditions_Bai}
			p\lesssim \lambda_K(A\C A^\top ) \le  \lambda_1(A\C A^\top ) \lesssim p,\qquad c \le  \lambda_K(\C) \le \lambda_1(\C) \le C
			\end{equation}
			for some constants $0<c< C<\infty$, among other technical assumptions.
			Our estimator $\wh \beta$ is analyzed under considerably weaker assumptions. For instance, our condition on $\lambda_K(A\C A^\top )$ in Assumption \ref{ass_H} considerably relaxes the above condition (\ref{conditions_Bai}),  we don't require $\lambda_1(A\C A^\top ) \asymp \lambda_K(A\C A^\top )$, but  we allow the condition number of $A\C A^\top$  to grow as fast as   ${n}^{1/2} / \log\pn$. The latter follows  from Assumption \ref{ass_H_prime} in conjunction with the fact that $\lambda_1(A\C A^\top) \lesssim p $.
			
			Furthermore, when (\ref{conditions_Bai}) holds, $\|\beta\|_2$ is bounded and $p \rightarrow \infty$, as assumed in \cite{bai2012}, display (\ref{def_Vk_simp}) shows that the asymptotic variance $V_k$ of our estimator $\wh \beta_k$ reduces to $Q_k$ given in (\ref{uk}) above, when $m = 1$ and $I$ is {\it known}, as considered in \cite{bai2012}.

			In summary, Theorem \ref{thm_distr} holds uniformly over $\beta\in \R^K$, and for both fixed and growing dimensions $K$ and $p$, and whether $I$ is known or not.  While unifying these cases requires a much more  complicated analysis, the reward is that  our asymptotic analysis, and in particular the asymptotic variance $V_k$ can be derived simultaneously for all cases of interest. 	Furthermore, we provide a consistent estimator of $V_k$, which to the best of our knowledge has not been considered elsewhere. 
		}

	\end{remark}

	\section{ Essential Regression as regression on latent cluster centers}\label{sec_app_to_clustering}
	
	The decomposition (\ref{love}) in our model formulation can be used as a model for possibly overlapping clustering. For this, we interpret $A$ as an allocation  matrix that assigns the $X$-variables to possibly overlapping groups $G_k$ corresponding to the components of $Z$ via  
	\[ G_k = \{j\in [p] : A_{jk} \neq 0 \}. \] 
	This approach was first proposed in \cite{LOVE}, and their algorithm, called LOVE, was shown to estimate clusters consistently. With this interpretation, 
	the quantity (at the population level)
	$$
	\bar X := (A^\top A)^{-1}A^\top X,$$
	can be viewed as weighted cluster averages of all variables.
	As discussed in the Introduction, 
	Essential Regression provides a framework within which we can analyze when 
	the commonly used cluster averages can be used for downstream analysis, with statistical guarantees. For inference on  $\beta$, we remark that, at the population level, 
	\begin{align}\label{disp_beta}
	\beta = 
	\arg \min_\alpha \EE\left[\left( Y-\alpha^\top  \wt Z\right)^2\right] 
	\neq ~ \arg \min_\alpha \EE\left[\left( Y-\alpha^\top  \bar X\right)^2\right] 
	\end{align}
	where $\wt Z$ is the best linear predictor (BLP) of $Z$ from $	\bar X$, given by 
	\begin{align}\label{def_barZ_I2}
	\wt Z = ~\Cov(Z, \bar X)[\Cov(\bar X)]^{-1}\bar X= \Theta^\top \Theta\left(\Theta^\top \Sigma \Theta\right)^{-1}\Theta^\top X.
	\end{align}
	Display (\ref{disp_beta}) suggests that estimation of $\beta$ should be based on the BLP of $Z$ rather than the weighted cluster averages. This is indeed true. We let $\wh \Z = \X\wh\Theta(\wh\Theta^\top  \wh\Sigma \wh \Theta)^{-1}\wh\Theta^\top \wh\Theta$, which is well defined  provided that  $(\wh\Theta^\top  \wh\Sigma \wh \Theta)^{-1}$ exists. The latter is met with high probability under Assumption \ref{ass_H}. Consider the least squares  estimator $\wt \beta$ corresponding to regressing $\y$ onto $\Z$. Then 
	\begin{align*}
	\wt \beta &= \left(\wh \Z^\top \wh \Z\right)^{-1}\wh \Z^\top  \y = \left(\wh\Theta^\top \wh\Theta \right)^{-1}\left(\wh\Theta^\top  \wh\Sigma \wh \Theta\right)\left(\wh\Theta^\top  \wh\Sigma \wh \Theta\right)^{-1}\wh\Theta^\top  \X^\top \y = \wh\beta,
	\end{align*}
	by using $\wh\Sigma = n^{-1}\X^\top \X$. This natural approach yields exactly the same estimator $\wh{\beta}$  we introduced in Section \ref{sec_est_beta}. We can thus view $\wh \beta$ as a post-clustering estimator, and the results of Theorems \ref{thm_beta} and \ref{thm_distr} as pertaining to post-clustering inference, at the factor level. 
	

	\section{Simulations}\label{sec_sims}
	In this section, we complement and support our theoretical findings with simulations, focusing on the $\ell_2$ convergence rate of $\wh\beta$ and on the attained  coverage of the  corresponding 95\% confidence interval (CI) for $\beta$. Additional simulation results are provided in Appendix \ref{app_sim} to investigate the impact of  a potential inconsistent estimation of the number of factors  on subsequent estimation steps, and on inference for  $\beta$.
	
	\paragraph{Data generating mechanism:} We first describe how we generate $A$, $\C$, $\Gamma$, and $\beta$. Recall that $A$ can be partitioned into $A_{I \sbt}$ and $A_{J \sbt}$. To generate $A_{I \sbt}$, we set $|I_k| = m$ for each $k \in [K]$ and choose $A_{I \sbt} = {\bI}_K \otimes {\bm 1}_m$, where $\otimes$ denotes the Kronecker product. Each row of $A_{J \sbt}$ is generated by first randomly selecting its support with cardinality drawn from $\{2,3\ldots,K\}$ and then by sampling its non-zero entries from $\textrm{Uniform}(0,1)$. 
	In the end, we rescale $A_{J \sbt}$ such that the $\ell_1$-norm of each row is no greater than 1. 
	We multiply all entries in $A_{J \sbt}$    by independent random signs.
	To generate $\C$, we range its $K$ diagonal   
	entries  from 2.5 to 3 with equal increments. The  off-diagonal elements of $\C$ are then chosen as 
	$[\C]_{ij} = (-1)^{(i+j)}([\C]_{ii} \wedge [\C]_{jj})(0.3)^{|i-j|}$ for any $i\ne j \in [K]$. Finally, $\Gamma$ is chosen by randomly sampling its diagonal elements from $\textrm{Unif}(1, 3)$.
	
	Next, we generate the $n\times K$ matrix $\Z$ and the $n\times p$ noise matrix $\W$ by generating   i.i.d. rows  from $N_K(0, \C)$ and $N_p(0, \Gamma)$, respectively. Finally,  we set $\X = \Z A^\top  + \W$ and $\y = \Z\beta + \eps$ where the $n$ components of $\eps$ are i.i.d.\ $N(0, 1)$. For each setting, we repeat generating pairs $(\X, \y)$ 200 times and record the corresponding results.
	
	\subsection*{Convergence rate and the coverage of the 95\% confidence interval}\label{sec_sim_beta}
	
	We consider the following four settings: 
	\begin{enumerate}
		\setlength\itemsep{0mm}
		\item[(1)] Fix $p = 400$, $K = 10$, $m = 5$, and vary $n\in\{200,400, 600, 800\}$;
		\item[(2)] Fix $n = 300$, $K = 10$, $m = 5$, and vary $p\in\{100,300, 500, 700\}$;
		\item[(3)] Fix $n = 300$, $p = 400$, $m = 5 $ and vary $K \in \{5, 10, 15, 20\}$;
		\item[(4)] Fix $n = 300$, $p = 400$, $K = 10 $ and vary $m \in \{2, 5, 10, 15, 20\}.$
	\end{enumerate}
	The entries of $\beta$ are independently sampled from $\textrm{Unif}(1,3)$. For each setting, we calculate the averaged $\ell_2$ errors $\|\wh\beta - \beta\|_2$ of the following four estimators, the proposed estimator and three other estimates,  and report them in Table : 
	\begin{itemize}[leftmargin = 8mm]
		\setlength\itemsep{0mm}
		
		\item Our proposed estimator: $\wh \beta$ constructed in (\ref{est_beta}); 
		\item $\wh\beta^{(A)}$ defined in (\ref{est_beta_mm});
		\item $\wh\beta^{(I)}$ defined in (\ref{est_beta_I}); 
		\item $\wh\beta_{\rm naive}$ obtained by naively regressing $\y$ on $\bar \X = \X\wh A [\wh A^\top \wh A]^{-1}$;
		\item $\wh\beta_{\rm oracle} = (\Z^\top \Z )^{-1}\Z^\top  \y$, the oracle least squares estimator based on  the true   matrix $\Z$.
	\end{itemize}
	
	We focus on the best feasible estimators, with respect to their respective $\ell_2$ error, $\wh \beta$ and 
	$\wh \beta^{(I)}$, and check the coverage of their corresponding 95\% confidence intervals (CI). Table \ref{tab_1} shows the coverage and average length of the CI's for $\beta_1$, respectively constructed from $\wh\beta$, based on Theorem \ref{thm_distr} and Proposition \ref{prop_est_V} and  $\wh\beta^{(I)}$, based on Theorem \ref{thm_distr_I} and Proposition \ref{prop_est_Q} in the Appendix.


	\paragraph{Summary:}
	
	As expected, the oracle estimator  $\wh\beta_{\rm oracle}$ is the best performer, since it uses the true $\Z$, not available to the other estimates.
	Among the remaining estimators, our proposed estimator  $\wh\beta$ 
	outperforms the other three estimators in all cases.
	The gap between $\wh\beta$ and $\wh\beta_{\rm oracle}$ decreases as either $n$ or $m$ increases or $K$ decreases.
	We find that $\wh\beta^{(I)}$ has better performance than $\wh\beta_{\rm naive}$ and $\wh\beta^{(A)}$. While dominated by $\wh\beta_{\rm oracle}$ and $\wh\beta$, we find that for large $m$, $\wh\beta^{(I)}$ performs similarly to $\wh\beta$. Overall, the naive estimator $\wh\beta_{\rm naive}$ has the worst performance,  which supports our findings in Section \ref{sec_app_to_clustering}. 
	
	The estimation  errors of $\wh\beta$ and $\wh\beta^{(I)}$ decrease as $n$ and/or $m$ increase. The estimation errors increase in  $K$.
	It is worth mentioning  that increasing $p$ barely affects the  estimation errors. These findings support  Theorem \ref{thm_beta}. 
	
	Regarding the CIs of $\beta$ based on $\wh \beta$, the average
	coverage, over $200$ repetitions,  are close  to the nominal  95\% in most  settings, especially for  moderately large sample size $n$. This
	further supports the  results of  Section \ref{sec_inference_beta}. The coverage level of the intervals based on  $\wh\beta$ are closer to the  95\%  level than those based on  $\wh\beta^{(I)}$.  The averaged CI lengths corresponding to $\wh\beta$ are also smaller than those relative to  $\wh\beta^{(I)}$,  in most of the settings we considered. This suggests that $\wh\beta$ is more efficient than $\wh\beta^{(I)}$.  We further corroborate the validity of Theorem \ref{thm_distr} and Proposition \ref{prop_est_V}
	in Figure \ref{fig_hist_1} in Appendix \ref{sec_histograms}. This figure depicts histograms based on 200 values  of  $\sqrt{n/\wh V_k}(\wh\beta_1 - \beta_1)$.

	\begin{table}[ht]
		\centering
		{\renewcommand{\arraystretch}{1.5}
			\begin{tabular}{lccccc|cc|cc}
				\hline
				& $\wh\beta$ & $\wh\beta^{(I)}$ &  $\wh\beta_{\rm naive}$ &  $\wh\beta^{(A)}$   & $\wh\beta_{\rm oracle}$ & \multicolumn{2}{c|}{CIs of $\wh\beta$} &  \multicolumn{2}{c}{CIs of $\wh\beta^{(I)}$}\\\hline  
				\multicolumn{6}{l|}{
					{\bf	Vary $n$ with $p = 400$, $K = 10$, $m=5$}} & coverage & length & coverage & length\\
				$n = 200$  & 0.045 & 0.052 & 0.204 & 0.106 & 0.002 & 91.0 & 0.76 & 92.5 & 0.85\\ 
				$n = 400$ & 0.019 & 0.023 & 0.135 & 0.052 & 0.001 & 95.0 & 0.53 &  93.5 & 0.58 \\ 
				$n = 600$ & 0.013 & 0.016 & 0.112 & 0.036 & 0.001 & 93.5 & 0.44 & 94.0 & 0.47 \\ 
				$n = 800$ & 0.009 & 0.011 & 0.101 & 0.029 & 0.001 & 94.0 &  0.38 & 93.5 & 0.40\\ \hline
				\multicolumn{6}{l|}{{\bf Vary $p$ with $n = 300$, $K=10$, $m=5$}} & \multicolumn{2}{c|}{} \\
				$p=100$ & 0.029 & 0.035 & 0.183 & 0.029 & 0.002 & 94.5 & 0.69 & 92.5  & 0.72 \\ 
				$p=300$ & 0.029 & 0.035 & 0.181 & 0.067 & 0.002 & 94.0 & 0.64 & 92.0 & 0.69 \\ 
				$p=500$ & 0.031 & 0.039 & 0.179 & 0.088 & 0.002 & 95.0 & 0.65 & 95.0 & 0.72\\ 
				$p = 700$ & 0.030 & 0.037 & 0.176 & 0.103 & 0.001 & 94.5 & 0.63 & 94.0 & 0.70 \\ 
				\hline
				\multicolumn{6}{l|}{{\bf Vary $K$ with $n = 300$, $p = 400$, $m=5$}} & \multicolumn{2}{c|}{} \\
				$K=5$ & 0.016 & 0.020 & 0.064 & 0.041 & 0.001 & 91.0 & 0.44 & 91.0 & 0.45  \\ 
				$K=10$ & 0.026 & 0.032 & 0.177 & 0.080 & 0.001 & 94.5 & 0.63 & 92.0 & 0.68 \\ 
				$K=15$ & 0.052 & 0.065 & 0.282 & 0.099 & 0.002 & 93.0 & 0.88 & 94.0 & 0.96  \\ 
				$K=20$ & 0.046 & 0.057 & 0.192 & 0.064 & 0.002 & 97.5 & 0.81 & 96.5 & 0.89\\ 
				\hline
				\multicolumn{6}{l|}{{\bf Vary $m$ with $n = 300$, $p = 400$, $K=10$}} & \multicolumn{2}{c|}{}\\
				$m=2$ & 0.116 & 0.191 & 0.301 & 0.171 & 0.002 & 90.1 & 1.02 & 90.1 & 1.34 \\ 
				$m=5$ & 0.030 & 0.035 & 0.173 & 0.080 & 0.001 & 93.5 & 0.65 & 91.0  & 0.70 \\ 
				$m=10$ & 0.015 & 0.016 & 0.097 & 0.036 & 0.002 & 94.0 & 0.48  & 93.0 & 0.48 \\ 
				$m=15$ & 0.011 & 0.012 & 0.060 & 0.022 & 0.002 & 92.0 & 0.40 & 90.5 & 0.39 \\ 
				$m=20$ & 0.008 & 0.008 & 0.035 & 0.012 & 0.001 & 96.5 & 0.35 & 96.0 & 0.34 \\ 
				\hline 
		\end{tabular}}\vspace{2mm}
		\caption{$\ell_2$ error  of different estimators and the coverages and\\ the averaged lengths of the 95\% CIs of $\beta_1$.}
		\label{tab_1}
	\end{table}


	\section{Analysis of SIV-vaccine induced humoral immune responses} \label{sec_real_data}

	We tested Esential Regression on a high-dimensional dataset of vaccine-induced humoral immune responses, from a recently published study that demonstrated multiple antibody-centric mechanisms of vaccine-induced protection against SIV \citep{Jishnu}, the non-human primate equivalent of HIV. The dataset comprised $p = 191$ antibody functional and biophysical properties, including Fc effector functions, glycosylation profiles and binding to Fc receptors. The properties were measured for $n = 60$ non-human primates (NHPs). For each NHP, the level of protection offered by the vaccine (number of intra-rectal SIV challenges after which the NHP got infected or whether the NHP remained uninfected after the maximum number of challenges for the study, 12, normalized by the total number of challenges)  was used as the outcome $Y \in [0, 1]$ we regressed to.
	
	\begin{figure}[ht]
		\centering
		\begin{tabular}{c}
			\includegraphics[width = .7\textwidth]{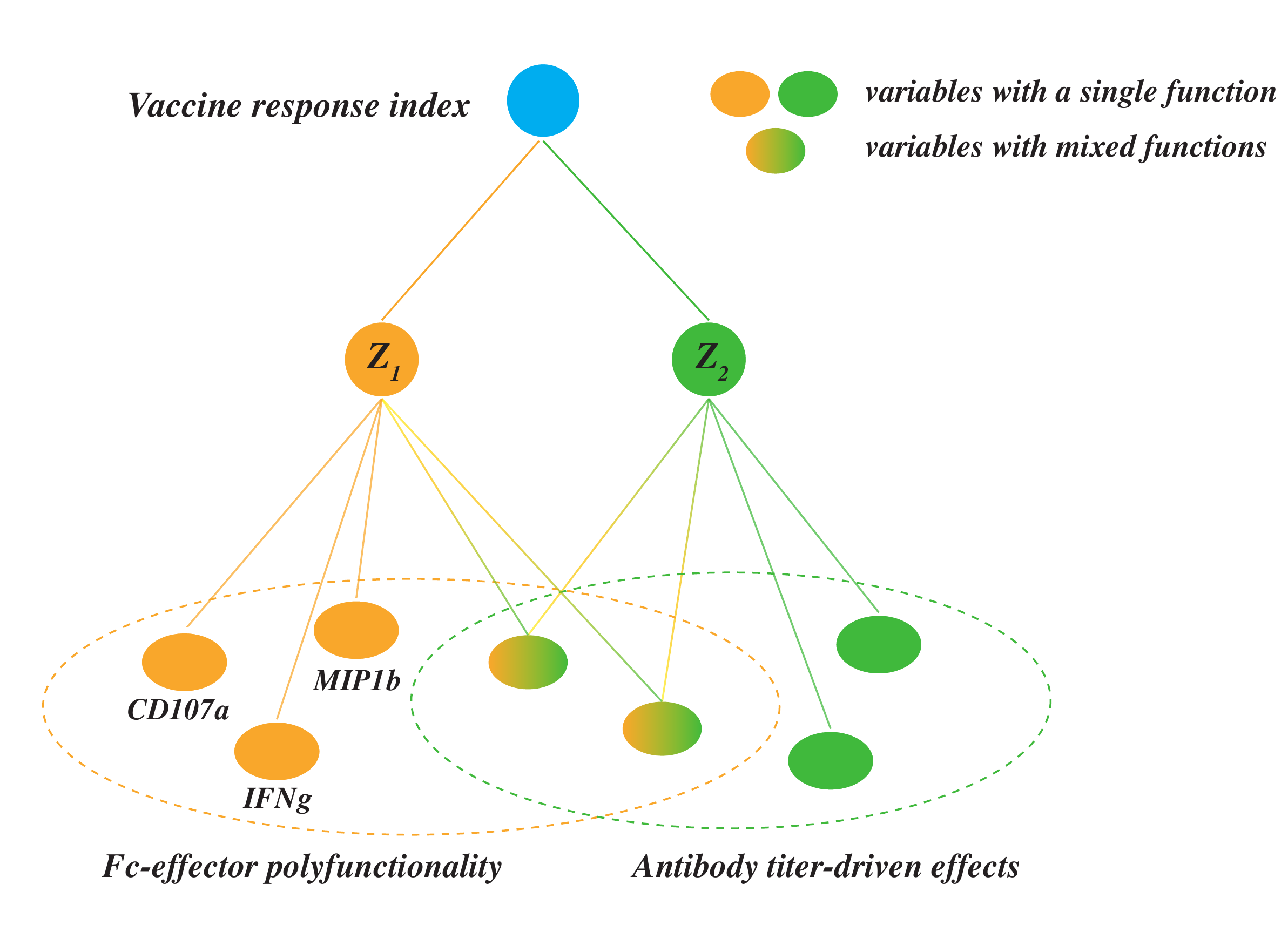}
		\end{tabular}
		\caption{
			Two representative clusters with their pure variables. (The overlapping variables between these two clusters are more than the plot shows.)}
		\label{fig-real-data}
	\end{figure}
	
	One goal of the study was to determine the un-observed  humoral signatures associated with the level of protection offered by the vaccine $Y$,  and suggests therefore a latent factor regression framework.  In particular, the Essential Regression model is ideally suited for this data set, in light of prior biological knowledge on the measured $X$-variables: some of the measured antibody properties work in tandem with several other properties (mixed-function variables), while others are part of individual immunological signatures (pure/single-function variables) \citep{Bournazos-Ravetch, Nimmerjahn}. For this data set we used the algorithm developed in \cite{LOVE} to obtain  an estimator $\wh K =10$ of the number of factors. 
	
	We used the asymptotic normality  of $\wh\beta$, established in Section \ref{sec_inference_beta} above, to determine the strength of association between $Y$  and the biologically interpretable  immunological signatures. 
	This  task  is difficult to  accomplish, with theoretical guarantees,  outside a latent factor regression framework. Common existing approaches include standard regularized regression at the observed bio-marker level, followed by an ad-hoc re-creation of clusters and cluster centers \citep{Jishnu}. Although subsequent regression of $Y$ onto cluster centers, appropriately defined, can be easily performed, theoretical justifications of such procedure is lacking. 
	In contrast, Essential Regression coupled with Theorem \ref{thm_distr} and Proposition \ref{prop_est_V} provides a principled way of regressing directly onto the latent cluster centers.  Figure \ref{fig-real-data} depicts the top two biological functions associated with the level of protection offered by the vaccine, under FDR-control. The estimated coefficients  are $\wh\beta_1 = 0.104$ with asymptotic $90\%$ confidence interval $[0, 0.21]$,  and $\wh \beta_2 = 0.105$ with 90\% asymptotic confidence interval $[0.02, 0.19]$, corresponding respectively to $Z_1$  on the left of Figure \ref{fig-real-data}, and  to $Z_2$,  on the right. On the basis of the pure and mixed variables in the two associated clusters,  $Z_1$ and $Z_2$ can be broadly defined as Polyfunctionality involving multiple Fc effector functions and Enhanced IgG titers and FcR2A binding, respectively. These findings are in excellent alignment with  biological expectations, providing strong support for the applicability of the methods and theory developed in this work even in data sets of  modest sample size.

	\section*{Acknowledgements} 
	We are grateful to Jishnu Das for help with the interpretation of our data analysis results. 
	Bunea and Wegkamp are supported in part by NSF grant DMS-1712709. Bing is supported in part by NSF grant DMS-1407600.

	\section*{Supplementary Material}\label{supp}
	The supplementary document includes all the proofs, the data-driven selection of the tuning parameter and auxiliary results.

	{\setlength{\bibsep}{0.85pt}{
			\bibliographystyle{ims}
			\bibliography{ref}
	}}

	\newpage
	\appendix

	\section{The identifiability of $A$ when $\Gamma$ is known}\label{app_cor}
	
	We state and prove the identifiability of $A$ when $\Gamma$ is known under Assumption \ref{ass_model} but when {\it (A1)} is replaced by 
	\begin{flalign*}
	\text{{\em (A1')} {\it For each $k\in [K]$, there exists at least one index  $i\in [p]$ such that $A_{i\sbt}  = \e_k$.}}
	\end{flalign*}
	
	\begin{cor}\label{cor_ident}
		Under {\it (A0)}, {\it (A2)} of Assumption \ref{ass_model} and {\it (A1')}, suppose $\Gamma$ is known. Then the matrix $A$ is identifiable from $\Sigma$, up to a $K\times K$ signed permuation matrix. 
	\end{cor}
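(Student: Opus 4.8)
The plan is to exploit the hypothesis that $\Gamma$ is known by reducing everything to the rank-$K$ matrix $M := \Sigma - \Gamma = A\sz A^\top$, which is then identifiable from $\Sigma$; the corollary becomes the assertion that an admissible pair $(A,\sz)$ — one obeying \emph{(A0)}, \emph{(A1')}, \emph{(A2)} — is determined by $M$ up to a signed permutation of the columns of $A$. This is exactly the place where one pure row per factor is enough: unlike in Theorem~1 of \cite{LOVE}, we do not need \emph{(A1)} to separate $A\sz A^\top$ from $\Gamma$, because $\Gamma$ is given. First I would record that $A$ has full column rank $K$: by \emph{(A1')} there are $K$ necessarily distinct indices $i_1,\dots,i_K$ with $A_{i_k\sbt}=\e_k$, so the submatrix of $A$ formed by these rows is $\bI_K$ after reordering; together with $\sz\succ0$ (part of \emph{(A2)}) this yields $\operatorname{rank}(M)=K$, hence $\operatorname{col}(M)=\operatorname{col}(A)$.

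Next, suppose $M=A\sz A^\top=\wt A\wt\Sigma_Z\wt A^\top$ for a second admissible pair $(\wt A,\wt\Sigma_Z)$. The previous step applied to both representations gives $\operatorname{col}(\wt A)=\operatorname{col}(M)=\operatorname{col}(A)$, so there is an invertible $Q\in\R^{K\times K}$ with $\wt A=AQ$; multiplying $M$ on the left by $(A^\top A)^{-1}A^\top$ and on the right by its transpose then also gives $\sz=Q\wt\Sigma_Z Q^\top$, so the same signed permutation will take care of $\sz$ once $Q$ is pinned down. The remaining task is therefore entirely about showing $Q\in\H_K$.

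Here is where \emph{(A0)} and the \emph{pure rows of both} representations come in. For each $k$, row $i_k$ of $\wt A$ equals $\e_k^\top Q=Q_{k\sbt}$, so \emph{(A0)} applied to $\wt A$ forces $\|Q_{k\sbt}\|_1\le1$; symmetrically, taking a pure index $\wt\imath_k$ for $\wt A$ (so $\wt A_{\wt\imath_k\sbt}=\e_k$) gives $A_{\wt\imath_k\sbt}=\e_k^\top Q^{-1}=(Q^{-1})_{k\sbt}$, and \emph{(A0)} applied to $A$ forces $\|(Q^{-1})_{k\sbt}\|_1\le1$. Thus $Q$ and $Q^{-1}$ both have $\ell_\infty\!\to\!\ell_\infty$ operator norm at most $1$, so $\|x\|_\infty=\|Q^{-1}Qx\|_\infty\le\|Qx\|_\infty\le\|x\|_\infty$ for every $x\in\R^K$; that is, $Q$ is a linear isometry of $(\R^K,\|\cdot\|_\infty)$.

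The final step, and the only point beyond bookkeeping, is the classical fact that the linear isometries of $\ell_\infty^K$ are exactly the signed permutation matrices; I would include a short self-contained argument (such a $Q$ maps the facets $\{x:x_j=\pm1\}$ of the unit cube onto facets, hence sends each coordinate functional to $\pm$ a coordinate functional), concluding $Q\in\H_K$ and so $\wt A=AQ$ with $Q$ a signed permutation, which is the claim. I do not anticipate a genuine obstacle: the proof is a column-space identity followed by this $\ell_\infty$-isometry lemma. The one subtlety to watch is that \emph{(A1')} must be invoked for \emph{both} $A$ and $\wt A$ — it is precisely the availability of a pure row on each side that lets us bound the row $\ell_1$ norms of $Q$ and $Q^{-1}$ simultaneously, and hence close the argument.
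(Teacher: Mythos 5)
Your proof is correct, but it follows a genuinely different route from the paper's. The paper argues constructively: since $\Gamma$ is known, $A\sz A^\top=\Sigma-\Gamma$ is identified, and then the pure-variable set $I$ is recovered from $A\sz A^\top$ by rerunning the argument of Theorem~1 of \cite{LOVE} with the scores $M_i$ and $S_i$ redefined to include diagonal entries (this step uses the separation constant $\nu$ in \emph{(A2)}), after which the proof of Theorem~2 of \cite{LOVE} yields $A$ up to a signed permutation. You instead give a direct uniqueness argument: full column rank of $A$ from \emph{(A1')} plus $\sz\succ0$ gives $\operatorname{col}(\Sigma-\Gamma)=\operatorname{col}(A)$, hence any second admissible representation satisfies $\wt A=AQ$ with $Q$ invertible and $\sz=Q\wt\Sigma_Z Q^\top$; then \emph{(A0)} evaluated at a pure row of each representation bounds the row $\ell_1$ norms of $Q$ and of $Q^{-1}$ by one, making $Q$ an isometry of $\ell_\infty^K$ and therefore a signed permutation. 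Your bookkeeping is right (with rows as vectors, $\wt A_{i_k\sbt}=Q^\top\e_k$ is the $k$th row of $Q$, and symmetrically for $Q^{-1}$ via a pure row of $\wt A$), and the isometry lemma admits the short facet-centroid argument you sketch, or an equally short matrix argument from $\|Q_{k\sbt}\|_1\le1$, $\|(Q^{-1})_{k\sbt}\|_1\le1$ and $Q^{-1}Q=\bI_K$. What the two approaches buy is different: yours is more elementary and self-contained, uses only positive definiteness from \emph{(A2)} (the gap $\nu$ is never needed), and, as you correctly emphasize, hinges on invoking \emph{(A1')} for both representations; the paper's route uses the full strength of \emph{(A2)} but is constructive, exhibiting how $I$ and then $A$ can actually be recovered from $\Sigma-\Gamma$, which is what its estimation machinery mirrors.
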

	\begin{proof}
		When $\Gamma$ is known, one can identify $A\C A^\top = \Sigma - \Gamma$.  If $I$ can be identified, then $A$ is identifiable by the proof of Theorem 2 in \cite{LOVE}. It remains to show that $I$ is identifiable from $A\C A^{\top}$. This can be shown by repeating the same arguments of proving \cite[Theorem 1]{LOVE} except that we replace the definitions in (2.2) and (2.3) of \cite{LOVE} by
		\[
		M_i := \arg\max_{j\in[p]}|\Sigma_{ij}|, \quad  S_i := \{j\in [p]: |\Sigma_{ij}| = M_i\}
		\]
		for each $1\le i\le p$.
	\end{proof}

	\section{Proofs of Proposition \ref{prop_iden} and Theorem \ref{thm_beta_lower}}\label{sec_proofs_ident_lower}
	
	\subsection{Proof of Proposition \ref{prop_iden}: the identifiability of $\beta$}\label{sec_proofs_prop_iden} 
	From the structure of $\Sigma$ together with Assumption \ref{ass_model}, Theorem 1 in \cite{LOVE} can be directly invoked to show that $I$ and its partition $\I$ are identifiable up to a label permutation. In addition,  $A$ is identifiable up to a signed permutation. Suppose we identify $\wt{A} = AP$ for some signed permutation $P$ and, in particular, $\wt{A}_{I\sbt} = A_{I \sbt}P$. 
	
	First observe that for any $a,b\in [K]$, $[\C]_{ab}$ is recovered by 
	\begin{equation}\label{eqn:C_recovery}
	[\C]_{ab} = A_{ia}A_{jb}\Sigma_{ij}, \ \ \text{for } i\in I_a,\ j\in I_b,\ i\neq j.
	\end{equation}
	We prove this as follows. Since $\Sigma = A\C A^\top  + \Gamma$ with $\Gamma$ diagonal, 
	\[\Sigma_{ij} = \sum_{c,d=1}^KA_{ic}[\C]_{cd} A_{jc} =  A_{ia}A_{jb} [\C]_{ab},\]
	where in the second step we use that $i\in I_a$, $j\in I_b$. Then (\ref{eqn:C_recovery}) follows from $|A_{ia}|=|A_{jb}| =1$.
	
	From $\wt A$ and the corresponding partition $\{\wt I_a\}_{a\in [K]}$, we can define $\wt{\Sigma}_Z$ to be the matrix with elements $[\wt{\Sigma}_Z]_{ab} = \wt A_{i_a a}\wt A_{j_b b}\Sigma_{i_a j_b}$ for some $i_a\in \wt I_a$, $j_b\in \wt I_b$, and $i_a\neq j_b$. Let $\pi: [K] \to [K]$ be the permutation mapping corresponding to $P$. Specifically, $\pi(a)$ equals the unique $b\in [K]$ such that $|P_{ba}|=1$. Then for any $i\in [p]$ and $a\in [K]$, $\wt A = AP$ implies $\wt A_{ia} = P_{\pi(a)a}A_{i\pi(a)}$ and $\wt I_a = I_{\pi(a)}$, so
	\[[\wt{\Sigma}_Z]_{ab} =  P_{\pi(a)a} P_{\pi(b)b}A_{i_a\pi(a)} A_{j_b\pi(b)}\Sigma_{i_aj_b} = P_{\pi(a)a} P_{\pi(b)b} [\C]_{\pi(a)\pi(b)} = [P^\top \C P]_{ab},\]
	where we use (\ref{eqn:C_recovery}) in the second equality since $i_a\in I_{\pi(a)}$, $j_b\in I_{\pi(b)}$, $i_a\neq j_b$. This shows $\wt{\Sigma}_Z = P^\top \C P$. Finally, we have
	\begin{equation*}
	\wt{\beta} = \wt{\Sigma}_Z^{-1}(\wt{A}_{I\sbt}^\top \wt{A}_{I\sbt})^{-1}\wt{A}_{I\sbt}^\top \Cov(X_I, Y)=  \wt{\Sigma}_Z^{-1}(\wt{A}_{I\sbt}^\top \wt{A}_{I\sbt})^{-1}\wt{A}_{I\sbt}^\top A_{I\sbt}\C\beta = P^\top \beta
	\end{equation*}
	by using $\wt{A}_{I\sbt} = A_{I \sbt}P$ and $\wt{\Sigma}_Z = P^\top \C P$ in the last step. This concludes the proof.\qed

	\subsection{Proof of Theorem \ref{thm_beta_lower}: the minimax lower bounds for estimators of $\beta$}\label{sec_proofs_thm_beta_lower}
	
	Let $\PP_{\beta}$ and $\PP_{\beta'}$ denote the joint distribution of $(X_i, Y_i)$ for $i = 1,\ldots, n$, parametrized by the same $(A,\C)$ but different $\beta$ and $\beta'$, respectively. Denote by $\text{KL}(\PP_{\beta}, \PP_{\beta'})$ the Kullback-Leibler divergence between these two distributions. 	Since 
	\begin{align*}\sup_{(\beta, A, \C)\in \S(R, m)}&\PP_{A, \C, \beta}\left\{\|\wh \beta - \beta\| \ge c'\left(1\vee {R\over \sqrt m}\right)\cdot \sqrt {K\over n}\right\} \\
	&\ge \sup_{\|\beta\|\le R}\PP_{A^*, \C^*, \beta}\left\{\|\wh \beta - \beta\| 
	\ge c'\left(1\vee {R\over \sqrt m}\right)\cdot \sqrt {K\over n}\right\},
	\end{align*}
	for any fixed $A^*$ and $\C^*$, we let $\PP_\beta := \PP_{A^*, \C^*, \beta}$ for $A^*$ and $\C^*$ defined below and aim to prove 
	\[\inf_{\wh \beta}\sup_{\|\beta\|\le R}\PP_{\beta}\left\{\|\wh \beta - \beta\| \ge c'\left(1\vee {R\over \sqrt m}\right)\cdot \sqrt {K\over n}\right\}\ge c''.\]
	We choose $\C^*$ such that $0<\cl\le \lambda_{\min}(\C^*) \le \lambda_{\max}(\C^*) \le \cu<\i$ and 
	\begin{equation}\label{def_Anull}
	A^* := \begin{bmatrix}
	\1_m\otimes {\bI}_K \vspace{2mm}\\ 0
	\end{bmatrix}
	\end{equation}
	with $\otimes$ denoting the kronecker product and $\1_d$ denoting the vector in $\R^d$ with all ones.
	
	We start by constructing a set of hypothesis $S$ for $\beta$.  From Lemma \ref{lem_klop}, stated in Section \ref{sec_lem_proof_beta_lower}. with $s=k = K-1$, we can find a subset $S_0$ of the set of binary sequences $\{0, 1\}^{K-1}$ such that 
	\begin{enumerate}
		\item[(i)] $\log|S_0| \ge c_1(K-1)$, 
		\item[(ii)] $c_2(K-1)\le \|a\|_0 \le (K-1)$, for all $a\in S_0$.
		\item[(iii)] $\|a-b\|^2 \ge c_3(K-1)$, for all $a,b\in S_0$ and $a\ne b$,
	\end{enumerate}
	where $c_1, c_2, c_3>0$ are absolute constants. Let $v^{(0)} = (1, 0, \ldots, 0)\in \R^K$ and $v^{(j)} = (0, a)\in \R^K$ for all $a\in S_0$ so that $j\in \{1,\ldots, |S_0|\}$. We then define $\beta^{(0)} = (R, 0, \ldots, 0)$ and 
	\begin{equation}\label{def_beta_J}
	\beta^{(j)} := {R\over \sqrt{1+ \eta^2(K-1)}}\left(v^{(0)} + \eta v^{(j)}\right)\quad \text{for all }j\in \{1,\ldots, |S_0|\},
	\end{equation}
	with $\eta$ to be chosen later.
	
	It is easy to verify that $\|\beta^{(j)}\| \le R$ so that $(\beta^{(j)}, \C^*, A^*)\in \S(R, m)$ for $j\in \{0,1,\ldots, |S_0|\}$. Moreover, (iii) above implies that, for any $j,\ell \ge 1$ with $j\ne \ell$,
	\begin{equation}\label{disp_betaA_Jell}
	\|\beta^{(j)} - \beta^{(\ell)}\|^2 = {R^2\eta^2 \over 1+\eta^2(K-1)}\|v^{(j)}-v^{(\ell)}\|^2\ge c_3{R^2\eta^2(K-1) \over 1+\eta^2(K-1)},
	\end{equation}
	and (ii) above guarantees that, for any $j\ge 1$,
	\begin{align}\label{disp_betaA_J0'}
	\|\beta^{(j)} - \beta^{(0)}\|^2 &={R^2 (\sqrt{1+\eta^2(K-1)} - 1)^2 \over 1+ \eta^2(K-1)}  + {R^2\eta^2 \over 1+\eta^2(K-1)}\|v^{(j)}\|^2\\\nonumber
	&\ge{R^2\eta^2 \over 1+\eta^2(K-1)}\|v^{(j)}\|^2\\\label{disp_betA_J0}
	&\ge c_2{R^2\eta^2(K-1) \over 1+\eta^2(K-1)}.
	\end{align}
	
	On the other hand, for any $j\in \{1, \ldots, |S_0|\}$, Lemma \ref{lem_KL} in Section \ref{sec_lem_proof_beta_lower} implies 
	\begin{align*}
	&{1\over n}\text{KL}(\PP_{\beta^{(j)}}, \PP_{\beta^{(0)}}) \\
	&\le  {|(\beta^{(j)})^\top G^{-1}\beta^{(j)} - (\beta^{(0)})^\top G^{-1}\beta^{(0)}| + \|\C-G^{-1}\|_{{\rm op}}\|\beta^{(j)} - \beta^{(0)}\|^2 \over \sigma^2 + \min(\|\beta^{(j)}\|^2, \|\beta^{(0)}\|^2)/\|G\|_{{\rm op}}}
	\end{align*}
	with $G= \C^{-1} + \tau^{-2}A^\top A$.
	By using (\ref{def_beta_J}), the definition of $\beta^{(0)}$ and (ii), we further have
	\begin{align*}
	&|(\beta^{(j)})^\top G^{-1}\beta^{(j)} - (\beta^{(0)})^\top G^{-1}\beta^{(0)}|\\ 
	&= {R^2\eta\over 1+\eta^2(K-1)}\left| 2(v^{(0)})^\top G^{-1}v^{(j)} + \eta(v^{(j)})^\top G^{-1}v^{(j)} - \eta(K-1)(v^{(0)})^\top G^{-1}v^{(0)} \right|\\
	&= {R^2\eta^2\over 1+\eta^2(K-1)}\left| (v^{(j)})^\top G^{-1}v^{(j)} - (K-1)(G^{-1})_{11} \right|\\
	&\le {R^2\eta^2(K-1)\over 1+\eta^2(K-1)}\|G^{-1}\|_{{\rm op}}.
	\end{align*}
	Also note that $\|\beta^{(0)}\|^2 = R^2$ and 
	\[
	\|\beta^{(j)}\|^2 = {R^2 \over 1+\eta^2(K-1)}\left\| v^{(0)} + \eta v^{(j)}\right\|^2 =  {R^2(1 + \eta^2\|v^{(j)}\|^2) \over 1+\eta^2(K-1)} \overset{(ii)}{\ge}  cR^2.
	\]
	Together with
	\[
	\|\beta^{(j)} - \beta^{(0)}\|^2 \le {R^2\eta^2(K-1) \over 4[1+\eta^2(K-1)]}+ {R^2\eta^2 \over 1+\eta^2(K-1)}\|v^{(j)}\|^2 \overset{(ii)}{\le} {5R^2\eta^2(K-1)\over 4[1+\eta^2(K-1)]},
	\]
	from (\ref{disp_betaA_J0'}) and the fact that $f(x) = \sqrt{x}$ is concave for $x>0$,
	we obtain
	\begin{align*}
	\text{KL}(\PP_{\beta^{(j)}}, \PP_{\beta^{(0)}}) &\le {5\over 4}\cdot {nR^2\eta^2(K-1)\over 1+\eta^2(K-1)}\cdot  { \|G^{-1}\|_{{\rm op}}+ \|\C-G^{-1}\|_{{\rm op}}\over \sigma^2 + \min(\|\beta^{(j)}\|^2, \|\beta^{(0)}\|^2)/\|G\|_{{\rm op}}}\\
	&\le  3n\eta^2(K-1)\cdot  {R^2\cu\over \sigma^2 + cR^2/\|G\|_{{\rm op}}}.
	\end{align*}
	where in the second line we use 
	\begin{align*}
	\|G^{-1}\|_{{\rm op}} &\le  \|\C\|_{{\rm op}} \left\|\left[{\bI}_K + (\C)^{1/2}A^\top \Gamma^{-1} A(\C)^{1/2}\right]^{-1}\right\|_{{\rm op}} \le \|\C\|_{{\rm op}}.
	\end{align*}
	Further note that
	\begin{equation}\label{eq_G_0}
	\|G\|_{{\rm op}} \le \|\C^{-1}\|_{{\rm op}} + \tau^{-2}\|A^\top  A\|_{{\rm op}} \le \cl^{-1}+m/\tau^2 \le c'm/\tau^2.
	\end{equation} 
	Choosing 
	\begin{equation}\label{def_eta_lower}
	\eta^2 = c\cdot {\sigma^2 + c \tau^2R^2/(c'm) \over nR^2\cu}
	\end{equation}
	yields 
	\[
	\text{KL}(\PP_{\beta^{(j)}}, \PP_{\beta^{(0)}}) \le c\log |S_0|
	\]
	for any $j\ge 1$, and
	\[
	\|\beta^{(j)} - \beta^{(\ell)}\|^2 \ge c\left({\sigma^2 \over \cu} \vee {\tau^2R^2\over c'\cu m}\right){(K-1) \over n}\cdot {1\over 1+\eta^2(K-1)}
	\]
	for any $j\ne \ell$, from (\ref{disp_betaA_Jell}) and (\ref{disp_betA_J0}). 
	Since condition $K\le \bar{c}(R^2\vee m)n$ guarantees that $\eta^2(K-1) \le c = c(\bar{c})$, invoking Theorem 2.5 in \cite{Intro_non_para}  concludes
	\[
	\inf_{\wh\beta}\sup_{\|\beta\|\le R}\PP_{\beta}\left\{\|\wh \beta - \beta\| \ge c\left(1\vee {R \over \sqrt m}\right)\sqrt{K-1\over n}\right\} \ge c',
	\]
	which completes the proof.\qed
	
	\subsection{Lemmas used in the proof of Theorem \ref{thm_beta_lower}}\label{sec_lem_proof_beta_lower}

	\begin{lemma}\label{lem_klop}
		Let $k \ge 2$ and $s \ge 1$ be integers, $s \le k$. There exists a subset $S_0$ of the set of binary sequences $\{0, 1\}^k$ such that
		\begin{itemize}
			\item[(i)]  $\log|S_0| \ge c_1^*s \log(ek/s)$, 
			\item[(ii)] $c_2^*s\le \|a\|_0 \le s$, for all $a\in S_0$, and all $\|a\|_0 = s$ for $a\in S_0$, if $s\le k/2$, 
			\item[(iii)] $\|a-b\|^2 \ge c_3^*s$, for all $a,b\in S_0$ and $a\ne b$,
		\end{itemize}
		where $c_j^*>0$, $j = 1,2,3$  are absolute constants.
	\end{lemma}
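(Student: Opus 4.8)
The plan is to establish Lemma~\ref{lem_klop} by the sparse Varshamov--Gilbert packing construction, carried out inside a single Hamming slice. The starting observation is that for $a,b\in\{0,1\}^k$ one has $\|a-b\|^2=\|a-b\|_0=d_H(a,b)$, the Hamming distance, so requirements (ii)--(iii) are purely about Hamming weights and Hamming distances, and it suffices to produce, in each of the two regimes $s\le k/2$ and $s>k/2$, a large set of binary vectors of controlled weight that are pairwise Hamming-separated by $\gtrsim s$.

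For $1\le s\le k/2$, I would restrict attention to the weight-$s$ slice $\mathcal{W}_s=\{a\in\{0,1\}^k:\|a\|_0=s\}$, so that (ii) holds automatically (with equality $\|a\|_0=s$ and any $c_2^*\le1$). Let $S_0$ be a \emph{maximal} subset of $\mathcal{W}_s$ whose members are pairwise at Hamming distance at least $r:=\lceil c_3^* s\rceil$. By maximality every vector of $\mathcal{W}_s$ lies within distance $r$ of some element of $S_0$, and since within a fixed-weight slice two vectors at distance $2t$ are obtained by flipping $t$ of the $s$ ones to zero and $t$ of the $k-s$ zeros to one,
\[
|S_0|\ \ge\ \frac{|\mathcal{W}_s|}{\max_{a\in\mathcal{W}_s}\bigl|\{b\in\mathcal{W}_s:d_H(a,b)\le r\}\bigr|}\ \ge\ \frac{\binom{k}{s}}{\sum_{t\le r/2}\binom{s}{t}\binom{k-s}{t}}\ \ge\ \frac{(k/s)^s}{(r/2+1)\,(2e/c_3^*)^{c_3^*s/2}\,(2ek/(c_3^*s))^{c_3^*s/2}}.
\]
Taking logarithms, the term $s\log(k/s)$ survives with coefficient $1-c_3^*/2$, while every remaining term is $O\!\bigl(c_3^*s\log(1/c_3^*)\bigr)+O(\log s)$; choosing $c_3^*$ a sufficiently small absolute constant lets the $s\log(k/s)$ term dominate, and since $k/s\ge2$ here, $s\log(k/s)\asymp s\log(ek/s)$, which yields (i). (When $c_3^*s<2$ the separation $d_H\ge r$ is automatic for distinct members of $\mathcal{W}_s$, so one may simply take $S_0=\mathcal{W}_s$ and use $\log\binom{k}{s}\ge s\log(k/s)$ directly; this covers all small $s$.) Requirement (iii) holds by construction.

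For $k/2<s\le k$ I would instead use the slice of weight $q:=\lceil k/2\rceil$, which obeys $s/2\le k/2\le q\le s$, so (ii) holds with $c_2^*=\tfrac12$ and no exact-weight constraint is needed. Run the same maximal packing in $\mathcal{W}_q$ at separation $r:=\lfloor c_3^* k\rfloor$; bounding $\binom{k}{q}\ge 2^k/(k+1)$ and, via $\binom{q}{t}\binom{k-q}{t}\le\binom{k}{2t}$, the ball count by $\sum_{u\le r}\binom{k}{u}\le(r+1)\,e^{kH(c_3^*)}$ with $H$ the binary entropy, gives $\log|S_0|\ge\bigl(\log2-H(c_3^*)\bigr)k-O(\log k)\gtrsim k\ge s$ once $c_3^*$ is a small enough absolute constant; since $\log(ek/s)\le\log(2e)=O(1)$ in this regime, this is $\gtrsim s\log(ek/s)$, i.e.\ (i), and $r\ge\tfrac12c_3^*k\ge\tfrac12c_3^*s$ gives (iii). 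The finitely many genuinely small cases (bounded $k$, or $s=1$) I would dispose of by hand; e.g.\ for $s=1$ the set $\{e_1,\dots,e_k\}$ already works, since distinct unit vectors are at Hamming distance $2$.

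The only real difficulty is the uniform control of absolute constants, and it is concentrated in the first regime when $s$ is close to $k/2$: there the ``budget'' in the exponent is merely $s\log(k/s)\approx s\log2$, while the dominant part of the Hamming-ball count, the factor $\binom{k-s}{r/2}\approx(2ek/(c_3^*s))^{c_3^*s/2}$, \emph{also} grows like $e^{c s\log(k/s)}$. One must therefore check that the packing-radius constant $c_3^*$ can be taken small enough---\emph{uniformly in $k$ and $s$}---that $(1-c_3^*/2)\,s\log(k/s)$ still dominates this count together with the lower-order $\log s$ term; once this is verified, the constants $c_1^*,c_2^*,c_3^*$ can be read off, and the rest of the argument is routine.
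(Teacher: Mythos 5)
Your construction is correct, but it is worth noting that the paper does not actually prove this lemma at all: its ``proof'' is a one-line citation to \cite[Lemma 16]{klopp_2017}, so your self-contained argument necessarily takes a different route from the text, while being very much in the spirit of the standard proof of the cited result (a sparse Varshamov--Gilbert/maximal-packing bound). Concretely, your two-regime split is sound: for $s\le k/2$ the maximal $r$-packing of the weight-$s$ slice with $r\asymp c_3^*s$, the within-slice ball count $\sum_{t\le r/2}\binom{s}{t}\binom{k-s}{t}$, and the bound $\binom{k}{s}\ge (k/s)^s$ give $\log|S_0|\ge (1-c_3^*/2)\,s\log(k/s)-O\bigl(c_3^*s\log(1/c_3^*)\bigr)-O(\log s)$, and since $\log(k/s)\ge\log 2$ in this regime a small absolute $c_3^*$ closes the argument uniformly in $k,s$ --- exactly the delicate point you flag, and your handling of it (together with the trivial sub-case $c_3^*s<2$, which also absorbs all small $s$) is adequate; the exact-weight clause in (ii) holds because you work in the slice $\|a\|_0=s$. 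For $s>k/2$ the choice $q=\lceil k/2\rceil$ indeed satisfies $s/2\le q\le s$, the entropy bound $\sum_{u\le r}\binom{k}{u}\le e^{kH(r/k)}$ with $r=\lfloor c_3^*k\rfloor$ and $\binom{k}{q}\ge 2^k/(k+1)$ give $\log|S_0|\gtrsim k\gtrsim s\log(ek/s)$ (the latter being $O(s)$ there), and $r\gtrsim c_3^*s$ gives (iii); the finitely many small-$k$ cases are absorbed into the absolute constants as you say. What your route buys is a self-contained and elementary proof with explicit control of where the constants come from; what the paper's route buys is brevity by outsourcing exactly this packing argument to the literature.
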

	\begin{proof}
		This lemma is proved in \cite[Lemma 16]{klopp_2017}. 
	\end{proof}
	
	\medskip
	
	\begin{lemma}\label{lem_KL}
		Suppose that $(X_i, Y_i)$ are i.i.d. Gaussian from model (\ref{model}). Then, for any $\beta, \beta' \in \R^K$,  
		\[
		{1\over n}\text{KL}(\PP_{\beta}, \PP_{\beta'}) \le  {|\beta^\top G^{-1}\beta - \beta'^\top G^{-1}\beta'| + \|\C-G^{-1}\|_{{\rm op}}\|\beta - \beta'\|^2 \over \sigma^2 + \min(\|\beta\|^2, \|\beta'\|^2)/\|G\|_{{\rm op}}}
		\]
		where $G = \C^{-1} + A^\top \Gamma^{-1}A$.
	\end{lemma}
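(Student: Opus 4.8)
The plan is to compute the KL divergence by conditioning on $X$. Since the sample is i.i.d., $\text{KL}(\PP_\beta, \PP_{\beta'}) = n\, \text{KL}(\PP_\beta^{(1)}, \PP_{\beta'}^{(1)})$ where $\PP_\beta^{(1)}$ is the law of a single pair $(X,Y)$, so it suffices to bound the per-observation divergence. The key observation — the one flagged in the proof sketch of Theorem \ref{thm_beta_lower} — is that the marginal law of $X$ does not depend on $\beta$ (it is $N_p(0, A\C A^\top + \tau^2 \bI_p)$), hence $\text{KL}(\PP_\beta^{(1)}, \PP_{\beta'}^{(1)}) = \EE_X\big[\text{KL}\big(\PP_\beta^{(1)}(Y\mid X),\, \PP_{\beta'}^{(1)}(Y\mid X)\big)\big]$. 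First I would derive the conditional distribution of $Y$ given $X$ under parameter $\beta$: using joint Gaussianity, $Y\mid X \sim N(\mu_\beta(X), s^2)$ with conditional mean $\mu_\beta(X) = (A\C\beta)^\top (A\C A^\top + \tau^2\bI_p)^{-1} X$ and conditional variance $s^2 = \sigma^2 + \beta^\top\C\beta - (A\C\beta)^\top(A\C A^\top + \tau^2\bI_p)^{-1} A\C\beta$. A Woodbury/Sherman--Morrison computation with $G = \C^{-1} + A^\top\Gamma^{-1}A = \C^{-1} + \tau^{-2}A^\top A$ rewrites $\beta^\top\C\beta - (A\C\beta)^\top(A\C A^\top + \tau^2\bI_p)^{-1}A\C\beta = \beta^\top G^{-1}\beta$, so $s^2 = \sigma^2 + \beta^\top G^{-1}\beta$; in particular $s^2 \ge \sigma^2 + \|\beta\|^2/\|G\|_{\rm op}$, which produces the denominator in the statement.

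Next, since both conditional laws are univariate Gaussians with the \emph{same} conditional variance only when $\beta,\beta'$ give the same $s^2$ — which is not assumed — I would use the general two-Gaussian KL formula $\text{KL}(N(\mu_1,s_1^2)\,\|\,N(\mu_2,s_2^2)) = \tfrac12\log(s_2^2/s_1^2) + \tfrac{s_1^2 + (\mu_1-\mu_2)^2}{2s_2^2} - \tfrac12$. Taking expectation over $X$, the term $\EE_X[(\mu_\beta(X) - \mu_{\beta'}(X))^2]$ equals $(A\C\beta - A\C\beta')^\top (A\C A^\top + \tau^2\bI_p)^{-1}(A\C\beta - A\C\beta')$, which by the same Woodbury identity equals $(\beta-\beta')^\top (\C - G^{-1})(\beta-\beta')$, bounded by $\|\C - G^{-1}\|_{\rm op}\|\beta-\beta'\|^2$ — this is the second term in the numerator. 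The remaining log-ratio-plus-variance piece, $\tfrac12\log(s_{\beta'}^2/s_\beta^2) + \tfrac{s_\beta^2 - s_{\beta'}^2}{2 s_{\beta'}^2}$, is handled by the elementary inequality $\tfrac12\log(b/a) + \tfrac{a-b}{2b} \le \tfrac{(a-b)^2}{\text{?}}$ — more precisely, using $\log(1+t)\le t$ one gets this quantity bounded by $\tfrac12 \cdot \tfrac{(s_\beta^2 - s_{\beta'}^2)^2}{\min(s_\beta^2, s_{\beta'}^2)\, s_{\beta'}^2}$ type bounds; but a cleaner route giving exactly the stated form is to note $s_\beta^2 - s_{\beta'}^2 = \beta^\top G^{-1}\beta - \beta'^\top G^{-1}\beta'$ and bound the whole log-plus-ratio expression crudely by $|s_\beta^2 - s_{\beta'}^2|/\min(s_\beta^2,s_{\beta'}^2) = |\beta^\top G^{-1}\beta - \beta'^\top G^{-1}\beta'|/(\sigma^2 + \min(\|\beta\|^2,\|\beta'\|^2)/\|G\|_{\rm op})$, which is the first term in the numerator; combining the two contributions over a common (the smaller) denominator yields the claimed bound.

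The main obstacle I anticipate is purely bookkeeping: carrying out the Woodbury identities cleanly so that the conditional variance collapses to $\sigma^2 + \beta^\top G^{-1}\beta$ and the cross-term collapses to $(\beta-\beta')^\top(\C - G^{-1})(\beta-\beta')$, and then choosing the right elementary inequality for $\tfrac12\log(s_{\beta'}^2/s_\beta^2) + \tfrac{s_\beta^2 - s_{\beta'}^2}{2 s_{\beta'}^2}$ so that the final bound has precisely the advertised single-fraction shape rather than a messier two-term expression. Since the lemma is only ever applied with $\beta'=\beta^{(0)}$ and $\beta = \beta^{(j)}$ of comparable norm (so that $s_\beta^2 \asymp s_{\beta'}^2$), a slightly lossy constant here is harmless, and I would not worry about optimizing it; the identity $s^2 = \sigma^2 + \beta^\top G^{-1}\beta$ is the real content, and once it is in hand the rest is routine.
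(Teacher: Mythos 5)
Your proposal is correct and follows essentially the same route as the paper's proof: reduce to one observation, use that the marginal of $X$ is free of $\beta$ to work with the conditional law $Y\mid X$, apply the Woodbury identity (the paper's Fact \ref{fact_1}) to identify the conditional variance as $\sigma^2+\beta^\top G^{-1}\beta$ and the mean-difference quadratic form as $(\beta-\beta')^\top(\C-G^{-1})(\beta-\beta')$, then control the log-variance-ratio term with the elementary inequality $|\log(\sigma^2+t_2)-\log(\sigma^2+t_1)|\le |t_2-t_1|/\min(\sigma^2+t_1,\sigma^2+t_2)$ and lower-bound the variances by $\sigma^2+\|\beta\|^2/\|G\|_{\rm op}$. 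The only cosmetic differences are that you invoke the standard two-Gaussian KL formula instead of expanding $\EE_{\beta}[\log f_{\beta}(Y|X)-\log f_{\beta'}(Y|X)]$ directly, and you specialize $\Gamma=\tau^2\bI_p$, which is harmless since the identity $\C-\C A^\top\Sigma^{-1}A\C=G^{-1}$ holds for general $\Gamma$.
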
	
	\begin{proof}
		By the additivity of the Kullback-Leibler divergence, it suffices to consider one data pair $(X_i, Y_i)$. We remove the subscript $i$ to lighten the notation. Note that, for given $\beta_j$ with $j=1,2$, model (\ref{model}) implies  
		\[
		\begin{bmatrix}
		Y\\
		X
		\end{bmatrix} \sim N_{p+1}\left(0,  ~
		\begin{bmatrix}
		\beta_j^\top \C\beta_j+\sigma^2 & \beta_j^\top \C A^\top \\
		A\C\beta_j & \Sigma
		\end{bmatrix} \right)
		\]
		with $\Sigma = A\C A^\top +\Gamma$. 
		This further yields
		\begin{align}\label{disp_cond_y|x}\nonumber
		Y|X &\sim N(\beta_j^\top \C A^\top \Sigma^{-1}X, \sigma^2 +\beta_j^\top (\C-\C A^\top \Sigma^{-1}A\C)\beta_j)\\
		& := N(\mu_j, \sigma^2_j).
		\end{align}
		Since the marginal distribution of $X$ does not depend on $\beta$, we observe that 
		\begin{align*}
		{1\over n}&\text{KL}(\PP_{\beta_1}, \PP_{\beta_2}) \\&~= \EE_{\beta_1}\left[\log f_{\beta_1}(Y|X)\right] - \EE_{\beta_1}\left[\log f_{\beta_2}(Y|X)\right]\\
		&= -{1\over 2}\log\sigma_1^2 - {1\over 2\sigma_1^2}\EE_{\beta_1}\left[(Y-\mu_1)^2\right]+ {1\over 2}\log\sigma_2^2  + {1\over 2\sigma_2^2}\EE_{\beta_1}\left[(Y-\mu_2)^2\right]\\
		&= {1\over 2}(\log \sigma_2^2 - \log \sigma_1^2)+ {1\over 2\sigma_2^2}\EE_{\beta_1}\left[ (Y-\mu_2)^2 - (Y-\mu_1)^2\right]\\
		&\qquad + {\sigma_1^2 - \sigma_2^2\over 2\sigma_1^2\sigma_2^2}\cdot  \EE_{\beta_1}\left[(Y-\mu_1)^2\right]\\
		&= {1\over 2}(\log \sigma_2^2 - \log \sigma_1^2)+ {1\over 2\sigma_2^2}\EE_{\beta_1}\left[ (Y-\mu_2)^2 - (Y-\mu_1)^2\right] + {\sigma_1^2 - \sigma_2^2\over 2\sigma_2^2}
		\end{align*}
		where the last inequality uses $\EE_{\beta_1}[(Y-\mu_1)^2] = \sigma_1^2$ from (\ref{disp_cond_y|x}). To calculate the expectation, it follows from $\EE[XX^\top ] = \Sigma$ and (\ref{disp_cond_y|x}) that
		\begin{align*}
		\EE_{\beta_1}\left[ (Y-\mu_2)^2 - (Y-\mu_1)^2\right]&= \EE_{\beta_1}\left[ \mu_2^2  - \mu_1^2 +2Y(\mu_1 - \mu_2)\right]\\
		&= \beta_2^\top \C A^\top \Sigma^{-1}A\C\beta_2 - \beta_1^\top \C A^\top \Sigma^{-1}A\C\beta_1 \\
		&\qquad +2\EE_{\beta_1}\left[Y(\beta_1-\beta_2)^\top \C A^\top \Sigma^{-1}X\right]\\
		&=\beta_2^\top \C A^\top \Sigma^{-1}A\C\beta_2 - \beta_1^\top \C A^\top \Sigma^{-1}A\C\beta_1\\
		&\qquad+ 2\EE\left[(\beta_1-\beta_2)^\top \C A^\top \Sigma^{-1}AZZ^\top \beta_1\right]\\
		&= (\beta_1-\beta_2)^\top \C A^\top \Sigma^{-1}A\C(\beta_1-\beta_2),
		\end{align*}
		where in the fourth line we use model (\ref{model}).
		Plugging this into the KL-divergence yields 
		\begin{align*}
		{1\over n}\text{KL}(\PP_{\beta_1}, \PP_{\beta_2}) &= {1\over 2}(\log \sigma_2^2 - \log \sigma_1^2) + {\sigma_1^2- \sigma_2^2\over 2\sigma_2^2}\\
		&\qquad+ {1\over 2\sigma_2^2}(\beta_1-\beta_2)^\top \C A^\top \Sigma^{-1}A\C(\beta_1-\beta_2).
		\end{align*}
		Note that $\C-\C A^\top \Sigma^{-1}A\C = G^{-1}$ from Fact \ref{fact_1}.
		Recalling that $\sigma_j^2 = \sigma^2 + \beta_j^\top G^{-1}\beta_j$ from (\ref{disp_cond_y|x}) and Fact \ref{fact_1} and using the inequality 
		\[
		\left|\log (\sigma^2 + t_2) - \log (\sigma^2 + t_1)\right| \le {|t_2 - t_1| \over \min (\sigma^2 + t_1, \sigma^2 + t_2)}
		\]
		for $t_1, t_2>0$ gives 
		\begin{align*}
		{1\over n}\text{KL}(\PP_{\beta_1}, \PP_{\beta_2}) \le {|\beta_1^\top G^{-1}\beta_1 - \beta_2^\top G^{-1}\beta_2|\over \min(\sigma_1^2, \sigma_2^2)}+ {(\beta_1-\beta_2)^\top (\C-G^{-1})(\beta_1-\beta_2)\over 2\sigma_2^2}.
		\end{align*}
		Using $\sigma_j^2 \ge \sigma^2 + \|\beta_j\|^2\lambda_{\min}(G^{-1}) =  \sigma^2 + \|\beta_j\|^2/\|G\|_{{\rm op}}$ completes the proof.
	\end{proof}
	
	\bigskip
	
	The following fact is used in the proof of Lemma \ref{lem_KL}.
	
	\begin{fact}\label{fact_1}
		Let $\Sigma = A\C A^\top +\Gamma$, $G = \Omega+A^\top \Gamma^{-1} A$ with $\Omega = \C^{-1}$.  
		\[
		\Sigma^{-1}A\C = \Gamma^{-1} AG^{-1},\quad \C-\C A^\top \Sigma^{-1}A\C = G^{-1}.
		\]
	\end{fact}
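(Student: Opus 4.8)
The plan is to prove both identities by elementary matrix algebra, reducing each to a polynomial identity that can be verified directly, after first checking that the inverses in the statement exist.

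First I would record the invertibility facts. Under the model, $\Gamma$ is diagonal with strictly positive entries, hence positive definite, so $\Sigma = A\C A^\top + \Gamma$ is positive definite and invertible; similarly $\C^{-1} = \Omega$ is positive definite, so $G = \Omega + A^\top \Gamma^{-1} A$ is positive definite and invertible. This justifies every inverse appearing in the statement. Throughout, one must keep in mind that $A \in \R^{p\times K}$ is not square, so the order of multiplication matters.

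For the first identity $\Sigma^{-1} A \C = \Gamma^{-1} A G^{-1}$, I would clear denominators: multiplying on the left by $\Sigma$ and on the right by $G$, the claim becomes $A\C G = \Sigma \Gamma^{-1} A$. Expanding the left side using $\C\,\Omega = {\bI}_K$ gives $A\C G = A({\bI}_K + \C A^\top \Gamma^{-1}A) = A + A\C A^\top \Gamma^{-1} A$, and expanding the right side using $\Sigma = A\C A^\top + \Gamma$ gives $\Sigma \Gamma^{-1} A = A\C A^\top \Gamma^{-1} A + A$; the two coincide. For the second identity $\C - \C A^\top \Sigma^{-1} A \C = G^{-1}$ — which is the Sherman--Morrison--Woodbury formula for $(\Omega + A^\top \Gamma^{-1}A)^{-1}$ — I would substitute the first identity to rewrite $\C A^\top \Sigma^{-1} A \C = \C A^\top \Gamma^{-1} A\, G^{-1}$, so the left side equals $\C - \C A^\top \Gamma^{-1} A\, G^{-1}$. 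Right-multiplying by $G$ reduces the claim to $\C G - \C A^\top \Gamma^{-1} A = {\bI}_K$, and since $\C G = {\bI}_K + \C A^\top \Gamma^{-1}A$ this is immediate.

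There is no genuine obstacle: this is a routine linear-algebra identity, and the only points requiring attention are the invertibility of $\Sigma$ and $G$ and keeping the non-commutative (and non-square) products in the correct order. Alternatively, one could simply cite the Woodbury identity for the second equality and obtain the first as a by-product of its proof.
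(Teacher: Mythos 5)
Your proof is correct. The one substantive difference from the paper's argument is in the first identity: the paper begins by invoking the Sherman--Morrison--Woodbury expansion of $\Sigma^{-1}$ and then simplifies, whereas you clear denominators and verify the equivalent polynomial identity $A\C G = \Sigma\Gamma^{-1}A$ directly. Your route is more elementary and self-contained (no appeal to Woodbury needed), at the modest cost of requiring the reader to accept the ``clear denominators and reduce'' step rather than recognize a named formula; the paper's route is shorter if one is willing to cite Woodbury. For the second identity you and the paper do essentially the same thing: substitute the first identity, right-multiply by $G$, and observe $\C G - \C A^\top\Gamma^{-1}A = {\bI}_K$. Your remarks on invertibility of $\Sigma$ and $G$ are a sensible addition; the paper takes these for granted.
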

	\begin{proof}
		The Sherman-Morrison-Woodbury formula gives 
		\begin{align*}
		\Sigma^{-1}A\C &= \left[\Gamma^{-1}  - \Gamma^{-1} A(\Omega + A^\top \Gamma^{-1} A)^{-1}A^\top \Gamma^{-1} \right]A\C\\
		&=\Gamma^{-1} A\C - \Gamma^{-1} A(\Omega+ A^\top \Gamma^{-1} A)^{-1}(\Omega+A^\top \Gamma^{-1} A-\Omega)\C\\
		&= \Gamma^{-1} A(\Omega+A^\top \Gamma^{-1} A)^{-1},
		\end{align*}
		which concludes the proof of the first statement. The second part follows immediately by noting that
		\[
		\C -\C A^\top \Sigma^{-1}A\C = \C - \C A^\top \Gamma^{-1} A(\Omega+A^\top \Gamma^{-1} A)^{-1} = (\Omega+A^\top \Gamma^{-1} A)^{-1}.
		\]
	\end{proof}

	\section{Preliminaries}
	\label{sec_auxiliary_lemma}
	\subsection{General principles}
	
	Throughout the proofs of the main results, we will work on the event    
	\begin{equation}\label{def_event}
	\E:= \left\{\max_{1\le k\le K}{1\over n}\sum_{t=1}^n \Z_{tk}^2 \le B_z
	\right\}\bigcap \left\{\max_{1 \le j< \ell \le p}|\wh \Sigma_{j\ell} - \Sigma_{j\ell}| \le \delta\right\}
	\end{equation}
	with $B_z$ defined in Assumption \ref{ass_subg} and $\delta := c\delta_n$ some constant $c>0$ and 
	\[
	\delta_n = \sqrt{\log\pn / n}.
	\]
	Provided $\log p \lesssim n$,  
	Lemma \ref{lem_Z} and Lemma \ref{lem_bernstein} below guarantee that $\PP(\E)\ge 1-\pn^{-\alpha}$ for some constant $\alpha>0$. This event plays an important role.   The proof of Theorem 3 in  \cite{LOVE} reveals   that  on this event $\E$, we have
	\begin{itemize}
		\item 
		$\wh K=K$, 
		\item
		$I_k\subseteq \wh I_{\pi(k)} \subseteq I_k\cup J_1^k$ with $J_1^k=\{ j\in J: \ |A_{jk}| \ge 1-4\delta/v\}$, for all $k\in [K],$
	\end{itemize}
	for some permutation $\pi: [K]\to [K]$.
	{\em To lighten the notation, we assume throughout the remainder of the appendix that $\pi$ is the identity group permutation     and $A_{ik} = 1$ for any $i\in I_k$ and $k\in [K]$.} In the general case, the signed permutation matrix $P$ can be traced throughout the proofs.\\
	
	This means in particular that  the dimensions of the parameter $\beta$ and its estimate $\wh \beta$ are the same. Unfortunately, $\wh I$ is only guaranteed to satisfy $\wh I \supseteq  I$. This is the main source of many technical challenges in the proofs.
	We emphasize that signal conditions on $A$
	could prevent this, and lead to guarantees of $\wh I=I$. However, an assumption that
	the  entries of $A$ are either zero or exceed a certain threshold in absolute value,
	is rather unnatural and instead  we only rely on a mild separation condition on the matrix $\C$ in Assumptions \ref{ass_model} and \ref{ass_C}.
	\\

	The next subsections contain notation and some auxiliary results, that may be skipped during the first reading of this manuscript.

	\subsection{Notation}
	We write 
	\begin{equation}
	\wh \Pi = \wh A_{{\wh I\sbt}}(\wh A_{{\wh I\sbt}}^\top \wh A_{{\wh I\sbt}})^{-1}    
	\end{equation}
	and 
	\begin{equation}\label{def_tildes}
	\wt \X := \X_{\sbt\wh I}\wh \Pi, \qquad \wt \Z :=  \Z A_{{\wh I\sbt}}^\top \wh \Pi,\qquad \wt \W:= \W_{\sbt\wh I}\wh \Pi,
	\end{equation}
	so that $\wt \X = \wt \Z + \wt \W$. Similarly, we write 
	\begin{equation}
	\Pi  = A_{I\sbt}(A_{I\sbt}^\top A_{I\sbt})^{-1}
	\end{equation}  and 
	\begin{equation}\label{def_bars}
	\oX := \X_{\sbt I}\Pi , \qquad \oW:= \W_{\sbt I} \Pi ,
	\end{equation}
	so that $\oX = \Z + \oW$. 
	For all $k\in [K]$, set
	\[
	m_k := |I_k|\quad \text{  and }\quad m = \min_{k\in [K]} m_k. 
	\]
	On the event $\E$,
	we further define $L_k := \wh I_k \setminus I_k \subseteq J_1^k$ so that $$\wh m_k := |\wh I_k| = m_k + |L_k|,$$ and 
	\begin{equation}\label{def_rho_k}
	D_{\rho} = \textrm{diag}(\rho_1, \ldots, \rho_K),\quad \text{with}\quad 
	\rho _k = {|L_k| \over \wh m_k}.
	\end{equation}
	On the event $\E$, we have $| L_k|\le | J_1^k|$, and
	we will frequently make use of the inequality
	\begin{equation}\label{disp_rho_rho_bar}
	\|\rho\|_2^2 
	= \sum_{k=1}^K 	\left( \frac{| L_k|}{ m_k + |L_k|} \right)^2 
	\le \sum_{k=1}^K 	\left( \frac{| J_1^k|}{ m_k + |J_1^k|} \right)^2
	:= \bar{\rho}^2.
	\end{equation}
	This inequality uses the fact that the function $x/(m_k+x)$ is increasing for $x\ge0$. 
	Finally, the   inequality  
	\begin{equation}\label{rate_Ajek}
	\max_{k\in [K],\ j\in J_1^k} \|A_{j\sbt} - \e_k\|_1\le {8\delta / \nu} \lesssim \delta_n
	\end{equation}
	is a direct consequence of  the definition of $J_1^k$, and will be repeatedly used in our proofs.  
	Here $\nu$ is the constant  defined in Assumption \ref{ass_model}. 
	Recall $\Theta = A\C$ and define the matrices  $$\Omega:=\C^{-1} \quad \text{ and } \quad H:= A(\C)^{1/2} = \Theta (\C)^{-1/2} = \Theta \Omega^{1/2}$$ 
	and $H^+ = (H^\top H)^{-1}H^\top $. 
	Further, on the event $\E$  to ensure that $\wh K = K$, write
	\begin{equation}\label{def_H_hat}
	\wh H := \wh \Theta \Omega^{1/2}
	\end{equation}
	with $\wh\Theta$ defined in (\ref{est_Theta}), and
	\begin{equation}\label{def_deltaA_I}
	\D  = A_{{\wh I\sbt}}^\top  \wh A_{{\wh I\sbt}}(\wh A_{{\wh I\sbt}}^\top \wh A_{{\wh I\sbt}})^{-1} - {\bI}_K =A_{{\wh I\sbt}}^\top  \wh \Pi - {\bI}_K.
	\end{equation}

	
	\bigskip
	
	\subsection{Preliminary lemmas}\label{sec_proof_prelim_lemmas}
	We now state three preliminary lemmas that will be used throughout the proofs that follow. Their proofs are relegated to Section \ref{sec_proof_lem_prelim}. 
	
	\begin{lemma}\label{lem_Z}
		Suppose Assumption \ref{ass_subg} holds.
		\begin{enumerate}
			\item[(1)] For any fixed $v\in \R^K$ and $t\in [n]$, $\langle \Z_{t\sbt}, v \rangle$ is $\z\sqrt{v^\top \C v}$--sub-Gaussian. In particular, $\Z_{tk}$ is $\z'$-sub-Gaussian with $\z' := \z \sqrt{B_z}$ for any $k\in [K]$. 
			\item[(2)] Provided  Assumption \ref{ass_model} holds as well,  $\X_{tj}$ is $(\z' + \w)$-sub-Gaussian for any  $t\in [n]$ and $j\in[p]$.
		\end{enumerate}
	\end{lemma}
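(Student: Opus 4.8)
The plan is to deduce both parts from the single defining bound in Assumption \ref{ass_subg}, namely that $\C^{-1/2}Z$ is $\z$-sub-Gaussian, combined with two elementary closure properties of univariate sub-Gaussian variables: scaling (if $x$ is $\gamma$-sub-Gaussian then $cx$ is $|c|\gamma$-sub-Gaussian) and addition of independent summands ($x+y$ is $(\gamma_x+\gamma_y)$-sub-Gaussian, since $\EE e^{t(x+y)} = \EE e^{tx}\,\EE e^{ty}$ and $\gamma_x^2+\gamma_y^2 \le (\gamma_x+\gamma_y)^2$). It is also useful to record that a $\gamma$-sub-Gaussian variable is a fortiori $\gamma'$-sub-Gaussian for any $\gamma'\ge\gamma$.

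For part (1), I would fix $v\in\R^K$ and rewrite $\langle \Z_{t\sbt}, v\rangle = \langle \C^{-1/2}\Z_{t\sbt},\, \C^{1/2}v\rangle$. Setting $u := \C^{1/2}v$, so that $\|u\|_2^2 = v^\top \C v$, the degenerate case $u = 0$ forces $\langle \Z_{t\sbt}, v\rangle \equiv 0$, which is trivially sub-Gaussian; otherwise $u/\|u\|_2$ is a unit vector, so the sub-Gaussian vector assumption applies to $\langle \C^{-1/2}\Z_{t\sbt},\, u/\|u\|_2\rangle$, and rescaling by $\|u\|_2$ gives that $\langle \Z_{t\sbt}, v\rangle$ is $\z\|u\|_2 = \z\sqrt{v^\top\C v}$-sub-Gaussian. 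The ``in particular'' statement then follows by taking $v = \e_k$, bounding $[\C]_{kk} \le \|\C\|_\i \le B_z$ via Assumption \ref{ass_subg}, and enlarging the parameter to $\z' := \z\sqrt{B_z}$.

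For part (2), I would decompose $\X_{tj} = A_{j\sbt}^\top \Z_{t\sbt} + \W_{tj}$ from \eqref{love}, apply part (1) with $v = A_{j\sbt}$ to the first term, and control the resulting variance proxy by
\[
A_{j\sbt}^\top \C A_{j\sbt} \ \le\ \|\C\|_\i\,\|A_{j\sbt}\|_1^2 \ \le\ B_z,
\]
where the last inequality uses $\|A_{j\sbt}\|_1 \le 1$ from Assumption \ref{ass_model}\,\textit{(A0)}; this makes $A_{j\sbt}^\top\Z_{t\sbt}$ $\z'$-sub-Gaussian. Since $Z$ and $W$ are independent under \eqref{model}--\eqref{love} and $\W_{tj}$ is $\w$-sub-Gaussian by Assumption \ref{ass_subg}, the addition rule yields that $\X_{tj}$ is $(\z'+\w)$-sub-Gaussian. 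There is essentially no genuine obstacle here: the only two points needing a moment's care are handling the degenerate direction $v^\top\C v = 0$ in part (1) and invoking \textit{(A0)} for the quadratic-form bound in part (2). The lemma is a bookkeeping device that pins down the sub-Gaussian constants $\z'$ and $\z'+\w$ entering the concentration arguments used throughout the appendix.
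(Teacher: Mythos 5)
Your proposal is correct and follows essentially the same route as the paper: rewrite $\langle \Z_{t\sbt}, v\rangle = \langle \C^{-1/2}\Z_{t\sbt}, \C^{1/2}v\rangle$ and apply the sub-Gaussian vector definition for part (1), then for part (2) use the decomposition $\X_{tj} = A_{j\sbt}^\top \Z_{t\sbt} + \W_{tj}$, independence of $Z$ and $W$ (equivalently, multiplying MGFs), and the bound $A_{j\sbt}^\top \C A_{j\sbt} \le \|A_{j\sbt}\|_1^2\|\C\|_\i \le B_z$ from \textit{(A0)}. Your explicit treatment of the degenerate direction and the stated addition rule are only cosmetic refinements of the paper's direct MGF computation.
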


	\begin{lemma}\label{lem_W}
		Suppose Assumption \ref{ass_subg} holds. Let $v\in \R^K$ and $\alpha\in \R^p$ be any fixed vectors. Then, for any $t\in [n]$, 
		\begin{itemize}
			\item[(1)]  $\alpha^\top  \W_{t\sbt}$ is $\w \|\alpha\|_2$-sub-Gaussian;
			\item[(2)]  $v^\top H^+ \W_{t\sbt}$ is $\w\sqrt{v^\top (H^\top H)^{-1}v}$-sub-Gaussian;
			\item[(3)] $\oW_{tk}$ is $\w/\sqrt{m}$-sub-Gaussian for any $k\in [K]$. 
		\end{itemize}
	\end{lemma}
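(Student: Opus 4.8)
The plan is to reduce all three parts to the elementary moment generating function (MGF) bound for a linear combination of independent sub-Gaussian random variables. For part (1), I would fix $t\in[n]$ and invoke Assumption \ref{ass_subg}: the coordinates $W_{t1},\dots,W_{tp}$ of $\W_{t\sbt}$ are independent and each is $\w$-sub-Gaussian. Hence, for every $s\in\R$,
\[
\EE\left[\exp\left(s\,\alpha^\top\W_{t\sbt}\right)\right]=\prod_{j=1}^p\EE\left[\exp\left(s\alpha_jW_{tj}\right)\right]\le\prod_{j=1}^p\exp\left({s^2\alpha_j^2\w^2\over2}\right)=\exp\left({s^2\w^2\|\alpha\|_2^2\over2}\right),
\]
which is exactly the defining inequality for $\alpha^\top\W_{t\sbt}$ being $\w\|\alpha\|_2$-sub-Gaussian; it has mean zero since $\EE[W]={\bf 0}$.

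Parts (2) and (3) then follow as one-line corollaries by choosing the right coefficient vector. For part (2), I would apply part (1) with $\alpha=(H^+)^\top v=H(H^\top H)^{-1}v$; recall $H^+=(H^\top H)^{-1}H^\top$ and that $H^\top H=\C^{1/2}A^\top A\,\C^{1/2}$ is invertible because $A^\top A\succeq A_{I\sbt}^\top A_{I\sbt}\succ0$ under Assumption \ref{ass_model} and $\C\succ0$ by \textit{(A2)}. Then $\|\alpha\|_2^2=v^\top(H^\top H)^{-1}H^\top H(H^\top H)^{-1}v=v^\top(H^\top H)^{-1}v$, giving the stated parameter. For part (3), I would first compute $\Pi$ from the pure-variable structure: under Assumption \ref{ass_model} and the appendix convention $A_{ik}=1$ for $i\in I_k$, we have $A_{I\sbt}^\top A_{I\sbt}=\d(m_1,\dots,m_K)$, so $\Pi=A_{I\sbt}(A_{I\sbt}^\top A_{I\sbt})^{-1}$ has $\Pi_{ik}=1/m_k$ for $i\in I_k$ and $\Pi_{ik}=0$ otherwise. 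Hence $\oW_{tk}=\sum_{i\in I}\Pi_{ik}W_{ti}$ is a linear combination of the independent $\w$-sub-Gaussian variables $\{W_{ti}:i\in I\}$ whose coefficient vector has squared $\ell_2$ norm $\sum_{i\in I_k}m_k^{-2}=1/m_k$; by the computation in part (1) it is therefore $\w/\sqrt{m_k}$-sub-Gaussian, and since $m_k\ge m$ and every $\gamma$-sub-Gaussian variable is also $\gamma'$-sub-Gaussian for any $\gamma'\ge\gamma$, it is $\w/\sqrt m$-sub-Gaussian.

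There is no real obstacle: the lemma is routine sub-Gaussian bookkeeping, and the proof should be short. The only points that need a little care are keeping the coefficient vector straight (it is $(H^+)^\top v$, not $H^+ v$, in part (2), and the $k$-th column of $\Pi$ in part (3)) and the explicit evaluation of $\Pi$ in part (3), which is where the pure-variable assumption is used. I would write part (1) as the self-contained statement and deduce parts (2) and (3) from it directly.
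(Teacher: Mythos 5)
Your proof is correct and follows essentially the same route as the paper's: part (1) via the product of MGFs of the independent $\w$-sub-Gaussian coordinates, part (2) by specializing to $\alpha=(H^+)^\top v$, and part (3) by observing that $\oW_{tk}$ is an average of $m_k$ independent coordinates, hence $\w/\sqrt{m_k}$- and so $\w/\sqrt{m}$-sub-Gaussian. The extra remarks (invertibility of $H^\top H$, explicit form of $\Pi$, robustness to signs) are fine but not needed beyond what the paper already does.
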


	\begin{lemma}\label{lem_bernstein}
		Let $\{X_t\}_{t=1}^n$ and $\{Y_t\}_{t=1}^n$ be any two sequences, each with zero mean independent $\gamma_x$-sub-Gaussian and $\gamma_y$-sub-Gaussian elements. Then, for some constants $c, c'>0$, we have 
		\[
		\PP\left\{{1\over n}\left|\sum_{t=1}^n\left(X_t Y_t - \EE[X_t Y_t]\right)\right| \le c\gamma_x \gamma_y t \right\}\ge 1-2\exp\left\{-c'\min\left( t^2,t \right)n\right\}.
		\]
		In particular, when $\log p \le c''n$ for some constant $c''>0$, one has
		\[
		\PP\left\{{1\over n}\left|\sum_{t=1}^n\left(X_tY_t - \EE[X_tY_t]\right)\right| \le c\gamma_x \gamma_y \sqrt{\log\pn \over n} \right\}\ge 1-2\pn^{-c'}.
		\]
	\end{lemma}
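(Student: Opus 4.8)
The plan is to observe that each centered product $\xi_t:=X_tY_t-\EE[X_tY_t]$ is a mean-zero \emph{sub-exponential} random variable whose sub-exponential parameter is controlled by $\gamma_x\gamma_y$, that the $\xi_t$ are independent across $t$, and then to invoke the classical Bernstein inequality for sums of independent sub-exponential variables.

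The first ingredient is the elementary fact that a product of a $\gamma_x$-sub-Gaussian and a $\gamma_y$-sub-Gaussian random variable is sub-exponential: in terms of Orlicz norms, $\|X_tY_t\|_{\psi_1}\le \|X_t\|_{\psi_2}\,\|Y_t\|_{\psi_2}\lesssim \gamma_x\gamma_y$ by the Cauchy--Schwarz inequality for the $\psi_1$/$\psi_2$ norms, and the centering inequality then gives $\|\xi_t\|_{\psi_1}\le c_0\,\gamma_x\gamma_y$ for a universal constant $c_0$. No independence between the two sequences is needed for this; only the marginal sub-Gaussianity matters, so the lemma also covers products such as $\Z_{tk}\Z_{t\ell}$. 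Since the pairs $(X_t,Y_t)_{t=1}^n$ are independent, so are the $\xi_t$, and applying the sub-exponential Bernstein inequality (see, e.g., \cite{vershynin_2012}) to $S_n:=\sum_{t=1}^n\xi_t$ yields, for every $s\ge0$,
\[
\PP\left\{|S_n|\ge s\right\}\le 2\exp\left(-c'\min\left(\frac{s^2}{n(c_0\gamma_x\gamma_y)^2},\ \frac{s}{c_0\gamma_x\gamma_y}\right)\right).
\]
Choosing $s=n\,c_0\gamma_x\gamma_y\,t$ and absorbing constants into $c$ and $c'$ gives the first displayed bound $\PP\{n^{-1}|S_n|\le c\gamma_x\gamma_y t\}\ge 1-2\exp(-c'\min(t^2,t)n)$.

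For the ``in particular'' statement I would specialize $t=\sqrt{\log\pn/n}$. Under the hypothesis $\log p\le c''n$ we have $t^2=\log\pn/n\le\max(c'',1)$, so after enlarging the constant $c$ by a fixed factor we may assume $t\le1$; then $\min(t^2,t)=t^2$, hence $\min(t^2,t)\,n=\log\pn$, and the tail becomes $2\exp(-c'\log\pn)=2\pn^{-c'}$, as claimed. No serious obstacle is expected here; the only point requiring care is this last bookkeeping of constants --- ensuring $t\le1$ so that the quadratic branch of the Bernstein bound is the active one --- which is precisely where the assumption $\log p\lesssim n$ enters.
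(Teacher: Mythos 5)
Your proposal is correct and follows essentially the same route as the paper: bound $\|X_tY_t\|_{\psi_1}\le \|X_t\|_{\psi_2}\|Y_t\|_{\psi_2}\lesssim \gamma_x\gamma_y$, apply the sub-exponential Bernstein inequality (Corollary 5.17 of \cite{vershynin_2012}) to the independent centered summands, and then specialize $t$ so that $\min(t^2,t)\,n=\log\pn$ under $\log p\lesssim n$. The paper's choice $t=\max(\delta_n,\delta_n^2)$ with $\delta_n=\sqrt{\log\pn/n}$ is just a compact way of doing the same constant bookkeeping you describe.
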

	
	Finally we collect several results that control different random quantities of   interest. 
	
	\begin{lemma}\label{lem_quad}
		Let $u, v \in \R^K$ and $\alpha \in \R^p$ be fixed vectors. Under Assumption \ref{ass_subg}, each of the following statements holds with probability $1-\pn^{-c}$ for some constant $c>0$, 
		\begin{align}
		\label{rate_zeps}
		&	{1\over n}|u^\top \Z^\top \eps| \lesssim \delta_n\sqrt{u^\top \C u} ,\\ \label{rate_weps}
		&	{1\over n}|\alpha^\top \W^\top \eps| \lesssim  \delta_n\|\alpha\|_2 ,\\\label{rate_wz}
		&		{1\over n}|\alpha^\top \W^\top \Z u| \lesssim \delta_n \|\alpha\|_2 \sqrt{u^\top \C u},\\\label{rate_hweps}
		&	{1\over n}|u^\top H^+ \W^\top \eps| \lesssim \delta_n \sqrt{u^\top (H^\top H)^{-1}u}, \\\label{rate_hwz}
		&	{1\over n}|u^\top H^+ \W^\top \Z v| \lesssim  \delta_n \sqrt{v^\top \C v}\sqrt{u^\top (H^\top H)^{-1}u},\\\label{rate_zz_c}
		& \left| u^\top \left({1\over n}\Z^\top \Z - \C\right) v\right| \lesssim \delta_n\sqrt{u^\top \C u}\sqrt{v^\top \C v} ,\\ \label{rate_zz}
		& {1\over n}|u^\top \Z^\top \Z v| \lesssim |u^\top \C v| + \delta_n\sqrt{u^\top  \C u}\sqrt{v^\top \C v}.
		\end{align} 
		If, in addition, Assumption \ref{ass_model} holds,  we have, 
		on an event that holds with probability $1-C\pn^{-\alpha}$ for some constants $C,\alpha>0$, $\wh K=K$ and each of the following inequalities holds,
		\begin{align}\label{rate_zw_td}
		&{1\over n}|u^\top \Z^\top \wt \W v| \lesssim \delta_n \|v\|_2\sqrt{u^\top \C u} \left(
		{1 \over \sqrt m} + \bar{\rho}
		\right)\\\label{rate_hww}
		&\left|u^\top H^+\left(
		{1\over n}\W^\top \wt \W - \wh \Gamma_{{\sbt \wh I}}\wh \Pi\right)v\right| \lesssim  \delta_n {\|u\|_2\|v\|_2 \over \sigma_K(H)}  \left(
		{1 \over \sqrt{m}} + \bar{\rho}
		\right)\\\label{rate_ww}
		&\left|\alpha^\top \left(
		{1\over n}\W^\top \wt \W - \wh \Gamma_{{\sbt \wh I}}\wh \Pi\right)v\right| \lesssim  \delta_n \|\alpha\|_2\|v\|_2 \left(
		{1 \over \sqrt{m}} + \bar{\rho}
		\right).
		\end{align}
	\end{lemma}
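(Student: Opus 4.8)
\textbf{Proof plan for Lemma \ref{lem_quad}.} We argue on the event $\E$ of \eqref{def_event}, which has probability at least $1-\pn^{-\alpha}$ and on which $\wh K=K$ and $I_k\subseteq\wh I_k\subseteq I_k\cup J_1^k$ for all $k$, with $\wh A_{\wh I\sbt}$ carrying the correct signs (all from the proof of Theorem~3 in \cite{LOVE}). The seven bounds \eqref{rate_zeps}--\eqref{rate_zz} are, for the fixed vectors $u,v,\alpha$, direct applications of the Bernstein-type inequality of Lemma~\ref{lem_bernstein} to $\tfrac1n\sum_t X_tY_t$ with sub-Gaussian parameters read off from Lemmas~\ref{lem_Z}--\ref{lem_W}: for \eqref{rate_zeps} take $X_t=u^\top\Z_{t\sbt}$ (parameter $\z\sqrt{u^\top\C u}$) and $Y_t=\eps_t$ (parameter $\gamma_\eps$), with $\EE[X_tY_t]=0$ since $\Z\perp\eps$; for \eqref{rate_weps}, \eqref{rate_wz}, \eqref{rate_hweps}, \eqref{rate_hwz} one replaces the factors by $\alpha^\top\W_{t\sbt}$, $u^\top H^+\W_{t\sbt}$ and $v^\top\Z_{t\sbt}$, whose parameters come from Lemma~\ref{lem_W}(1)--(2) and whose cross-expectations vanish by the mutual independence of $\Z,\W,\eps$; for \eqref{rate_zz_c} take $X_t=u^\top\Z_{t\sbt}$, $Y_t=v^\top\Z_{t\sbt}$ with $\EE[X_tY_t]=u^\top\C v$. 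Since $\gamma_\eps,\w,\z,B_z$ are constants, Lemma~\ref{lem_bernstein} with deviation $\delta_n=\sqrt{\log\pn/n}$ gives each estimate, and \eqref{rate_zz} follows from \eqref{rate_zz_c} by the triangle inequality. The same computations, now with a union bound over $[p]$, furnish the auxiliary uniform estimates $\max_{i\in[p]}\tfrac1n|u^\top\Z^\top\W_{\sbt i}|\lesssim\delta_n\sqrt{u^\top\C u}$ and their $u^\top H^+\W^\top$- and $\W^\top\W$-analogues, which feed the last three displays.

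For \eqref{rate_zw_td} the difficulty is that $\wh\Pi=\wh A_{\wh I\sbt}(\wh A_{\wh I\sbt}^\top\wh A_{\wh I\sbt})^{-1}$, and hence $\wt\W=\W_{\sbt\wh I}\wh\Pi$, depends on the \emph{random} index set $\wh I$, which is correlated with $\W$ and $\Z$. The device is to peel off a deterministic-index ``oracle''. On $\E$, with $L_k:=\wh I_k\setminus I_k\subseteq J_1^k$ and $\rho_k=|L_k|/\wh m_k$, $D_\rho$ as in \eqref{def_rho_k}, the block structure of $\wh A_{\wh I\sbt}$ gives the column identity
\[
\wt\W_{\sbt k}\ =\ (1-\rho_k)\,\oW_{\sbt k}\ +\ \frac1{\wh m_k}\sum_{i\in L_k}\wh A_{ik}\,\W_{\sbt i},\qquad k\in[K],
\]
equivalently $\wt\W=\oW({\bI}_K-D_\rho)+R$ with $R_{\sbt k}:=\wh m_k^{-1}\sum_{i\in L_k}\wh A_{ik}\W_{\sbt i}$. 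The matrix $\oW$ of \eqref{def_bars} depends on $\W$ only through the fixed index set $I$, so $\tfrac1n u^\top\Z^\top\oW v$ is handled exactly as in the first part, with parameter $\w\|\Pi v\|_2\le\w\|v\|_2/\sqrt m$ (using $A_{I\sbt}^\top A_{I\sbt}=\textrm{diag}(m_1,\dots,m_K)$), producing the $m^{-1/2}$ term. The remaining two pieces are bounded crudely: using the uniform estimate above and $|(\wh\Pi v)_i|=|v_k|/\wh m_k$ for $i\in L_k$, both $\tfrac1n|u^\top\Z^\top\oW D_\rho v|$ and $\tfrac1n|u^\top\Z^\top Rv|$ are at most $\big(\sum_k|\rho_k v_k|\big)\cdot O(\delta_n\sqrt{u^\top\C u})$, and $\sum_k|\rho_k v_k|\le\|\rho\|_2\|v\|_2\le\bar\rho\|v\|_2$ by Cauchy--Schwarz and the monotonicity inequality \eqref{disp_rho_rho_bar}; summing the three contributions gives \eqref{rate_zw_td}.

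The bounds \eqref{rate_ww} and \eqref{rate_hww} use the same $\oW/D_\rho/R$ split, with two adjustments. First, $\tfrac1n\W^\top\wt\W$ is not mean zero: since $\Gamma$ is diagonal, $\EE[\tfrac1n\W^\top\wt\W]=\Gamma_{\sbt\wh I}\wh\Pi$, so one writes $\tfrac1n\W^\top\wt\W-\wh\Gamma_{\sbt\wh I}\wh\Pi=\big(\tfrac1n\W^\top\wt\W-\Gamma_{\sbt\wh I}\wh\Pi\big)+(\Gamma-\wh\Gamma)_{\sbt\wh I}\wh\Pi$; the first bracket is treated as above, with the diagonal products $\tfrac1n\sum_t W_{tj}^2$ recentered by $\Gamma_{jj}$ before invoking Lemma~\ref{lem_bernstein}, and the second term is controlled entrywise by a uniform bound $\max_{i\in\wh I}|\wh\Gamma_{ii}-\Gamma_{ii}|\lesssim\delta_n$, itself a consequence of $\E$, the identity $\wh\Gamma_{ii}-\Gamma_{ii}=(\wh\Sigma_{ii}-\Sigma_{ii})-(\wh A_{i\sbt}^\top\whC\wh A_{i\sbt}-A_{i\sbt}^\top\C A_{i\sbt})$, and the optimal rates for $\whC$ and $\wh A$ from \cite{LOVE}. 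Second, in \eqref{rate_hww} the prefactor $H^+$ only changes the relevant sub-Gaussian parameter from $\w\|\alpha\|_2$ to $\w\sqrt{u^\top(H^\top H)^{-1}u}\le\w\|u\|_2/\sigma_K(H)$, via Lemma~\ref{lem_W}(2) and $\lambda_{\min}(H^\top H)=\sigma_K(H)^2$, leaving the combinatorial bookkeeping unchanged.

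The main obstacle is not any individual concentration step — those are routine given Lemmas~\ref{lem_Z}--\ref{lem_bernstein} — but the dependence between the data-driven objects $\wh I,\wh\Pi,\wh\Gamma$ and the noise $\W,\Z$ in \eqref{rate_zw_td}--\eqref{rate_ww}. The exact split $\wt\W=\oW({\bI}_K-D_\rho)+R$ resolves it, isolating the $\Z$-independent, fixed-index matrix $\oW$ for clean concentration and confining all residual randomness to $O(\bar\rho)$-sized remainders dispatched by elementary $\ell_1$--$\ell_2$ inequalities together with $\|\rho\|_2\le\bar\rho$.
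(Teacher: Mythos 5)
Your proposal is correct and follows essentially the same route as the paper: the first seven bounds via Lemma \ref{lem_bernstein} with sub-Gaussian parameters from Lemmas \ref{lem_Z}--\ref{lem_W}, and for \eqref{rate_zw_td}--\eqref{rate_ww} the identical split $\wt\W=\oW+\W_{\sbt L}\wh A_{L\sbt}D_{\wh m}-\oW D_\rho$ (your $\oW(\bI_K-D_\rho)+R$), uniform bounds over deterministic index sets, the $\ell_1$--$\ell_2$ inequality $\|D_\rho v\|_1\le\bar\rho\|v\|_2$, recentering by $\Gamma$ with the diagonal error $\max_i|\wh\Gamma_{ii}-\Gamma_{ii}|\lesssim\delta_n$ (the paper's $\D_d$ term and event $\E_W$ via Lemma \ref{lem_tau}), and finally $\alpha=(H^+)^\top u$ with $\|\alpha\|_2\le\|u\|_2/\sigma_K(H)$ for \eqref{rate_hww}. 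The only differences are cosmetic bookkeeping in how the $\Gamma$-vs-$\wh\Gamma$ remainder is grouped.
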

	\begin{proof}
		This lemma is proved in Section \ref{sec_proof_lem_quad}.
		We emphasize that the expressions on the left hand side of (\ref{rate_zw_td}), (\ref{rate_hww}) and (\ref{rate_ww}) are well defined on the event  $\wh K=K$.
	\end{proof}

	\begin{lemma}\label{lem_op}
		Under Assumption \ref{ass_subg}, 
		on an event that holds with probability $1-C\pn^{-K\alpha}$ for some constants $C,\alpha>0$, $\wh K=K$ as well as
		\begin{align}\label{rate_op_hwz}
		& {1\over n}\|H^+ \W^\top \Z \Omega^{1/2}\|_{{\rm op}} \lesssim  {\delta_n\over \sigma_K(H)} \sqrt{K},\\\label{rate_op_zz_c}
		& \left\|\Omega^{1/2}\left({1\over n}\Z^\top \Z - \C \right) \Omega^{1/2}\right\|_{{\rm op}} \lesssim \delta_n\sqrt{K},\\\label{rate_op_zwtd}
		& {1\over n}\left\|\Omega^{1/2}\Z^\top \wt \W\right\|_{{\rm op}} \lesssim \left({1\over \sqrt m} + \bar{\rho}\right)\delta_n  \sqrt{K},\\\label{rate_op_hwwtd}
		& \left\| H^+ \left({1\over n}\W^\top \wt \W - \wh \Gamma_{{\sbt \wh I}}\wh \Pi\right)
		\right\|_{{\rm op}} \lesssim {1\over \sigma_K(H)} \left({1\over \sqrt m} + \bar{\rho}\right)\delta_n  \sqrt{K}.
		\end{align}
	\end{lemma}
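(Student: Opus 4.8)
The plan is to derive each of the four operator-norm bounds as the $\varepsilon$-net strengthening of the corresponding fixed-vector estimate already proved in Lemma~\ref{lem_quad}. After absorbing the outer factors $H^+$ and $\Omega^{1/2}$, every matrix appearing on the left-hand sides of (\ref{rate_op_hwz})--(\ref{rate_op_hwwtd}) is $K\times K$, so for such a matrix $M$ I would use the standard discretization $\|M\|_{\rm op}\lesssim\sup_{u,v\in\N}|u^\top M v|$, where $\N$ is a $1/4$-net of $\S^{K-1}$ with $|\N|\le 9^K$ (for the symmetric matrix in (\ref{rate_op_zz_c}) a single net suffices). For a fixed pair $(u,v)\in\N\times\N$ the bilinear form $u^\top M v$ is, in each case, a normalized sum $\tfrac1n\sum_{t=1}^n X_tY_t$ of products of two mean-zero sequences with independent sub-Gaussian entries whose parameters are read off from Lemmas~\ref{lem_Z} and \ref{lem_W} exactly as in the proof of Lemma~\ref{lem_quad}: for (\ref{rate_op_hwz}), $X_t=u^\top H^+\W_{t\sbt}$ (which is $\w\|u\|_2/\sigma_K(H)$-sub-Gaussian since $u^\top(H^\top H)^{-1}u\le\|u\|_2^2/\sigma_K(H)^2$) and $Y_t=(\Omega^{1/2}v)^\top\Z_{t\sbt}$ ($\z\|v\|_2$-sub-Gaussian since $\Omega^{1/2}\C\Omega^{1/2}={\bI}_K$); for (\ref{rate_op_zz_c}) the centered quadratic form $u^\top\Omega^{1/2}(\tfrac1n\Z^\top\Z-\C)\Omega^{1/2}u$; and analogously for (\ref{rate_op_zwtd}) and (\ref{rate_op_hwwtd}) with $\wt\W=\W_{\sbt\wh I}\wh\Pi$ in place of $\W$.

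For each fixed $(u,v)$ I would then invoke Lemma~\ref{lem_bernstein}, but at the enlarged scale $t=C\delta_n\sqrt K=C\sqrt{K\log\pn/n}$ rather than at the ``universal'' scale $\delta_n$. Under the standing hypotheses $K\log\pn\lesssim n$ (which follows from Assumption~\ref{ass_H}, resp. from $K\log\pn\le cn$) we have $t\le 1$ for $n$ large, hence $\min(t^2,t)\,n=t^2n=C^2K\log\pn$, so the deviation probability is at most $2\exp(-c'C^2K\log\pn)=2\pn^{-c'C^2K}$. A union bound over the $\le 81^K$ pairs in $\N\times\N$ costs a factor $81^K=\pn^{K\log_{\pn}81}$, so for $C$ large enough the bad event for a given target matrix has probability at most $2\pn^{-\alpha K}$; intersecting the four such events, and then with the event $\E$ of (\ref{def_event}) (which gives $\wh K=K$ and has polynomially small complement), yields the event and probability asserted in the statement. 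The factor $\sqrt K$ by which (\ref{rate_op_hwz})--(\ref{rate_op_hwwtd}) exceed their fixed-vector counterparts in Lemma~\ref{lem_quad} is precisely the cost of the net: the pointwise bounds carry $\delta_n$, the supremum over $\N\times\N$ carries $t\asymp\delta_n\sqrt K$.

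I expect the main obstacle to be the same one that already burdens Lemma~\ref{lem_quad}: in (\ref{rate_op_zwtd}) and (\ref{rate_op_hwwtd}) the objects $\wt\W=\W_{\sbt\wh I}\wh\Pi$ and $\wh\Gamma_{{\sbt\wh I}}\wh\Pi$ are data-dependent, so for fixed $(u,v)$ the summands $(\wt\W_{t\sbt})^\top v$ are \emph{not} sub-Gaussian with a deterministic parameter and Lemma~\ref{lem_bernstein} does not apply directly. I would dispose of this exactly as in the proof of Lemma~\ref{lem_quad}: restrict to $\E$, on which $\wh K=K$ and $I_k\subseteq\wh I_k\subseteq I_k\cup J_1^k$; split $\wt\W$ into its pure-variable part and a quasi-pure correction controlled through the diagonal matrix $D_\rho$ of (\ref{def_rho_k}) (using $\|\rho\|_2\le\bar\rho$ from (\ref{disp_rho_rho_bar})) and the bound (\ref{rate_Ajek}) on $\|A_{j\sbt}-\e_k\|_1$; this reduces $u^\top M v$ to genuine sub-Gaussian bilinear sums whose effective scale is $\delta_n(m^{-1/2}+\bar\rho)$. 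The net/union-bound step of the previous paragraph then applies verbatim at scale $t\asymp\delta_n\sqrt K(m^{-1/2}+\bar\rho)$ and produces the factor $(m^{-1/2}+\bar\rho)\delta_n\sqrt K$ in (\ref{rate_op_zwtd})--(\ref{rate_op_hwwtd}). Thus the only genuinely new ingredient beyond Lemma~\ref{lem_quad} is the discretization, which is routine; the only real care is to carry the $(m^{-1/2}+\bar\rho)$ bookkeeping \emph{uniformly} over the net, which is automatic once one fixes the net-independent scale from the outset.
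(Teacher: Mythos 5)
Your proposal is correct and follows essentially the same route as the paper: the paper also proves these bounds by a standard discretization over a $1/2$-net of $\S^{K-1}$ (of size $\le 5^K$), reuses the decompositions from the proof of Lemma~\ref{lem_quad} (working on $\E$ and controlling the quasi-pure correction through $D_\rho$ and (\ref{rate_Ajek})) to reduce each bilinear form to genuine sub-Gaussian sums, and then applies Lemma~\ref{lem_bernstein} at the scale $t=\delta_n\sqrt{K}$ with a union bound over the net and over $i\in J_1$, $k\in[K]$, under $K\log\pn\lesssim n$. Nothing essential is missing.
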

	\begin{proof}
		This lemma is proved in Section \ref{sec_proof_lem_op}.
	\end{proof}

	\begin{lemma}\label{lem_deltaA_I}
		Under Assumptions \ref{ass_model} and \ref{ass_subg}, we have, for any fixed $v\in \R^K$,
		on an event that holds with probability at least $1-C\pn^{-\alpha}$ for some constants $C,\alpha>0$, $\wh K=K$ and
		\[
		\|\D v\|_1 \lesssim       \|v\|_2 \bar\rho \delta_n.			
		\]
		Furthermore, for any $K\times K$-matrix $Q$,  on an event that holds with probability $1-C\pn^{-\alpha}$ for some constants $C,\alpha>0$, $\wh K=K$ and
		\[ 
		\|Q\D\|_{{\rm op}} \le \|Q\|_{2,\i} \cdot   \bar{\rho} \cdot \delta_n.
		\]
	\end{lemma}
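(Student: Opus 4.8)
The plan is to control $\D = A_{\wh I\sbt}^\top \wh\Pi - {\bI}_K$ by exploiting the block structure induced by the partition $\wh I = \bigcup_k \wh I_k$ and the fact that on the event $\E$ we have $\wh I_k = I_k \cup L_k$ with $L_k \subseteq J_1^k$. First I would note that $\wh A_{\wh I_k\sbt} = \e_k \mathbf{1}_{\wh m_k}^\top$ (up to signs, which cancel as in the definition of $\wh A$), so that $\wh A_{\wh I\sbt}^\top \wh A_{\wh I\sbt} = \d(\wh m_1, \ldots, \wh m_K)$ and hence $\wh\Pi = \wh A_{\wh I\sbt} \d(\wh m_1^{-1}, \ldots, \wh m_K^{-1})$. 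Consequently the $k$-th column of $A_{\wh I\sbt}^\top \wh\Pi$ equals $\wh m_k^{-1} \sum_{j \in \wh I_k} A_{j\sbt}$. Splitting this sum over $j \in I_k$ (where $A_{j\sbt} = \e_k$ exactly) and $j \in L_k$, the $k$-th column of $\D$ is
\[
\D \e_k = \frac{1}{\wh m_k}\Bigl( m_k \e_k + \sum_{j\in L_k} A_{j\sbt}\Bigr) - \e_k = \frac{1}{\wh m_k}\sum_{j\in L_k}\bigl(A_{j\sbt} - \e_k\bigr).
\]
Taking $\ell_1$ norms and using (\ref{rate_Ajek}), namely $\|A_{j\sbt} - \e_k\|_1 \lesssim \delta_n$ for $j \in J_1^k \supseteq L_k$, gives $\|\D\e_k\|_1 \lesssim (|L_k|/\wh m_k)\,\delta_n = \rho_k \delta_n$.

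For the first claim, write $\D v = \sum_k v_k \D\e_k$, so $\|\D v\|_1 \le \sum_k |v_k|\,\|\D\e_k\|_1 \lesssim \delta_n \sum_k |v_k|\rho_k \le \delta_n \|v\|_2 \|\rho\|_2$ by Cauchy--Schwarz, and then $\|\rho\|_2 \le \bar\rho$ by (\ref{disp_rho_rho_bar}); this yields $\|\D v\|_1 \lesssim \|v\|_2 \bar\rho \delta_n$. For the second claim, observe that $\|Q\D\|_{\rm op} = \sup_{\|v\|_2 = 1}\|Q\D v\|_2$, and $Q\D v = \sum_k v_k Q(\D\e_k)$; each $\D\e_k$ is a vector supported on coordinate $k$ only? — no, $\D\e_k$ is a full $K$-vector, but bounded in $\ell_1$ by $\lesssim \rho_k \delta_n$. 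A cleaner route: $\D = \sum_k (\D\e_k)\e_k^\top$, so $Q\D = \sum_k (Q\D\e_k)\e_k^\top$ and, since these have orthogonal row-spaces (each $\e_k^\top$), $\|Q\D\|_{\rm op}^2 = \max_k$? Not quite — better to bound $\|Q\D\|_{\rm op} \le \|Q\D\|_{\rm op}$ directly via $\|Q\D v\|_2 \le \sum_k |v_k| \|Q\D\e_k\|_2 \le \sum_k |v_k| \|Q\|_{2,\infty}\|\D\e_k\|_1$, using that $\|Qw\|_2 \le \|Q\|_{2,\infty}\|w\|_1$ for the mixed norm $\|Q\|_{2,\infty} := \max_i \|Q_{\sbt i}\|_2$ (column-wise). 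Then $\|Q\D v\|_2 \lesssim \|Q\|_{2,\infty}\,\delta_n \sum_k |v_k|\rho_k \le \|Q\|_{2,\infty}\,\delta_n\,\|v\|_2\,\bar\rho$, giving $\|Q\D\|_{\rm op} \le \|Q\|_{2,\infty}\bar\rho\delta_n$ as claimed.

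The only probabilistic content is the event $\E$ (which gives $\wh K = K$ and $\wh I_k \subseteq I_k \cup J_1^k$), holding with probability $1 - C\pn^{-\alpha}$ by Lemmas \ref{lem_Z} and \ref{lem_bernstein}; everything else is deterministic on $\E$. I do not anticipate a genuine obstacle here — the lemma is essentially bookkeeping once the identity $\D\e_k = \wh m_k^{-1}\sum_{j\in L_k}(A_{j\sbt}-\e_k)$ is written down. The only point requiring mild care is matching the mixed-norm notation $\|Q\|_{2,\infty}$ used in the statement to whichever convention (row- vs.\ column-wise) makes the bound $\|Qw\|_2 \le \|Q\|_{2,\infty}\|w\|_1$ valid, namely the column-wise maximum $\ell_2$ norm; with that convention the argument goes through verbatim.
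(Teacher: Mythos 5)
Your proof is correct and takes essentially the same route as the paper: on the event $\E$ you derive the column identity $\D\e_k = \wh m_k^{-1}\sum_{j\in L_k}(A_{j\sbt}-\e_k)$, invoke (\ref{rate_Ajek}), apply Cauchy--Schwarz with $\|\rho\|_2\le\bar\rho$ from (\ref{disp_rho_rho_bar}), and bound $\|Q\D\|_{{\rm op}}$ via an $\ell_1$--$\ell_\infty$ duality step. The paper phrases that last step as $\alpha^\top Q\D u \le \|\alpha^\top Q\|_\i\|\D u\|_1$, which is exactly the column-wise $\|Q\|_{2,\i}$ convention you correctly identified, so there is no substantive difference.
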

	\begin{proof}
		This lemma is proved in Section \ref{sec_proof_lem_deltaA_I}.
	\end{proof}

	\begin{lemma}\label{lem_H_op}
		Suppose Assumptions \ref{ass_model} and \ref{ass_subg} hold, and that there exists some sufficiently small constant $c_0>0$ such that 
		\begin{equation}\label{cond_1}
		\delta_n\sqrt{K}  \cdot T_n\le c_0
		\end{equation}
		with 
		\begin{equation}\label{def_Tn}
		T_n := 
		1 +\left. \left({1\over \sqrt{m}} + {\bar{\rho}} + {\bar{\rho} \delta_n}\right) \right/ \sqrt{\cl}
		\end{equation}
		Then,  for some constants $0<c_1<1  $ and $c_2>0$,  on an event that holds with probability at least $1-C\pn^{-\alpha}$ for some constants $C,\alpha>0$, $\wh K=K$ and the following inequalities hold
		\begin{enumerate}
			\item[(a)] $\|H^+(\wh H - H)\|_{{\rm op}} \le c'\delta_n T_n\sqrt{K} \le c_1$;
			\item[(b)] $\lambda_K(\wh H^\top  \wh H ) \ge (1-c_1)^2 \lambda_K\left( H^\top  H\right);$
			\item[(c)] $
			\lambda_K(\wh \Theta^\top \wh \Theta )  \ge (1-c_1)^2\cl\lambda_K(H^\top H);
			$
			\item[(d)] $|u^\top(\wh H - H)v| \lesssim \|u\|_2\|v\|_2 T_n \delta_n$ for any fixed $u\in \R^p$ and $v\in \R^K$;
			\item[(e)] $\|\wh H - H\|_{{\rm op}} \le c_2  T_n\delta_n\sqrt{pK}$.
		\end{enumerate}
	\end{lemma}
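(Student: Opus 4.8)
The plan is to express $\wh\Theta-\Theta$, hence $\wh H-H=(\wh\Theta-\Theta)\Omega^{1/2}$, as a sum of error terms and bound each of them. Using $\X=\Z A^\top+\W$ together with the identity (\ref{iden_AC}) for $\Theta=A\C$, the definition (\ref{est_Theta}) of $\wh\Theta$, and the notation $\D=A_{{\wh I\sbt}}^\top\wh\Pi-\bI_K$ of (\ref{def_deltaA_I}) and $\wt\W=\W_{\sbt\wh I}\wh\Pi$ of (\ref{def_tildes}), I would first establish, on the event $\{\wh K=K\}$, the decomposition
\[
\wh\Theta-\Theta \;=\; A\Bigl(\tfrac{1}{n}\Z^\top\Z-\C\Bigr)(\bI_K+\D)\;+\;A\C\D\;+\;\tfrac{1}{n}A\Z^\top\wt\W\;+\;\tfrac{1}{n}\W^\top\Z\,(\bI_K+\D)\;+\;\Bigl(\tfrac{1}{n}\W^\top\wt\W-\wh\Gamma_{{\sbt\wh I}}\wh\Pi\Bigr).
\]
Two algebraic facts then drive everything: since $H=A\C^{1/2}$ one has $H^+A=\Omega^{1/2}$ and $\C^{1/2}\Omega^{1/2}=\bI_K$; and, from the pure-variable structure together with Lemma \ref{lem_signal}, $\sigma_K(H)^2=\lambda_K(A\C A^\top)\ge\cl\, m$, so that $1/\sigma_K(H)\le 1/\sqrt{\cl m}\le T_n$. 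Multiplying the display on the left by $H^+$ and on the right by $\Omega^{1/2}$ and using $H^+A=\Omega^{1/2}$ collapses the $A$-terms, leaving five terms of the form $\Omega^{1/2}(\cdots)\Omega^{1/2}$ or $H^+(\cdots)\Omega^{1/2}$.

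For part (a) I would bound $\|H^+(\wh H-H)\|_{\rm op}$ term by term, using the operator-norm estimates of Lemma \ref{lem_op} ((\ref{rate_op_zz_c}), (\ref{rate_op_hwz}), (\ref{rate_op_zwtd}), (\ref{rate_op_hwwtd})), the bounds $\|\D\|_{\rm op}\le\bar{\rho}\delta_n$ and $\|Q\D\|_{\rm op}\le\|Q\|_{2,\i}\bar{\rho}\delta_n$ from Lemma \ref{lem_deltaA_I} (applied with $Q=\Omega^{-1/2}=\C^{1/2}$, whose rows have $\ell_2$-norm $\sqrt{[\C]_{kk}}\le\sqrt{B_z}$, to handle the $\C\D$ term), together with $\|\Omega^{1/2}\|_{\rm op}\le 1/\sqrt{\cl}$ and $\|\Omega^{1/2}\tfrac1n\Z^\top\Z\,\Omega^{1/2}\|_{\rm op}\le 2$ (the latter from (\ref{rate_op_zz_c}) and $\delta_n\sqrt K\le c_0$). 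Each term comes out $\lesssim\delta_n\sqrt K\,T_n$, after which the smallness condition (\ref{cond_1}) forces the sum to be $\le c_1<1$. Part (b) follows from (a) by the orthogonal splitting $\wh H=H(\bI_K+H^+(\wh H-H))+(\bI_p-HH^+)(\wh H-H)$: since $H^\top(\bI_p-HH^+)=0$, the cross terms in $\wh H^\top\wh H$ vanish, so $\wh H^\top\wh H\succeq(\bI_K+H^+(\wh H-H))^\top H^\top H(\bI_K+H^+(\wh H-H))$, whence $\lambda_K(\wh H^\top\wh H)\ge(1-\|H^+(\wh H-H)\|_{\rm op})^2\lambda_K(H^\top H)\ge(1-c_1)^2\lambda_K(H^\top H)$. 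Part (c) is (b) sandwiched: writing $\wh\Theta=\wh H\Omega^{-1/2}$ gives $\wh\Theta^\top\wh\Theta=\Omega^{-1/2}\wh H^\top\wh H\,\Omega^{-1/2}$, and $\|\Omega^{-1/2}v\|_2^2=v^\top\C v\ge\cl$ for unit $v$ yields $\lambda_K(\wh\Theta^\top\wh\Theta)\ge\cl\lambda_K(\wh H^\top\wh H)$.

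For (d) I would split $u=HH^+u+(\bI_p-HH^+)u$. Since $\mathrm{col}(H)=\mathrm{col}(A)$, the orthogonal component annihilates $A^\top$, so only the two $\W$-terms survive there and are controlled by the fixed-vector bounds (\ref{rate_wz}) and (\ref{rate_ww}), giving a contribution $\lesssim\delta_n(1/\sqrt m+\bar{\rho})\|u\|_2\|v\|_2$; the in-range component reduces to $(H^\top u)^\top\,[H^+(\wh H-H)]\,v$, bounded using part (a) (and, in the applications of (d), $\|H^\top u\|_2\lesssim\|u\|_2$). Part (e) uses the same five-term decomposition of $\wh H-H$, but now taking operator norms throughout, so that $\|A\|_{\rm op}=\sigma_1(H)=\sqrt{\lambda_1(A\C A^\top)}\lesssim\sqrt p$ (Lemma \ref{lem_signal}) is the only extra factor and produces the $\sqrt{pK}$. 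The main obstacle is the bookkeeping across these cross-terms, and in particular isolating and controlling the quasi-pure-variable contributions, which enter through both $\D$ (the mismatch $A_{{\wh I\sbt}}^\top\wh\Pi-\bI_K$) and $\wt\W$ (noise on the estimated rather than the true pure set): one must show that all of these are governed by $\bar{\rho}$, hence by $T_n$ once (\ref{cond_1}) is invoked, while correctly tracking the $1/\sigma_K(H)$ factors generated by $H^+$ and absorbing them via $\sigma_K(H)^2\ge\cl m$.
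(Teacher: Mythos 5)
Your treatment of (a)--(c) is essentially the paper's own argument: the decomposition you write for $\wh\Theta-\Theta$ is the paper's display (\ref{disp_delta_theta}) after regrouping ($n^{-1}\Z^\top\Z\D$ split into a centered part plus $\C\D$, and $\wt\Z$ expanded as $\Z(\bI_K+\D)$), the identity $H^+A=\Omega^{1/2}$ plays the role of the paper's splitting $\wh H-H=H\D_1+\D_2$, and your proofs of (b) (projection onto $\mathrm{col}(H)$ plus Weyl) and (c) (the $\cl$-sandwich) coincide with the paper's.

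The genuine gap is in (d), and it propagates to (e). For the in-range component you bound $u^\top HH^+(\wh H-H)v=(H^\top u)^\top[H^+(\wh H-H)]v$ by $\|H^\top u\|_2\,\|H^+(\wh H-H)\|_{{\rm op}}\,\|v\|_2$ and invoke part (a); but the operator-norm bound in (a) carries the $\sqrt K$ produced by the discretization over $\S^{K-1}$ in Lemma \ref{lem_op}, so this route only gives $|u^\top(\wh H-H)v|\lesssim \delta_n T_n\sqrt K\,\|u\|_2\|v\|_2$, not the stated $\delta_n T_n\|u\|_2\|v\|_2$. The paper never passes through an operator norm here: it bounds $|u^\top H\D_1 v|\le\|u^\top A\|_1\max_k|\e_k^\top\C^{1/2}\D_1 v|$ and controls each fixed-direction quantity with the fixed-vector bounds (\ref{rate_zz_c}), (\ref{rate_zw_td}), (\ref{rate_zz}), (\ref{rate_wz}), (\ref{rate_ww}) and Lemma \ref{lem_deltaA_I}, which are $\sqrt K$-free (your orthogonal-complement part is already handled this way; the same should be done for the in-range part). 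The loss is not cosmetic: the $\sqrt K$-free form of (d) is precisely what is used in (\ref{disp_H_prime}) and in the control of ${\rm Rem}_1$, where an extra $\sqrt K$ need not be absorbed under Assumption \ref{ass_H_prime} once $K$ grows. Relatedly, your plan for (e) of "taking operator norms throughout" requires operator-norm control over $\S^{p-1}$ of $p$-dimensional noise matrices such as $n^{-1}\W^\top\Z\Omega^{1/2}$ and $n^{-1}\W^\top\wt\W-\wh\Gamma_{{\sbt\wh I}}\wh\Pi$, which none of the available lemmas provide (Lemma \ref{lem_op} only controls these after premultiplication by $H^+$); the paper instead deduces (e) from (d) via $\|\wh H-H\|_{{\rm op}}\le\sqrt{pK}\max_{j,k}|\e_j^\top(\wh H-H)\e_k|$ and a union bound, and with your weaker (d) that route would yield $\sqrt p\,K$ rather than $\sqrt{pK}$. (Your restriction to $u$ with $\|H^\top u\|_2\lesssim\|u\|_2$ mirrors the paper's implicit use of $\|u^\top A\|_1$, so that part is not the issue.)
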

	\begin{proof}
		This lemma is proved in Section \ref{sec_proof_lem_H_op}.
	\end{proof}

	\subsection{Proofs of the preliminary lemmas}

	\subsubsection{Proof of lemmas \ref{lem_Z}, \ref{lem_W} and \ref{lem_bernstein}}\label{sec_proof_lem_prelim}
	
	\begin{proof}[Proof of Lemma \ref{lem_Z}]
		For any $t\in [n]$,  Assumption \ref{ass_subg} implies that
		\begin{align*}
		\EE\left[\exp\left(\lambda\left\langle \Z_{t\sbt}, v\right\rangle\right)\right] &= \EE\left[\exp\left(\lambda\langle \C^{-1/2}\Z_{t\sbt}, \C^{1/2}v\rangle\right)\right]\\
		& \le \EE\left[\exp\left(\lambda^2\z^2v^\top \C v/2\right)\right], \qquad \forall \ \lambda\in\R.
		\end{align*}
		This proves the statement of $\langle \Z_{t\sbt}, v \rangle$. Taking $v = \e_k$ and using $\|\C\|_\i \le B_z$ concludes the proof of (1). Part (1) and the independence between $Z$ and $W$ yield
		\begin{align*}
		\EE[\exp(\lambda \X_{tj})] &= \EE\left[\exp(\lambda \langle A_{j\sbt}, \Z_{t\sbt}\rangle) \right]\EE\left[\exp(\lambda \W_{tj}) \right]\\
		&\le \exp\left(\lambda^2\z^2 A_{j\sbt}^\top \C A_{j\sbt} / 2\right)\exp\left(\lambda^2\w^2/ 2\right)\\
		&\le \exp\left\{{\lambda^2(\z^2\|\C\|_\i+\w^2)/ 2}\right\}\qquad (\|A_{j\sbt}\|_1\le 1)
		\end{align*}
		for any $\lambda\in \R$ and any $1\le j\le p$. This completes the proof.
	\end{proof} 
	
	\bigskip
	
	\begin{proof}[Proof of Lemma \ref{lem_W}]
		By Assumption \ref{ass_subg}, $\W_{tj}$ is $\w$-sub-Gaussian for all $t\in[n]$ and $j\in[p]$, so
		\[
		\EE[\exp(\lambda \W_{ti})] \le \exp(\lambda^2\w^2/2),\quad \forall \lambda\in \R.
		\]
		Again by Assumption \ref{ass_subg}, the $\W_{ti}$ are independent across index $i$, whence
		\begin{align*}
		\EE[\exp(t\alpha^\top W_{t\sbt})] &= \prod_{j=1}^p\EE\left[\exp(t \alpha_{j  }  \W_{tj})\right] \\
		&\le  \prod_{j=1}^p\exp\left[t^2\w^2\alpha_j^2/2\right]\\
		&\le \exp\left[t^2\w^2\|\alpha\|_2^2/2\right],
		\end{align*}
		proving that
		$\sum_{j}\alpha_j \W_{tj}$ is $\w\|\alpha\|_2$-sub-Gaussian. The second statement follows by taking $\alpha = (H^+)^\top v$. Similarly, 
		\begin{align*}
		\EE[\exp(t\oW_{tk})] &= \prod_{i\in I_k}\EE\left[\exp(t\W_{ti}/m_k)\right] \\
		&\le \prod_{i\in I_k}\exp\left[t^2\w^2/( 2m_k^2)\right]\\
		&\le \exp\left[t^2\w^2/(2m)\right],
		\end{align*}
		proving that $\oW_{tk}$ is $\gamma_w/\sqrt{m}$-sub-Gaussian. 
		This concludes the proof. 
	\end{proof} 
	
	\bigskip
	
	\begin{proof}[Proof of Lemma \ref{lem_bernstein}]
		Let $\|\cdot\|_{\psi_1}$ and $\|\cdot\|_{\psi_2}$ denote, respectively, the sub-exponential norm and the sub-gaussian norm (Definitions 5.7 and 5.13 in \cite{vershynin_2012}) as
		\[
		\|X\|_{\psi_1} = \sup_{p\ge 1}p^{-1}\left(\EE[|X|^p]\right)^{1/p}, \quad   \|X\|_{\psi_2} = \sup_{p\ge 1}p^{-1/2}\left(\EE[|X|^p]\right)^{1/p}
		\]
		for any random variable $X$. 
		Then, $\|X_t\|_{\psi_2} \le c\gamma_x$ and $\|Y_t\|_{\psi_2} \le c\gamma_y$ and an application of the H\"{o}lder's inequality yields $\|X_t Y_t\|_{\psi_1}\le \|X_t\|_{\psi_2} \|Y_t\|_{\psi_2} \le c\gamma_x\gamma_y$. The proof of the first statement follows by Corollary 5.17 in \cite{vershynin_2012}. The second statement follows by taking $t = \max(\delta_n, \delta_n^2)$ and observing that $\min(t^2, t)n = \log\pn$.
	\end{proof}

	\subsubsection{Proof of Lemma \ref{lem_quad}}\label{sec_proof_lem_quad}
	
	The proofs of (\ref{rate_zeps}) -- (\ref{rate_zz}) are all based on applying Lemma \ref{lem_bernstein} in conjunction with Assumption \ref{ass_subg}, Lemmas \ref{lem_Z} and \ref{lem_W} and the fact that $\EE[\Z_{t\sbt}\Z_{t\sbt}^\top ] = \C$ for all $t\in [n]$. 
	
	We work on the event $\E$ defined in (\ref{def_event}) that has probability at least $1-\pn^{-c}$ for some $c>0$. Recall that it implies $\wh K = K$ and $I_k \subseteq \wh I_k \subseteq I_k \cup J_1^k$ for all $k\in [K]$.	
	
	To prove (\ref{rate_zw_td}), since
	\[
	\wt \W_{\sbt k} = \oW_{\sbt k} + {|L_k| \over \wh m_k}\left(
	{1\over |L_k|}\sum_{i\in L_k}\W_{\sbt i} - \oW_{\sbt k}
	\right)
	\]
	from (\ref{def_tildes}) and (\ref{def_bars}), we have
	\begin{equation}\label{eq_diff_W_td_W_bar}
	\wt \W -\oW = \W_{\sbt L}\wh A_{L\sbt}D_{\wh m} - \oW D_{\rho}
	\end{equation}
	where $D_{\wh m} = \textrm{diag}(1/\wh m_1, \ldots, 1/\wh m_K)$ and $D_\rho$ is defined in (\ref{def_rho_k}). 
	It then follows that
	\begin{align}\nonumber
	&{1\over n}|u^\top  \Z^\top  \wt \W v|\\\nonumber
	&\le 	{1\over n}|u^\top  \Z^\top  \oW v|+	{1\over n}|u^\top \Z^\top \W_{\sbt L}\wh A_{L\sbt}D_{\wh m} v|  + 	{1\over n}|u^\top \Z^\top \oW D_{\rho}v|\\\label{disp_zw_td}
	&\le {1\over n}|u^\top  \Z^\top  \oW v|+	\max_{i \in J_1}{1\over n}|u^\top \Z^\top \W_{\sbt i}|\cdot \|\wh A_{L\sbt}D_{\wh m} v\|_1  + \max_k{1\over n}|u^\top \Z^\top \oW_{\sbt k}|\cdot  \|D_{\rho}v\|_1.
	\end{align}	
	We find that 
	\begin{equation}\label{disp_D_rho}
	\|\wh A_{L\sbt}D_{\wh m}v\|_1\le \bar\rho \|v\|_2 \le \|v\|_2 \|\rho\|_2 \overset{(\ref{disp_rho_rho_bar})}{\le} \|v\|_2 \bar{\rho},
	\end{equation}
	after we apply   Lemmas \ref{lem_Z}, \ref{lem_W} and \ref{lem_bernstein} and take a union bound. This concludes the proof of (\ref{rate_zw_td}).

	To prove the results of (\ref{rate_hww}) and (\ref{rate_ww}), we first have 
	\begin{equation}\label{disp_WWtd_Sigma_1}
	{1\over n}\W^\top \wt \W - \wh \Gamma_{{\sbt \wh I}}\wh \Pi = {1\over n}\W^\top (\wt \W-\oW) + {1\over n}\W^\top \oW - \Gamma_{\sbt I}\Pi +  \Gamma_{\sbt I}\Pi - \wh \Gamma_{{\sbt \wh I}}\wh \Pi.
	\end{equation}
	Then by plugging (\ref{eq_diff_W_td_W_bar}) into (\ref{disp_WWtd_Sigma_1}), we obtain
	\begin{align}\label{disp_WWtd_Sigma_2}
	{1\over n}\W^\top \wt \W - \wh \Gamma_{{\sbt \wh I}}\wh \Pi &= \left({1\over n}\W^\top \W_{\sbt L} - \Gamma_{\sbt L}\right)\wh A_{L\sbt}D_{\wh m} - \left({1\over n}\W^\top \oW- \Gamma_{\sbt I}\Pi\right)D_{\rho}\\\nonumber
	& \quad + {1\over n}\W^\top \oW - \Gamma_{\sbt I}\Pi +  \D_d
	\end{align}
	where 
	\begin{equation}\label{def_D_d}
	\D_d = \Gamma_{\sbt I}\Pi - \wh \Gamma_{{\sbt \wh I}}\wh \Pi +\Gamma_{\sbt L}\wh A_{L\sbt}D_{\wh m} -\Gamma_{\sbt I}\Pi D_{\rho}.
	\end{equation}
	We first prove (\ref{rate_ww}) and fix any $\alpha \in \R^p$ and $v\in \R^K$. From display (\ref{disp_WWtd_Sigma_2}), we only need to upper bound 
	\begin{align*}
	&\left\|\alpha^\top \left({1\over n}\W^\top \W_{\sbt L} - \Gamma_{\sbt L}\right)\right\|_\i\|\wh A_{L\sbt}D_{\wh m}v\|_1 +	\left\|\alpha^\top \left({1\over n}\W^\top \oW- \Gamma_{\sbt I}\Pi\right)\right\|_\i \|D_{\rho}v\|_1\\\nonumber
	& \quad + 	\left|\alpha^\top \left({1\over n}\W^\top \oW - \Gamma_{\sbt I}\Pi\right)v\right| + |\alpha^\top \D_d v|.
	\end{align*}
	Invoke Lemmas \ref{lem_W} and \ref{lem_bernstein} and apply a union bound to derive
	\begin{align*}
	&\max_{i\in J_1} \left|\alpha^\top \left({1\over n}\W^\top \W_{\sbt i} - \Gamma_{\sbt i}\right)\right| \lesssim \|\alpha\|_2 \delta_n,\\
	&\left|\alpha^\top \left({1\over n}\W^\top \oW - \Gamma_{\sbt I}\Pi\right)v\right| \lesssim \|\alpha\|_2\|v\|_2 \delta_n / \sqrt{m}
	\end{align*} 
	Each inequality holds with probability $1-\pn^{-c}$. By further using (\ref{disp_D_rho}), we conclude that, with probability $1-c'\pn^{-c}$, 
	\begin{align}\label{rate_WWL}
	&\left\|\alpha^\top \left({1\over n}\W^\top \W_{\sbt L} - \Gamma_{\sbt L}\right)\right\|_\i\|\wh A_{L\sbt}D_{\wh m}v\|_1  \lesssim  \bar\rho \|v\|_2 \|\alpha\|_2 \delta_n,\\\label{rate_WWD}
	&\left\|\alpha^\top \left({1\over n}\W^\top \oW- \Gamma_{\sbt I}\Pi\right)\right\|_\i \|D_{\rho}v\|_1\lesssim  \bar\rho \|v\|_2 \|\alpha\|_2 \delta_n /\sqrt{m},\\\label{rate_WW}
	&	\left|\alpha^\top \left({1\over n}\W^\top \oW - \Gamma_{\sbt I}\Pi\right)v\right| \lesssim \|\alpha\|_2\|v\|_2 \delta_n / \sqrt{m}.
	\end{align}
	We upper bound the remaining term $|\alpha^\top \D_d v|$. First note that 
	\begin{align*}
	[\D_d]_{ik} = {\Gamma_{ii} - \wh \Gamma_{ii} \over \wh m_k},\quad \forall i\in \wh I_k, \qquad [\D_d]_{ik} = 0, \text{ otherwise }.
	\end{align*}
	Furthermore, since $\wh I \subseteq I \cup J_1$ on the event $\E$, we have
	\[
	\|\D_dv\|_2^2 = \sum_{k=1}^K\sum_{i \in \wh I_k} {v_k^2 \over \wh m_k^2}\left(\wh \Gamma_{ii} - \Gamma_{ii}\right)^2 \le {\|v\|_2^2 \over m}\max_{i\in I\cup J_1} \left(\wh \Gamma_{ii} - \Gamma_{ii}\right)^2.
	\]
	We thus obtain 
	\begin{align}\label{rate_D_d}
	|\alpha^\top \D_d v| &\le \|\alpha\|_2 \|\D_dv\|_2 \lesssim \|\alpha\|_2 \|v\|_2 \delta_n /\sqrt{m}
	\end{align}
	on the event $\E$ intersected with
	\begin{equation}\label{def_event_E_w}
	\E_W := \left\{\max_{i\in I\cup J_1} \left|\wh \Gamma_{ii} - \Gamma_{ii}\right| \lesssim \delta_n\right\}.
	\end{equation}
	Since $\E_W$ holds with probability $1-\pn^{-c}$ by Lemma \ref{lem_tau}, in conjunction with (\ref{rate_WWL}) -- (\ref{rate_WW}), we conclude (\ref{rate_ww}).  The result of (\ref{rate_hww}) follows immediately by taking $\alpha = u^\top H^+$ and noting that $\|\alpha\|_2 \le \|u\|_2/\sigma_K(H)$.\qed
	
	\subsubsection{Proof of Lemma \ref{lem_op}}\label{sec_proof_lem_op}
	
	All results (\ref{rate_op_hwz}) -- (\ref{rate_op_hwwtd}) can be proved based on Lemma \ref{lem_quad} together with a classical discretization method to prove the uniformity over the unit sphere. We only prove display  (\ref{rate_op_zwtd}) since  the other results can be shown by   similar arguments.   
	
	Again, we work on the event $\E$ defined in (\ref{def_event}) that has probability at least $1-\pn^{-c}$ for some $c>0$. Recall that it implies $\wh K = K$ and $I_k \subseteq \wh I_k \subseteq I_k \cup J_1^k$ for all $k\in [K]$. 
	
	Let $\mathcal{N}_\eps\subset \mathcal{S}^{K-1}$ be a minimal $\eps$-net of $\mathcal{S}^{K-1}$, i.e. a set with minimum cardinality such that the collection of $\eps$-balls centered at points in $\mathcal{N}_\eps$ covers $\mathcal{S}^{K-1}$. 
	
	To prove (\ref{rate_op_zwtd}), a standard discretization argument, for instance, see \citep[Proof of Proposition 2.4]{rv2010} gives, on the event $\E$, 
	\[
	{1\over n}\|\Omega^{1/2}\Z^\top \wt \W\|_{{\rm op}} = \sup_{u, v\in \S^{K-1}} {1\over n}|u^\top \Omega^{1/2}\Z^\top \wt \W v| \le 4 \max_{u, v\in \N_{1/2}} {1\over n}|u^\top \Omega^{1/2}\Z^\top \wt \W v|.
	\]
	Recall that   (\ref{disp_zw_td}) and (\ref{disp_D_rho}) imply that,  for any $u, v\in \S^{K-1}$,
	\begin{align*}
	{1\over n}|u^\top \Omega^{1/2}\Z^\top \wt \W v| &\le {1\over n}|u^\top  \Omega^{1/2}\Z^\top  \oW v|+	\max_{i \in J_1}{1\over n}|u^\top \Omega^{1/2}\Z^\top \W_{\sbt i}|\cdot \bar\rho \|v\|_2\\
	&\quad   + \max_{1\le k\le K}{1\over n}|u^\top \Omega^{1/2}\Z^\top \oW_{\sbt k}|\cdot  \bar\rho \|v\|_2,
	\end{align*}
	on the event $\E$.
	Use the classical bound 
	$|\mathcal{N}_{1/2}| \le 5^K$ by Lemma 5.2 in \cite{vershynin_2012},  apply Lemma \ref{lem_W} and Lemma \ref{lem_bernstein} with $t = \delta_n\sqrt{K}$ for each random term in the above display and take the union bound over $u, v\in \N_{1/2}$, $i\in J_1$ and $k\in [K]$, and use the restriction
	$K\log\pn \le cn$
	to obtain 
	\[
	\max_{u, v\in \N_{1/2}}{1\over n}|u^\top \Omega^{1/2}\Z^\top \wt \W v| \lesssim 
	\delta_n \sqrt{K}\left(
	{1 \over \sqrt m} + \bar{\rho}
	\right) 
	\]
	with probability $1 - 4\cdot 5^{2K} \pn^{-c K}$.
	This finishes the proof of (\ref{rate_op_zwtd}).
	\qed
	
	\subsubsection{Proof of Lemma \ref{lem_deltaA_I}}\label{sec_proof_lem_deltaA_I}
	We  work on the event $\E$. Recall that it  $\wh K=K$, but also $\wh A_{ik}=A_{ik}$ for all $i\in I$ and $k\in [K]$ by Theorem \ref{thm_I}.
	Recall the definition  $\D = A_{{\wh I\sbt}}^\top  \wh A_{{\wh I\sbt}}(\wh A_{{\wh I\sbt}}^\top \wh A_{{\wh I\sbt}})^{-1} - {\bI}_K.$ For any fixed vector $v\in \R^K$, we have 
	\begin{align*}
	\|\D v\|_1 &= \sum_{k=1}^K \left|
	\sum_{a=1}^K{v_a \over \wh m_a}\sum_{i\in \wh I_a}(\wh A_{ik}-A_{ia})
	\right|\\
	&\le\sum_{k=1}^K 
	\sum_{a=1}^K{|v_a| \over \wh m_a}\sum_{i\in L_a}
	\left|\wh A_{ik}-A_{ia}
	\right|\\
	& = 
	\sum_{a=1}^K{|v_a| \over \wh m_a}\sum_{i\in L_a}
	\left\|\wh A_{i\sbt}-A_{i\sbt}
	\right\|_1\\
	&\le \max_{i \in J_1}	\left\|\wh A_{i\sbt}-A_{i\sbt}
	\right\|_1 \sum_{a=1}^K \rho_a |v_a|\\
	&\lesssim \delta_n \bar{\rho}\|v\|_2.
	\end{align*}
	We used $\wh I_a \subseteq I_a\cup L_a$ on the event $\E$  in the second line and   (\ref{def_rho_k}), (\ref{disp_rho_rho_bar}) and (\ref{rate_Ajek}) in the last line. This proves the first result. 
	
	To prove the second claim, we argue that for any matrix $Q\in \R^{d\times K}$, by using the dual norm inequality and the Cauchy-Schwarz inequality, we have
	\begin{align*}
	\|Q\D\|_{{\rm op}} &= \sup_{\alpha \in \S^{d-1}, u\in \S^{K-1}} \alpha^\top Q\D u\\
	& \le \sup_{\alpha \in \S^{d-1}, u\in \S^{K-1}}\| \alpha^\top Q\|_\i \|\D u\|_1\\
	&\lesssim \sup_{u\in \S^{K-1}}\|Q\|_{2,\i}\cdot  \delta_n \|D_\rho u\|_1.
	\end{align*}
	The result then follows from $\|D_\rho u\|_1 \le \|\rho\|_2  \le \bar{\rho}$ by using (\ref{disp_rho_rho_bar}). 
	\qed
	
	\subsubsection{Proof of Lemma \ref{lem_H_op}}\label{sec_proof_lem_H_op}
	We work on the event $\E$ defined in (\ref{def_event}) that has probability at least $1-\pn^{-c}$ for some $c>0$. Recall that it implies $\wh K = K$ and $I_k \subseteq \wh I_k \subseteq I_k \cup J_1^k$ for all $k\in [K]$. 
	
	We first prove $(a)$. The definition of $\wh \Theta$ in (\ref{est_Theta}) gives
	\begin{align}\nonumber
	\wh \Theta - \Theta &= {1\over n}\X^\top \wt \X - \wh \Gamma_{{\sbt \wh I}}\wh \Pi - A
	\C\\\label{disp_delta_theta}
	&= {1\over n}\X^\top \Z + {1\over n}\X^\top (\wt \X - \Z)- \wh \Gamma_{{\sbt \wh I}}\wh \Pi - A
	\C\\\nonumber
	&= A\left({1\over n}\Z^\top \Z  - \C \right) + A{1\over n}\Z^\top \wt \W + A{1\over n}\Z^\top \Z \D\\\nonumber
	&\quad + {1\over n}\W^\top  \wt \Z + {1\over n}\W^\top \wt \W - \wh \Gamma_{{\sbt \wh I}}\wh \Pi
	\end{align}
	where we use $\X = \Z A^\top + \W$ and the definition (\ref{def_deltaA_I}) in the third equality. By writing 
	\begin{align}\label{def_D_Z}
	\D_1 &:= \Omega^{1/2}\left({1\over n}\Z^\top \Z  -\C + {1\over n}\Z^\top \wt \W + {1\over n}\Z^\top \Z  \D\right)\Omega^{1/2},\\\label{def_D_W}
	\D_2 & := \left({1\over n}\W^\top  \wt \Z + {1\over n}\W^\top \wt \W - \wh \Gamma_{{\sbt \wh I}}\wh \Pi\right)\Omega^{1/2},
	\end{align}
	we have $(\wh \Theta - \Theta)\Omega^{1/2} = \wh H - H=  H\D_1 + \D_2$ such that 
	\[
	\|H^+(\wh H - H)\|_{{\rm op}} \le \|\D_1\|_{{\rm op}} + \| H^+\D_2\|_{{\rm op}}.
	\]
	By triangle inequality and the definition of operator norm, we can upper bound $\|\D_1\|_{{\rm op}}$ by 
	\begin{align*}
	&\left\|\Omega^{1/2}\left({1\over n}\Z^\top \Z  -\C\right)\Omega^{1/2}\right\|_{{\rm op}} + {1\over n}\|\Omega^{1/2}\Z^\top \wt \W\|_{{\rm op}}\|\Omega^{1/2}\|_{{\rm op}}+ {1\over n}\|\Omega^{1/2}\Z^\top \Z \D\|_{{\rm op}}\|\Omega^{1/2}\|_{{\rm op}}.
	\end{align*}
	By adding and subtracting $Z$, $\|H^+\D_2\|_{{\rm op}}$ is upper bounded by 
	\begin{align*}
	&{1\over n}\| H^+\W^\top \Z\Omega^{1/2}\|_{{\rm op}}+ {1\over n}\| H^+\W^\top \Z\D\|_{{\rm op}}\|\Omega^{1/2}\|_{{\rm op}}+ \left\| H^+\left({1\over n}\W^\top \wt \W - \wh \Gamma_{{\sbt \wh I}}\wh \Pi\right)\right\|_{{\rm op}}\|\Omega^{1/2}\|_{{\rm op}}.
	\end{align*}
	Note that Lemma \ref{lem_deltaA_I} gives
	\[
	{1\over n}\|\Omega^{1/2}\Z^\top \Z \D\|_{{\rm op}} \lesssim  {1\over n}\max_k\|\Omega^{1/2}\Z^\top \Z _k\|_2 \cdot  \bar{\rho} \delta_n\le \sqrt{K}{1\over n}\max_{k, k'} |\e_{k'}^\top \Omega^{1/2}\Z^\top \Z \e_k|\cdot \bar{\rho}\delta_n
	\]
	with probability $1-\pn^{-c}$, where in the second inequality we used $\|v\|_2\le \sqrt{K}\|v\|_\i$ for any $v\in \R^K$. And the term $H^+ \W^\top  \Z \D$ can be bounded by the same way. 
	By invoking results (\ref{rate_zz_c}), (\ref{rate_zw_td}), (\ref{rate_zz}), (\ref{rate_wz}), and (\ref{rate_ww}) in Lemma \ref{lem_op} and Lemma \ref{lem_deltaA_I} and taking the union bounds over $k,k' \in [K]$, one has 
	\begin{align}\label{rate_DZ_op}
	\|\D_1\|_{{\rm op}} &  \lesssim \delta_n\sqrt{K}\cdot T_n,\\\label{rate_HDW_op}
	\|H^+\D_2\|_{{\rm op}} &\lesssim  {\delta_n\sqrt{K} \over \sigma_K(H)}\cdot T_n
	\end{align}
	with probability $1-c\pn^{-c'}$ and $T_n$ defined in (\ref{def_Tn}).
	These two displays and condition (\ref{cond_1}) yield the result in $(a)$.

	For part $(b)$, to lower bound the $K$th eigenvalue of $\wh H^\top \wh H$, observe that 
	\begin{align*}
	\lambda_K\left(\wh H^\top \wh H\right) & \ge \lambda_K\left(\wh H^\top H(H^\top H)^{-1}H^\top \wh H\right)\\
	&\ge \lambda_K( H^\top   H)\cdot \lambda_K\left(
	\wh H^\top  H( H^\top   H)^{-2} H^\top \wh H\
	\right)\\
	&=  \lambda_K( H^\top   H)\cdot \sigma_K^2\left(
	( H^\top   H)^{-1} H^\top \wh H\
	\right).
	\end{align*}
	Since Weyl's inequality gives 
	\[
	\sigma_K\left(
	H^+ \wh H\
	\right) \ge 1 - \|H^+(\wh H- H)\|_{{\rm op}},
	\]
	invoking $(a)$ finishes the proof of $(b)$. Part $(c)$ follows immediately from part $(b)$ and  noting that 
	\[
	\lambda_K(\wh H^\top \wh H) = \lambda_K(\Omega^{1/2}\wh \Theta^\top  \wh\Theta \Omega^{1/2}) \le {\lambda_K(\wh \Theta^\top \wh\Theta) / \cl}.
	\]
	
	To prove $(d)$, starting from $\wh H - H= H\D_1 + \D_2$, it suffices to upper bound $|u^\top H\D_1 v|$ and $|u^\top\D_2v|$. For the first term, recalling that $H = A\C^{1/2}$, we have 
	\begin{align*}
	|u^\top H\D_1 v| &\le  \|u^\top A\|_1 \cdot \max_k 
	|\e_k^\top(\C)^{1/2} \D_Z v|\\
	&\le\|u^\top A\|_1 \cdot \max_{k} 
	\left|\e_k^T\left({1\over n}\Z^\top \Z  -\C + {1\over n}\Z^\top\wt \W + {1\over n}\Z^\top \Z  \D\right)\Omega^{1/2}v\right|
	\end{align*}
	where we used the definition of $\D_1$ in (\ref{def_D_Z}) in the third line.  Noting that $|\e_k^T\Z^\top \Z \D\Omega^{1/2}v| \le \max_a|\Z_{\sbt k}^T\Z_{\sbt a}| \|\D \Omega^{1/2}v\|_1$, invoking  (\ref{rate_zz_c}), (\ref{rate_zw_td}), (\ref{rate_zz}) together with Lemma \ref{lem_deltaA_I} gives 
	\[
	|u^\top H\D_1 v| \lesssim   \|u^\top A\|_1\cdot \delta_n \|v\|_2\cdot T_n
	\]
	with probability greater than $1-c\pn^{-c'}$.
	On the other hand, 
	by the definition of $\D_2$ in (\ref{def_D_W}) and using $\wt \Z = \Z + \Z\D$,  we can upper bound $|u^\top\D_2v| $ by 
	\[
	\left\{{1\over n}|u^\top \W^\top \Z\Omega^{1/2}v| + {1\over n}|u^\top \W^\top \Z\D\Omega^{1/2}v| + \left|u^T\left(
	{1\over n}\W^\top \wt \W - \wh \Gamma_{\sbt \wh I}\wh \Pi\right)\Omega^{1/2}v
	\right|\right\}.
	\]
	Invoking (\ref{rate_wz}) and (\ref{rate_ww}) in Lemma \ref{lem_quad} together with Lemma \ref{lem_deltaA_I} again, gives
	\[
	\|u^\top\D_2v| \lesssim \delta_n \|u\|_2 \|v\|_2 T_n
	\]
	with probability $1-c\pn^{-c'}$.
	This concludes the result of $(d)$. 
	
	The result of $(e)$ follows by 	
	using $\|\wh H - H\|_{{\rm op}}\le \sqrt{pK}\max_{j,k} |\e_j^\top(\wh H - H)\e_k|$ together with choosing $u = \e_j$ and $v = \e_K$ in part $(d)$ and taking a union bound over $1\le j\le p$ and $1\le k\le K$. \qed

	\section{Proof of Theorem \ref{thm_beta}: convergence rate of $\|\wh\beta - \beta\|_2$ }\label{sec_proof_rate_beta}
	\subsection{Main proof of Theorem \ref{thm_beta}}
	Throughout this proof, we work on the event 
	that   $\wh\Theta^\top \wh \Theta$ can be inverted intersected with
	the event $\E$ defined in (\ref{def_event}). This event holds with probability $1-c\pn^{-c'}$ by recalling that $\PP(\E)\ge 1-\pn^{-c}$ and using Lemma \ref{lem_H_op} in Section \ref{sec_auxiliary_lemma}. Further recall that $\E$ implies $\wh K=K$ and $I_k\subseteq \wh I_k \subseteq I_k\cup J_1^k$ with $J_1^k=\{ j\in J: \ |A_{jk}| \ge 1-4\delta/v\}$, for all $k\in [K]$.  Define  
	$$\wh \Theta^+ := [\wh\Theta^\top \wh\Theta]^{-1}\wh\Theta^\top $$ so that $\wh\Theta^+ \wh\Theta = {\bI}_K$.
	In the same way, using $\textrm{rank}(\Theta)=K$, define
	$\Theta^+  := (\Theta^\top \Theta)^{-1}\Theta^\top $.
	Recall that  
	\[
	\wh \beta =  \wh\Theta^+ {1\over n} \X^\top \y,
	\]
	hence
	\begin{align}\nonumber\label{diff_beta_hat_beta}
	\wh\beta - \beta &= \wh\Theta^+ \left(
	{1\over n} \X^\top \y - \wh\Theta \beta 
	\right)\\
	&= \Theta^+ \left(
	{1\over n} \X^\top \y - \wh\Theta \beta 
	\right) + \left(
	\wh\Theta^+ - \Theta^+
	\right) \left(
	{1\over n} \X^\top \y - \wh\Theta \beta 
	\right).
	\end{align}
	Let $\delta_n^2 := \log(p\vee n) /n$ and let $\bar\rho$ be as defined in (\ref{def_rho}). Lemma \ref{lem_Rem2} in Section \ref{sec_lemmas_rate} implies that, 	with probability $1-\pn^{-c}$,
	\begin{align}\label{rate_Rem2}\nonumber
	&	\left\|\left(
	\wh\Theta^+ - \Theta^+
	\right) \left(
	{1\over n} \X^\top \y - \wh\Theta \beta 
	\right)\right\|_2 \\
	&\qquad\lesssim  \cl^{-1/2}\cdot  {\delta_n\sqrt{p} \over\sigma_K (A\C^{1/2} )} \cdot \left\{ 1 + {\sqrt{p}\over \sigma_K(A\C^{1/2} )}\right\} \delta_n\sqrt{K}\left\{1+{\|\beta\|_2 \over \sqrt m} +  \bar{\rho} \|\beta\|_2\right\}.
	\end{align}
	It remains to study the first term on the right  of (\ref{diff_beta_hat_beta}).  
	Recall the definition	of $\wh \Theta$ in (\ref{est_Theta}),
	\[
	\wh \Theta = \left(
	\wh\Sigma_{\sbt \wh I} - \wh \Gamma_{\sbt \wh I}
	\right) \wh \Pi
	\]with \[\wh \Pi:= \wh A_{\wh I\sbt}\left[\wh A_{\wh I\sbt}^\top \wh A_{\wh I\sbt}\right]^{-1}.\]
	Use the fact that $\X=\Z A^T +\W$ 
	to obtain
	\begin{align}\label{disp_Theta_hat}\nonumber
	\wh \Theta  &= \left({1\over n} \X^\top \X_{{\sbt \wh I}} - \wh \Gamma_{{\sbt \wh I}}\right)\wh \Pi\\
	&= \left[A{1\over n}\Z^\top \Z A_{{\wh I\sbt}}^\top + A{1\over n}\Z^\top \W_{{\sbt {\rm \wh I}}}+ {1\over n}\W^\top \Z A_{{\wh I\sbt}}^\top + \left({1\over n}\W^\top \W_{{\sbt \wh I}} - \wh \Gamma_{{\sbt \wh I}}\right)\right]\wh \Pi .
	\end{align}
	Combine this expression with the expansion
	\begin{eqnarray}
	{1\over n} \X^\top \y = A{1\over n} \Z^\top \Z\beta+ A{1\over n}\Z^\top \beps + {1\over n}\W^\top \Z\beta + {1\over n}\W^\top \eps ,
	\end{eqnarray}
	to arrive at
	\begin{align}\label{disp_diff_h}
	{1\over n} \X^\top \y - \wh\Theta \beta & = A \Delta_Z + \Delta_W
	\end{align}
	with
	\begin{align}\label{D_Z}
	\D_Z &:= 
	{1\over n}\Z^\top \eps - {1\over n}\Z^\top \W_{{\sbt \wh I}}\wh \Pi\beta + {1\over n}\Z^\top \Z\left({{\bI}}_K - A_{{\wh I\sbt}}^\top \wh \Pi\right)\beta,\\\label{D_W}
	\D_W &:= {1\over n}\W^\top \eps +{1\over n}\W^\top \Z\left({\bI}_K - A_{{\wh I\sbt}}^\top \wh \Pi\right)\beta  +\left(\wh \Gamma_{{\sbt \wh I}} - {1\over n}\W^\top \W_{{\sbt \wh I}}  \right)\wh \Pi\beta.
	\end{align}
	Recall  that $\Theta = A\C$ and $\Theta^+ \Theta = {\bI}_K$, to arrive at the identity
	\begin{align}\label{disp_Theta+diff}\nonumber
	\Theta^+ \left(
	{1\over n}\X^\top \y - \wh\Theta \beta 
	\right)  &= \Theta^+ A\C \C^{-1} \D_Z + \Theta^+ \D_W\\
	&=  
	\C ^{-1}\D_Z + \Theta^+\D_W.
	\end{align}
	Use the inequality $\|
	\C^{-1}\D_Z\|_{2} \le \|
	\C^{-1/2}\|_{\rm op}\|\C^{-1/2}\D_Z\|_{2} \le \cl^{-1/2} \|\C^{-1/2}\D_Z\|_2$ and invoke Lemma \ref{lem_Delta_Z} and \ref{lem_Delta_W} in Section \ref{sec_lemmas_rate} to obtain the bounds
	\begin{align}\label{rate_D1}
	& \left\|
	\C^{-1}\D_Z\right\|_2 \lesssim  \cl^{-1/2}\delta_n\sqrt{K}\left(1+{\|\beta\|_2 \over \sqrt m} +  \bar{\rho} \|\beta\|_2\right) ,\\\label{rate_HD2}
	& \|\Theta^+\D_W\|_2 \lesssim  \cl^{-1/2}{\delta_n\sqrt{K} \over \sigma_K(A\C^{1/2})}\left(1+{\|\beta\|_2 \over \sqrt m} +\bar{\rho} \|\beta\|_2\right).
	\end{align} These inequalities hold
	with probability $1-C\pn^{-c}$.
	Use the inequality   $$
	\sigma_K(A\C^{1/2}) \ge \cl^{1/2}\sigma_K(A) \ge \cl^{1/2} \sigma_K(A_{I \sbt}) \ge \cl^{1/2} \sqrt{m}\ge \sqrt{2\cl},$$ 
	and
	collect (\ref{rate_Rem2}), (\ref{rate_D1}) and (\ref{rate_HD2}) to conclude that, with probability $1-C\pn^{-c}$, 
	\begin{align*}
	\|\wh\beta - \beta\|_2 &\le  \left\|\Theta^+ \left(
	{1\over n}\X^\top \y - \wh\Theta \beta 
	\right)\right\|_2 + \left\|\left(
	\wh\Theta^+ - \Theta^+
	\right) \left(
	{1\over n}\X^\top \y - \wh\Theta \beta 
	\right)\right\|_2\\
	&\le \cl^{-1/2}\left\| 
	\C ^{-1}\D_Z\right\|_2 + \|\Theta^+\D_W\|_2 + \left\|\left(
	\wh\Theta^+ - \Theta^+
	\right) \left(
	{1\over n}\X^\top \y - \wh\Theta \beta 
	\right)\right\|_2\\
	&\lesssim \cl^{-1/2}\delta_n\sqrt{K}\left\{ 1+{\|\beta\|_2 \over \sqrt m} +  \bar{\rho} \|\beta\|_2\right\}\left\{ 1+{\delta_n\sqrt{p} \over\sigma_K (A\C^{1/2} )} \cdot \left(1+ {\sqrt{p}\over \sigma_K (A\C^{1/2} )} \right) \right\}.
	\end{align*}
	Recall  that $\lambda_K := \lambda_K(A\C A^\top ) = \sigma_K^2(A\C^{1/2})$ and  invoke Lemma \ref{lem_signal} in Section \ref{sec_lemmas_rate} to deduce $p / \lambda_K \ge K / B_z  $. This implies  
	\begin{align}\label{res_cond_H}
	{\delta_n\sqrt{p} \over\sigma_K(A\C^{1/2})} \cdot \left\{ 1 + {\sqrt{p}\over \sigma_K(A\C^{1/2})}\right\}= \delta_n\sqrt{ {p\over \lambda_K}}+ {\delta_n   {p\over \lambda_K}} \lesssim \delta_n {p\over \lambda_K},
	\end{align}
	whence 
	\begin{eqnarray*}
		\|\wh\beta-\beta\|_2 &\lesssim& \cl^{-1/2}\delta_n\sqrt{K}\left(1+{\|\beta\|_2 \over \sqrt m} +  \bar{\rho} \|\beta\|_2\right)\left(1+\delta_n {p\over \lambda_K}\right).
	\end{eqnarray*}
	Finally, 
	invoke Assumptions \ref{ass_J1} and \ref{ass_H} to complete the proof. \qed
	
	\subsection{Main lemmas used in the proof of Theorem \ref{thm_beta}}\label{sec_lemmas_rate}
	
	The  first two lemmas provide upper bounds for the Euclidean norm of $\left(
	\C\right)^{-1}\D_Z$ and $\D_W$ defined in (\ref{D_Z}) and (\ref{D_W}).  The third lemma controls the sup-norm and $\ell_2$ norm of 
	$(\wh\Theta^+-\Theta^+)(\frac1n \X^\top \y - \wh \Theta \beta)$ in  (\ref{diff_beta_hat_beta}). The final lemma gives a lower bound on the ``signal strength'' $\lambda_K( A\C A^\top)$.
	All lemmas are proved in Section \ref{sec_proof_lem_rate}.
	We use throughout the notation
	\[ H:= A [\Sigma_Z]^{1/2} = \Theta [ \Sigma_Z^{-1} ]^{1/2}.\]
	All statements are valid on some events that are subsets of $\E$ and the  probabilities of these events are greater than $1-C\pn^{-\alpha}$ for some positive constants $C,\alpha$. This is an important observation since on the event $\E$, the dimensions  $\wh K$ and $K$ are equal, which ensures that the various quantities in the statements are well-defined. For instance, $\D_Z$, $\D_W$ and $\wh \Theta- \Theta$ are only defined if $\wh K=K$.
	\begin{lemma}\label{lem_Delta_Z}
		Under the conditions of Theorem \ref{thm_beta}, with probability $1-c\pn^{-c'}$, 
		\[
		\left\|\left(
		\C\right)^{-1/2}\D_Z\right\|_2 \lesssim  \delta_n\sqrt{K}\left(1+{\|\beta\|_2 \over \sqrt m} +  \bar{\rho} \|\beta\|_2\right)
		\]
		and 
		\begin{equation}\label{disp_T2}
		\|\D_Z\|_\i \lesssim  \delta_n \left(1+{\|\beta\|_2 \over \sqrt{m}} + \bar{\rho}\|\beta\|_2\right).
		\end{equation}
	\end{lemma}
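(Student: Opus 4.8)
Here is the plan. First I rewrite $\D_Z$ in a form that the preliminary lemmas can attack. Recalling $\wt\W=\W_{\sbt\wh I}\wh\Pi$ and $\D=A_{\wh I\sbt}^\top\wh\Pi-\bI_K$, so that $\bI_K-A_{\wh I\sbt}^\top\wh\Pi=-\D$, display (\ref{D_Z}) becomes
$$
\D_Z \;=\; \frac1n\Z^\top\eps \;-\; \frac1n\Z^\top\wt\W\beta \;-\; \frac1n\Z^\top\Z\,\D\beta .
$$
I will bound the three summands separately, working on the event $\E$ (so that $\wh K=K$, all quantities are well defined, and $I_k\subseteq\wh I_k\subseteq I_k\cup J_1^k$) intersected with the high-probability events supplied by Lemmas \ref{lem_quad}, \ref{lem_op}, \ref{lem_deltaA_I}, throughout using $[\C]_{kk}\le\|\C\|_\i\le B_z$.

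For the $\C^{-1/2}$-weighted bound, write $\Omega^{1/2}=\C^{-1/2}$ and note $\Omega^{1/2}\C\Omega^{1/2}=\bI_K$. For the first summand, apply (\ref{rate_zeps}) with $u=\Omega^{1/2}\e_k$ (so $u^\top\C u=1$), take a union bound over $k\in[K]$, and use $\|v\|_2\le\sqrt K\,\|v\|_\i$, giving $\frac1n\|\Omega^{1/2}\Z^\top\eps\|_2\lesssim\delta_n\sqrt K$. For the second, use the operator-norm estimate (\ref{rate_op_zwtd}): $\frac1n\|\Omega^{1/2}\Z^\top\wt\W\beta\|_2\le\frac1n\|\Omega^{1/2}\Z^\top\wt\W\|_{\rm op}\|\beta\|_2\lesssim(m^{-1/2}+\bar\rho)\,\delta_n\sqrt K\,\|\beta\|_2$. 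For the third, apply the second statement of Lemma \ref{lem_deltaA_I} with $Q=\Omega^{1/2}(\Z^\top\Z/n)$: $\frac1n\|\Omega^{1/2}\Z^\top\Z\D\beta\|_2\le\|Q\D\|_{\rm op}\|\beta\|_2\le\|Q\|_{2,\i}\,\bar\rho\,\delta_n\,\|\beta\|_2$, and then bound $\|Q\|_{2,\i}=\max_k\|\e_k^\top\Omega^{1/2}\Z^\top\Z/n\|_2\le\sqrt K\max_{k,a}\frac1n|\e_k^\top\Omega^{1/2}\Z^\top\Z\e_a|\lesssim\sqrt K$, using (\ref{rate_zz}) with $u=\Omega^{1/2}\e_k$, $v=\e_a$ together with $|\e_k^\top\C^{1/2}\e_a|\le\|\C^{1/2}\e_a\|_2\le\sqrt{B_z}$ and $\delta_n\to0$. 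Summing the three contributions gives the first claim.

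The $\ell_\infty$-bound (\ref{disp_T2}) is obtained in the same way but pointwise, hence without the $\sqrt K$: $\frac1n\|\Z^\top\eps\|_\i\lesssim\delta_n$ from (\ref{rate_zeps}) with $u=\e_k$; $\frac1n\|\Z^\top\wt\W\beta\|_\i\lesssim\delta_n\|\beta\|_2(m^{-1/2}+\bar\rho)$ from (\ref{rate_zw_td}) with $u=\e_k$, $v=\beta$; and $\frac1n|\e_k^\top\Z^\top\Z\D\beta|\le\big(\max_a\frac1n|\e_k^\top\Z^\top\Z\e_a|\big)\,\|\D\beta\|_1\lesssim\delta_n\bar\rho\|\beta\|_2$, combining (\ref{rate_zz}) with the first statement of Lemma \ref{lem_deltaA_I} ($\|\D\beta\|_1\lesssim\bar\rho\delta_n\|\beta\|_2$); a union bound over $k$ finishes it.

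The one genuinely delicate point is the third summand: $\D$ is a data-dependent matrix (it encodes which quasi-pure variables in $J_1$ were mistakenly absorbed into $\wh I$), so the fixed-vector concentration bounds of Lemma \ref{lem_quad} cannot be applied directly to the random vector $\D\beta$. This is exactly what Lemma \ref{lem_deltaA_I} is for: it controls $\D$ uniformly — $\|\D\beta\|_1$ via $\bar\rho$, and $\|Q\D\|_{\rm op}$ via $\|Q\|_{2,\i}$ and $\bar\rho$ — at the price of an extra $\delta_n$, which is precisely why the quasi-pure contribution enters as $\bar\rho\delta_n\|\beta\|_2$ and not as $\bar\rho\|\beta\|_2$. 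Everything else is routine bookkeeping: the passage from coordinatewise estimates to $\ell_2$-norms via $\|v\|_2\le\sqrt K\|v\|_\i$, and choosing between the pointwise bounds of Lemma \ref{lem_quad} and their operator-norm counterparts in Lemma \ref{lem_op}.
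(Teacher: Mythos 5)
Your proposal is correct and follows essentially the same route as the paper: split $\D_Z$ into the $\eps$-term, the $\wt\W\beta$-term and the $\Z^\top\Z\D\beta$-term, control the first two via the concentration bounds of Lemmas \ref{lem_quad}/\ref{lem_op}, handle the data-dependent $\D\beta$ through Lemma \ref{lem_deltaA_I}, and pay a $\sqrt{K}$ to pass to the $\ell_2$-norm. The only cosmetic difference is that for the $\ell_2$ bound you invoke the operator-norm estimates ((\ref{rate_op_zwtd}) and the $\|Q\D\|_{{\rm op}}$ form of Lemma \ref{lem_deltaA_I}) where the paper argues purely coordinatewise and then multiplies by $\sqrt{K}$; the resulting rates and probability guarantees are the same.
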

	
	\begin{lemma}\label{lem_Delta_W}
		Under the conditions of Theorem \ref{thm_beta}, with probability $1-c\pn^{-c'}$,  
		\[
		\|\Theta^+\D_W\|_2 \lesssim  \cl^{-1/2}{\delta_n\sqrt{K} \over \sigma_K(H)}\left(1+{\|\beta\|_2 \over \sqrt m} +\bar{\rho} \|\beta\|_2\right)
		\]
		and 
		\begin{align}\label{disp_T3}
		\|\D_W\|_{\i} \lesssim  \delta_n\left(1+{\|\beta\|_2 \over \sqrt m} +  \bar{\rho} \|\beta\|_2\right).
		\end{align}	
	\end{lemma}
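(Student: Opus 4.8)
The plan is to reduce $\D_W$ to a form on which the scalar concentration bounds of Lemma \ref{lem_quad} apply directly, and then estimate $\|\Theta^+\D_W\|_2$ and $\|\D_W\|_\i$ term by term. Working on the event $\E$ of (\ref{def_event}), so that $\wh K = K$ and $I_k\subseteq\wh I_k\subseteq I_k\cup J_1^k$, recall $\D=A_{\wh I\sbt}^\top\wh\Pi-{\bI}_K$ from (\ref{def_deltaA_I}) and $\wt\W=\W_{\sbt\wh I}\wh\Pi$ from (\ref{def_tildes}); since $\W^\top\W_{\sbt\wh I}\wh\Pi=\W^\top\wt\W$, the definition (\ref{D_W}) can be rewritten as
\[
\D_W = \frac1n\W^\top\eps \;-\; \frac1n\W^\top\Z\,\D\beta \;-\; \Bigl(\frac1n\W^\top\wt\W-\wh\Gamma_{\sbt\wh I}\wh\Pi\Bigr)\beta .
\]
For the $\ell_2$ bound I would use $\Theta=H\Omega^{-1/2}$, which gives $\Theta^+=\Omega^{1/2}H^+$, so that by Assumption \ref{ass_C},
\[
\|\Theta^+\D_W\|_2 \le \|\Omega^{1/2}\|_{{\rm op}}\,\|H^+\D_W\|_2 \le \cl^{-1/2}\,\|H^+\D_W\|_2 \le \cl^{-1/2}\sqrt K\,\|H^+\D_W\|_\i ,
\]
and then bound $\|H^+\D_W\|_\i$ coordinate by coordinate.

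For the $k$-th coordinate of $H^+\D_W$: the first term $\tfrac1n\e_k^\top H^+\W^\top\eps$ is $\lesssim\delta_n/\sigma_K(H)$ by (\ref{rate_hweps}) with $u=\e_k$, since $\e_k^\top(H^\top H)^{-1}\e_k\le 1/\sigma_K^2(H)$. For the second term, write $\e_k^\top H^+\W^\top\Z\,\D\beta=\sum_a(\e_k^\top H^+\W^\top\Z\e_a)(\e_a^\top\D\beta)$; by (\ref{rate_hwz}) each scalar factor $\tfrac1n|\e_k^\top H^+\W^\top\Z\e_a|$ is $\lesssim\delta_n\sqrt{B_z}/\sigma_K(H)$ (using $[\C]_{aa}\le B_z$), while $\|\D\beta\|_1\lesssim\bar\rho\|\beta\|_2\delta_n$ by Lemma \ref{lem_deltaA_I}, so this term is $\lesssim\delta_n^2\bar\rho\|\beta\|_2/\sigma_K(H)\lesssim\delta_n\bar\rho\|\beta\|_2/\sigma_K(H)$. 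For the third term, apply (\ref{rate_hww}) with $u=\e_k$, $v=\beta$ to get $\lesssim\delta_n\|\beta\|_2(1/\sqrt m+\bar\rho)/\sigma_K(H)$. Taking a union bound over $k\in[K]$ (and over $a\in[K]$ in the second term), which is absorbed into a $\pn^{-c'}$ tail since $K\lesssim n$, and collecting the three contributions gives $\|H^+\D_W\|_2\lesssim\delta_n\sqrt K\,\sigma_K(H)^{-1}(1+\|\beta\|_2/\sqrt m+\bar\rho\|\beta\|_2)$, hence the claimed bound on $\|\Theta^+\D_W\|_2$. The $\ell_\i$ bound (\ref{disp_T3}) is obtained the same way, now estimating $\e_j^\top\D_W$ for $j\in[p]$: Lemma \ref{lem_bernstein} with a union bound over $p$ gives $\tfrac1n\|\W^\top\eps\|_\i\lesssim\delta_n$; writing $\e_j^\top\W^\top\Z\,\D\beta=\sum_a(\e_j^\top\W^\top\Z\e_a)(\e_a^\top\D\beta)$ and combining (\ref{rate_wz}) with Lemma \ref{lem_deltaA_I} yields $\tfrac1n\|\W^\top\Z\,\D\beta\|_\i\lesssim\delta_n^2\bar\rho\|\beta\|_2$; and (\ref{rate_ww}) with $\alpha=\e_j$, $v=\beta$ gives $\|(\tfrac1n\W^\top\wt\W-\wh\Gamma_{\sbt\wh I}\wh\Pi)\beta\|_\i\lesssim\delta_n\|\beta\|_2(1/\sqrt m+\bar\rho)$. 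Summing and using $\delta_n\lesssim 1$ gives (\ref{disp_T3}).

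The main obstacle is that every object here ($\wh\Pi$, $\D$, $\wt\W$, $\wh\Gamma_{\sbt\wh I}$, and hence $\D_W$ itself) is random and only has the right dimension on $\E$, where there is \emph{no} exact recovery of the pure-variable set, only $I\subseteq\wh I\subseteq I\cup J_1$; consequently the effect of the quasi-pure variables has to be funneled entirely through $\bar\rho$ via Lemma \ref{lem_deltaA_I}, rather than being eliminated. A related subtlety is that the assumptions control $\C$ only through $\lambda_{\min}(\C)>\cl$ and $\|\C\|_\i\le B_z$ — there is no usable upper bound on $\lambda_{\max}(\C)$ — so one cannot pass through $\|\Omega^{-1/2}\|_{{\rm op}}=\|(\C)^{1/2}\|_{{\rm op}}$ when handling the $\D\beta$ terms, which is why they must be split coordinate-wise and bounded by $\ell_1$–$\ell_\i$ duality using the scalar estimates of Lemma \ref{lem_quad}. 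Finally, keeping the union bounds over $[K]$ and $[p]$ inside a $\pn^{-c'}$ failure probability relies on $K\lesssim n$ and $\log p\lesssim n$, which are part of the hypotheses of Theorem \ref{thm_beta}.
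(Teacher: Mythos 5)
Your proof is correct and follows essentially the same route as the paper's: the same reduction $\|\Theta^+\D_W\|_2\le \cl^{-1/2}\sqrt K\,\|H^+\D_W\|_\i$ via $\Theta^+=\Omega^{1/2}H^+$, the same rewriting of $\D_W$ through $\D$ and $\wt\W$, and the same coordinate-wise application of (\ref{rate_hweps}), (\ref{rate_hwz}), (\ref{rate_hww}) (respectively (\ref{rate_weps})/(\ref{rate_wz})/(\ref{rate_ww}) for the sup-norm) combined with Lemma \ref{lem_deltaA_I} and union bounds. No substantive differences to report.
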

	
	\begin{lemma}\label{lem_Rem2}
		Under the conditions of Theorem \ref{thm_beta}, with probability $1-c\pn^{-c'}$, we have
		\begin{align*}
		&\left|\e_k^\top \left(
		\wh\Theta^+ - \Theta^+
		\right) \left(
		{1\over n}\X^\top \y - \wh\Theta \beta 
		\right)\right|\\ 
		&\qquad \lesssim [\C^{-1}]_{kk}^{1/2}\cdot  {\delta_n\sqrt{p} \over\sigma_K(H)} \cdot \left(\sqrt{K} + {\sqrt{p}\over \sigma_K(H)}\right) \delta_n\left(1+{\|\beta\|_2 \over \sqrt m} +  \bar{\rho} \|\beta\|_2\right) 
		\end{align*}
		for any $1\le k\le K$. Moreover, with the same probability, 
		\begin{align*}
		&\left\|\left(
		\wh\Theta^+ - \Theta^+
		\right) \left(
		{1\over n}\X^\top \y - \wh\Theta \beta 
		\right)\right\|_2 \\
		&\qquad\lesssim  \cl^{-1/2}\cdot  {\delta_n\sqrt{p} \over \sigma_K(H)} \cdot \left(1 + {\sqrt{p}\over \sigma_K(H)}\right) \delta_n\sqrt{K}\left(1+{\|\beta\|_2 \over \sqrt m} +  \bar{\rho} \|\beta\|_2\right) 
		\end{align*}
	\end{lemma}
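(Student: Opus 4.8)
The plan is to transfer the problem from $\Theta$-space to $H$-space, where the perturbation bounds of Lemma~\ref{lem_H_op} and the concentration bounds of Lemmas~\ref{lem_quad}--\ref{lem_op} apply directly, and then to expose a genuinely second-order cancellation. Throughout one works on the event $\E$ of (\ref{def_event}) intersected with $\{\wh\Theta^\top\wh\Theta\text{ invertible}\}$, which has probability $1-c\pn^{-c'}$ and on which $\wh K=K$. With $H=A\C^{1/2}$ and $\wh H=\wh\Theta\,\C^{-1/2}$ as in (\ref{def_H_hat}), one has $\wh\Theta^+=\C^{-1/2}\wh H^+$ and $\Theta^+=\C^{-1/2}H^+$, hence $\wh\Theta^+-\Theta^+=\C^{-1/2}(\wh H^+-H^+)$. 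Since $\|\e_k^\top\C^{-1/2}\|_2=[\C^{-1}]_{kk}^{1/2}$ and $\|\C^{-1/2}\|_{{\rm op}}\le\cl^{-1/2}$ by Assumption~\ref{ass_C}, both displayed bounds reduce to controlling $\|(\wh H^+-H^+)v\|_2$ with $v:={1\over n}\X^\top\y-\wh\Theta\beta=A\D_Z+\D_W$, using (\ref{disp_diff_h})--(\ref{D_W}).

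First I would record the exact decomposition
\[
(\wh H^+-H^+)v=(\wh H^\top\wh H)^{-1}(\wh H-H)^\top(\bI_p-HH^+)\D_W-\wh H^+(\wh H-H)H^+v,
\]
which follows from $\wh H^+\wh H=H^+H=\bI_K$, the identity $\wh H^+(\bI_p-HH^+)=(\wh H^\top\wh H)^{-1}(\wh H-H)^\top(\bI_p-HH^+)$ (because $H^\top(\bI_p-HH^+)=0$), and $(\bI_p-HH^+)A=0$ since $\mathrm{col}(H)=\mathrm{col}(A)$, so that $(\bI_p-HH^+)v=(\bI_p-HH^+)\D_W$. The structural point is that, writing $\wh H-H=H\D_1+\D_2$ with $\D_1,\D_2$ as in (\ref{def_D_Z})--(\ref{def_D_W}) from the proof of Lemma~\ref{lem_H_op}, the leading summand $H\D_1$ is annihilated on the right by $\bI_p-HH^+$, whence $(\wh H-H)^\top(\bI_p-HH^+)=\D_2^\top(\bI_p-HH^+)$; thus the first term above is $\D_2^\top$ (first order in $\delta_n:=\sqrt{\log\pn/n}$) times $(\bI_p-HH^+)\D_W$ (first order), and the second term is first order $\times$ first order as well.

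Then I would substitute the available estimates. For the first term: $\|(\wh H^\top\wh H)^{-1}\|_{{\rm op}}\lesssim\sigma_K^{-2}(H)$ by Lemma~\ref{lem_H_op}(b); $\|(\bI_p-HH^+)\D_W\|_2\le\|\D_W\|_2\le\sqrt p\,\|\D_W\|_\infty\lesssim\sqrt p\,\delta_n(1+\|\beta\|_2/\sqrt m+\bar\rho\|\beta\|_2)$ by (\ref{disp_T3}); and $\|\D_2\|_{{\rm op}}\lesssim\delta_n\sqrt p\,T_n$ with $T_n$ as in (\ref{def_Tn}), obtained by bounding the operator norms of the constituents ${1\over n}\W^\top\wt\Z$ and ${1\over n}\W^\top\wt\W-\wh\Gamma_{\sbt\wh I}\wh\Pi$ of $\D_2$ through a covering argument fed by Lemma~\ref{lem_quad} (together with $\Omega^{1/2}\C\Omega^{1/2}=\bI_K$). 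For the second term: $\|\wh H^+\|_{{\rm op}}\lesssim\sigma_K^{-1}(H)$ by Lemma~\ref{lem_H_op}(b); $\|H^+v\|_2\le\|\C^{-1/2}\D_Z\|_2+\|H^+\D_W\|_2\lesssim\delta_n\sqrt K(1+\|\beta\|_2/\sqrt m+\bar\rho\|\beta\|_2)$, using $H^+A=\C^{-1/2}$, the bound of Lemma~\ref{lem_Delta_Z} on $\|\C^{-1/2}\D_Z\|_2$, the $H^+$-weighted estimates of Lemma~\ref{lem_quad} for $\|H^+\D_W\|_2\lesssim\sigma_K^{-1}(H)\delta_n\sqrt K(\cdots)$, and $\sigma_K(H)\ge\sqrt{2\cl}$; and $\|(\wh H-H)H^+v\|_2\lesssim\sqrt p\,\delta_n\,T_n\,\|H^+v\|_2$, by controlling each coordinate $\e_j^\top(\wh H-H)(H^+v)$ with Lemma~\ref{lem_H_op}(d) after replacing the data-dependent vector $H^+v$ by a point of a deterministic net of the ball it occupies with high probability. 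Collecting and using $T_n=O(1)$ (which holds under Assumptions~\ref{ass_J1} and \ref{ass_C}) yields $\|(\wh H^+-H^+)v\|_2\lesssim\frac{\delta_n\sqrt p}{\sigma_K(H)}\bigl(\sqrt K+\frac{\sqrt p}{\sigma_K(H)}\bigr)\delta_n(1+\|\beta\|_2/\sqrt m+\bar\rho\|\beta\|_2)$; premultiplying by $\e_k^\top\C^{-1/2}$, resp.\ by $\C^{-1/2}$, closes both claims (a crude extra $\sqrt K$ in the $\ell_2$ statement being harmless).

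The step I expect to be hardest is the first term. A naive bound $\|\wh H^+(\bI_p-HH^+)\D_W\|_2\le\|\wh H^+\|_{{\rm op}}\|(\bI_p-HH^+)\D_W\|_2\lesssim\sigma_K^{-1}(H)\sqrt p\,\delta_n(1+\cdots)$ is only \emph{first} order in $\delta_n$, and since Assumption~\ref{ass_H} permits $\lambda_K=\sigma_K^2(H)$ to be as small as (a constant times) $p\,\delta_n$, it is \emph{not} negligible against $\|\Theta^+v\|_2\asymp\delta_n\sqrt K$; recovering the missing factor $\delta_n$ forces the rewriting $\wh H^+(\bI_p-HH^+)=(\wh H^\top\wh H)^{-1}(\wh H-H)^\top(\bI_p-HH^+)$ and therefore an \emph{unweighted} operator-norm bound on $\D_2$, whereas Lemma~\ref{lem_H_op} (via Lemma~\ref{lem_op}) only supplies the $H^+$-weighted quantities --- producing $\|\D_2\|_{{\rm op}}$ is the bulk of the work. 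A secondary nuisance is making the fixed-vector estimate Lemma~\ref{lem_H_op}(d) legitimate for the random direction $H^+v$ without paying the additional $\sqrt K$ that the crude operator-norm bound Lemma~\ref{lem_H_op}(e) would incur.
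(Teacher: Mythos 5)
Your reduction to $H$-space and your perturbation identity are both algebraically correct, and the annihilation $(\bI_p-HH^+)A=0$ is a true observation; but the route you chose hinges on an estimate you never establish, and as sketched it does not go through. The crux is the unweighted bound $\|\D_2\|_{{\rm op}}\lesssim \delta_n\sqrt{p}\,T_n$, which you assert follows from ``a covering argument fed by Lemma~\ref{lem_quad}''. Lemma~\ref{lem_quad} gives fixed-direction bounds at level $\delta_n$ with failure probability only $\pn^{-c}$, and a $1/2$-net of $\S^{p-1}$ has cardinality $e^{cp}$, so that union bound is not affordable; to beat $e^{cp}$ you need per-point deviations of order $\sqrt{p/n}\vee(p/n)$, and in the regime $p\gg n\log\pn$ (permitted here) the $p/n$ term is \emph{not} $\lesssim\delta_n\sqrt p$, so your first term can exceed the claimed componentwise rate. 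The cheap alternative, $\|\D_2\|_{{\rm op}}\le\sqrt{pK}\max_{j,k}|\e_j^\top\D_2\e_k|\lesssim\delta_n\sqrt{pK}$, salvages the $\ell_2$ statement but loses a factor $\sqrt K$ in the componentwise one (the target has ${\delta_n^2p/\sigma_K^2(H)}$ there, not ${\delta_n^2p\sqrt K/\sigma_K^2(H)}$). Moreover the part ${1\over n}\W^\top\wt\W-\wh\Gamma_{\sbt\wh I}\wh\Pi$ of $\D_2$ has dependent rows and a data-dependent centering, so even a conditioning/random-matrix argument for it is a substantial new estimate not supplied by the paper's lemmas. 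Your secondary device has the same defect: invoking Lemma~\ref{lem_H_op}(d) uniformly over a net of the ball containing the random direction $H^+v$ costs a union over $e^{cK}$ points against $\pn^{-c}$ tails, which fails once $K\gg\log\pn$ (allowed under Theorem~\ref{thm_beta}).

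The underlying misdiagnosis is your claim that recovering a second factor of $\delta_n$ ``forces'' the projection-annihilation rewriting and hence an unweighted $\|\D_2\|_{{\rm op}}$. The paper's proof needs neither. It uses the identity (\ref{eq_theta_hat_inv}) in the form with the \emph{deterministic} Gram inverse on the left and the random projection on the right, $(\Theta^\top\Theta)^{-1}(\wh\Theta-\Theta)^\top P^{\perp}_{\wh\Theta}+\Theta^+(\Theta-\wh\Theta)\wh\Theta^+$, applies Cauchy--Schwarz over the $p$-index, bounds the residual by $\|A\D_Z\|_2+\|\D_W\|_2\le\sqrt p(\|\D_Z\|_\i+\|\D_W\|_\i)\lesssim\sqrt p\,\delta_n(1+\|\beta\|_2/\sqrt m+\bar\rho\|\beta\|_2)$, drops $P^\perp_{\wh\Theta}$ by operator norm, and controls the row vector by $\sqrt p\max_j|\e_j^\top(\wh H-H)(H^\top H)^{-1}\Omega^{1/2}\e_k|$ (Lemma~\ref{lem_H_op}(d) with \emph{deterministic} directions, union over $j\in[p]$ only) plus $\Omega_{kk}^{1/2}\|H^+(\wh H-H)\|_{{\rm op}}/\sigma_K(\wh H)$ (the available weighted bound). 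Both Cauchy--Schwarz factors are already first order in $\delta_n$, which is where the product $\delta_n^2$ comes from. Your variant of the identity, with $(\wh H^\top\wh H)^{-1}$ on the left and $P_H^\perp$ on the right, is exactly what blocks this: the random Gram inverse prevents the fixed-direction entrywise bound and pushes you toward the unweighted operator norm you cannot obtain with the stated tools. If you want to complete your argument, either prove the $\sqrt K$-free $\|\D_2\|_{{\rm op}}$ bound rigorously (nontrivial, especially for the $\W^\top\wt\W$ block), or switch to the paper's arrangement, which closes both displays with the lemmas already in hand.
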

	
	\begin{lemma}\label{lem_signal}
		Under 
		Assumptions \ref{ass_model} and  \ref{ass_subg}, we have 
		\[
		\lambda_K(A\C A^\top ) \le B_z {p\over K}.
		\]
	\end{lemma}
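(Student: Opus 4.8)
The statement is an elementary spectral bound, so the plan is to pass from the single eigenvalue $\lambda_K$ to the trace, and then estimate the trace entrywise using the two structural constraints we have on $A$ and $\C$: the $\ell_1$ bound on the rows of $A$ from \textit{(A0)} of Assumption \ref{ass_model}, and the entrywise bound $\|\C\|_\i \le B_z$ from Assumption \ref{ass_subg}. No probabilistic input is needed; this is a deterministic inequality.

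\textbf{Key steps.} First, since $A\C A^\top \in \R^{p\times p}$ is positive semi-definite, all of its $p$ eigenvalues are nonnegative, so the $K$ largest ones satisfy
\[
K\,\lambda_K(A\C A^\top) \le \sum_{k=1}^K \lambda_k(A\C A^\top) \le \sum_{k=1}^p \lambda_k(A\C A^\top) = \operatorname{tr}(A\C A^\top).
\]
Second, use the cyclic property of the trace together with $A^\top A = \sum_{j=1}^p A_{j\sbt}^\top A_{j\sbt}$ to write
\[
\operatorname{tr}(A\C A^\top) = \operatorname{tr}(\C A^\top A) = \sum_{j=1}^p A_{j\sbt}\,\C\, A_{j\sbt}^\top .
\]
Third, bound each summand: for every $j\in[p]$,
\[
A_{j\sbt}\,\C\, A_{j\sbt}^\top = \sum_{a,b=1}^K A_{ja}A_{jb}[\C]_{ab} \le \|\C\|_\i \Big(\sum_{a=1}^K |A_{ja}|\Big)^2 = \|\C\|_\i\,\|A_{j\sbt}\|_1^2 \le B_z,
\]
where the last inequality uses $\|\C\|_\i \le B_z$ and \textit{(A0)}, i.e.\ $\|A_{j\sbt}\|_1 \le 1$. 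Summing over $j$ gives $\operatorname{tr}(A\C A^\top) \le pB_z$, and combining with the first display yields $\lambda_K(A\C A^\top) \le B_z\,p/K$, as claimed.

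\textbf{Main obstacle.} There is essentially no obstacle: the only thing to be careful about is invoking the right norm conventions (the element-wise max norm $\|\cdot\|_\i$ in Assumption \ref{ass_subg}, and the $\ell_1$ row norm in \textit{(A0)}) and noting that positive semi-definiteness is what lets one discard the $p-K$ trailing (nonnegative) eigenvalues when passing from $\sum_{k=1}^K\lambda_k$ to the full trace. The step that does the work is the entrywise estimate $A_{j\sbt}\C A_{j\sbt}^\top \le \|\C\|_\i\|A_{j\sbt}\|_1^2$, which is just the triangle inequality applied to the quadratic form.
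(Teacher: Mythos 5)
Your proof is correct and rests on the same core argument as the paper's: bound $\lambda_K(A\C A^\top)$ by $\operatorname{tr}(A\C A^\top)/K$ and then estimate the trace term-by-term via $A_{j\sbt}^\top \C A_{j\sbt} \le \|A_{j\sbt}\|_1^2\|\C\|_\i \le B_z$ using \textit{(A0)} and Assumption \ref{ass_subg}. The only cosmetic difference is the route to the trace bound — the paper identifies $\lambda_K(A\C A^\top)$ with $\lambda_{\min}(\C^{1/2}A^\top A\C^{1/2})$ and bounds it by the minimum (hence average) diagonal entry, while you sum the top $K$ nonnegative eigenvalues of the $p\times p$ matrix directly — so no substantive comment is needed.
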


	\subsection{Proof of lemmas in Section \ref{sec_lemmas_rate}}\label{sec_proof_lem_rate}
	
	\subsubsection{Proof of Lemma \ref{lem_Delta_Z}}
	We work on the event $\E$ defined in (\ref{def_event}) that has probability at least $1-\pn^{-c}$ for some $c>0$. Recall that it implies $\wh K = K$ and $I_k \subseteq \wh I_k \subseteq I_k \cup J_1^k$ for all $k\in [K]$. 
	
	By writing $\Omega = \C^{-1}$, note that
	\begin{align*}
	\|\Omega\D_Z\|_2\le \|\Omega^{1/2}\|_{{\rm op}}\cdot  \|\Omega^{1/2}\D_Z\|_2 \le \cl^{-1/2} \cdot \sqrt{K}\cdot \|\Omega^{1/2}\D_Z\|_\i.
	\end{align*}
	From (\ref{D_Z}), it   suffices to   bound 
	\[
	\max_k\left[ {1\over n}|\Omega_{k\sbt}^{1/2}\Z^\top \eps| + {1\over n}|\Omega_{k\sbt}^{1/2}\Z^\top \W_{\sbt\wh I}\wh \Pi\beta| + 
	{1\over n}\left|\Omega_{k\sbt }^{1/2}\Z^\top \Z\left({\bI}_K - A_{{\wh I\sbt}}^\top \wh \Pi\right)\beta\right|\right].
	\]
	Note that, by definition (\ref{def_deltaA_I}),
	\[
	\left|\Omega_{k\sbt }^{1/2}\Z^\top \Z\left({\bI}_K - A_{{\wh I\sbt}}^\top \wh \Pi\right)\beta\right|  \le \max_a 	\left|\Omega_{k\sbt }^{1/2}\Z^\top \Z_{\sbt a}\right|\left\|\D\beta\right\|_1. 
	\]
	Invoking (\ref{rate_zeps}), (\ref{rate_zw_td}), (\ref{rate_zz}) in Lemma \ref{lem_quad} together with Lemma \ref{lem_deltaA_I} concludes the proof of the first result. The second result follows immediately by the same arguments applied to $\max_k |\e_k^\top \D_Z|$.
	\qed
	
	\subsubsection{Proof of Lemma \ref{lem_Delta_W}}
	We work on the event $\E$ defined in (\ref{def_event}) that has probability at least $1-\pn^{-c}$ for some $c>0$. Recall that it implies $\wh K = K$ and $I_k \subseteq \wh I_k \subseteq I_k \cup J_1^k$ for all $k\in [K]$. 
	
	First recall that $H\C^{1/2} = \Theta$ implies $\Theta^+ = \Omega^{1/2} H^+$. Note that 
	\begin{align*} 
	\|\Theta^+ \D_W\|_2 =  	\|\Omega^{1/2}H^+ \D_W\|_2\le \cl^{-1/2}\sqrt{K}\|H^+\D_W\|_\i.
	\end{align*}
	By definitions (\ref{D_W}) and (\ref{def_deltaA_I}), we can  bound 
	\[
	\max_k \left\{{1\over n}\left|\e_k^\top H^+\W^\top \beps\right| + {1\over n}\left\|\e_k^\top H^+\W^\top \Z\right\|_\i\|\D\beta\|_1 + \left|\e_k^\top H^+\left({1\over n}\W^\top \W_{\sbt\wh I} - \wh \Gamma_{{\sbt \wh I}}\right)\wh \Pi\beta\right|\right\}.
	\]
	Apply (\ref{rate_hweps}), (\ref{rate_hwz}) and (\ref{rate_hww}) in Lemma \ref{lem_quad}, apply Lemma \ref{lem_deltaA_I} and take  an union bound to complete the proof of the first result. To prove (\ref{disp_T3}), since
	\begin{align*}
	\|\D_W\|_\i& \le \max_{j}\left\{{1\over n}\left|\W_{\sbt j}^\top \eps\right| + {1\over n}\left\|\W_{\sbt j}^\top \Z\right\|_\i\|\D\beta\|_1  + \left|\e_j^\top \left({1\over n}\W^\top  \W_{\sbt \wh I} - \wh \Gamma_{{\sbt \wh I}}\right)\wh \Pi\beta\right|\right\},
	\end{align*}
	invoking (\ref{rate_weps}), (\ref{rate_wz}) and (\ref{rate_ww}) in Lemma \ref{lem_quad} and applying Lemma \ref{lem_deltaA_I} completes the proof. 
	\qed

	\subsubsection{Proof of Lemma \ref{lem_Rem2}}
	We work on the event $\E$ defined in (\ref{def_event}) that has probability at least $1-\pn^{-c}$ for some $c>0$. Recall that it implies $\wh K = K$ and $I_k \subseteq \wh I_k \subseteq I_k \cup J_1^k$ for all $k\in [K]$. 
	
	Using the identity (\ref{disp_diff_h}) and $\Theta = A\C$ gives 
	\begin{align}\label{iden_diff}\nonumber
	\left(
	\wh\Theta^+ - \Theta^+
	\right) \left(
	{1\over n}\X^\top \y - \wh\Theta \beta 
	\right)
	&=\left(
	\wh\Theta^+ - \Theta^+
	\right) (A\D_Z +\D_W)\\
	&= \left( [\Theta^\top \Theta]^{-1}[\wh \Theta- \Theta]^\top P_{\wh \Theta}^{\perp} + \Theta^+[\Theta-\wh \Theta]\wh \Theta^+\right)(A\D_Z +  \D_W).
	\end{align}
	In the last line, we  used $P_{\wh \Theta}^{\perp} = {\bI}_p - \wh \Theta\wh \Theta^+$ and the identity
	\begin{equation}\label{eq_theta_hat_inv}
	\wh \Theta^+ -\Theta^+ = (\Theta^\top \Theta)^{-1}(\wh \Theta- \Theta)^\top P_{\wh \Theta}^{\perp} + \Theta^+(\Theta-\wh \Theta)\wh \Theta^+.
	\end{equation}
	%
	To prove the first statement of the lemma,  recall that $\Theta \Omega^{1/2} = H$ and $\wh \Theta\Omega^{1/2} = \wh H$, so that we can write 
	\begin{align*}
	&\left|
	\e_k^\top  \left(
	\wh\Theta^+ - \Theta^+
	\right) \left(
	{1\over n}\X^\top \y - \wh\Theta \beta 
	\right) \right|\\ 
	&= \left|
	\e_k^\top \left( [\Theta^\top \Theta]^{-1}[\wh \Theta- \Theta]^\top P_{\wh \Theta}^{\perp} + \Theta^+[\Theta-\wh \Theta]\wh \Theta^+\right)(A\D_Z +  \D_W)
	\right|\\
	&= \left|
	\e_k^\top \Omega^{1/2}\left([H^\top H]^{-1}[\wh H-H]^\top  P_{\wh \Theta}^{\perp} + H^+[H-\wh H]\wh H^+\right)(A\D_Z + \D_W)
	\right|\\
	&\le \left\|
	\e_k^\top \Omega^{1/2}\left([H^\top H]^{-1}[\wh H-H]^\top  P_{\wh \Theta}^{\perp} + H^+[H-\wh H]\wh H^+\right)
	\right\|_2\left(\|A\D_Z\|_2 + \|\D_W\|_2\right)
	\end{align*}
	by using the Cauchy-Schwarz inequality in the last step. 
	Note that $$
	\|A\D_Z\|_2 + \|\D_W\|_2 \le \sqrt{p}\max_j\left(|A_{j\sbt}^\top \D_Z| + \|\D_W\|_\i\right)\le \sqrt{p}(\|\D_Z\|_\i + \|\D_W\|_\i)
	$$
	from the fact that $\|A_{j\sbt}\|_1\le 1$.
	The first term can be  bounded above via
	\begin{align}\label{disp_T1}\nonumber
	& \left\|\e_k^\top \Omega^{1/2}(H^\top H)^{-1}(\wh H-H)^\top P_{\wh \Theta}^{\perp} \right\|_{2} + \left\|\e_k^\top \Omega^{1/2}H^+(\wh H-H)\wh H^+\right\|_{2}\\\nonumber
	&\le \left\|\e_k^\top \Omega^{1/2}(H^\top H)^{-1}(\wh H-H)^\top\right\|_{2} + \Omega_{kk}^{1/2}{\|H^+(\wh H-H)\|_{{\rm op}}\over \sigma_K(\wh H)}.
	\end{align}
	Apply  Lemma \ref{lem_H_op}  to obtain 
	\begin{equation}\label{disp_H}
	\sigma_K(\wh H) \ge (1-c_1)\sigma_K(H),\qquad \|H^+(\wh H-H)\|_{{\rm op}} \lesssim \delta_n\sqrt{K}
	\end{equation}
	and 
	\begin{align}\label{disp_H_prime}\nonumber
	\left	\|\e_k^\top\Omega^{1/2}(H^\top H)^{-1}(\wh H-H)^\top \right\|_{2} &\le \sqrt{p}\max_{j\in [K]} \left|\e_k^\top \Omega^{1/2}(H^\top H)^{-1}(\wh H-H)^\top \e_j\right| \\\nonumber
	&\lesssim \sqrt{p} \cdot \delta_n  \left\|\e_k^\top \Omega^{1/2}(H^\top H)^{-1}\right\|_2 \\
	&\le   {\delta_n \sqrt{p} \over \lambda_K(H^\top H)}\Omega_{kk}^{1/2}.
	\end{align}
	Hence,	
	\begin{align}\label{disp_T1} \nonumber
	&\left\|\e_k^\top \Omega^{1/2}(H^\top H)^{-1}(\wh H-H)^\top P_{\wh \Theta}^{\perp} \right\|_{2} + \left\|\e_k^\top \Omega^{1/2}H^+(\wh H-H)\wh H^+\right\|_{2}\\
	&\lesssim  \left(
	{\delta_n \sqrt{p} \over \lambda_K(H^\top H)} + {\delta_n\sqrt{K} \over \sigma_K(H)}
	\right)\Omega_{kk}^{1/2},
	\end{align}
	and, using  (\ref{disp_T2}) and (\ref{disp_T3}), we conclude
	\begin{align*}
	&	\left\|
	\e_k^\top \Omega^{1/2}\left([H^\top H]^{-1}[\wh H-H]^\top  P_{\wh \Theta}^{\perp} + H^+(H-\wh H)\wh H^+\right)
	\right\|_2\left(\|A\D_Z\|_2 + \|\D_W\|_2\right)\\
	&\qquad \lesssim \Omega_{kk}^{1/2}\left(
	{\delta_n p\over \lambda_K(H^\top H)} + {\delta_n\sqrt{pK} \over \sigma_K(H)}
	\right)\cdot \delta_n\left(1+{\|\beta\|_2 \over \sqrt m} +  \bar{\rho} \|\beta\|_2\right)
	\end{align*}
	with probability $1-\pn^{-c}$. This proves the first statement.

	We now prove the second statement. From (\ref{iden_diff}), observe that 
	\begin{align*}
	&\left\|\left([ \Theta^\top \Theta]^{-1}[\wh \Theta- \Theta]^\top P_{\wh \Theta}^{\perp} + \Theta^+[\Theta-\wh \Theta]\wh \Theta^+\right)(A\D_Z +  \D_W)\right\|_2\\
	&\le \|\Omega^{1/2}\|_{{\rm op}}\left\|
	\left([H^\top H]^{-1}[\wh H-H]^\top  P_{\wh \Theta}^{\perp}  + H^+[H-\wh H]\wh H^+\right)(A\D_Z + \D_W)
	\right\|_2\\
	&\le \|\Omega^{1/2}\|_{{\rm op}}\left[
	{\|\wh H - H\|_{{\rm op}} \over \lambda_K(H^\top H)} + {\|H^+(\wh H - H)\|_{{\rm op}} \over \sigma_K(\wh H)}\right] \|A\D_Z + \D_W\|_2.
	\end{align*}
	Invoking (\ref{disp_H}), (\ref{disp_T1}), (\ref{disp_T2}) and  Lemma \ref{lem_H_op}  concludes the proof. \qed

	\subsubsection{Proof of Lemma \ref{lem_signal}}
	We argue that
	\begin{align*}
	\lambda_K(A\C A^\top )  &= \lambda_{\min}(\C^{1/2} A^\top A \C^{1/2} )\\ &\le \min_{k\in[K]} \e_k^\top \C^{1/2}A^\top A\C^{1/2}\e_k\\
	&\le  {1\over K}\sum_{k=1}^K\sum_{j=1}^p A_{j\sbt}^\top\C^{1/2}\e_k\e_k^\top \C^{1/2} A_{j\sbt}\\
	&= {1\over K}\sum_{j=1}^p A_{j\sbt}^\top\C A_{j\sbt}.
	\end{align*}
	By $\|A_{j\sbt}\|_1\le 1$, the result   follows from the inequality
	\[
	A_{j\sbt}^\top\C A_{j\sbt}\le \|A_{j\sbt}\|_1^2 \|\C\|_\i \le B_z.
	\]
	\qed

	\section{Proof of Theorem \ref{thm_distr}: Asymptotic normality of $\wh \beta$}\label{sec_proof_asn_beta}
	
	\subsection{Main proof of Theorem \ref{thm_distr}}
	We work throughout this proof on the event $\E$ so that
	$\wh K=K$ and $I_k\subseteq \wh I_k \subseteq I_k\cup J_1^k$ with $J_1^k=\{ j\in J: \ |A_{jk}| \ge 1-4\delta/v\}$, for all $k\in [K]$, and
	the event that the inverse of $\wh\Theta^\top \wh \Theta$ exists.   This event holds with probability $1-c\pn^{-c'}$ by recalling that $\PP(\E)\ge 1-\pn^{-c}$ and  Lemma \ref{lem_H_op}.
	Recall from (\ref{diff_beta_hat_beta}) the identity
	\begin{align}\label{diff_beta_hat_beta_prime}
	\wh\beta - \beta = \Theta^+\left(
	{1\over n}\X^\top \y - \wh \Theta \beta
	\right)  + \left(\wh\Theta^+ - \Theta^+\right)\left(
	{1\over n}\X^\top \y - \wh \Theta \beta
	\right) .
	\end{align}
	The proof consists of four main steps:
	\begin{enumerate}[itemsep = 1pt]
		\item[(1)] We   derive the decomposition of $ \wh\beta_k - \beta_k $ as an average of independent, mean zero random variables   $n^{-1}\sum_{i=1}^n \xi_{ik}$  and two remainder terms;
		\item[(2)] We verify that $\EE[\xi_{ik}]=0$ and calculate the variance  $\EE[\xi_{ik}^2]$;
		\item[(3)] We apply   Lyapunov's central limit  theorem  for triangular arrays to $ \sum_{t=1}^n \xi_{tk}$;
		\item[(4)] We show that the two remainder terms are asymptotically negligible.\\
		
	\end{enumerate}
	
	\paragraph{Step 1:} Set $\Pi:= A_{I\sbt} [A_{I\sbt}^\top A_{I\sbt}]^{-1}$. 
	Recall (\ref{disp_diff_h}), (\ref{D_Z}) and (\ref{D_W}), to obtain the identity 
	\begin{align*}
	{1\over n}\X^\top \y - \wh\Theta \beta & = A \Delta_Z + \Delta_W
	\end{align*}
	and rewrite $\Delta_Z$ and $\Delta_W$
	by adding and subtracting appropriate  terms as
	\begin{align*}
	\D_Z &= 
	\underbrace{{1\over n}\Z^\top \eps - {1\over n}\Z^\top \W_{{\sbt I}}\Pi  \beta}_{\D_{Z,I}} + \underbrace{{1\over n}\Z^\top \left(\W_{{\sbt I}}\Pi  -\W_{{\sbt \wh I}}\wh \Pi\right)\beta + {1\over n}\Z^\top \Z\left({\bI}_K - A_{{\wh I\sbt}}^\top \wh \Pi\right)\beta}_{\D_{Z, J_1}},\\
	\D_W &= \underbrace{{1\over n}\W^\top \eps  +\left( \wt \Gamma_{\sbt I} - {1\over n}\W^\top \W_{{\sbt I}} \right)\Pi \beta}_{\D_{W,I}}  \\
	&\quad + \underbrace{{1\over n}\W^\top \Z\left({\bI}_K - A_{{\wh I\sbt}}^\top \wh \Pi\right)\beta+ \left[\left(\wh \Gamma_{{\sbt \wh I}}  - {1\over n}\W^\top \W_{{\sbt \wh I}} \right)\wh \Pi - \left(\wt \Gamma_{\sbt I}  - {1\over n}\W^\top \W_{{\sbt I}} \right)\Pi \right] \beta}_{\D_{W, J_1}}.
	\end{align*}
	Here, in view of (\ref{est_Gamma_I}), the matrix $\wt\Gamma$ is a diagonal $p\times p$  matrix with
	\begin{align}\label{def_bar_Sigmaw}
	\wt\Gamma_{ii} &:= \wh\Sigma_{ii} - {1\over |I_a|(|I_a|-1)}\sumsum_{j, \ell \in I_a,\ j\ne \ell} \wh\Sigma_{j\ell}\quad \forall i\in I_a, ~ a\in [K].
	\end{align}
	Plug   (\ref{disp_Theta+diff}) into (\ref{diff_beta_hat_beta_prime}), 
	use the above expressions of $\D_Z$ and $\D_W$, and find, for each $k\in [K]$,	\begin{eqnarray*}
		\wh\beta_k - \beta_k &=&
		\e_k^\top \Theta^+\left(
		{1\over n}\X^\top \y - \wh \Theta \beta
		\right)  +\e_k^\top \left(\wh\Theta^+ - \Theta^+\right)\left(
		{1\over n}\X^\top \y - \wh \Theta \beta
		\right) \\
		&=&
		\e_k^\top\left(	\C ^{-1}\D_Z + \Theta^+\D_W \right) 
		+\e_k^\top \left(\wh\Theta^+ - \Theta^+\right)\left(
		{1\over n}\X^\top \y - \wh \Theta \beta
		\right) \\
		&=&
		\e_k^\top \left( \C^{-1} \D_{Z,I} + \Theta^+\D_{W,I}\right) +  [{\rm Rem}_1]_k +   [{\rm Rem}_2]_k 
	\end{eqnarray*}
	with 
	\begin{align}\label{def_Rem_2}
	{\rm Rem}_1  = \left(\wh\Theta^+ - \Theta^+\right)\left(
	{1\over n}\X^\top \y - \wh \Theta \beta
	\right), \qquad 
	{\rm Rem}_2 = \C^{-1}\D_{Z, J_1}+ \Theta^+\D_{W, J_1}.
	\end{align}
	We now write 
	$ \e_k^\top \left( \C^{-1} \D_{Z,I} + \Theta^+\D_{W,I}\right)$ as a sum of independent variables.
	First observe that
	\begin{eqnarray*}
		\e_k^\top \C^{-1} \D_{Z,I} &=&\e_k^\top  \C^{-1}
		\left( \frac1n \Z^\top \eps - \frac1n \Z^\top \W_{{\sbt I}} \Pi  \beta \right)\\
		&=& \e_k^\top  \C^{-1}   \frac1n \sum_{t=1}^n \Z_{t\sbt} \left( \eps_t -   \W_{{tI}} \Pi  \beta \right)
	\end{eqnarray*}
	and 
	\begin{eqnarray*}
		\e_k^\top \Theta^+ \D_{W,I} &=& \e_k^\top \Theta^+ \left\{
		\frac1n \W^\top \eps + \left(\widetilde \Gamma_{{\sbt I}}-\frac1n \W^\top \W_{{\sbt I}} \right)\Pi \beta\right\}\\
		&=& \e_k^\top \Theta^+ \left\{
		\frac1n \sum_{t=1}^n\W_{t\sbt} \eps_t +   \frac1n \sum_{t=1}^n \widetilde \Gamma_{{\sbt I}}^{(t)}\Pi \beta -\frac1n\sum_{t=1}^n  \W_{t\sbt} \W_{{t I}}  \Pi \beta\right\}
	\end{eqnarray*}
	with, for any $t\in [n]$,  $\wt \Gamma_{ij}^{(t)} := 0$ for all $i\ne j $ and  
	\begin{equation}\label{def_bar_Sigmaw_t}
	\wt\Gamma_{ii}^{(t)} := \X_{ti}^2- {1\over |I_a|(|I_a|-1)}\sumsum_{j,\ell \in I_a,\ j\ne \ell} \X_{tj}  \X_{t\ell},~ \text{for all }i\in I_a, ~ a\in [K],
	\end{equation}
	so that $\wt\Gamma_{{\sbt I}}= n^{-1} \sum_{t=1}^n \wt \Gamma_{\sbt I}^{(t)}$.
	Finally, define 
	\begin{eqnarray}\label{def_B_tk}
	\xi_{tk} &:=&  \e_k^\top\C^{-1} \Z_{t\sbt}  \left[\eps_t -  \W_{{t\sbt}}\Pi  \beta\right]  + \e_k^\top \Theta^+\left[ \W_{t\sbt}\eps_t  +  \left(\wt \Gamma_{\sbt I}^{(t)} - \W_{t\sbt}\W_{{t I}}	\right)\Pi \beta
	\right]
	\end{eqnarray}
	and conclude that  
	\begin{align}\label{disp_clt_term}
	\wh\beta_k - \beta_k  = {1\over  {n}}\sum_{t=1}^n \xi_{tk} +   [{\rm Rem}_1]_k +[{\rm Rem}_2]_k .
	\end{align}
	It is easily verified that, for each $k$,  the random variables $\xi_{1k},\ldots,\xi_{nk}$ are  independent.
	
	\paragraph{Step 2:} 
	Next we calculate the first two moments of $\xi_{tk}$. Lemma \ref{lem_moment}  in Section \ref{sec_lemmas_distr} shows that 
	$
	\EE[\xi_{tk}] = 0
	$
	and 
	\begin{align}\label{def_V_k_general}
	\EE[\xi_{tk}^2] &= \left\{ \sigma^2 + \sum_{\ell=1}^K \beta_\ell^2 \left( \frac{1}{|I_\ell|^2} \sum_{i\in I_\ell} \tau_i^2 \right)  \right\}
	\left\{ [\C^{-1}]_{kk} + \e_k^\top (\Theta^\top \Theta)^{-1}\Theta^\top \Gamma\Theta (\Theta^\top \Theta)^{-1}\e_k\right\}\\\nonumber
	&\quad  + \sum_{\ell=1}^K\sum_{a \in I_\ell } \left(\e_k^\top (\Theta^\top \Theta)^{-1}\Theta^\top \e_a\right)^2{\beta_\ell^2 \over m_\ell}\sum_{i\in I_\ell}\t_i^2\left(
	{1\over (|I_\ell|-1)^2}\sum_{j\in I_\ell \setminus \{i\}}\t_j^2 - {1\over |I_\ell|^2}\sum_{j\in I_\ell}\t_j^2
	\right),
	\end{align}
	for all $1\le t\le n$.
	In case	
	$|I_1| = \cdots = |I_K| = m$ and $\tau_1^2= \cdots = \tau_p^2 = \tau^2$, the above expression simplifies to
	\begin{equation}\label{def_V_k_simp}
	\left(\sigma^2 + \t^2{\|\beta\|_2^2\over m} \right)\left( [\C^{-1}]_{kk} + \tau^2
	\e_k^\top  [\Theta^\top \Theta ]^{-1}\e_k\right) + {\tau^4\over m(m-1)}\sum_{\ell=1}^K\beta_\ell^2 \sum_{i\in I_\ell} \left(\e_k^\top [\Theta^\top \Theta]^{-1}\Theta^\top \e_i\right)^2 .
	\end{equation}
	This corresponds to the expression for $V_k$ in (\ref{def_Vk}).
	Furthermore, since 
	\begin{equation}\label{disp_ThetaTheta_kk}
	\tau^2\e_k^\top [\Theta^\top \Theta]^{-1}\e_k \le   {\tau^2 [\C^{-1}]_{kk}\over \lambda_K(A\C A^\top )}
	\end{equation}
	and the second term in (\ref{def_V_k_simp}) is always smaller than the first term (as shown below), it follows that
	$V_k$ reduces to (\ref{def_Vk_simp}), provided  $\lambda_K(A\C A^\top ) / \tau^2 \to \i$. To appreciate why the first term dominates the second term in (\ref{def_V_k_simp}), observe that 
	\begin{align}\label{bd_minor_var}\nonumber
	\sum_{\ell=1}^K{\beta_\ell^2 \over m}\sum_{i\in I_\ell} \left(\e_k^\top [\Theta^\top \Theta]^{-1}\Theta^\top \e_i\right)^2  &= \sum_{\ell=1}^K \left(\e_k^\top [\Theta^\top \Theta]^{-1}\C \e_\ell\right)^2 \beta_\ell^2\\\nonumber
	&\le \|\beta\|_2^2 \max_\ell \left|\e_k^\top (\Theta^\top \Theta)^{-1} \C \e_\ell \right|^2\\\nonumber
	&\le \|\beta\|_2^2 \cdot 
	\e_k^\top (\Theta^\top \Theta)^{-1} \e_k
	\cdot 
	\max_\ell \left|
	\e_\ell^{\top}\C (\Theta^\top\Theta)^{-1}\C \e_\ell
	\right|\\\nonumber
	& = \|\beta\|_2^2 \cdot 
	\e_k^\top( \Theta^\top \Theta)^{-1} \e_k
	\cdot 
	\max_\ell \left|
	\e_\ell^{\top} (A^\top A)^{-1} \e_\ell
	\right|\\\nonumber
	&\le \|\beta\|_2^2 \cdot 
	\e_k^\top( \Theta^\top \Theta)^{-1} \e_k
	\cdot \lambda_{\min}^{-1}(A^\top A)\\
	&\le {\|\beta\|_2^2 }
	\e_k^\top (\Theta^\top \Theta)^{-1} \e_k /m.
	\end{align}
	We used the identity $\Theta = A\C$, $A_{I\sbt} = |\e_\ell|$ for all $i\in I_\ell$ and $|I_\ell| = m$ in the first line, the Cauchy-Schwarz inequality in the second line, $\Theta = A\C$ in the third line and the inequality  $\lambda_{\min}(A^\top A) \ge \lambda_{\min}(A_{I\sbt}^\top  A_{I\sbt}) \ge m$ in the last line. \\

	\paragraph{Step 3:} 
	Next we  apply   Lyapunov's central limit theorem (see, for instance, \cite{Shorack2017}) to  $\sum_{t=1}^n \xi_{tk}$. 
	The independent $\xi_{1k},\ldots,\xi_{nk}$ form a triangular array, with variances possibly changing in $n$, due to the dependence on $p$ and $K$.
	We verify  Lyapunov's condition 
	\begin{equation}\label{cond_lyapunov}
	\lim_{n\to\i}{\sum_{i=1}^n\EE[|\xi_{ik}|^3] \over (\sum_{j=1}^n \EE [ \xi_{jk}^2 ])^{3/2}} =
	\lim_{n\to\i}{\sum_{i=1}^n\EE[|\xi_{ik}|^3] \over (n \EE [ \xi_{1k}^2 ])^{3/2}} = 0
	\end{equation}
	for the simplified case with $  \EE[ \xi_{1k}^2] := V_k$ in (\ref{def_V_k_simp}) only. The  general case using the  variance in (\ref{def_V_k_general}) can be verified in a similar way. 
	Invoke Lemma \ref{lem_third_moment} in Section \ref{sec_lemmas_distr} to obtain 
	\[
	\frac{1}{n\sqrt{n}}\sum_{i=1}^n\EE\left[|\xi_{ik}|^3\right] \lesssim  {1 \over \sqrt n} \left\{
	\left(\gamma_{\eps} +  {\t \|\beta\|_2 \over \sqrt m} \right)^3[\C^{-1}]_{kk}^{3/2} + \left(\gamma_w{\|\beta\|_2 \over  \sqrt{m}} + \gamma_{\eps}\right)^3\gamma_w^3\left(\e_k^\top (\Theta^\top \Theta)^{-1}\e_k\right)^{3/2}
	\right\}.
	\]
	Here $\gamma_\eps$ and $\gamma_w$ are the sub-Gaussian constants of $\eps$ and $W$.
	Compare this expression with $V_k=\EE[\xi_{1k}^2]$ in  (\ref{def_V_k_simp}),
	\[
	V_k^{3/2} \ge \left(\sigma^2 + {\t^2\|\beta\|_2^2/ m}\right)^{3/2}\left( [\C^{-1}]_{kk} + \t^2 \e_k^\top (\Theta^\top \Theta)^{-1}\e_k\right)^{3/2}
	\]
	and conclude 
	\[
	\lim_{n\to\i}{\sum_{i=1}^n\EE[|\xi_{ik}|^3] \over (nV_{k})^{3/2}} \lesssim \lim_{n\to\i} {1\over \sqrt{n}} = 0.
	\]
	The last step used $\gamma_{\eps}\lesssim\Sigma  $ and $\gamma_w \lesssim \tau$. By Lyapunov's central limit theorem, the standardized sum $\sum_{i=1}^n \xi_{ik} / \sqrt{n V_k}$ 
	converges weakly to  $N(0,1)$, as $n\to\infty$.\\

	\paragraph{Step 4:} 
	Finally, we show that both remainder terms in (\ref{def_Rem_2}) are $o_p( \sqrt{ V_k/n})$. For ${\rm Rem}_1$, apply Lemma \ref{lem_Rem2}   in Section \ref{sec_lemmas_rate} and invoke Assumption \ref{ass_J1_prime} to find
	\begin{align*}
	&\sqrt{n}\left|[{\rm Rem}_1]_k\right|\\ 
	&= O_p\left(
	[\C^{-1}]_{kk}^{1/2}\left\{
	1+{\|\beta\|_2 \over \sqrt m} + \bar{\rho}\|\beta\|_2\right\}
	{\delta_n\sqrt{p\log\pn} \over \sigma_K(A\C^{1/2})}\left\{
	\sqrt{K} + 
	{\sqrt{p} 
		\over \sigma_K(A\C^{1/2})}\right\}
	\right)\\
	&=O_p\left([\C^{-1}]_{kk}^{1/2}\left\{
	1+{\|\beta\|_2 \over \sqrt m}\right\}
	{\delta_n\sqrt{p\log\pn} \over \sigma_K(A\C^{1/2})}\left\{
	\sqrt{K} + 
	{\sqrt{p} 
		\over \sigma_K(A\C^{1/2})}\right\}
	\right).
	\end{align*}
	Invoke Assumption \ref{ass_H_prime} and write $\lambda_K := \sigma_K^2(A\C^{1/2})$, to get 
	\[
	{\delta_n\sqrt{p\log\pn} \over \sigma_K(A\C^{1/2})}\left\{
	\sqrt{K} + 
	{\sqrt{p} 
		\over \sigma_K(A\C^{1/2})}\right\} = \sqrt{{K\log^2\pn \over n}\cdot {p\over \lambda_K}} + {p\log\pn\over \lambda_K \sqrt n} = o(1),
	\]
	as $n\to\i$, where we also use $\lambda_K \lesssim (p/K)$ from Lemma \ref{lem_signal} in conjunction with Assumption \ref{ass_H_prime} to deduce $K = o(\sqrt n/\log\pn)$.
	This concludes 
	\begin{align*}
	\sqrt{n}\left|[{\rm Rem}_1]_k\right| =o_p\left(
	[\C^{-1}]_{kk}^{1/2}\left(1+{\|\beta\|_2 \over \sqrt m}\right)\right) = o_p\left(\sqrt{V_k}\right).
	\end{align*}
	Finally,   invoke Lemma \ref{lem_Rem_2} in Section \ref{sec_lemmas_distr} and Assumption \ref{ass_J1_prime}, to obtain
	\begin{align*}
	\sqrt{n}\left| [{\rm Rem}_2]_k\right| 
	& = O_p\left(
	\bar\rho\|\beta\|_2 \sqrt{\log\pn}\sqrt{[\C^{-1}]_{kk}}
	\right)= o_p\left(\sqrt{V_k}\right).
	\end{align*}
	This completes the proof.\qed

	\subsection{Lemmas used in the proof of Theorem \ref{thm_distr}}\label{sec_lemmas_distr}
	Let
	$\xi_{tk}$ 
	be defined in (\ref{def_B_tk}) for all $t\in [n]$ and $k\in [K]$. The first lemma states its first and second moments while the second lemma provides upper bounds for its third absolute moment.  The third lemma studies the rate of ${\rm Rem}_2$ defined in (\ref{def_Rem_2}). These lemmas are proved in Section \ref{sec_proof_lem_distr}.

	\begin{lemma}\label{lem_moment}
		Under 
		Assumption \ref{ass_subg}, we have 
		$
		\EE[\xi_{tk}] = 0
		$
		and 
		\begin{align*}
		\EE[\xi_{tk}^2] &= \left(\sigma^2 + \sum_{a=1}^K \beta_a^2 \bt_a \right)\left([\C^{-1}]_{kk} + \e_k^\top [\Theta^\top \Theta]^{-1}\Theta^\top \Gamma\Theta [\Theta^\top \Theta]^{-1}\e_k\right)\\
		&\quad  + \sum_{a=1}^K {\D_{\tau}^{(a)}\over m_a}\sum_{i\in I_a}\left(\e_k^\top [\Theta^\top \Theta]^{-1}\Theta^\top \e_i\right)^2 
		\end{align*}
		for all $t\in [n]$ and $k\in [K]$, with $\bt_a := \sum_{i \in I_a}\tau_i^2 / m_a^2$ and 
		\begin{equation*}
		\D_{\tau}^{(a)} := {\beta_a^2}\sum_{i\in I_a}\t_i^2\left(
		{1\over (m_a-1)^2}\sum_{j\in I_a \setminus \{i\}}\t_j^2 - {1\over m_a^2}\sum_{j\in I_a}\t_j^2
		\right).
		\end{equation*}		
	\end{lemma}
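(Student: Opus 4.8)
The plan is to evaluate $\EE[\xi_{tk}]$ and $\EE[\xi_{tk}^2]$ directly from (\ref{def_B_tk}), using only that $\Z_{t\sbt},\W_{t\sbt},\eps_t$ are mutually independent and centered, with $\Cov(Z)=\C$, $\Cov(W)=\Gamma$ diagonal, and $\EE[\eps^2]=\sigma^2$. Throughout I use the sign normalization adopted in the appendix, so $\X_{ti}=\Z_{ta}+\W_{ti}$ for every $i\in I_a$. The first preliminary step is to record the exact algebraic form of the de-biasing statistic: expanding $\X_{ti}^2$ and the average of $\X_{tj}\X_{t\ell}$ over distinct $j,\ell\in I_a$, the quadratic-in-$Z$ contributions cancel and, for $i\in I_a$,
\[
\wt\Gamma_{ii}^{(t)}=2\Z_{ta}\bigl(\W_{ti}-\overline W_{ta}\bigr)+\W_{ti}^2-\frac{1}{m_a(m_a-1)}\Bigl(\bigl(\textstyle\sum_{j\in I_a}\W_{tj}\bigr)^2-\sum_{j\in I_a}\W_{tj}^2\Bigr),\qquad \overline W_{ta}:=\frac{1}{m_a}\sum_{i\in I_a}\W_{ti}.
\]
Since $\EE[(\sum_{j}\W_{tj})^2-\sum_j\W_{tj}^2]=0$ and $\EE[\W_{ti}^2]=\tau_i^2$, this gives $\EE[\wt\Gamma_{ii}^{(t)}]=\tau_i^2=\Gamma_{ii}$, and likewise $\EE[\W_{t\sbt}\W_{tI}^{\top}]=\Gamma_{\sbt I}$; hence the bracketed de-biasing term in (\ref{def_B_tk}) is centered and $\EE[\xi_{tk}]=0$.

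For the second moment I would split $\xi_{tk}=\phi_{tk}\eps_t+\chi_{tk}$ with $\phi_{tk}:=\e_k^\top\C^{-1}\Z_{t\sbt}+\e_k^\top\Theta^+\W_{t\sbt}$, and, using the rank-one identities $\W_{t\sbt}\W_{tI}^{\top}\Pi\beta=(\e_k^\top\Theta^+\W_{t\sbt})\psi_t$ and $\e_k^\top\C^{-1}\Z_{t\sbt}\,\W_{tI}^{\top}\Pi\beta=(\e_k^\top\C^{-1}\Z_{t\sbt})\psi_t$, where $\psi_t:=\beta^\top\Pi^{\top}\W_{tI}=\sum_a\beta_a\overline W_{ta}$,
\[
\chi_{tk}=\e_k^\top\Theta^+\wt\Gamma_{\sbt I}^{(t)}\Pi\beta-\phi_{tk}\psi_t.
\]
Since $\eps_t$ is independent of $(\Z_{t\sbt},\W_{t\sbt})$ and centered, $\EE[\xi_{tk}^2]=\sigma^2\,\EE[\phi_{tk}^2]+\EE[\chi_{tk}^2]$, and a one-line computation using $\EE[\Z\Z^\top]=\C$, $\EE[\W\W^\top]=\Gamma$, $\EE[\Z\W^\top]=0$ gives $\EE[\phi_{tk}^2]=[\C^{-1}]_{kk}+\e_k^\top(\Theta^\top\Theta)^{-1}\Theta^\top\Gamma\Theta(\Theta^\top\Theta)^{-1}\e_k$, which already accounts for the $\sigma^2(\cdots)$ part of the claimed variance.

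The substance is $\EE[\chi_{tk}^2]$. The key structural observation is that $\chi_{tk}$ contains no monomial $\W_{tj}^2$: in $\e_k^\top\Theta^+\wt\Gamma_{\sbt I}^{(t)}\Pi\beta$ the coefficient of $\W_{tj}^2$ is $(\e_k^\top\Theta^+\e_j)\beta_a/m_a$ for $j\in I_a$ (the $\W_{tj}^2$ piece of $\wt\Gamma_{jj}^{(t)}$, the cross-product sum being purely off-diagonal), and this is cancelled exactly by the diagonal part of $\phi_{tk}\psi_t$; the $\Z_{ta}^2$ terms were already removed above. Hence $\chi_{tk}$ is a linear combination of monomials each of which is either bilinear in $(\Z_{t\sbt},\W_{t\sbt})$ or of the form $\W_{ti}\W_{tj}$ with $i\neq j$, so no fourth moment $\EE[\W_{ti}^4]$ (nor any odd moment, which vanishes by independence) survives in $\EE[\chi_{tk}^2]$, and the whole quantity is a function of $\C$ and the variances $(\tau_i^2)$ only. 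I would then expand $\EE[\chi_{tk}^2]=\EE[(\e_k^\top\Theta^+\wt\Gamma_{\sbt I}^{(t)}\Pi\beta)^2]-2\,\EE[(\e_k^\top\Theta^+\wt\Gamma_{\sbt I}^{(t)}\Pi\beta)\phi_{tk}\psi_t]+\EE[\phi_{tk}^2\psi_t^2]$, evaluating each term with $\EE[\psi_t^2]=\sum_a\beta_a^2\bt_a$, $\EE[\Z_{ta}\,\e_k^\top\C^{-1}\Z_{t\sbt}]=\mathbf{1}\{a=k\}$, $\EE[\Z_{ta}\,\e_k^\top\Theta^+\W_{t\sbt}]=0$, $\Cov(\overline W_{ta},\W_{ti})=\tau_i^2/m_a$ for $i\in I_a$, together with the elementary second-moment identities for the off-diagonal products coming from $(\sum_j\W_{tj})^2-\sum_j\W_{tj}^2$. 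Collecting terms, the pieces proportional to $[\C^{-1}]_{kk}$ and to $\e_k^\top(\Theta^\top\Theta)^{-1}\Theta^\top\Gamma\Theta(\Theta^\top\Theta)^{-1}\e_k$ combine with the $\sigma^2$-multiple into $\bigl(\sigma^2+\sum_a\beta_a^2\bt_a\bigr)\bigl([\C^{-1}]_{kk}+\e_k^\top(\Theta^\top\Theta)^{-1}\Theta^\top\Gamma\Theta(\Theta^\top\Theta)^{-1}\e_k\bigr)$, and the remaining within-block covariance terms of the de-biaser reorganize, via $\e_k^\top\Theta^+\e_i=\e_k^\top(\Theta^\top\Theta)^{-1}\Theta^\top\e_i$, into $\sum_{a}m_a^{-1}\D_{\tau}^{(a)}\sum_{i\in I_a}\bigl(\e_k^\top(\Theta^\top\Theta)^{-1}\Theta^\top\e_i\bigr)^2$; specializing $m_a\equiv m$ and $\tau_i^2\equiv\tau^2$ then collapses $\D_{\tau}^{(a)}$ to $\beta_a^2\tau^4/(m-1)$ and recovers (\ref{def_V_k_simp}).

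The main obstacle is precisely this last expansion and collection: it is entirely elementary but lengthy, and the one point demanding care is the cancellation of the $\W_{tj}^2$ monomials between the estimator $\wt\Gamma^{(t)}$ and the ``naive'' correction $\W_{t\sbt}\W_{tI}^{\top}$. This cancellation --- which is why the de-biasing uses cross-products of \emph{distinct} pure variables rather than squares --- is exactly what makes the closed-form variance depend only on $\C$, $\sigma^2$ and $(\tau_i^2)$, and not on higher moments of the noise $W$.
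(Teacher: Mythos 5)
Your proposal is correct and follows essentially the same route as the paper: a direct moment computation of $\xi_{tk}$ using the mutual independence of $\eps$, $Z$ and $W$, with the decisive algebraic point being that the de-biasing statistic $\wt\Gamma^{(t)}_{\sbt I}-\W_{t\sbt}\W_{tI}^\top$, once block-averaged against $\Pi\beta$, reduces to off-diagonal products $\W_{tj}\W_{t\ell}$ of distinct pure-variable noises, so only $\C$, $\sigma^2$ and the variances $\tau_i^2$ enter the closed form. Your explicit expansion of $\wt\Gamma_{ii}^{(t)}$ and the cancellation of the $\W_{tj}^2$ monomials is exactly the content of the paper's Lemma \ref{lem_WW} and of display (\ref{disp_A_IU_I}); the remaining difference (grouping as $\phi_{tk}\eps_t+\chi_{tk}$ rather than into five mutually uncorrelated terms whose variances are summed) is only organizational bookkeeping.
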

	
	\begin{lemma}\label{lem_third_moment}
		Under the conditions of Theorem \ref{thm_distr}, we have
		\[
		\EE[|\xi_{tk}|^3] \lesssim 
		\left(\gamma_{\eps} +  {\t \|\beta\| \over \sqrt m} \right)^3[\C^{-1}]_{kk}^{3/2} + \left(\gamma_w{\|\beta\| \over  \sqrt{m}} + \gamma_{\eps}\right)^3\gamma_w^3\left(\e_k^\top [\Theta^\top \Theta]^{-1}\e_k\right)^{3/2}.
		\]
	\end{lemma}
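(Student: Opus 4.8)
The starting point is the explicit decomposition of $\xi_{tk}$ in (\ref{def_B_tk}): expanding the brackets, it is a sum of the five scalars
\[
\e_k^\top\C^{-1}\Z_{t\sbt}\,\eps_t,\quad -\e_k^\top\C^{-1}\Z_{t\sbt}\,(\oW_{t\sbt}\beta),\quad \e_k^\top\Theta^+\W_{t\sbt}\,\eps_t,\quad -\e_k^\top\Theta^+\W_{t\sbt}\,(\oW_{t\sbt}\beta),\quad \e_k^\top\Theta^+\wt\Gamma^{(t)}_{\sbt I}\Pi\beta,
\]
where $\oW_{t\sbt}\beta=\W_{tI}\Pi\beta=\sum_{a=1}^K\beta_a\oW_{ta}$. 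Since $\|\cdot\|_3$ is a norm and there are finitely many summands, $\EE|\xi_{tk}|^3\lesssim\sum_j\|(\text{summand }j)\|_3^3$, and it suffices to bound the sub-exponential ($\psi_1$) norm of each summand, since $\|X\|_3\lesssim\|X\|_{\psi_1}$. For products of two centered sub-Gaussians I will use $\|GH\|_{\psi_1}\le\|G\|_{\psi_2}\|H\|_{\psi_2}$ (the Hölder step in the proof of Lemma \ref{lem_bernstein}); for a centered quadratic form in independent sub-Gaussian coordinates I will use the Hanson--Wright bound $\|W^\top MW-\EE[\cdot]\|_{\psi_1}\lesssim\gamma_w^2\|M\|_F$, which is also obtainable from Rosenthal's inequality applied separately to the diagonal and off-diagonal parts.

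Write $c:=(\Theta^+)^\top\e_k\in\R^p$, so $\|c\|_2^2=\e_k^\top(\Theta^\top\Theta)^{-1}\e_k$. By Lemma \ref{lem_Z}(1) (with $v=\C^{-1}\e_k$), $\e_k^\top\C^{-1}\Z_{t\sbt}$ is $\gamma_z[\C^{-1}]_{kk}^{1/2}$-sub-Gaussian; by Lemma \ref{lem_W}(1), $\e_k^\top\Theta^+\W_{t\sbt}=c^\top\W_{t\sbt}$ is $\gamma_w\|c\|_2$-sub-Gaussian; $\oW_{t\sbt}\beta$ is a linear functional of the independent $\gamma_w$-sub-Gaussian coordinates $\{\W_{ti}\}_{i\in I}$ with coefficient vector of squared norm $\sum_a\beta_a^2/m_a\le\|\beta\|_2^2/m$, hence $\gamma_w\|\beta\|_2/\sqrt m$-sub-Gaussian; and $\eps_t$ is $\gamma_\eps$-sub-Gaussian. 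Multiplying pairwise, the first four summands have $\psi_1$-norms bounded, respectively, by $\lesssim\gamma_\eps[\C^{-1}]_{kk}^{1/2}$, $\gamma_w\|\beta\|_2m^{-1/2}[\C^{-1}]_{kk}^{1/2}$, $\gamma_w\gamma_\eps\|c\|_2$ and $\gamma_w^2\|\beta\|_2m^{-1/2}\|c\|_2$, the $O(1)$ constants $\gamma_z,B_z$ being absorbed into $\lesssim$. Cubing, these four already produce the monomials $\gamma_\eps^3[\C^{-1}]_{kk}^{3/2}$, $\gamma_w^3\|\beta\|_2^3m^{-3/2}[\C^{-1}]_{kk}^{3/2}$, $\gamma_\eps^3\gamma_w^3\|c\|_2^3$ and $\gamma_w^6\|\beta\|_2^3m^{-3/2}\|c\|_2^3$.

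The main work — and the main obstacle — is the last summand, grouped with the contribution $-\e_k^\top\Theta^+\W_{t\sbt}(\oW_{t\sbt}\beta)$ which shares its mean: set $R_t:=\e_k^\top\Theta^+(\wt\Gamma^{(t)}_{\sbt I}-\W_{t\sbt}\W_{tI}^\top)\Pi\beta$, which is centered by the computation in Lemma \ref{lem_moment}. Substituting $\X_{ti}=\Z_{ta}+\W_{ti}$ for $i\in I_a$ into (\ref{def_bar_Sigmaw_t}), the $\Z_{ta}^2$ contributions cancel inside $\wt\Gamma^{(t)}_{ii}$ and, after subtracting $\W_{t\sbt}\W_{tI}^\top\Pi\beta$, the means $\tau_i^2$ cancel as well, leaving $R_t=R_t'+R_t''$ where, with $\bar c_a:=m_a^{-1}\sum_{i\in I_a}c_i$,
\[
R_t'=\sum_{a=1}^K\frac{2\beta_a}{m_a}\,\Z_{ta}\sum_{i\in I_a}(c_i-\bar c_a)\,\W_{ti}
\]
is bilinear in $Z$ and $W$, and $R_t''$ is a purely off-diagonal quadratic form $\W_{t\sbt}^\top M\W_{t\sbt}$ in the independent coordinates $\W_{t\sbt}$, for an explicit symmetric $M$ supported on rows/columns in $I$. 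For $R_t'$ I would condition on $Z$: the conditional law is sub-Gaussian with variance proxy $\lesssim\gamma_w^2\sum_a(\beta_a/m_a)^2\Z_{ta}^2\sum_{i\in I_a}(c_i-\bar c_a)^2$, itself a sub-exponential function of $Z$; integrating the conditional moment generating function (equivalently, Rosenthal on the resulting second-order chaos) and using $m_a\ge m$, $\sum_a\sum_{i\in I_a}(c_i-\bar c_a)^2\le\|c\|_2^2$ and $\|\Z_{ta}\|_3^2\lesssim B_z$ gives $\|R_t'\|_{\psi_1}\lesssim\gamma_w\|\beta\|_2\|c\|_2/\sqrt m$. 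For $R_t''$ a short Cauchy--Schwarz computation gives $\|M\|_F\lesssim\|\beta\|_2\|c\|_2/\sqrt m$, hence $\|R_t''\|_{\psi_1}\lesssim\gamma_w^2\|\beta\|_2\|c\|_2/\sqrt m$. The delicacy is precisely this cancellation pattern: a crude bound on $\wt\Gamma^{(t)}_{ii}$ would import the factor $[\C]_{aa}$ and destroy the rate, and one must keep the within-cluster averaging visible to get $m^{-1/2}$ rather than a worse power.

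Finally, collecting the six $\psi_1$-bounds, applying the triangle inequality and cubing bounds $\EE|\xi_{tk}|^3$ by a sum of the listed monomials. Using $\gamma_w/\tau=O(1)$, $\gamma_\eps/\sigma=O(1)$ and the $O(1)$ size of $\gamma_z,B_z$, and — for the term $\gamma_w^3\|\beta\|_2^3m^{-3/2}\|c\|_2^3$ coming from $R_t'$ — the bound $\e_k^\top(\Theta^\top\Theta)^{-1}\e_k\lesssim[\C^{-1}]_{kk}$ from (\ref{disp_ThetaTheta_kk}) (valid under Assumption \ref{ass_H_prime}, which forces $\lambda_K\to\infty$), all contributions are absorbed into the two groups $(\gamma_\eps+\tau\|\beta\|_2/\sqrt m)^3[\C^{-1}]_{kk}^{3/2}$ and $(\gamma_w\|\beta\|_2/\sqrt m+\gamma_\eps)^3\gamma_w^3(\e_k^\top(\Theta^\top\Theta)^{-1}\e_k)^{3/2}$, which is the claim.
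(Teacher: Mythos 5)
Your argument is correct in substance but takes a genuinely different route from the paper for the key term. The paper factors $\Theta^+=(\Theta^\top \Theta)^{-1}\C A^\top$ and splits into the $A_{J\sbt}$ and $A_{I\sbt}$ blocks; because the resulting weights are constant within each pure group, the within-group sums of $U^{(t)}_{I\sbt}$ make every $\Z$-dependent term cancel exactly (the per-observation analogue of Lemma \ref{lem_WW}, see (\ref{disp_A_IU_I})), so the hard term $\e_k^\top(\Theta^\top \Theta)^{-1}\C A_{I\sbt}^\top U^{(t)}_{I\sbt}\beta$ is a purely-$\W$ quadratic/bilinear expression, which Lemma \ref{lem_var_UI} bounds via Rosenthal's inequality together with the algebraic inequality (\ref{disp_square}). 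You instead keep the coordinate-level weights $c=(\Theta^+)^\top\e_k$, so after the $\Z_{ta}^2$ cancellation a bilinear $\Z$--$\W$ remainder $R_t'$ survives (its weights $c_i-\bar c_a$ vanish only on average within each group), which you bound by conditioning on $\Z$, and you treat the purely-$\W$ part $R_t''$ by Hanson--Wright instead of Rosenthal; the remaining product-of-sub-Gaussian terms are handled exactly as in the paper via $\|GH\|_{\psi_1}\le\|G\|_{\psi_2}\|H\|_{\psi_2}$. This works, and the Hanson--Wright step is arguably cleaner than the explicit Rosenthal computation, but two points need to be made precise. First, the cube of $R_t'$ carries a factor $\gamma_z^3 B_z^{3/2}$ rather than $\gamma_w^3$, so it cannot be placed in the second group of the claimed bound, whose constant is a pure power of $\gamma_w$; absorbing it into the first group, as you do, uses both $\e_k^\top(\Theta^\top \Theta)^{-1}\e_k\le [\C^{-1}]_{kk}/\lambda_K(A\C A^\top)$ and the assumption $\gamma_w/\tau=O(1)$, and you should say so explicitly. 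Second, your justification of $\e_k^\top(\Theta^\top \Theta)^{-1}\e_k\lesssim[\C^{-1}]_{kk}$ via ``Assumption \ref{ass_H_prime} forces $\lambda_K\to\infty$'' is not correct in the regime of fixed $p$, where Assumption \ref{ass_H_prime} only requires $\lambda_K\gg p\log\pn/\sqrt n\to 0$; the correct and unconditional reason is $\lambda_K(A\C A^\top)\ge \cl\,\lambda_K(A^\top A)\ge \cl\, m\gtrsim 1$ by Assumption \ref{ass_C} and $m\ge 2$ (equivalently, $(\Theta^\top\Theta)^{-1}\preceq (m\cl)^{-1}\C^{-1}$ up to constants). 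With these two repairs, your bounds land exactly in the two groups of the lemma, so the proposal is a valid alternative proof.
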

	
	\begin{lemma}\label{lem_Rem_2}
		Let ${\rm Rem}_2$ be defined in (\ref{def_Rem_2}) on the event $\E$ defined in (\ref{def_event}). Under the conditions of Theorem \ref{thm_distr}, we have
		\[
		\sqrt{n}\left|[{\rm Rem}_2]_k\right| \cdot \1_\E = O_p\left(
		\bar{\rho} \|\beta\|_2 \sqrt{\log\pn}\sqrt{[\C^{-1}]_{kk}}
		\right).
		\]
	\end{lemma}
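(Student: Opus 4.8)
The plan is to start from the decomposition $\mathrm{Rem}_2 = \C^{-1}\D_{Z,J_1} + \Theta^{+}\D_{W,J_1}$ obtained in Step~1 of the proof of Theorem~\ref{thm_distr} and bound $\e_k^{\top}\mathrm{Rem}_2$ summand by summand on the event $\E$. Two preliminary observations make the target factor $\sqrt{[\C^{-1}]_{kk}}$ appear by itself. Writing $u := \C^{-1}\e_k$, Lemma~\ref{lem_Z} gives that $u^{\top}\Z_{t\sbt}$ is $\gamma_z\sqrt{u^{\top}\C u}=\gamma_z\sqrt{[\C^{-1}]_{kk}}$--sub-Gaussian; writing $\Theta^{+}=\C^{-1/2}H^{+}$ and $w:=\C^{-1/2}\e_k$, Lemma~\ref{lem_W} gives that $w^{\top}H^{+}\W_{t\sbt}$ is $\gamma_w\sqrt{\e_k^{\top}(\Theta^{\top}\Theta)^{-1}\e_k}$--sub-Gaussian, and by (\ref{disp_ThetaTheta_kk}) with Lemma~\ref{lem_signal} one has $\e_k^{\top}(\Theta^{\top}\Theta)^{-1}\e_k\le [\C^{-1}]_{kk}/\lambda_K\le [\C^{-1}]_{kk}/(2\cl)\lesssim [\C^{-1}]_{kk}$. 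Moreover $1/B_z\le[\C^{-1}]_{kk}<1/\cl$, so any residual constant can be harmlessly absorbed into $\sqrt{[\C^{-1}]_{kk}}$.

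Next I would peel off the terms. Since ${\bI}_K-A_{\wh I\sbt}^{\top}\wh\Pi=-\D$ and, by (\ref{eq_diff_W_td_W_bar}), $\W_{\sbt I}\Pi-\W_{\sbt\wh I}\wh\Pi=-\W_{\sbt L}\wh A_{L\sbt}D_{\wh m}+\oW D_{\rho}$, the matrix $\D_{Z,J_1}$ and the $\tfrac1n\W^{\top}\Z(\cdot)$ part of $\D_{W,J_1}$ consist of bilinear forms $\tfrac1n\Z^{\top}(\cdot)$, $\tfrac1n\W^{\top}\Z(\cdot)$ applied to $\wh A_{L\sbt}D_{\wh m}\beta$, to $D_{\rho}\beta$ (each of $\ell_1$--norm $\le\bar\rho\|\beta\|_2$ by (\ref{disp_D_rho})), or to $\D\beta$ (with $\|\D\beta\|_1\lesssim\bar\rho\|\beta\|_2\delta_n$ by Lemma~\ref{lem_deltaA_I}). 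For the bracketed, diagonal--variance part of $\D_{W,J_1}$ I would use $\wt\W=\W_{\sbt\wh I}\wh\Pi$ and $\oW=\W_{\sbt I}\Pi$ to rewrite it as $-(\tfrac1n\W^{\top}\wt\W-\wh\Gamma_{\sbt\wh I}\wh\Pi)\beta+(\tfrac1n\W^{\top}\oW-\wt\Gamma_{\sbt I}\Pi)\beta$, then insert (\ref{disp_WWtd_Sigma_2})--(\ref{def_D_d}); the two ``$I$--only'' pieces $(\tfrac1n\W^{\top}\oW-\Gamma_{\sbt I}\Pi)\beta$ cancel, leaving a residual $[-(\tfrac1n\W^{\top}\W_{\sbt L}-\Gamma_{\sbt L})\wh A_{L\sbt}D_{\wh m}+(\tfrac1n\W^{\top}\oW-\Gamma_{\sbt I}\Pi)D_{\rho}-\D_d+(\Gamma_{\sbt I}-\wt\Gamma_{\sbt I})\Pi]\beta$. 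Each summand is then controlled by Lemma~\ref{lem_bernstein} with Lemmas~\ref{lem_Z} and \ref{lem_W} (multiplier vectors $u$, $w$, $\alpha:=(H^{+})^{\top}w$ as above, union bound over $J_1$ costing $\log\pn$, absorbed into $\delta_n$); these are precisely the estimates packaged in Lemma~\ref{lem_quad} (rates (\ref{rate_wz}), (\ref{rate_zz}), (\ref{rate_zz_c}), (\ref{rate_hwz}), (\ref{rate_ww})). Every contribution comes out of order $\delta_n\sqrt{[\C^{-1}]_{kk}}\,\bar\rho\|\beta\|_2$ or smaller (extra factors $\delta_n$ or $1/\sqrt m$), so after multiplying by $\sqrt n$ and using $\sqrt n\,\delta_n=\sqrt{\log\pn}$ each such term is $O_p(\bar\rho\|\beta\|_2\sqrt{\log\pn}\sqrt{[\C^{-1}]_{kk}})$ or $o_p$ of it.

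The main obstacle is the pair $-\D_d\beta+(\Gamma_{\sbt I}-\wt\Gamma_{\sbt I})\Pi\beta$: neither term is $\bar\rho$--small by itself, although their sum must be, since $\mathrm{Rem}_2=0$ when $\bar\rho=0$. I would make the cancellation explicit by computing that $[\D_d\beta]_i=(\Gamma_{ii}-\wh\Gamma_{ii})\beta_a/\wh m_a$ for $i\in\wh I_a$, so that the combined $i$th entry equals $\beta_a\bigl((\Gamma_{ii}-\wt\Gamma_{ii})\rho_a/m_a+(\wh\Gamma_{ii}-\wt\Gamma_{ii})/\wh m_a\bigr)$ for $i\in I_a$ and $-(\Gamma_{ii}-\wh\Gamma_{ii})\beta_a/\wh m_a$ for $i\in L_a$, each visibly carrying a $\rho_a$ or an ``$i\in L_a$'' factor. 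What remains is to supply, on $\E$: (i) $\max_i|\wh\Gamma_{ii}-\Gamma_{ii}|\lesssim\delta_n$ (Lemma~\ref{lem_tau}); (ii) $\max_i|\wh\Gamma_{ii}-\wt\Gamma_{ii}|\lesssim\bar\rho\,\delta_n$, which rests on the exact cancellation of the $[\C]_{aa}$ leading terms inside $\wh\Gamma_{ii}-\wt\Gamma_{ii}=\tfrac{1}{m_a(m_a-1)}\sumsum_{j\ne\ell\in I_a}\wh\Sigma_{j\ell}-[\whC]_{aa}$; and (iii) $\tfrac{1}{m_a}\bigl|\sum_{i\in I_a}(\Gamma_{ii}-\wt\Gamma_{ii})\bigr|=O_p\bigl((m_a n)^{-1/2}\bigr)$, using that the two $\tfrac1n\sum_t Z_{ta}\oW_{ta}$ contributions inside $\sum_{i\in I_a}(\wt\Gamma_{ii}-\tau_i^2)$ cancel — these are exactly the statements packaged in the auxiliary Lemma~\ref{lem_diag_gamma_hat_td}. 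Combining them with $\max_i|\e_k^{\top}\Theta^{+}\e_i|\le\|\C(\Theta^{\top}\Theta)^{-1}\e_k\|_2=\|(A^{\top}A)^{-1}\C^{-1}\e_k\|_2\lesssim\sqrt{[\C^{-1}]_{kk}}/m$ (using $\|A_{i\sbt}\|_1\le1$ and $\lambda_{\min}(A^{\top}A)\ge m$) renders these last summands $o_p(\bar\rho\|\beta\|_2\sqrt{\log\pn}\sqrt{[\C^{-1}]_{kk}})$, which completes the proof. The delicate bookkeeping is entirely in (ii)--(iii): establishing that the difference of the two estimators of $[\C]_{aa}$ and the residual $\wt\Gamma$--correction are genuinely of order $\bar\rho\,\delta_n$ rather than merely $\delta_n$, and it is there, together with Assumptions~\ref{ass_J1_prime} and \ref{ass_H_prime}, that the bound closes.
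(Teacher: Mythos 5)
Your proposal is correct and follows essentially the same route as the paper's own proof: the same splitting of $[{\rm Rem}_2]_k$ into the $\D\beta$--terms, the $\Z^\top(\wt\W-\oW)\beta$ term, and the diagonal correction (your residual $-\D_d+(\Gamma_{\sbt I}-\wt\Gamma_{\sbt I})\Pi$ is exactly the paper's matrix $\D'$), with the stochastic pieces controlled via Lemmas \ref{lem_quad} and \ref{lem_deltaA_I} and the correction term via the entrywise cancellation together with Lemma \ref{lem_tau} (event $\E_W$) and Lemma \ref{lem_diag_gamma_hat_td}. Two small touch-ups: the bound $\e_k^\top(\Theta^\top\Theta)^{-1}\e_k\lesssim[\C^{-1}]_{kk}$ should be justified by the lower bound $\lambda_K(A\C A^\top)\ge \cl\,\lambda_{\min}(A_{I\sbt}^\top A_{I\sbt})\ge 2\cl$ rather than by Lemma \ref{lem_signal} (which gives an upper bound on $\lambda_K$), and in the final assembly you must keep the group-specific rate $|J_1^a|/(|J_1^a|+m_a)\,\delta_n$ from Lemma \ref{lem_diag_gamma_hat_td} and apply Cauchy--Schwarz over $a$ (as the paper does), since your uniform restatement $\max_{i\in I_a}|\wh\Gamma_{ii}-\wt\Gamma_{ii}|\lesssim\bar\rho\,\delta_n$ paired with $\sum_a|\beta_a|$ only yields $\bar\rho\,\delta_n\|\beta\|_1$, a possible $\sqrt{K}$ loss; also, your item (iii) is not contained in Lemma \ref{lem_diag_gamma_hat_td} and is in fact not needed, since (i)--(ii) already give $\max_{i\in I_a}|\Gamma_{ii}-\wt\Gamma_{ii}|\lesssim\delta_n$ for the $\rho_a$--weighted piece.
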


	\subsection{Proof of lemmas in Section \ref{sec_lemmas_distr}}\label{sec_proof_lem_distr}
	
	\subsubsection{Proof of Lemma \ref{lem_moment}}
	Fix any $t\in [n]$ and $k\in [K]$.
	Recall that 
	\begin{eqnarray*}
		\xi_{tk} &:=&  \e_k^\top\C^{-1} \Z_{t\sbt}  \left[\eps_t -  \W_{{t\sbt}}\Pi  \beta\right]  + \e_k^\top \Theta^+\left[ \W_{t\sbt}\eps_t  +  \left(\wt \Gamma_{\sbt I}^{(t)} - \W_{t\sbt}\W_{{t I}}	\right)\Pi \beta
		\right]
	\end{eqnarray*}
	Since $\Theta = A\C$, we have
	$$
	\Theta^+ \left(\wt \Gamma_{\sbt I}^{(t)} - \W_{t\sbt} \W_{tI}	\right)\Pi\beta = (\Theta^\top \Theta)^{-1}\C A^\top \left(\wt \Gamma_{\sbt I}^{(t)} - \W_{t\sbt}\W_{tI}	\right)\Pi\beta.
	$$
	Write $U^{(t)} := (\wt \Gamma_{\sbt I}^{(t)} - \W_{t\sbt}\W_{tI}	)\Pi\in \R^{p\times K}$, and observe that  definition   (\ref{def_bar_Sigmaw_t}) yields
	\begin{alignat*}{2}
	U_{ia}^{(t)} &= \W_{ti}^\top \oW_{ta},\quad  &&\forall i\in J,\ a\in [K];\\
	U_{ib}^{(t)} &=  \W_{ti}^\top \oW_{tb},\quad &&\forall i\in I_a,\ a,b\in [K],\ b\ne a;\\
	U_{ia}^{(t)} &=  \W_{ti}^\top \oW_{ta} - {1\over m_a}\wt\Gamma_{ii}^{(t)},\qquad &&\forall i\in I_a,\ a\in [K].
	\end{alignat*}
	Lemma \ref{lem_WW}, stated immediately after this proof, gives
	\[
	{1\over m_a}\sum_{i \in I_a}U_{ia}^{(t)} = \oW_{ta}^\top \oW_{ta} - {1\over m_a^2}\sum_{i\in I_a}\wt\Gamma_{ii}^{(t)} = {1\over m_a(m_a-1)}\sum_{j\ne \ell \in I_a} \W_{tj}\W_{t\ell}.
	\]
	Hence,	we have $$	A_{J\sbt}^\top U_{J\sbt}^{(t)} = A_{J\sbt}^\top \W_{tJ}\oW_t \in \R^{K\times K}$$  and 
	\begin{align}\label{disp_A_IU_I}
	\left[A_{I\sbt}^\top U^{(t)}_{I\sbt}\right]_{ab} &= {1\over m_b}\sum_{j \in I_a, \ell \in I_b} \W_{tj}\W_{t\ell},\qquad \forall a\ne b,\ a,b\in [K],\\
	\left[A_{I\sbt}^\top U^{(t)}_{I\sbt}\right]_{aa} &= {1\over m_a-1}\sum_{j \ne \ell \in I_a} \W_{tj}\W_{t\ell},\quad \forall a\in [K].
	\end{align}
	It   follows that
	\begin{align*}
	\xi_{tk} &=
	\Omega_{k\sbt  }^\top \Z_{t\sbt} \eps_t - \Omega_{k\sbt }^\top \Z_{t\sbt} \oW_{t\sbt}^\top \beta + \e_k^\top \Theta^+ \W_{t\sbt}\eps_t \\\nonumber
	&\quad - \e_k^\top (\Theta^\top \Theta)^{-1}\C A_{J\sbt }^\top \W_{tJ}\oW_{t\sbt}^\top \beta - \e_k^\top (\Theta^\top \Theta)^{-1}\C A_{I\sbt}^\top U_{I\sbt}^{(t)}\beta.
	\end{align*}
	Since all terms in the above display have zero mean,  $\EE[\xi_{tk}] = 0$.  We use  Assumption \ref{ass_subg} to verify  that any two terms are uncorrelated, and we find $\EE[\xi_{tk}^2]$ after summing up the second moments for each individual term.  
	We have 
	\begin{align*}
	&\EE\left[\Omega_{k\sbt  }^\top \Z_{t\sbt} \eps_t \right]^2 = \sigma^2 \Omega_{kk}, \\ 
	&\EE\left[\Omega_{k\sbt }^\top \Z_{t\sbt} \oW_{t\sbt}^\top \beta \right ]^2 = \Omega_{kk} \sum_{a=1}^K {\beta_a^2 \over m_a^2}\sum_{i \in I_a} \tau_i^2 = \Omega_{kk} \sum_{a=1}^K \beta_a^2 \bt_a\\
	&\EE\left[ \e_k^\top \Theta^+W_{t\sbt}\eps_t \right]^2 = \sigma^2 \sum_{i =1}^p\tau_i^2 \left( \e_k^\top (\Theta^\top \Theta)^{-1}\Theta_{I\sbt}\right)^2\\
	&\EE\left[\e_k^\top (\Theta^\top \Theta)^{-1}\C A_{J\sbt}^\top \W_{tJ}\oW_{t\sbt}^\top  \beta \right]^2 = \sum_{j\in J}\tau_i^2 \left(\e_k^\top (\Theta^\top \Theta)^{-1}\Theta_{i\sbt}\right)^2\sum_{a=1}^K  \beta_a^2\bt_a,
	\end{align*}
	by writing $\bt_a = \sum_{i \in I_a}\tau_i^2 / m_a^2$. Finally, we have 
	\begin{align*}
	&\EE\left[ \e_k^\top (\Theta^\top \Theta)^{-1}\C A_{I\sbt}^\top U_{I\sbt}^{(t)}\beta
	\right]^2\\ &=\sum_{a=1}^K \left(\e_k^\top (\Theta^\top \Theta)^{-1}[\C]_{a\sbt}\right)^2\sum_{b=1}^K\beta_b^2\EE\left[(A_{I \sbt}^\top U_{I\sbt})_{ab}\right]^2\\
	&= \sum_{a=1}^K \left(\e_k^\top (\Theta^\top \Theta)^{-1}[\C]_{a\sbt}\right)^2\left[
	\sum_{b=1}^K{\beta_b^2 \over m_b^2}\sumsum_{i \in I_a,\ j\in I_b}\tau_i^2\tau_j^2 + \D_{\tau}^{(a)}
	\right]\\
	&= \sum_{a=1}^K\sum_{i\in I_a}\tau_i^2 \left(\e_k^\top (\Theta^\top \Theta)^{-1}[\C]_{a\sbt}\right)^2
	\sum_{b=1}^K\beta_b^2\bt_b  + \sum_{a=1}^K \left(\e_k^\top (\Theta^\top \Theta)^{-1}[\C]_{a\sbt}\right)^2 \D_{\tau}^{(a)}
	\end{align*}
	with 
	\begin{align*}
	\D_{\tau}^{(a)} &= {\beta_a^2 \over (m_a-1)^2}\sumsum_{i,j\in I_a, i\ne j}\tau_i^2\tau_j^2 - {\beta_a^2 \over m_a^2}\sum_{i, j\in I_a}\tau_i^2\tau_j^2\\
	& = \beta_a^2\sum_{i\in I_a}\t_i^2\left(
	{1\over (m_a-1)^2}\sum_{j\in I_a \setminus \{i\}}\t_j^2 - {1\over m_a^2}\sum_{j\in I_a}\t_j^2
	\right).
	\end{align*}
	Since  $A_{i\sbt} = \e_a$ for any $i\in I_a$ and   $\Theta = A\C$, we further find
	\begin{align*}
	&\EE\left[\e_k^\top (\Theta^\top \Theta)^{-1}\C A_{I\sbt}^\top U_{I\sbt}^{(t)}\beta
	\right]^2\\
	& = \sum_{i\in I}\tau_i^2 \left(\e_k^\top (\Theta^\top \Theta)^{-1}\Theta_{I\sbt}\right)^2
	\sum_{b=1}^K\beta_b^2\bt_b  + \sum_{a=1}^K {\D_{\tau}^{(a)} \over m_a} \sum_{i\in I_a}\left(\e_k^\top (\Theta^\top \Theta)^{-1}\Theta^\top \e_i\right)^2 .
	\end{align*}
	Collecting all second moments yields (\ref{def_V_k_general}) and completes the proof. \qed
	
	\bigskip

	\begin{lemma}\label{lem_WW}
		Let  $\wt \Gamma_{\sbt I}$ be as  defined in (\ref{def_bar_Sigmaw}). For any $a\in [K]$, we have
		\[
		{1\over n}\oW_{\sbt a}^\top \oW_{\sbt a} - {1\over m_a^2}\sum_{i\in I_a}\wt\Gamma_{ii} = {1\over m_a(m_a-1)}\sumsum_{j\ne \ell \in I_a}{1\over n}\W_{\sbt j}^\top \W_{\sbt\ell}.
		\]
	\end{lemma}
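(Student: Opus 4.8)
The plan is to verify the identity by direct algebraic manipulation, expanding both sides in terms of the raw noise entries $\W_{\sbt i}$ and then matching coefficients. First I would recall the definitions: by (\ref{def_bars}), $\oW_{\sbt a} = \W_{\sbt I}\Pi$ restricted to column $a$, and since $\Pi = A_{I\sbt}(A_{I\sbt}^\top A_{I\sbt})^{-1}$ with $A_{I\sbt}$ a (signed) $0/1$ matrix whose column $a$ is supported on $I_a$ with $|A_{ia}|=1$, we have simply $\oW_{\sbt a} = m_a^{-1}\sum_{i\in I_a}\W_{\sbt i}$ (absorbing signs into a relabelling, exactly as done throughout the appendix). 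Likewise, from (\ref{def_bar_Sigmaw}), $\wt\Gamma_{ii} = \wh\Sigma_{ii} - \tfrac{1}{m_a(m_a-1)}\sum_{j\ne\ell\in I_a}\wh\Sigma_{j\ell}$ for $i\in I_a$, and using $\wh\Sigma = n^{-1}\X^\top\X$ together with $\X_{\sbt j} = \Z A_{j\sbt}^\top + \W_{\sbt j}$ and $A_{j\sbt} = \e_a$ for $j\in I_a$, one sees $\wh\Sigma_{j\ell} = n^{-1}(\Z_{\sbt a} + \W_{\sbt j})^\top(\Z_{\sbt a}+\W_{\sbt\ell})$ for $j,\ell\in I_a$.

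Next I would compute the left-hand side. Expanding $\tfrac1n\oW_{\sbt a}^\top\oW_{\sbt a} = \tfrac{1}{m_a^2 n}\sum_{j,\ell\in I_a}\W_{\sbt j}^\top\W_{\sbt\ell}$, which splits into the diagonal part $\tfrac{1}{m_a^2 n}\sum_{j\in I_a}\W_{\sbt j}^\top\W_{\sbt j}$ and the off-diagonal part $\tfrac{1}{m_a^2 n}\sum_{j\ne\ell\in I_a}\W_{\sbt j}^\top\W_{\sbt\ell}$. For the second term on the LHS, I would show that $\tfrac{1}{m_a^2}\sum_{i\in I_a}\wt\Gamma_{ii}$ equals $\tfrac{1}{m_a^2}\sum_{i\in I_a}\wh\Sigma_{ii} - \tfrac{1}{m_a^2(m_a-1)}\sum_{j\ne\ell\in I_a}\wh\Sigma_{j\ell}$; here the crucial cancellation is that the $\Z_{\sbt a}$-contributions to $\wh\Sigma_{ii}$ and to $\wh\Sigma_{j\ell}$ (namely $n^{-1}\Z_{\sbt a}^\top\Z_{\sbt a}$) appear with matching weights $\tfrac{1}{m_a^2}\cdot m_a$ versus $\tfrac{1}{m_a^2(m_a-1)}\cdot m_a(m_a-1)$, hence cancel, as do the cross terms $n^{-1}\Z_{\sbt a}^\top\W_{\sbt j}$ (each appearing $m_a$ times with appropriate signs). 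What survives is purely in the $\W$'s. Substituting $\wh\Sigma_{ii} = n^{-1}\W_{\sbt i}^\top\W_{\sbt i} + (\text{$\Z$-terms})$ and $\wh\Sigma_{j\ell} = n^{-1}\W_{\sbt j}^\top\W_{\sbt\ell} + (\text{$\Z$-terms})$, after the cancellation the LHS becomes $\tfrac{1}{m_a^2 n}\sum_{j\ne\ell\in I_a}\W_{\sbt j}^\top\W_{\sbt\ell} + \tfrac{1}{m_a^2(m_a-1)n}\sum_{j\ne\ell\in I_a}\W_{\sbt j}^\top\W_{\sbt\ell}$, where the diagonal $\W$-terms have also cancelled between $\tfrac1n\oW_{\sbt a}^\top\oW_{\sbt a}$ and $\tfrac{1}{m_a^2}\sum_i\wt\Gamma_{ii}$.

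Finally I would combine the coefficients: $\tfrac{1}{m_a^2} + \tfrac{1}{m_a^2(m_a-1)} = \tfrac{(m_a-1)+1}{m_a^2(m_a-1)} = \tfrac{1}{m_a(m_a-1)}$, which yields exactly $\tfrac{1}{m_a(m_a-1)}\sum_{j\ne\ell\in I_a}\tfrac1n\W_{\sbt j}^\top\W_{\sbt\ell}$, the right-hand side. The argument is essentially bookkeeping, but the one place that requires genuine care — and which I expect to be the main obstacle — is making fully explicit that \emph{all} terms involving $\Z_{\sbt a}$ (both the quadratic $\Z_{\sbt a}^\top\Z_{\sbt a}$ and the bilinear $\Z_{\sbt a}^\top\W_{\sbt j}$ pieces) cancel exactly between $\tfrac1n\oW_{\sbt a}^\top\oW_{\sbt a}$ and $\tfrac{1}{m_a^2}\sum_{i\in I_a}\wt\Gamma_{ii}$; this is what makes the identity hold pointwise (not just in expectation), and keeping track of the signs coming from the general $A_{I\sbt}$ (before the identity-permutation reduction) is the only subtlety. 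Note that $\wt\Gamma^{(t)}$ and its per-observation analogue satisfy the same identity coordinatewise in $t$, which is the version actually invoked in the proof of Lemma \ref{lem_moment}; the proof is identical with $n=1$.
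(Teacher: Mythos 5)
Your proposal is correct and follows essentially the same route as the paper's proof: expand $\X_{\sbt i}=\Z_{\sbt a}+\W_{\sbt i}$ for $i\in I_a$, observe that the $\Z_{\sbt a}^\top\Z_{\sbt a}$ and $\Z_{\sbt a}^\top\W_{\sbt j}$ contributions cancel exactly, and combine the surviving off-diagonal $\W$-coefficients via $\tfrac{1}{m_a^2}+\tfrac{1}{m_a^2(m_a-1)}=\tfrac{1}{m_a(m_a-1)}$. The only nitpick is the parenthetical count of the cross terms: each $\Z_{\sbt a}^\top\W_{\sbt j}$ appears twice (weight $\tfrac{1}{m_a^2}$) in the diagonal sum and $2(m_a-1)$ times (weight $\tfrac{1}{m_a^2(m_a-1)}$) in the subtracted double sum, not ``$m_a$ times,'' though the cancellation you assert does hold.
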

	\begin{proof}[Proof of Lemma \ref{lem_WW}]
		Fix any $a\in [K]$. Definition (\ref{def_bar_Sigmaw}) yields
		\begin{align*}\nonumber
		&{1\over m_a^2}\sum_{i\in I_a}\wt\Gamma_{ii} - 	{1\over n}\oW_{\sbt a}^\top \oW_{\sbt a}\\\nonumber 
		&= {1\over m_a^2}\sum_{i\in I_a}{1\over n}\X_{\sbt i}^\top \X_{\sbt i} - {1\over m_a^2(m_a-1)}\sumsum_{i,j\in I_a, i\ne j}{1\over n}\X_{\sbt i}^\top \X_{\sbt j} - {1\over m_a^2}\sum_{i,j\in I_a}{1\over n}\W_{\sbt i}^\top \W_{\sbt j}\\\nonumber
		&=  {1\over m_a^2}\sum_{i\in I_a}{1\over n}\left(\Z_{\sbt a}^\top \Z_{\sbt a} + 2\Z_{\sbt a}^\top \W_{\sbt i}  + \W_{\sbt i}^\top \W_{\sbt j} \right)- {1\over m_a^2}\sum_{i\in I_a}{1\over n}\W_{\sbt i}^\top \W_{\sbt i} -  {1\over m_a^2}\sumsum_{i,j\in I_a, i\ne j}{1\over n}\W_{\sbt i}^\top \W_{\sbt j}\\\nonumber
		&\quad  -{1\over m_a^2(m_a-1)}\sumsum_{i,j\in I_a, i\ne j}{1\over n}\left(\Z_{\sbt a}^\top \Z_{\sbt a} + \Z_{\sbt a}^\top \W_{\sbt j} + \Z_{\sbt a}^\top \W_{\sbt i } + \W_{\sbt i }^\top \W_{\sbt j} \right)\\\nonumber
		&=- {1\over m_a(m_a-1)}\sumsum_{i,j\in I_a, i\ne j}{1\over n}\W_{\sbt i}^\top \W_{\sbt j},
		\end{align*}
		as desired.
	\end{proof}

	\subsubsection{Proof of Lemma \ref{lem_third_moment}}

	By using the inequality $|x+y|^3 \le 4(|x|^3 + |y|^3)$ and (\ref{def_B_tk}), we obtain 
	\begin{align*}	
	\EE\left[|\xi_{tk}|^3\right]&\le 4 
	\EE\left[\left|\e_k^\top (\Theta^\top \Theta)^{-1}\C A_{I \sbt}^\top U_{I\sbt}^{(t)}\beta \right|^3\right] +
	4\EE\left[\left|\e_k^\top (\Theta^\top \Theta)^{-1}\C A_{J \sbt}^\top \W_{tJ}\oW_{t\sbt}^\top \beta\right|^3 \right]\\
	&\qquad \quad + 4\EE\left[\left|\e_k^\top \Theta^+  \W_{t\sbt}\eps_t\right|^3 \right] + 4\EE\left[\left|\Omega_{k\sbt }^\top \Z_{t\cdot} (\eps_t-\langle \oW_{t\sbt}, \beta\rangle\right|^3 \right]
	.
	\end{align*}
	The upper bound of the first term is established in Lemma \ref{lem_var_UI}, stated immediately after this proof. For the second term,  by using  the inequality $\|XY\|_{\psi_1} \le \|X\|_{\psi_2}\|Y\|_{\psi_2}$\footnote{For any random variable $X$, we write $\|X\|_{\psi_1} = \sup_{p\ge 1}p^{-1}(\EE[|X|^p])^{1/p}$ and $\|X\|_{\psi_2} = \sup_{p\ge 1}p^{-1/2}(\EE[|X|^p])^{1/p}$.} for any two sub-Gaussian random variables $X$ and $Y$, we find
	\[
	\left\|\e_k^\top (\Theta^\top \Theta)^{-1}\C A_{J \sbt}^\top \W_{tJ}\oW_{t\sbt}^\top \beta\right\|_{\psi_1}\le
	\left\|\e_k^\top (\Theta^\top \Theta)^{-1}\C A_{J \sbt}^\top \W_{tJ}\right\|_{\psi_2}\left\|\oW_{t\sbt}^\top \beta\right\|_{\psi_2}.
	\]
	Lemma \ref{lem_W} implies  that $\oW_{t\sbt}^\top \beta$ is $\gamma_w\|\beta\|/\sqrt{m}$-sub-Gaussian, hence $\|\oW_{t\sbt}^\top \beta\|_{\psi_2} \le c\w\|\beta\|/\sqrt{m}$ for some constant $c>0$. Similarly, 
	\[
	\left\|\e_k^\top (\Theta^\top \Theta)^{-1}\C A_{J \sbt}^\top \W_{tJ}\right\|_{\psi_2} \le c\gamma_w\sqrt{\e_k^\top (\Theta^\top \Theta)^{-1}\Theta_{J \sbt}^\top \Theta_{J \sbt}(\Theta^\top \Theta)^{-1}\e_k} \le c\gamma_w \sqrt{\e_k^\top (\Theta^\top \Theta)^{-1}\e_k}
	\]
	and
	\[
	\left\|\e_k^\top (\Theta^\top \Theta)^{-1}\C A_{J \sbt}^\top \W_{tJ}\oW_{t\sbt}^\top \beta\right\|_{\psi_1} \le c \gamma_w^2 {\|\beta\|\over \sqrt m}\sqrt{\e_k^\top (\Theta^\top \Theta)^{-1}\e_k}.
	\]
	This inequality and the definition of $\|\cdot\|_{\psi_1}$ imply that
	\begin{equation}\label{eq_T1}
	\EE\left[\left|\e_k^\top (\Theta^\top \Theta)^{-1}\C A_{J \sbt}^\top W_{tJ}\oW_{t\sbt}^\top \beta\right| ^3\right] \le c\gamma_w^6\left(\|\beta\|^2 \over m\right)^{3/2}\left(
	\e_k^\top (\Theta^\top \Theta)^{-1}\e_k
	\right)^{3/2}.
	\end{equation}
	Similarly,   observe that 
	$\e_k^\top \Theta^+W_{t\sbt}$ is $\gamma_w\sqrt{e_k^\top (\Theta^\top \Theta)^{-1}e_k}$-sub-Gaussian by Lemma \ref{lem_W}, and deduce
	\[ \left\|\e_k^\top \Theta^+ \W_{t\sbt}\eps_t \right\|_{\psi_1}  \le
	\left\|\e_k^\top \Theta^+ \W_{t\sbt}\right\|_{\psi_2}\|\eps_t\|_{\psi_2} \le c\gamma_{\eps}\gamma_w\left(\e_k^\top (\Theta^\top \Theta)^{-1}\e_k\right)^{3/2}
	\]
	so that
	\begin{equation}\label{eq_T2}
	\EE\left[\left|\e_k^\top \Theta^+\W_{t\sbt}\eps_t\right|^3\right] \le c\gamma_{\eps}^3\gamma_w^3\left(\e_k^\top (\Theta^\top \Theta)^{-1}\e_k\right)^{3/2}.
	\end{equation}
	Finally, we bound the third term. The independence of $Z$, $W$ and $\eps$ guarantees
	\begin{equation*}
	\EE\left[\left|\Omega_{k\sbt }^\top \Z_{t\sbt}(\eps_t-\langle \oW_{t\sbt}, \beta\rangle\right|^3\right] \le \EE\left[\left|\Omega_{k\sbt }^\top \Z_{t\sbt}\right|^3\right]\EE\left[\left|\eps_t-\langle \oW_{t\sbt}, \beta\rangle\right|^3\right].
	\end{equation*}
	Observe that $\|\eps_t - \langle \oW_{t\sbt}, \beta\rangle\|_{\psi_2}\le c(\gamma_{\eps} + \|\beta\|\w/\sqrt{m})$ and part (1) of Lemma \ref{lem_Z} imply that
	$\langle \Omega_{k\sbt }, \Z_{t\sbt}\rangle$ is $(\z \sqrt{\Omega_{kk}})$-sub-Gaussian.  The definition of the Orlicz $\psi_2$ norm   implies 
	\begin{equation}\label{eq_T3}
	\EE\left[\left|\Omega_{k\sbt }^\top \Z_{t\sbt}\right|^3\right] \EE\left[\left|\eps_t-\langle \oW_{t\sbt}, \beta\rangle\right|^3\right] \le c \left(\gamma_{\eps} +  {\t \|\beta\| \over \sqrt m} \right)^3\Omega_{kk}^{3/2}.
	\end{equation}
	Collecting (\ref{eq_T1}) -- (\ref{eq_T3}) and invoking Lemma \ref{lem_var_UI} concludes the proof. \qed
	
	\bigskip

	\begin{lemma}\label{lem_var_UI}
		Under conditions of Theorem \ref{thm_distr}, we have
		\[
		\EE\left[\left| \e_k^\top (\Theta^\top \Theta)^{-1}\C A_{I \sbt}^\top U_{I\sbt}^{(t)}\beta \right|^3\right] \lesssim \gamma_w^6 \left(
		\|\beta\|^2_2 \over m
		\right)^{3/2} \left(
		\e_k^\top (\Theta^\top \Theta)^{-1}\e_k
		\right)^{3/2}
		\]
		for any $1\le t\le n$.
	\end{lemma}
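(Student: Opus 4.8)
The plan is to use the explicit entries of $A_{I\sbt}^\top U_{I\sbt}^{(t)}$ recorded in the main proof (display (\ref{disp_A_IU_I}): $[A_{I\sbt}^\top U_{I\sbt}^{(t)}]_{aa}=(m_a-1)^{-1}\sum_{j\ne\ell\in I_a}\W_{tj}\W_{t\ell}$, and $[A_{I\sbt}^\top U_{I\sbt}^{(t)}]_{ab}=m_b^{-1}\sum_{j\in I_a,\ell\in I_b}\W_{tj}\W_{t\ell}=m_a\,\oW_{ta}\oW_{tb}$ for $a\ne b$) to rewrite $T:=\e_k^\top(\Theta^\top\Theta)^{-1}\C A_{I\sbt}^\top U_{I\sbt}^{(t)}\beta$ as a degree-two polynomial in the independent sub-Gaussian entries of $\W_{tI}$, and then to bound its $\psi_1$-Orlicz norm. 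Setting $c:=\C(\Theta^\top\Theta)^{-1}\e_k$, and writing $L_1:=\sum_a c_a m_a\oW_{ta}=\sum_{j\in I}c_{a(j)}\W_{tj}$ (where $a(j)$ denotes the index of the group containing $j$) and $L_2:=\sum_b\beta_b\oW_{tb}$, so that $\sum_{a,b}c_a m_a\oW_{ta}\beta_b\oW_{tb}=L_1L_2$, one obtains, after subtracting the $a=b$ terms from $L_1L_2$ and folding them into the diagonal part,
\[
T \;=\; L_1 L_2 \;+\; \sum_{a=1}^K\frac{c_a\beta_a}{m_a-1}\Bigl(m_a\,\oW_{ta}^2-\sum_{j\in I_a}\W_{tj}^2\Bigr).
\]

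I would then bound the two pieces via the submultiplicativity $\|XY\|_{\psi_1}\le\|X\|_{\psi_2}\|Y\|_{\psi_2}$ already used in Lemma \ref{lem_third_moment} and the triangle inequality for $\|\cdot\|_{\psi_1}$. By Lemma \ref{lem_W}, $L_2$ is a linear functional of $\W_{tI}$ with coefficient vector $\Pi\beta$, hence $\gamma_w\|\Pi\beta\|_2$-sub-Gaussian with $\|\Pi\beta\|_2^2=\beta^\top(A_{I\sbt}^\top A_{I\sbt})^{-1}\beta\le\|\beta\|_2^2/m$; and $L_1$ is $\gamma_w$-sub-Gaussian with scale $(\sum_a m_a c_a^2)^{1/2}=\sqrt m\,\|c\|_2$, so $\|L_1L_2\|_{\psi_1}\lesssim\gamma_w^2\|c\|_2\|\beta\|_2$. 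For the within-group terms, $\|m_a\oW_{ta}^2\|_{\psi_1}=m_a\|\oW_{ta}\|_{\psi_2}^2\lesssim\gamma_w^2$ and $\|\sum_{j\in I_a}\W_{tj}^2\|_{\psi_1}\le\sum_{j\in I_a}\|\W_{tj}\|_{\psi_2}^2\lesssim\gamma_w^2 m_a$, so each bracket has $\psi_1$-norm $\lesssim\gamma_w^2 m_a$, and since $m_a\ge 2$ by $(A1)$ we get $\sum_a\frac{|c_a||\beta_a|}{m_a-1}\gamma_w^2 m_a\lesssim\gamma_w^2\sum_a|c_a||\beta_a|\le\gamma_w^2\|c\|_2\|\beta\|_2$. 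Altogether $\|T\|_{\psi_1}\lesssim\gamma_w^2\|c\|_2\|\beta\|_2$.

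It remains to replace $\|c\|_2$ by the target quantity. Using $\Theta=A\C$ we have $\Theta^\top\Theta=\C A^\top A\C$, so $(\Theta^\top\Theta)^{-1/2}\C^2(\Theta^\top\Theta)^{-1/2}$ is similar to $\C^2(\Theta^\top\Theta)^{-1}=\C(A^\top A)^{-1}\C^{-1}$, hence to $(A^\top A)^{-1}$; since $\lambda_{\min}(A^\top A)\ge\lambda_{\min}(A_{I\sbt}^\top A_{I\sbt})=m$, this gives $(\Theta^\top\Theta)^{-1}\C^2(\Theta^\top\Theta)^{-1}\preceq m^{-1}(\Theta^\top\Theta)^{-1}$ and therefore $\|c\|_2^2=\e_k^\top(\Theta^\top\Theta)^{-1}\C^2(\Theta^\top\Theta)^{-1}\e_k\le m^{-1}\e_k^\top(\Theta^\top\Theta)^{-1}\e_k$. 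Thus $\|T\|_{\psi_1}\lesssim\gamma_w^2(\|\beta\|_2/\sqrt m)\bigl(\e_k^\top(\Theta^\top\Theta)^{-1}\e_k\bigr)^{1/2}$, and the claim follows from $\EE[|T|^3]\le 3^3\|T\|_{\psi_1}^3$. The main obstacle is the first step: correctly reassembling $T$ from the entries of $A_{I\sbt}^\top U_{I\sbt}^{(t)}$ into the ``product of two sub-Gaussian linear forms plus controllable within-group remainders'' form, so that no term requires a Hanson--Wright-type argument; everything afterwards is routine Orlicz-norm bookkeeping together with the single eigenvalue inequality $\lambda_{\min}(A^\top A)\ge m$.
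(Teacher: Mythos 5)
Your proposal is correct, and its skeleton coincides with the paper's: your identity $T=L_1L_2+\sum_a\frac{c_a\beta_a}{m_a-1}\bigl(m_a\oW_{ta}^2-\sum_{j\in I_a}\W_{tj}^2\bigr)$ is exactly the split the paper reaches in display (\ref{disp_AIUI}) (cross bilinear form $\e_k^\top(\Theta^\top\Theta)^{-1}\Theta_{I\sbt}^\top\W_{tI}\,\beta^\top\oW_{t\sbt}=L_1L_2$ plus per-group quadratic corrections; your coefficient $1/(m_a-1)$ on $\sum_{j\in I_a}\W_{tj}^2$ is in fact the correct one, the paper's $(m_a-2)/(m_a(m_a-1))$ being a harmless algebra slip), and the cross term is treated identically via $\|XY\|_{\psi_1}\le\|X\|_{\psi_2}\|Y\|_{\psi_2}$. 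Where you genuinely diverge is in the quadratic pieces: the paper controls their third moments with Rosenthal's inequality applied to the independent summands $\oW_{ta}^2$ and $\W_{tj}^2$, followed by the coordinatewise Cauchy--Schwarz step (\ref{disp_square}), whereas you simply use the triangle inequality for $\|\cdot\|_{\psi_1}$ together with $\|X^2\|_{\psi_1}\asymp\|X\|_{\psi_2}^2$ and then $\EE|T|^3\le 27\|T\|_{\psi_1}^3$. This is legitimate here because the stated bound does not require exploiting any cancellation from centering the squares: the uncentered term $\sum_a\frac{|c_a\beta_a|}{m_a-1}\sum_{j\in I_a}\W_{tj}^2$ already has $\psi_1$-norm $\lesssim\gamma_w^2\sum_a|c_a||\beta_a|\le\gamma_w^2\|c\|_2\|\beta\|_2$, which, via $\lambda_{\min}(A^\top A)\ge m$, is of the target order; so your route is a mildly more elementary proof of the same estimate, at the cost of being lossier than Rosenthal (which would matter if one wanted a sharper constant or a centered-order bound, but not for Lyapunov's condition). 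Two small bookkeeping points: the identity $(\sum_a m_ac_a^2)^{1/2}=\sqrt m\,\|c\|_2$ and the bound $\|\oW_{ta}\|_{\psi_2}\lesssim\gamma_w/\sqrt{m_a}$ use the equal-size convention $|I_a|=m$ that Section \ref{sec_inference_beta} imposes for Theorem \ref{thm_distr}, so they are fine as stated; for the general unequal-$m_a$ case you should instead bound $\sum_a m_ac_a^2=\|\Theta_{I\sbt}(\Theta^\top\Theta)^{-1}\e_k\|_2^2\le\e_k^\top(\Theta^\top\Theta)^{-1}\e_k$, which is precisely the paper's (\ref{disp_square}) and leaves your final bound unchanged.
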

	
	\begin{proof}
		From (\ref{disp_A_IU_I}), we have  
		\begin{align*}
		&\e_k^\top (\Theta^\top \Theta)^{-1}\C A_{I \sbt}^\top U_{I\sbt}^{(t)}\beta\\ 
		&= \sumsum_{a,b} \e_k^\top (\Theta^\top \Theta)^{-1}[\C]_{a\sbt} \beta_b[A_{I \sbt}^\top U_{I\sbt}^{(t)}]_{ab}\\
		&=  \sum_{a=1}^K \e_k^\top (\Theta^\top \Theta)^{-1}[\C]_{a\sbt} \left( {\beta_a\over m_a-1}\sumsum_{j \ne \ell \in I_a} \W_{tj}\W_{t\ell } +  \sumsum_{b\ne a} {\beta_b\over m_b}\sumsum_{j \in I_a,  \ell \in I_b} \W_{tj}\W_{t\ell }\right)\\
		& =  \sum_{a=1}^K \e_k^\top (\Theta^\top \Theta)^{-1}[\C]_{a\sbt} \left( {\beta_a\over m_a-1}\sumsum_{j \ne \ell \in I_a} \W_{tj}\W_{t\ell } - {\beta_a\over m_a}\sumsum_{j,\ell \in I_a} \W_{tj}\W_{t\ell}+  \sum_{b=1}^K \beta_b\sum_{j \in I_a} \W_{tj}\oW_{tb}\right)\\
		& =\sum_{a=1}^K \e_k^\top (\Theta^\top \Theta)^{-1}[\C]_{a\sbt} \beta_a\left( {1\over m_a(m_a-1)}\sumsum_{j \ne \ell \in I_a} \W_{tj}\W_{t\ell } - {1\over m_a}\sum_{j\in I_a} \W_{tj}^2\right)\\
		&\quad + \sum_{a=1}^K \e_k^\top (\Theta^\top \Theta)^{-1}[\C]_{a\sbt}\sum_{j \in I_a} \W_{tj} \beta^\top \oW_{t\cdot }.
		\end{align*}
		By using 
		\[
		\sum_{a=1}^K \e_k^\top (\Theta^\top \Theta)^{-1}[\C]_{a\sbt}\sum_{j \in I_a} \W_{tj}  = \e_k^\top (\Theta^\top \Theta)^{-1}\C A_{I\sbt}^\top \W_{tI} = \e_k^\top (\Theta^\top \Theta)^{-1}\Theta_{I\sbt}^\top \W_{tI},
		\]
		after a bit algebra, we obtain
		\begin{align*}
		&\e_k^\top (\Theta^\top \Theta)^{-1}\C A_{I \sbt}^\top U_{I\sbt}^{(t)}\beta\\ 
		&=\sum_{a=1}^K \e_k^\top (\Theta^\top \Theta)^{-1}[\C]_{a\sbt} \beta_a\left( {1\over m_a(m_a-1)}\sum_{j, \ell \in I_a} \W_{tj}\W_{t\ell } - {m_a-2\over m_a(m_a-1)}\sum_{j\in I_a} \W_{tj}^2\right)\\
		&\quad +  \e_k^\top (\Theta^\top \Theta)^{-1}\Theta_{I\sbt}^\top \W_{tI} \beta^\top \oW_{t\cdot }
		\end{align*}
		such that
		\begin{align}\label{disp_AIUI}
		&\left|\e_k^\top (\Theta^\top \Theta)^{-1}\C A_{I \sbt}^\top U_{I\sbt}^{(t)}\beta\right|\\\nonumber
		&\le \left|\sum_{a=1}^K \e_k^\top (\Theta^\top \Theta)^{-1}[\C]_{a\sbt} \beta_a {m_a\over m_a-1}\oW_{ta}^2\right| + \left|\sum_{a=1}^K \e_k^\top (\Theta^\top \Theta)^{-1}[\C]_{a\sbt} \beta_a {m_a-2\over m_a(m_a-1)}\sum_{j\in I_a} \W_{tj}^2\right|\\\nonumber
		&\quad +  |\e_k^\top (\Theta^\top \Theta)^{-1}\Theta_{I\sbt}^\top \W_{tI} \beta^\top \oW_{t\sbt }|.
		\end{align}
		We now   bound the third moment of each term on the right. For the last term,  using the inequality $\|XY\|_{\psi_1} \le \|X\|_{\psi_2} \|Y\|_{\psi_2}$ yields
		\[
		\left\|\e_k^\top (\Theta^\top \Theta)^{-1}\Theta_{I\sbt}^\top \W_{tI}\beta^\top \oW_{t\cdot }\right\|_{\psi_1} \le \left\|\e_k^\top (\Theta^\top \Theta)^{-1}\Theta_{I\sbt}^\top \W_{tI}\right\|_{\psi_2}\left\|\beta^\top \oW_{t\cdot }\right\|_{\psi_2}.
		\]
		Recall that, by Lemma \ref{lem_W}, 
		\[
		\left\|\beta^\top \oW_{t\cdot }\right\|_{\psi_2} \le c{\gamma_w\|\beta\|_2 \over \sqrt{m}},\quad \left\|\e_k^\top (\Theta^\top \Theta)^{-1}\Theta_{I\sbt}^\top \W_{tI}\right\|_{\psi_2}\le c\gamma_w\sqrt{\e_k^\top (\Theta^\top \Theta)^{-1}\e_k} 
		\]
		after using $\e_k^\top (\Theta^\top \Theta)^{-1}\Theta_{I \sbt}^\top \Theta_{I \sbt}(\Theta^\top \Theta)^{-1}\e_k \le \e_k^\top (\Theta^\top \Theta)^{-1}\e_k$. The definition of the Orlicz $\psi_1$-norm gives 
		\begin{equation}\label{eq_T0_1}
		\EE\left[\left|\e_k^\top (\Theta^\top \Theta)^{-1}\Theta_{I\sbt}^\top \W_{tI}\beta^\top \oW_{t\cdot }\right|^3\right] \le c \gamma_w^6 \left(
		\|\beta\|_2^2 \over m
		\right)^{3/2} \left(
		\e_k^\top (\Theta^\top \Theta)^{-1}\e_k.
		\right)^{3/2}
		\end{equation}
		It remains to study the first two terms in (\ref{disp_AIUI}).
		By using the independence between $\oW_{ta}$, $a\in [K]$ and $\W_{tj}$,  $j\in I$, and by further writing, for each $a\in [K]$,
		\begin{align*}
		\alpha_a &= \e_k^\top (\Theta^\top \Theta)^{-1}[\C]_{a\sbt} \beta_a {m_a\over m_a-1},\\
		\wt \alpha_a & = \e_k^\top (\Theta^\top \Theta)^{-1}[\C]_{a\sbt} \beta_a {m_a-2\over m_a(m_a-1)},
		\end{align*}
		two applications of Rosenthal's inequality \citep{rosenthal} give
		\begin{align*}
		&\EE\left[
		\left|\sum_{a=1}^K\alpha_a \oW_{ta}^2\right|^3
		\right] \le c'\left\{
		\sum_{a=1}^K |\alpha_a|^3 \EE[\oW_{ta}^6] + \left(
		\sum_{a = 1}^K \alpha_a^2 \EE[\oW_{ta}^4]
		\right)^{3/2}
		\right\},\\
		& \EE\left[
		\left|\sum_{a=1}^K \sum_{j \in I_a} \wt\alpha_a \W_{tj}^2\right|^3
		\right] \le c'\left\{
		\sum_{a = 1}^K\sum_{j \in I_a}|\wt\alpha_a|^3 \EE[\W_{tj}^6] + \left(
		\sum_{a = 1}^K\sum_{j \in I_a}\wt\alpha_a^2 \EE[\W_{tj}^4] 
		\right)^{3/2}
		\right\}
		\end{align*}
		for some absolute constant $c'>0$. 
		The definition of the Orlicz $\psi_2$-norm and the inequalities  $\|\oW_{ta}\|_{\psi_2} \le c\gamma_w / \sqrt{m}$ and $\| \W_{tj}\|_{\psi_2}\le c\gamma_w$ from Lemma \ref{lem_W} imply 
		\[
		\EE[\oW_{ta}^6] \le c\gamma_w^6 / m^3,\quad \EE[\oW_{ta}^4] \le c\gamma_w^4 / m^2, \quad \EE[ \W_{tj}^6]\le c\gamma_w^6,\quad \EE[ \W_{tj}^4]\le c\gamma_w^4.
		\]
		Furthermore, we have
		\begin{align}\label{disp_square}\nonumber
		\sum_{a=1}^K \left(\e_k^\top (\Theta^\top \Theta)^{-1}[\C]_{a\sbt}\right)^2 &\le  \sum_{a=1}^K \left(\e_k^\top (\Theta^\top \Theta)^{-1}[\C]_{a\sbt}m_a [\C]_{a\sbt}^\top (\Theta^\top \Theta)^{-1}\e_k\right)^2\\\nonumber
		&=\e_k^\top (\Theta^\top \Theta)^{-1}\C A_{I \sbt}^\top A_{I \sbt} \C(\Theta^\top \Theta)^{-1}
		\e_k \\\nonumber
		& = \e_k^\top (\Theta^\top \Theta)^{-1}\Theta_{I \sbt}^\top \Theta_{I \sbt} (\Theta^\top \Theta)^{-1}\e_k\\ 
		&\le \e_k^\top (\Theta^\top \Theta)^{-1}\e_k,
		\end{align}
		and  we obtain 
		\begin{align*}
		\sum_{a=1}^K |\alpha_a|^3 \EE[\oW_{ta}^6]  & \le
		c{\gamma_w^6\over m^3} \left(\sum_{a = 1}^K |\alpha_a|\right)^3 \\
		&\le  	c{\gamma_w^6\over m^3} \left(\sum_{a = 1}^K |\beta_a|\cdot  \left|\e_k^\top (\Theta^\top \Theta)^{-1}[\C]_{a\sbt}\right|\right)^3\\
		&\le c{\gamma_w^6} \left(
		\|\beta\|_2^2 \over m
		\right)^{3/2}\left(
		\e_k^\top (\Theta^\top \Theta)^{-1}\e_k
		\right)^{3/2}
		\end{align*}
		where we use $m\ge 2$, the Cauchy-Schwarz inequality and (\ref{disp_square}) in the last line.  Moreover, by the same reasoning, we find 
		\begin{align*}
		\sum_{a = 1}^K \alpha_a^2 \EE[\oW_{ta}^4] &\le c{\gamma_w^4 \over m^2}\sum_{a = 1}^K \beta_a^2 \left(\e_k^\top (\Theta^\top \Theta)^{-1}[\C]_{a\sbt}\right)^2 \le c\gamma_w^4 {\|\beta\|_2^2 \over m}\e_k^\top [\Theta^\top \Theta]^{-1}\e_k.
		\end{align*}
		Combine the previous displays to conclude that 
		\begin{equation}\label{eq_T0_2}
		\EE\left[
		\left|\sum_{a=1}^K\alpha_a \oW_{ta}^2\right|^3
		\right]  \le c'c \gamma_w^6 \left(
		\|\beta\|_2^2 \over m
		\right)^{3/2} \left(
		\e_k^\top (\Theta^\top \Theta)^{-1}\e_k
		\right)^{3/2}.
		\end{equation}
		By similar arguments, it is easy to show that
		\begin{align*}
		\sum_{a = 1}^K\sum_{j \in I_a}|\wt\alpha_a|^3 \EE[\W_{tj}^6]   & \le
		c{\gamma_w^6} \left(\sum_{a = 1}^K|\e_k^\top (\Theta^\top \Theta)^{-1}[\C]_{a\sbt}\beta_a|\right)^3 \\
		&\le  	c{\gamma_w^6} \left(\sum_{a = 1}^K {|\beta_a| \over \sqrt m_a}\cdot  |\e_k^\top (\Theta^\top \Theta)^{-1}[\C]_{a\sbt}\sqrt{m_a}|\right)^3\\
		&\le c{\gamma_w^6} \left(
		\|\beta\|_2^2 \over m
		\right)^{3/2}\left(
		\e_k^\top (\Theta^\top \Theta)^{-1}\e_k
		\right)^{3/2}
		\end{align*}
		and 
		\begin{align*}
		\sum_{a = 1}^K\sum_{j \in I_a}\wt\alpha_a^2 \EE[\W_{tj}^4]  &\le c\gamma_w^4\sum_{a = 1}^K \beta_a^2 \left(\e_k^\top (\Theta^\top \Theta)^{-1}[\C]_{a\sbt}\right)^2 \le c\gamma_w^4 {\|\beta\|_2^2 \over m}\e_k^\top (\Theta^\top \Theta)^{-1}\e_k.
		\end{align*}
		Consequently,
		\begin{equation}\label{eq_T0_3}
		\EE\left[
		\left|\sum_{a=1}^K \sum_{j \in I_a} \wt\alpha_a \W_{tj}^2\right|^3
		\right]  \le c\gamma_w^6 \left(
		\|\beta\|_2^2 \over m
		\right)^{3/2} \left(
		\e_k^\top (\Theta^\top \Theta)^{-1}\e_k
		\right)^{3/2}.
		\end{equation}
		Combination of the bounds (\ref{eq_T0_1}), (\ref{eq_T0_2}) and (\ref{eq_T0_3}) concludes the proof. 
	\end{proof}

	\subsubsection{Proof of Lemma \ref{lem_Rem_2}}
	We work on the event $\E$ defined in (\ref{def_event}) that has probability at least $1-\pn^{-c}$ for some $c>0$. Recall that it implies $\wh K = K$ and $I_k \subseteq \wh I_k \subseteq I_k \cup J_1^k$ for all $k\in [K]$. 
	
	Fix any $k\in K$. Definition  (\ref{def_Rem_2}) yields
	\begin{align*}
	\left|[{\rm Rem}_2]_k \right| &\le {1\over n}\left|\e_k^\top \Omega \Z^\top \Z\D \beta\right|+	{1\over n}\left|\e_k^\top \Theta^+
	\W^\top \Z\D \beta\right|+ {1\over n}\left|\e_k^\top \Omega \Z^\top (\wt \W-\oW)\beta\right| \\\nonumber
	& \quad + \left|
	\e_k^\top \Theta^+\left(
	\wh\Gamma_{{\sbt \wh I}} - 	\wt\Gamma_{{\sbt \wh I}}\Pi- {1\over n}\W^\top (\wt \W- \bar \W)
	\right)\beta\right|.
	\end{align*}
	First, recall $\Theta^+ = \Omega^{1/2}H^+$, and observe that Lemma \ref{lem_quad} with $u=\Omega^{1/2}\e_k$ and Lemma \ref{lem_deltaA_I} imply
	\[
	{1\over n}\left|\e_k^\top \Omega \Z^\top \Z\D \beta \right|+	{1\over n}\left|\e_k^\top \Theta^+
	\W^\top \Z\D \beta\right| \lesssim 
	\left(\sqrt{\Omega_{kk}} + \sqrt{\e_k^\top (\Theta^\top \Theta)^{-1}\e_k}\right) \|D_\rho\beta\|_1 \delta_n
	\]
	with probability tending to one.
	Next, we bound  $n^{-1}|\e_k^\top \Omega \Z^\top (\wt \W-\oW)\beta|$. From the identity (\ref{eq_diff_W_td_W_bar}) for $\wt\W-\oW$, we need to   bound  
	\begin{align*}
	&{1\over n}\left|\e_k^\top \Omega \Z^\top \W_{\sbt L}\wh A_{L\sbt}D_{\wh m}\beta \right| + {1\over n}\left|\e_k^\top \Omega \Z^\top  \oW D_{\rho}\beta\right| \\
	&\le \max_{i\in J_1}{1\over n}\left|\e_k^\top \Omega \Z^\top \W_{\sbt i} \right|\cdot \|\wh A_{L\sbt}D_{\wh m}\beta\|_1 + {1\over n}\left\|\e_k^\top \Omega \Z^\top  \oW\right\|_\i
	\|D_{\rho}\beta\|_1\\
	&\le \bar{\rho}\|\beta\|_2\left(\max_{i\in J_1}{1\over n} \left| \e_k^\top \Omega \Z^\top \W_{\sbt i}\right|+ {1\over n}\left\|\e_k^\top \Omega \Z^\top  \oW\right\|_\i \right)
	\end{align*}
	The last inequality uses  (\ref{disp_D_rho}) with $v = \beta$. Observe that   $\e_k^\top \Omega \Z_{t\sbt}$ is $\z\sqrt{\Omega_{kk}}$-sub-Gaussian by Lemma \ref{lem_Z}, and apply Lemmas \ref{lem_W} and \ref{lem_bernstein} together with a union bound to conclude 
	\begin{align*}
	\max_{i\in J_1}{1\over n}
	\left| \e_k^\top \Omega \Z^\top \W_{\sbt i}\right|+ {1\over n}\left\|\e_k^\top \Omega \Z^\top  \oW\right\|_\i \lesssim  \sqrt{\Omega_{kk}}\delta_n.
	\end{align*}
	Finally, from using identity (\ref{eq_diff_W_td_W_bar}) again, we find that
	\begin{align*}
	&\left|\e_k^\top \Theta^+\left(
	\wh\Gamma_{{\sbt \wh I}} - 	\wt\Gamma_{{\sbt \wh I}}\Pi- n^{-1}\W^\top (\wt \W- \bar \W)
	\right)\beta\right|\\
	&=\left|\e_k^\top \Theta^+\left(
	\wh\Gamma_{{\sbt \wh I}} - 	\wt\Gamma_{{\sbt \wh I}}\Pi- {1\over n}\W^\top \W_{\sbt L}\wh A_{L\sbt}D_{\wh m} +   {1\over n}\W^\top \oW D_{\rho}
	\right)\beta\right|\\
	&\le  \left|\e_k^\top \Theta^+\left(
	\wh\Gamma_{{\sbt \wh I}} - 	\wt\Gamma_{{\sbt \wh I}}\Pi- \Gamma_{\sbt L}
	\wh A_{L\sbt}D_{\wh m} + \Gamma_{\sbt I}\Pi D_{\rho}
	\right)\beta\right|\\
	&\quad + \left|\e_k^\top \Theta^+\left({1\over n}\W^\top \oW- \Gamma_{\sbt I}\Pi\right)D_{\rho}\beta\right| + \left|\e_k^\top \Theta^+\left(
	{1\over n}\W^\top \W_{\sbt L} - \Gamma_{\sbt L}
	\right)\wh A_{L\sbt}D_{\wh m} \beta\right|.
	\end{align*}
	Close inspection of the proof of (\ref{rate_ww}), by taking $\alpha = \e_k^\top \Theta^+$ and $v = \beta$, we can bound the last two terms  above by 
	\[
	\bar{\rho}\|\beta\|_2 \delta_n \sqrt{\e_k^\top (\Theta^\top \Theta)^{-1}\e_k}
	\]
	with probability $1-\pn^{-c}$. It remains to   bound $|\e_k^\top \Theta^+ \D' \beta|$ with 
	\[
	\D' = 
	\wh\Gamma_{{\sbt \wh I}} - 	\wt\Gamma_{{\sbt \wh I}}\Pi- \Gamma_{\sbt L}
	\wh A_{L\sbt}D_{\wh m} + \Gamma_{\sbt I}\Pi D_{\rho}.
	\]
	Observe that 
	\begin{align*}
	\D'_{ik} &= {\wh \Gamma_{ii} - \Gamma_{ii} \over \wh m_k},\quad   i\in L_k\\
	\D'_{ik} &= {|L_k| \over m_k \wh m_k}\left( \Gamma_{ii} - \wh \Gamma_{ii}\right) + {\wh \Gamma_{ii} -\wt \Gamma_{ii} \over m_k},\quad   i\in I_k
	\\
	\D'_{ik} &= 0, \quad \text{otherwise.}
	\end{align*}
	On the event $\E_W$ defined in (\ref{def_event_E_w}), we find
	\begin{align*}
	|\e_k^\top \Theta^+ \D' \beta| &\le \left|
	\sum_{a=1}^K\sum_{i \in I_a \cup L_a} \e_k^\top (\Theta^\top \Theta)^{-1}\Theta_{i\sbt} \D'_{ia}\beta_a
	\right|	\\
	&
	\le \max_{i\in I\cup J_1} \left|\e_k^\top (\Theta^\top \Theta)^{-1}\Theta_{i\sbt} \right| \cdot  \sum_{a=1}^K\left(
	\sum_{i\in L_a} |\D'_{ia}\beta_a| + \sum_{i\in I_a} |\D'_{ia}\beta_a|
	\right)
	\\
	&\lesssim \sqrt{\e_k^\top (\Theta^\top \Theta)^{-1}\e_k}\left(
	\|D_\rho \beta\|_1 \delta_n + \sum_{a=1}^K |\beta_a|\max_{i\in I_a} |\wh \Gamma_{ii} -\wt \Gamma_{ii}|
	\right).
	\end{align*}
	Since Lemma \ref{lem_diag_gamma_hat_td} gives 
	\[
	\max_{i\in I_a} |\wh \Gamma_{ii} -\wt \Gamma_{ii}| = O_p\left({|J_1^a| \over |J_1^a| + m_a} \delta_n\right),
	\]
	for all $a\in [K]$, 
	we conclude 
	\[
	|\e_k^\top \Theta^+ \D' \beta| = O_p\left(\bar{\rho}\|\beta\|_2 \delta_n \sqrt{\e_k^\top (\Theta^\top \Theta)^{-1}\e_k}\right).
	\]
	Finally, we complete the proof by  collecting all bounds,   using the bound (\ref{disp_ThetaTheta_kk}) for $\e_k^\top (\Theta^\top \Theta)^{-1}\e_k$ and the inequality  $\lambda_K(H^\top H) \ge \lambda_K(A_{I \sbt}^\top A_{I \sbt})\cl \gtrsim 1$. \qed
	
	\bigskip

	\begin{lemma}\label{lem_diag_gamma_hat_td}
		Let	$\wh\Gamma_{ii}$ and $\wt \Gamma_{ii}$ be defined in (\ref{est_Gamma_I}) and (\ref{def_bar_Sigmaw}), respectively.	Under the conditions of Theorem \ref{thm_distr}, we have
		\[
		\max_{i\in I_a} |\wh \Gamma_{ii} -\wt \Gamma_{ii}| = O_p\left({|J_1^a| \over |J_1^a| + m_a}\delta_n\right),\quad \text{for all }a\in [K].
		\]
	\end{lemma}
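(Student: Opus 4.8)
The plan is to run a short deterministic computation on the event $\E$ of (\ref{def_event}), which has probability at least $1-\pn^{-c}$ and on which $\wh K=K$, $I_a\subseteq\wh I_a\subseteq I_a\cup J_1^a$, $\wh A_{i\sbt}=\e_a$ for all $i\in I_a$, and $\max_{j\ne\ell}|\wh\Sigma_{j\ell}-\Sigma_{j\ell}|\le\delta$ with $\delta\asymp\delta_n$; since $\delta_n\to0$, for $n$ large we also have $\delta<\nu$. Fix $a\in[K]$. Because $\wh A_{i\sbt}=\e_a$ for $i\in I_a$, formulas (\ref{est_Gamma_I}) and (\ref{Chat}) give $\wh\Gamma_{ii}=\wh\Sigma_{ii}-[\whC]_{aa}$, while (\ref{def_bar_Sigmaw}) gives $\wt\Gamma_{ii}=\wh\Sigma_{ii}-\bar S_I$, where $\bar S_I:=[m_a(m_a-1)]^{-1}\sumsum_{j\ne\ell\in I_a}\wh\Sigma_{j\ell}$. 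Hence $\wh\Gamma_{ii}-\wt\Gamma_{ii}=\bar S_I-[\whC]_{aa}$ is the same number for every $i\in I_a$, and it suffices to bound this single quantity (the case $|\wh I_a\setminus I_a|=0$ being trivial, since then $[\whC]_{aa}=\bar S_I$).

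First I would record that every relevant entry is $[\C]_{aa}+O(\delta)$. By (\ref{rate_Ajek}), $\|A_{j\sbt}-\e_a\|_1\lesssim\delta$ for $j\in J_1^a$, so using (A0) and $\|\C\|_\i\le B_z$ one gets $|\Sigma_{j\ell}-[\C]_{aa}|\lesssim\delta$ for all $j\ne\ell\in I_a\cup J_1^a$, with exact equality $\Sigma_{j\ell}=[\C]_{aa}$ when $j,\ell\in I_a$. On $\E$ this yields $|\wh\Sigma_{j\ell}|=[\C]_{aa}+O(\delta)$ for all $j\ne\ell\in\wh I_a$, and moreover $\wh\Sigma_{j\ell}=[\C]_{aa}+O(\delta)>0$ when $j,\ell\in I_a$ (using $[\C]_{aa}>\nu>\delta$), so on $I_a\times I_a$ the signed and absolute entries coincide.

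The main step is the bookkeeping. Write $L_a:=\wh I_a\setminus I_a$, $N_1:=m_a(m_a-1)$, $N_2:=\wh m_a(\wh m_a-1)$, and split the ordered pairs of $\wh I_a$ into those with both endpoints in $I_a$ and the remaining $N_2-N_1$ pairs touching $L_a$. Since $|\wh\Sigma_{j\ell}|=\wh\Sigma_{j\ell}$ on $I_a\times I_a$, one obtains $[\whC]_{aa}=\tfrac{N_1}{N_2}\bar S_I+\tfrac{N_2-N_1}{N_2}\bar S_{\mathrm{cr}}$, with $\bar S_{\mathrm{cr}}$ the average of $|\wh\Sigma_{j\ell}|$ over the pairs touching $L_a$, whence
\[
\wh\Gamma_{ii}-\wt\Gamma_{ii}=\frac{N_2-N_1}{N_2}\bigl(\bar S_I-\bar S_{\mathrm{cr}}\bigr),\qquad i\in I_a .
\]
By the previous paragraph $\bar S_I=[\C]_{aa}+O(\delta)$ and $\bar S_{\mathrm{cr}}=[\C]_{aa}+O(\delta)$, so $\bar S_I-\bar S_{\mathrm{cr}}=O(\delta)$. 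For the prefactor, $\wh m_a=m_a+|L_a|$ and $|L_a|\le|J_1^a|$ give $N_2-N_1=|L_a|(\wh m_a+m_a-1)$, hence, using $m_a\le\wh m_a$ and $\wh m_a\ge2$ from (A1),
\[
\frac{N_2-N_1}{N_2}=\frac{|L_a|(\wh m_a+m_a-1)}{\wh m_a(\wh m_a-1)}\le\frac{4\,|L_a|}{\wh m_a}=\frac{4\,|L_a|}{m_a+|L_a|}\le\frac{4\,|J_1^a|}{m_a+|J_1^a|},
\]
the last step by monotonicity of $x\mapsto x/(m_a+x)$. Combining the two displays gives $\max_{i\in I_a}|\wh\Gamma_{ii}-\wt\Gamma_{ii}|\lesssim\tfrac{|J_1^a|}{m_a+|J_1^a|}\delta_n$ on $\E$, uniformly in $a$, and since $\PP(\E)\ge1-\pn^{-c}$ this is the claimed $O_p$ bound. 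The only slightly delicate point is the combinatorial identity for $N_2-N_1$ and its bound by a multiple of $|J_1^a|/(m_a+|J_1^a|)$; everything else is immediate from the event $\E$ and (\ref{rate_Ajek}).
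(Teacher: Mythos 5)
Your proof is correct, but it takes a genuinely more direct route than the paper's. You exploit that, on the event $\E$ (where $\wh A_{i\sbt}=\pm\e_a$ for $i\in I_a$), the difference $\wh\Gamma_{ii}-\wt\Gamma_{ii}$ is a single number per group, namely a reweighting $\tfrac{N_2-N_1}{N_2}(\bar S_I-\bar S_{\mathrm{cr}})$ of averages of entries of $\wh\Sigma$ over pure pairs versus pairs touching $L_a$; you then bound each entry by $[\C]_{aa}+O(\delta)$ using only the entrywise concentration already built into $\E$, the identity $\Sigma_{j\ell}=A_{j\sbt}^\top\C A_{\ell\sbt}$ for $j\ne\ell$ (with $\Gamma$ diagonal), and the quasi-pure bound (\ref{rate_Ajek}), while the combinatorial factor $(N_2-N_1)/N_2\le 4|L_a|/\wh m_a\le 4|J_1^a|/(|J_1^a|+m_a)$ delivers the prefactor. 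The paper instead expands $\wh\Sigma_{ij}$ via $\X=\Z A^\top+\W$ into four bilinear pieces and controls the $\Z^\top\Z$ part through (\ref{disp_1_1}) and the $\Z^\top\W$, $\W^\top\W$ cross parts through separate sub-Gaussian concentration (Lemma \ref{lem_bernstein}) on additional high-probability events; the same pairs-touching-$L_a$ bookkeeping appears there term by term rather than packaged as your convex-combination identity. Your argument is shorter, needs no probabilistic input beyond $\E$ itself (so no union bounds over $i,j\in I_a\cup J_1^a$), and, as a small bonus, handles the absolute values in the definition (\ref{Chat}) of $[\whC]_{aa}$ explicitly via the positivity of pure-pair entries, which the paper's restatement (\ref{Chat_simp}) silently drops under its sign convention; the paper's decomposition, on the other hand, is the template reused for neighboring lemmas (e.g.\ Lemma \ref{lem_wt_WW_D}) where no exact population cancellation is available, so it generalizes more readily. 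Minor points to keep straight in your write-up: the positivity of $\wh\Sigma_{j\ell}$ on pure pairs (needed both for your "trivial case" $L_a=\emptyset$ and for identifying the pure-pair block of $[\whC]_{aa}$ with $\bar S_I$) requires $n$ large enough that $C\delta<[\C]_{aa}$, which (A2) and $\delta\asymp\delta_n\to 0$ guarantee, and $\wh m_a\ge m_a\ge 2$ comes from (A1) together with $I_a\subseteq\wh I_a$ on $\E$ — both of which you in fact note.
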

	\begin{proof}
		We work on the event $\E$ defined in (\ref{def_event}) that has probability at least $1-\pn^{-c}$ for some $c>0$. Recall that it implies $\wh K = K$ and $I_k \subseteq \wh I_k \subseteq I_k \cup J_1^k$ for all $k\in [K]$. 
		
		First recall the definitions of $\whC$ and $\bar{\Sigma}^z$. 	For any $a,b\in [K]$,
		\begin{equation}\label{Chat_simp}
		\left[\whC\right]_{aa} = 
		\frac{1}{\wh m_a(\wh m_a-1)}\sumsum_{i, j\in \wh I_a,\ i\ne j}{1\over n}\X_{\sbt i}^\top \X_{\sbt j},
		\quad \left[\whC\right]_{ab} = 
		\frac{1}{\wh m_a\wh m_b}\sum_{i\in \wh I_a} \sum_{ j\in \wh I_b} {1\over n}\X_{\sbt i}^\top \X_{\sbt j},
		\end{equation}
		\begin{equation}\label{Cbar}
		\left[\bar \Sigma_Z\right]_{aa} = 
		\frac{1}{m_a(m_a-1)}\sumsum_{i, j\in I_a,\ i\ne j}{1\over n}\X_{\sbt i}^\top \X_{\sbt j},
		\quad \left[\bar \Sigma_Z\right]_{ab} = 
		\frac{1}{m_am_b}\sum_{i\in I_a} \sum_{ j\in I_b} {1\over n}\X_{\sbt i}^\top \X_{\sbt j}
		\end{equation}
		
		Choose any $a\in [K]$ and $i\in I_a$. Since $\wh\Gamma_{ii} - \wt\Gamma_{ii} = [\whC]_{aa} - [\bar \Sigma_Z]_{aa}$, we have
		\begin{align*}
		&	\left[\whC\right]_{aa} - \left[\bar \Sigma_Z\right]_{aa} =\\ &{1\over \wh m_a (\wh m_a-1)}\sumsum_{i,j\in \wh I_a,\ i\ne j}{1\over n}\left(
		A_{i\sbt}^\top \Z^\top \Z A_{j\sbt} + A_{i\sbt}^\top \Z^\top \W_{\sbt j} + \W_{\sbt i}^\top \Z A_{j\sbt}  + \W_{\sbt i}^\top \W_{\sbt j}
		\right)\\
		&\quad - {1\over m_a (m_a-1)}\sumsum_{i,j\in  I_a,\ i\ne j}{1\over n}\left(
		A_{i\sbt}^\top \Z^\top \Z A_{j\sbt} + A_{i\sbt}^\top \Z^\top \W_{\sbt j} + \W_{\sbt i}^\top \Z A_{j\sbt}  + \W_{\sbt i}^\top \W_{\sbt j}
		\right).
		\end{align*}
		Use $\Z A_{i \sbt} = \Z_{\sbt a}$ for all $i\in I_a$ and $\wh m_a = m_a + |L_a|$, to argue that 
		\begin{align*}
		&\sumsum_{i, j\in \wh I_a,\ i\ne j}{1\over n}
		A_{i\sbt}^\top \Z^\top \Z A_{j\sbt}  - \sumsum_{i, j\in  I_a,\ i\ne j}{1\over n}
		A_{i\sbt}^\top \Z^\top \Z A_{j\sbt}\\
		&= \left(
		2m_a\sum_{j\in L_a} + \sumsum_{i, j\in L_a,\ i\ne j}
		\right)
		\left\{{1\over n}A_{i \sbt}^\top \Z^\top \Z A_{j \sbt} - \left[\wh m_a(\wh m_a - 1) - m_a(m_a-1)\right]{1\over n}\Z_{\sbt a}^\top \Z_{\sbt a}	\right\}\\
		&= \left(
		2m_a\sum_{j\in L_a} + \sumsum_{i, j\in L_a, \ i\ne j}
		\right)
		\left\{{1\over n}(A_{i \sbt}^\top \Z^\top \Z A_{j \sbt} - \Z_{\sbt a}^\top \Z_{\sbt a})\right\}\\
		&\le
		2\left\{ 2m_a|L_a| + |L_a|(|L_a|-1)\right\} \sum_{i\in I_a \cup J_1^a}\left|{1\over n} \Z_{\sbt a}^\top \Z (A_{i\sbt} - \e_a)\right|.
		\end{align*}
		This implies 
		\begin{align*}
		&\left|{1\over \wh m_a (\wh m_a-1)}\sumsum_{i, j\in \wh I_a,\ i\ne j}{1\over n}
		A_{i\sbt}^\top \Z^\top \Z A_{j\sbt}  - {1\over m_a (m_a-1)}\sumsum_{i, j\in  I_a, i\ne j}{1\over n}
		A_{i\sbt}^\top \Z^\top \Z A_{j\sbt}\right|\\ 
		&\le  {2|J_1^a| \over |J_1^a| + m_a}\sum_{i\in I_a \cup J_1^a}\left|{1\over n} \Z_{\sbt a}^\top \Z (A_{i\sbt} - \e_a)\right|\\
		&\le{2|J_1^a| \over |J_1^a| + m_a} \frac{ 8 B_z}{\nu} \delta_n
		\end{align*}
		The last line follows from inequality (\ref{disp_1_1}), and holds with probability $1-C(\pn)^{-c}$.
		For the other two terms, after expanding $\wh I_a = I_a\cup L_a$ to $I_a \cup J_1^a$, we find that
		\begin{align*}
		&\left|{1\over \wh m_a (\wh m_a-1)}\sumsum_{i, j\in \wh I_a,\ i\ne j}{1\over n}
		A_{i\sbt}^\top \Z^\top \W_{\sbt j}  - {1\over m_a (m_a-1)}\sumsum_{i, j\in  I_a,\ i\ne j}{1\over n}
		A_{i\sbt}^\top \Z^\top \W_{\sbt j}\right|\\ 
		&\lesssim  
		{{|J_1^a| \over |J_1^a| + m_a}\max_{i,j\in I_a \cup J_1^a, i\ne j}\left|{1\over n}A_{i \sbt}^\top \Z_{\sbt a}^\top \W_{\sbt j}\right| }
		\end{align*}
		and 
		\begin{align*}
		&\left|{1\over \wh m_a (\wh m_a-1)}\sumsum_{i, j\in \wh I_a,\ i\ne j}{1\over n}
		\W_{\sbt i}^\top \W_{\sbt j}  - {1\over m_a (m_a-1)}\sumsum_{i,j\in  I_a,\ i\ne j}{1\over n}\W_{\sbt i}^\top \W_{\sbt j}\right|\\ 
		&\lesssim { {|J_1^a| \over |J_1^a| + m_a}\max_{i,j\in I_a \cup J_1^a, i\ne j}\left|{1\over n}\W_{\sbt i}^\top \W_{\sbt j}\right|.}
		\end{align*}
		Apply the exponential inequality Lemma \ref{lem_bernstein} and use the fact that
		$\|A_{i\sbt}\|_1\le 1$ to arrive at the desired result. 
	\end{proof}

	\section{Proof of Proposition \ref{prop_est_V}: consistent estimation of the asymptotic variance $V_k$}\label{sec_proofs_prop_V}
	
	\subsection{Main proof of  Proposition \ref{prop_est_V}}
	
	We only need to show $\wh V_{k}^{1/2}/V_k^{1/2} = 1+ o_p(1)$ as $n\to\infty$ since the rest of the proof is a direct consequence of   Theorem \ref{thm_distr} and  Slutsky's lemma. For ease of presentation, we only prove the result for simplified $V_k$ in (\ref{def_V_k_simp}) when $|I_1| = \cdots = |I_K| = m$ and $\tau_1^2 = \cdots = \tau^2_p = \tau^2$. The  general case with the complicated expression for $V_k$ in (\ref{def_V_k_general} can be proved by similar arguments. 
	
	We work on the event $\E$ defined in (\ref{def_event}) intersected with $\wh\Theta^\top \wh \Theta$ and $\whC$ are invertible. This event holds with probability $1-c\pn^{-c'}$ by Lemma \ref{lem_H_op} and Lemma \ref{lem_op_C}. Also recall that $\E$ implies $\wh K = K$ and $I_k \subseteq \wh I_k \subseteq I_k \cup J_1^k$ for all $k\in [K]$. 
	
	Write $\Omega :=\C^{-1}$ and $\wh \Omega = \whC^{-1}$.
	By a Taylor expansion, we have 
	\[
	\left|\wh V_{k}^{1/ 2} V_k^{-1/ 2} - 1\right| = V_k^{-1}\left|\wh V_{k} - V_k\right|(1+o_p(1)),
	\]
	and it suffices to show $V_k^{-1}|\wh V_{k} - V_k|= o_p(1)$. To this end, we first recall   
	\begin{equation}\label{def_Ds}
	V_k = \underbrace{\left(\sigma^2 + {\t^2\|\beta\|_2^2\over m}\right)}_{\Delta_a}\underbrace{\left( \Omega_{kk} + \tau^2\e_k^\top \left(\Theta^\top \Theta\right)^{-1}\e_k\right)}_{\Delta_b} +\underbrace{{\tau^4\over m-1}\sum_{a=1}^K{\beta_a^2 \over m}\sum_{i \in I_a} \left(\e_k^\top \Theta^+ \e_i\right)^2}_{\D_c}.
	\end{equation}
	The corresponding estimator is $\wh V_k = \wh \D_a \wh \D_b + \wh \D_c$ with 
	\[
	\wh \D_a := \wh\sigma^2 + \wh \tau^2{ \|\wh \beta\|_2^2 \over \wh m},\qquad \wh \D_b := \wh \Omega_{kk} + \wh \t^2 \e_k^\top (\wh \Theta^\top \wh \Theta)^{-1}\e_k
	\]
	and 
	\begin{equation}\label{disp_Delta_c}
	\wh \D_c := {\wh \tau^4\over \wh m-1}\sum_{a=1}^K{\wh \beta_a^2 \over \wh m}\sum_{i \in I_a} \left(\e_k^\top \wh \Theta^+ \e_i\right)^2
	\end{equation}
	writing $\wh m = |\wh I_k|$ and $\wh\tau^2 = \wh \tau^2_i$ for any   $k\in [K]$ and $i\in [p]$. Observe that
	\begin{equation}\label{disp_13}
	{|\wh V_{k} - V_k| \over V_k} \le  {|\wh \D_a - \D_a|\over \D_a} + {|\wh \D_b - \D_b|\over \D_b} + {|\wh \D_c - \D_c|\over \D_a\D_b},
	\end{equation} and
	we will bound each term on the right separately. Lemma \ref{lem_Delta_c} in Section \ref{sec_lemmas_var} guarantees that $|\wh \D_c - \D_c| = o_p(\D_a \D_b)$.
	From the definition of $\bar\rho$ in (\ref{def_rho}) and the inequality $|\wh m - m| / \wh m \le \bar{\rho}$, we have
	\begin{align*}
	&|\wh \D_a - \D_a |\\ 
	&\le |\wh \sigma^2 - \sigma^2| + \wh\tau^2\|\wh\beta\|_2^2{|m - \wh m| \over m\wh m} + {\|\beta\|_2^2 \over m}|\wh \t^2 - \t^2| + {\wh \t^2 \over m}(\|\wh \beta\|_2 + \|\beta\|_2)\|\wh \beta-\beta\|_2\\
	&\le |\wh \sigma^2 - \sigma^2| +  \wh\tau^2{\|\wh\beta\|_2^2 \over m}\bar{\rho} + {\|\beta\|_2^2 \over m}|\wh \t^2 - \t^2| + {\wh \t^2 \over m}(\|\wh \beta\|_2 + \|\beta\|_2)\|\wh \beta-\beta\|_2.
	\end{align*}
	Since  Lemma \ref{lem_signal} and Assumption \ref{ass_H_prime} imply 
	\[
	K\log\pn = o(\sqrt n),
	\]
	the rate of $\|\wh \beta-\beta\|_2$ in (\ref{rate_beta}) and the rate of $\max_{1\le i \le p}|\wh\tau_i^2 -\tau_i^2|$ in Lemma \ref{lem_tau} in Section \ref{sec_lemmas_var}  guarantee that, with probability tending to one, 
	\begin{equation}\label{eq_beta_tau}
	\|\wh\beta\|_2= O\left(1\vee \|\beta\|_2\right), \qquad  \wh \t^2 = \t^2 + o(1) = O(\t^2).
	\end{equation}
	Combine  Lemmas \ref{lem_tau} and \ref{lem_sigma} in Section \ref{sec_lemmas_var}   with the rate of $\|\wh \beta - \beta\|_2$ from (\ref{rate_beta}) to find
	\begin{align*}
	|\wh \D_a - \D_a |  = O\left( \left(1 \vee {\|\beta\|_2^2 \over m}\right)\left(\bar{\rho}+ \cl^{-1/2}
	\delta_n\sqrt{K} \right)\right)
	\end{align*}
	with probability tending to one.
	This bound with $K\log\pn = o(n)$ and Assumption \ref{ass_J1} and the definition  $\D_a= \sigma^2 + {\t^2\|\beta\|_2^2 /m} $ yields
	\begin{align}\label{bd_delta1}
	\D_a^{-1}|\wh \D_a - \D_a|& = O\left(  \bar\rho + \cl^{-1/2}
	\delta_n\sqrt{K}\right) = o(1)
	\end{align} 
	with probability tending to one.
	We proceed to bound $\D_b^{-1}|\wh \D_b -\D_b|$ by
	\begin{align*}
	|\wh \D_b - \D_b | &\le |\wh \Omega_{kk} - \Omega_{kk}| +|\wh \t^2 - \t^2|\cdot \e_k^\top (\Theta^\top \Theta)^{-1}\e_k + \wh \t^2\left|\e_k^\top \left[(\wh\Theta^\top \wh\Theta)^{-1} - (\Theta^\top \Theta)^{-1}\right]\e_k\right|,
	\end{align*}
	and study each term on the right separately. Write    $H := A \C^{1/2} = \Theta\Omega^{1/2}$ and
	$\wh H = \wh\Theta \Omega^{1/2}$.
	For the first term on the right, observe that the identity 
	\begin{equation}\label{eq_Omega_hat_Omega}
	\wh \Omega - \Omega = \Omega^{1/2}\left[\Omega^{1/2}(\whC - \C)\Omega^{1/2}\right](\Omega^{1/2}\whC\Omega^{1/2})^{-1}\Omega^{1/2},
	\end{equation}
	implies
	\begin{align*}
	|\wh \Omega_{kk} - \Omega_{kk}| &\le \Omega_{kk}{\|\Omega^{1/2}(\C- \whC)\Omega^{1/2}\|_{{\rm op}} \over \lambda_K(\Omega^{1/2}\whC \Omega^{1/2})}.
	\end{align*}
	By Weyl's inequality  
	\begin{equation}\label{disp_lb_OmegaCOmega}
	\lambda_K\left(\Omega^{1/2}\whC \Omega^{1/2}\right) = \lambda_K\left({\bI}_K - \Omega^{1/2}(\C - \whC)\Omega^{1/2}\right) \ge 1 - \|\Omega^{1/2}(\whC - \C) \Omega^{1/2}\|_{{\rm op}},
	\end{equation}
	and Lemma \ref{lem_op_C} in Section \ref{sec_lemmas_var} yields 
	\begin{equation*}
	|\wh\Omega_{kk} - \Omega_{kk}| = O\left(\Omega_{kk}\delta_n\sqrt{K}\right) = o(\D_b)
	\end{equation*}
	with probability tending to one.
	Next, invoke  Lemma \ref{lem_tau} in Section \ref{sec_lemmas_var} and the definition of $\Delta_b$ to obtain for 
	the second term on the right 
	\begin{equation*}
	|\wh \t^2 - \t^2|\cdot \e_k^\top (\Theta^\top \Theta)^{-1}\e_k  = o(\Delta_b) 
	\end{equation*}
	with probability tending to one.
	Finally, 
	obtain
	the identity 
	\begin{align*}
	&(\wh \Theta^\top \wh \Theta)^{-1} - (\Theta^\top \Theta)^{-1}\\ 
	&= (\Theta^\top \Theta)^{-1}(\Theta^\top \Theta - \wh\Theta^\top \wh \Theta)(\wh \Theta^\top \wh \Theta)^{-1} \\
	&= \Theta^+(\Theta-\wh \Theta)(\wh\Theta^\top \wh\Theta)^{-1} + (\Theta^\top \Theta)^{-1}(\Theta - \wh \Theta)^\top \wh\Theta(\wh\Theta^\top \wh\Theta)^{-1}\\
	&=  \Omega^{1/2}\left[H^+(H-\wh H)(\wh H^\top \wh H)^{-1} + (H^\top H)^{-1}(H - \wh H)^\top \wh H(\wh H^\top \wh H)^{-1}\right]\Omega^{1/2},
	\end{align*}
	and invoke Lemma \ref{lem_H_op} in Section \ref{sec_auxiliary_lemma} to conclude
	\begin{align*}
	&\left|\e_k^\top \left[ (\wh \Theta^\top \wh \Theta)^{-1} - (\Theta^\top \Theta)^{-1}\right] \e_k\right| 
	= O\left( {\Omega_{kk} \over \lambda_K(H^\top H)} \left(\delta_n\sqrt{K}+ {\delta_n\sqrt{pK} \over \sigma_K(H)}
	\right)\right) = o(\D_b)
	\end{align*}
	with probability tending to one.
	The last step uses  Assumption \ref{ass_H_prime}.  From (\ref{eq_beta_tau}), we get 
	\begin{align*}
	&\wh\tau^2 \left|\e_k^\top \left[ (\wh \Theta^\top \wh \Theta)^{-1} - (\Theta^\top \Theta)^{-1}\right] \e_k\right| 
	= o(\D_b)
	\end{align*} 
	with probability tending to one.
	Collecting all three bounds, we find 
	\begin{equation*}
	\D_b^{-1}|\wh \D_b - \D_b|  = o(1)
	\end{equation*} 
	with probability tending to one.
	This concludes 
	completes the proof.\qed
	
	\subsection{Lemmas used in the proof of Proposition \ref{prop_est_V}}\label{sec_lemmas_var}
	
	We state the lemmas used for proving Proposition \ref{prop_est_V}. Their proofs are deferred to Section \ref{sec_proof_lem_var}.
	All statements are valid on some events that are subsets of $\E$ and the  probabilities of these events are greater than $1-C\pn^{-\alpha}$ for some positive constants $C,\alpha$. This is an important observation since on the event $\E$, the dimensions  $\wh K$ and $K$ are equal, which ensures that the various quantities in the statements, for instance, the difference  $\C-\C$, are well-defined.

	\begin{lemma}\label{lem_tau}
		Let $\wh \tau_i^2$ be defined in (\ref{est_Gamma}). Under the   conditions of Theorem \ref{thm_I}, with probability greater than $1-\pn^{-c}$ for some constant $c>0$, we have 
		\begin{align*}
		\max_{1\le i \le p}|\wh \tau_i^2 - \tau_i^2| \lesssim \delta_n.
		\end{align*} 
	\end{lemma}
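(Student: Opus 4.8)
The plan is to expand $\wh\tau_i^2 - \tau_i^2$ using the definitions (\ref{est_Gamma}) and the decomposition $\X = \Z A^\top + \W$, and then control each resulting term uniformly over $i\in[p]$ using the exponential inequalities already developed. First recall that $\wh\tau_i^2 = \wh\Sigma_{ii} - \wh A_{i\sbt}^\top \whC \wh A_{i\sbt}$ and $\tau_i^2 = \Gamma_{ii} = \Sigma_{ii} - A_{i\sbt}^\top \C A_{i\sbt}$, so that
\[
\wh\tau_i^2 - \tau_i^2 = (\wh\Sigma_{ii} - \Sigma_{ii}) - \left(\wh A_{i\sbt}^\top \whC \wh A_{i\sbt} - A_{i\sbt}^\top \C A_{i\sbt}\right).
\]
The first term is bounded by $\delta = c\delta_n$ on the event $\E$ of (\ref{def_event}) by a diagonal analogue of (\ref{def_delta}) (which holds for the diagonal entries $\wh\Sigma_{ii}-\Sigma_{ii}$ by Lemma \ref{lem_Z} and Lemma \ref{lem_bernstein}, noting $X_{ti}^2 - \EE[X_{ti}^2]$ is sub-exponential). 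For the second term, add and subtract $\wh A_{i\sbt}^\top \C \wh A_{i\sbt}$ and $A_{i\sbt}^\top \C \wh A_{i\sbt}$ to obtain
\[
\wh A_{i\sbt}^\top (\whC - \C)\wh A_{i\sbt} + (\wh A_{i\sbt} - A_{i\sbt})^\top \C (\wh A_{i\sbt} + A_{i\sbt}),
\]
and bound using $\|\wh A_{i\sbt}\|_1\le 1$, $\|A_{i\sbt}\|_1\le 1$ (from \textit{(A0)}), $\|\whC - \C\|_\i \lesssim \delta_n$ from Lemma \ref{lem_op_C} or its element-wise version, $\|\C\|_\i \le B_z$, and the optimal sup-norm rate $\max_{j\in[p]}\|\wh A_{j\sbt} - A_{j\sbt}\|_1 \lesssim \delta_n$ established in \cite[Theorem 5]{LOVE}. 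Each of these holds with probability $1-\pn^{-c}$, and taking a union bound over the $p$ coordinates (absorbed into the polynomial probability tail) gives the claimed uniform rate $\max_i |\wh\tau_i^2 - \tau_i^2| \lesssim \delta_n$.

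The one subtlety is that $\whC$ is built from the estimated pure-variable partition $\{\wh I_a\}$, which on $\E$ satisfies only $I_a \subseteq \wh I_a \subseteq I_a \cup J_1^a$ rather than $\wh I_a = I_a$, so the bound $\|\whC - \C\|_\i \lesssim \delta_n$ is not completely immediate — it requires quantifying the contamination from the quasi-pure variables. The cleanest route is to invoke the entrywise consistency of $\whC$ already proved in \cite{LOVE} (their analysis of the estimator in (\ref{Chat}) shows $\|\whC - \C\|_\i \lesssim \delta_n$ on $\E$, using exactly the inclusion $\wh I_a \subseteq I_a \cup J_1^a$ together with (\ref{rate_Ajek}) to control the bias from $L_a = \wh I_a\setminus I_a$); alternatively one can reproduce this bound directly via Lemma \ref{lem_bernstein} applied to the bilinear forms $n^{-1}\X_{\sbt j}^\top \X_{\sbt \ell}$ for $j,\ell \in I_a \cup J_1^a$ and the perturbation estimate $\|A_{j\sbt} - \e_a\|_1 \lesssim \delta_n$ for $j\in J_1^a$. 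I expect this — carefully propagating the quasi-pure contamination through $\whC$ into the diagonal estimate — to be the main (though modest) obstacle; everything else is a routine union bound over coordinates combined with the already-catalogued rates for $\wh\Sigma$, $\whC$ and $\wh A$.
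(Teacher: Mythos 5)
Your overall decomposition is fine (it is essentially the paper's, written through $\wh\Sigma_{ii}-\Sigma_{ii}$ rather than through $\X_{\sbt i}=\Z A_{i\sbt}+\W_{\sbt i}$), and your handling of $\wh\Sigma_{ii}-\Sigma_{ii}$, of $\wh A_{i\sbt}^\top(\whC-\C)\wh A_{i\sbt}$ via $\|\wh A_{i\sbt}\|_1\le 1$ and $\|\whC-\C\|_\i\lesssim\delta_n$, and your awareness of the quasi-pure contamination in $\whC$ are all consistent with what the paper does. The genuine gap is in the remaining cross term $(\wh A_{i\sbt}-A_{i\sbt})^\top\C(\wh A_{i\sbt}+A_{i\sbt})$: you control it through the claim $\max_{j\in[p]}\|\wh A_{j\sbt}-A_{j\sbt}\|_1\lesssim\delta_n$, attributed to Theorem 5 of \cite{LOVE}. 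That theorem (and the paper's own Lemma \ref{lem_A}, which quotes it) gives rates that scale with the row sparsity $s_j=\|A_{j\sbt}\|_0$: the $\ell_2$ rate is $\lesssim \cl^{-1}\delta_n\sqrt{s_j}$ and the $\ell_1$ rate is correspondingly $\lesssim s_j\delta_n$, which for non-pure rows can be of order $K\delta_n$. With your bound the cross term is only $O(s_i\delta_n)$, not $O(\delta_n)$, so the uniform claim of the lemma does not follow for non-pure variables whose support grows with $K$; no assumption in the paper caps $s_j$.

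The paper avoids this by never requiring $\ell_1$ control of $\wh A_{i\sbt}-A_{i\sbt}$. It pairs the error the other way: for $i\in J$ it bounds the cross terms by $2\|\whC(\wh A_{i\sbt}-A_{i\sbt})\|_\i$, using only $\|A_{i\sbt}\|_1\le 1$ and $\|\wh A_{i\sbt}\|_1\le 1$, and then exploits that \emph{both} $\wh A_{i\sbt}$ and $A_{i\sbt}$ are feasible for the Dantzig-type program (\ref{est_AJ}), so that $\|\whC(\wh A_{i\sbt}-A_{i\sbt})\|_\i\lesssim\delta_n$ directly, with no sparsity factor (this is the second claim of Lemma \ref{lem_A}); for $i\in I$ the term vanishes since $\wh A_{i\sbt}=A_{i\sbt}$ on the event $\E$. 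If you replace your $\ell_1$-based step by this dual pairing (or equivalently write your cross term as $(\wh A_{i\sbt}-A_{i\sbt})^\top\whC(\cdot)$ plus a $(\whC-\C)$ correction and invoke the feasibility argument), the rest of your proposal goes through.
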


	\begin{lemma}\label{lem_sigma}
		Let $\wh\sigma^2$ be defined as in (\ref{est_sigma}). Under the conditions of Theorem \ref{thm_beta}, with probability $1-c\pn^{-c'}$,
		$$
		|\wh \sigma^2 - \sigma^2| \lesssim \cl^{-1}\left(1\vee {\|\beta\|_2^2 \over m}\right)\delta_n\sqrt{K}.
		$$
	\end{lemma}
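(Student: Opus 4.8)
The plan is to bound $|\wh\sigma^2 - \sigma^2|$ by peeling off the estimation errors in each of the ingredients $\tfrac1n\y^\top\y$, $\wh\beta$, $\wh h$ and $\whC$ appearing in the definition (\ref{est_sigma}). First I would recall the population identity $\sigma^2 = \EE[\eps^2] = \EE[Y^2] - \Cov(Y,Z)^\top\C^{-1}\Cov(Y,Z)$, and, since $\Cov(Z,Y) = \C\beta$, rewrite this as $\sigma^2 = \EE[Y^2] - 2\beta^\top h + \beta^\top\C\beta$, where $h := \Cov(Z,Y) = (A_{I\sbt}^\top A_{I\sbt})^{-1}A_{I\sbt}^\top\Cov(X_I,Y) = \C\beta$. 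This mirrors exactly the structure of the estimator $\wh\sigma^2 = \tfrac1n\y^\top\y - 2\wh\beta^\top\wh h + \wh\beta^\top\whC\wh\beta$ with $\wh h$ defined in (\ref{est_h}). Then
\begin{align*}
|\wh\sigma^2 - \sigma^2| &\le \left|\tfrac1n\y^\top\y - \EE[Y^2]\right| + 2\left|\wh\beta^\top\wh h - \beta^\top h\right| + \left|\wh\beta^\top\whC\wh\beta - \beta^\top\C\beta\right|,
\end{align*}
and I would control the three pieces in turn.

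For the first piece, $\tfrac1n\y^\top\y - \EE[Y^2]$ is an average of i.i.d.\ mean-zero sub-exponential variables (since $Y = Z^\top\beta + \eps$ is sub-Gaussian with parameter $\lesssim \gamma_\eps + \gamma_z\sqrt{B_z}\|\beta\|_2$), so Lemma \ref{lem_bernstein} gives the bound $\lesssim (1 + \|\beta\|_2^2)\delta_n \lesssim (1\vee \|\beta\|_2^2/m)\cdot m\,\delta_n$; actually more carefully, using $\EE[Y^2] = \beta^\top\C\beta + \sigma^2 \lesssim 1 + \|\beta\|_2^2$ one gets $\lesssim (1+\|\beta\|_2^2)\delta_n$, which is absorbed by the stated rate since we may assume $m\ge 2$ and $K\ge 1$. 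For the second and third pieces I would use the algebraic splitting $\wh\beta^\top\wh h - \beta^\top h = (\wh\beta-\beta)^\top\wh h + \beta^\top(\wh h - h)$ and $\wh\beta^\top\whC\wh\beta - \beta^\top\C\beta = (\wh\beta-\beta)^\top\whC\wh\beta + \beta^\top(\whC-\C)\wh\beta + \beta^\top\C(\wh\beta-\beta)$, then invoke: the rate $\|\wh\beta-\beta\|_2 \lesssim (1\vee\|\beta\|_2/\sqrt m)\sqrt{K\log\pn/n}$ from Theorem \ref{thm_beta} (eq.\ (\ref{rate_beta})); the operator-norm bound $\|\whC-\C\|_{\rm op}\lesssim \delta_n\sqrt K$ from Lemma \ref{lem_op_C} (which also gives $\|\whC\|_{\rm op}\lesssim \cu + \delta_n\sqrt K \lesssim 1$ on the relevant event, and $\|\wh\beta\|_2 \lesssim 1\vee\|\beta\|_2$); and a bound $\|\wh h - h\|_2 \lesssim (1\vee\|\beta\|_2)\cdot(\text{something}\lesssim \delta_n\sqrt K + \bar\rho)$ on $\wh h - h$, obtained by expanding $\wh h - h$ via $\X_{\sbt\wh I} = \Z A_{\wh I\sbt}^\top + \W_{\sbt\wh I}$ and $\y = \Z\beta + \eps$ and applying Lemma \ref{lem_quad} (in particular (\ref{rate_zeps}), (\ref{rate_weps}), (\ref{rate_wz}), (\ref{rate_zz})) together with Lemma \ref{lem_deltaA_I} to control the contribution of $\wh I\setminus I$. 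Each of these individual bounds holds with probability $1-c\pn^{-c'}$, and combining them on the intersection (which still has probability $1-c\pn^{-c'}$) and using $\lambda_{\min}(\C)\ge\cl$ to convert the factors of $\C^{-1}$, $\Omega^{1/2}$ and $\sigma_K(H)^{-1}\ge \cl^{-1/2}m^{-1/2}$ appearing in the $\wh h$-expansion, one arrives at the claimed bound $|\wh\sigma^2 - \sigma^2| \lesssim \cl^{-1}(1\vee\|\beta\|_2^2/m)\delta_n\sqrt K$, where the $\bar\rho$ contributions are absorbed because Assumption \ref{ass_J1} gives $\bar\rho \lesssim 1/\sqrt m \lesssim \delta_n\sqrt K$ in the regime $K\log\pn \le cn$ (or, more conservatively, one simply states the bound with an additive $\bar\rho$ term and absorbs it under the standing assumptions as is done throughout the paper).

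The main obstacle will be the careful bookkeeping in expanding $\wh h - h$: the estimator $\wh h$ in (\ref{est_h}) uses the \emph{estimated} pure-variable set $\wh I$ rather than $I$, so one must separate the ``oracle'' part (which behaves like an average of i.i.d.\ sub-exponential terms, handled by Lemma \ref{lem_bernstein} after noting $\EE[(A_{I\sbt}^\top A_{I\sbt})^{-1}A_{I\sbt}^\top X_I \eps] = 0$ and $\EE[\cdots] = h$) from the perturbation due to $L_k := \wh I_k\setminus I_k \subseteq J_1^k$, which is exactly the type of quasi-pure-variable term handled by $D_\rho$ and Lemma \ref{lem_deltaA_I}. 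This is routine given the machinery already developed in Sections \ref{sec_proof_prelim_lemmas}--\ref{sec_proof_lem_rate}, but requires matching the $\|v\|_2$, $\sqrt{u^\top\C u}$, and $\sigma_K(H)^{-1}$ normalizations correctly so that the final constant depends only on $\cl$ (and the fixed sub-Gaussian constants), and requires tracking that the dominant term is the $\cl^{-1}(\|\beta\|_2^2/m)\delta_n\sqrt K$ piece coming from the quadratic-in-$\beta$ terms $\beta^\top(\whC-\C)\wh\beta$ and the $\W^\top\W$-type contributions to $\wh h - h$.
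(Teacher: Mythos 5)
Your decomposition destroys exactly the cancellation that the stated rate requires, so the argument as outlined cannot reach $\cl^{-1}(1\vee\|\beta\|_2^2/m)\,\delta_n\sqrt K$. Each of your three pieces fluctuates, on its own, at order $\|\beta\|_2^2\delta_n$ with no $1/m$ damping: $\tfrac1n\y^\top\y-\EE[Y^2]$ is an average of centered sub-exponential variables whose $\psi_1$-norm is of order $\sigma^2+\beta^\top\C\beta\asymp 1+\|\beta\|_2^2$, so its typical size really is $(1+\|\beta\|_2^2)\delta_n$; likewise $|\beta^\top(\whC-\C)\wh\beta|\lesssim\|\beta\|_2^2\delta_n\sqrt K$ via Lemma \ref{lem_op_C}, and $|\beta^\top(\wh h-h)|$ contains a term like $|\beta^\top(\tfrac1n\Z^\top\Z-\C)\beta|\asymp\|\beta\|_2^2\delta_n$. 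Your claim that $(1+\|\beta\|_2^2)\delta_n$ is ``absorbed by the stated rate'' is false: it would require $\|\beta\|_2^2\lesssim(1\vee\|\beta\|_2^2/m)\sqrt K$, which fails whenever $m\gg\sqrt K$ and $\|\beta\|_2$ is large --- a regime the paper explicitly cares about, since the lemma is used in Proposition \ref{prop_est_V} with $\|\beta\|_2$ unbounded after dividing by $\D_a=\sigma^2+\tau^2\|\beta\|_2^2/m$. The secondary claim $\bar\rho\lesssim\delta_n\sqrt K$ is also unjustified: Assumption \ref{ass_J1} only gives $\bar\rho\lesssim m^{-1/2}$, and $K\log\pn\le cn$ does not imply $m^{-1/2}\lesssim\delta_n\sqrt K$.

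The missing idea is an exact algebraic cancellation performed \emph{before} any triangle inequality. Since $\wh h=\tfrac1n\wt\X^\top\y$ and, by the construction (\ref{Chat}), $\whC=\tfrac1n\wt\X^\top\wt\X-D$ for the explicit diagonal matrix $D$ in (\ref{def_D}), one has the identity $\wh\sigma^2=\tfrac1n\|\y-\wt\X\wh\beta\|_2^2-\wh\beta^\top D\wh\beta$. Writing $\y=\Z\beta+\eps$, $\wt\X=\wt\Z+\wt\W$ and $\D_\beta=\Z\beta-\wt\Z\wh\beta$, the $\Z\beta$ contributions cancel exactly, and what survives is $\tfrac1n\eps^\top\eps-\sigma^2$ (order $\delta_n$, free of $\|\beta\|_2$), the quadratic term $\wh\beta^\top(\tfrac1n\wt\W^\top\wt\W-D)\wh\beta$ (order $\|\wh\beta\|_2^2\delta_n\sqrt K/m$ by Lemma \ref{lem_wt_WW_D} --- note the crucial $1/m$ coming from averaging over pure variables and from $D$ removing the diagonal bias), the cross term $\tfrac1n\eps^\top\wt\W\wh\beta$, and terms in $\D_\beta$ controlled through the rate of $\wh\beta-\beta$ (Lemma \ref{lem_fit}); each of these is of the target order. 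This is how the paper proceeds. A term-by-term comparison with the population identity $\sigma^2=\EE[Y^2]-2\beta^\top h+\beta^\top\C\beta$ cannot be repaired without reintroducing this cancellation, so the proposal has a genuine gap rather than a fixable bookkeeping issue.
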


	\begin{lemma}\label{lem_op_C}
		Under the conditions of Theorem \ref{thm_beta}, with probability  $1- c\pn^{-c'}$, 
		\[
		\left \|\C^{-1/2}(\whC-\C)\C^{-1/2}\right\|_{{\rm op}} \lesssim \delta_n\sqrt{K}.
		\]
	\end{lemma}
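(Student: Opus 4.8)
\textbf{Proof proposal for Lemma~\ref{lem_op_C}.} The plan is to exploit the explicit formulas \eqref{Chat_simp} for $\whC$ together with the model decomposition $\X = \Z A^\top + \W$ to write $\C^{-1/2}(\whC-\C)\C^{-1/2}$ as a sum of a small number of bilinear ``noise'' matrices, each of which has already been controlled in operator norm in Lemmas~\ref{lem_op} and~\ref{lem_deltaA_I}. Throughout we work on the event $\E$ of \eqref{def_event}, which has probability at least $1-\pn^{-c}$ and on which $\wh K=K$ and $I_k\subseteq \wh I_k \subseteq I_k\cup J_1^k$ for all $k$, so that all quantities are well defined and have matching dimensions.

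First I would split $\whC-\C$ into a part coming from the ``true'' pure-variable index sets $I_k$ and a part coming from the extra misclassified quasi-pure variables $L_k = \wh I_k\setminus I_k \subseteq J_1^k$. For the first part, using $\X_{\sbt i} = \Z_{\sbt a} + \W_{\sbt i}$ for $i\in I_a$ (recall $A_{ia}=1$ on $\E$), a direct expansion of $[\bar\Sigma_Z]_{ab}$ in \eqref{Cbar} shows that $[\bar\Sigma_Z]_{ab} - [\C]_{ab}$ is, up to the average over $I_a,I_b$, a combination of $\frac1n\Z_{\sbt a}^\top\Z_{\sbt b} - [\C]_{ab}$, cross terms $\frac1n\Z_{\sbt a}^\top \oW_{\sbt b}$ and $\frac1n\oW_{\sbt a}^\top\Z_{\sbt b}$, and a $\W$-$\W$ average. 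The Lemma~\ref{lem_WW} identity is exactly what turns the last piece into a clean expression; in matrix form this means $\Omega^{1/2}(\bar\Sigma_Z-\C)\Omega^{1/2}$ is bounded in operator norm by the sum of the quantities appearing on the left of \eqref{rate_op_zz_c}, \eqref{rate_op_zwtd} (with $\bar\W$ in place of $\wt\W$, which is simpler), and an analogous $\W$-$\W$ term, each of which is $\lesssim \delta_n\sqrt K$ (here $1/\sqrt m + \bar\rho \le 2$ up to constants, and in fact the $\bar\rho$ contributions only help). For the second part, $\whC - \bar\Sigma_Z$, I would argue as in the proof of Lemma~\ref{lem_diag_gamma_hat_td}: expanding $\wh I_a = I_a \cup L_a$ and using $|L_a|/\wh m_a \le |J_1^a|/(|J_1^a|+m_a)$ together with $\sum_a (|J_1^a|/(|J_1^a|+m_a))^2 = \bar\rho^2$, the extra terms contribute at most a factor $\bar\rho$ times the same kind of bilinear averages, hence are $\lesssim \bar\rho\,\delta_n\sqrt K$, which is negligible under Assumption~\ref{ass_J1_prime}. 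Assembling the pieces and taking a union bound over the $O(1)$ events (each of probability $1-C\pn^{-K\alpha}$ or $1-\pn^{-c}$) gives $\|\Omega^{1/2}(\whC-\C)\Omega^{1/2}\|_{\rm op}\lesssim \delta_n\sqrt K$ on an event of probability $1-c\pn^{-c'}$.

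The main obstacle I anticipate is bookkeeping rather than anything deep: one must carefully track the weighting factors $1/(\wh m_a(\wh m_a-1))$ versus $1/(m_a(m_a-1))$ when passing from $\whC$ to $\bar\Sigma_Z$, and verify that the operator-norm (as opposed to entrywise) control is available --- this is why one invokes Lemma~\ref{lem_op} (which already does the $\eps$-net discretization over $\S^{K-1}$) rather than Lemma~\ref{lem_quad}. A secondary subtlety is that $\bar\Sigma_Z$ and $\wt\W$ in Lemma~\ref{lem_op} are defined via $\wh\Pi$ and $\wh I$, whereas here the natural object is $\oW = \W_{\sbt I}\Pi$ with the \emph{true} $\Pi$; but the difference is precisely the $L_k$ correction handled in the second part, so no new estimate is needed. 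Since $\|\C^{-1/2}(\whC-\C)\C^{-1/2}\|_{\rm op} = \|\Omega^{1/2}(\whC-\C)\Omega^{1/2}\|_{\rm op}$, the lemma follows. As a byproduct, the bound $\delta_n\sqrt K = o(1)$ under Assumption~\ref{ass_H_prime} (via Lemma~\ref{lem_signal}, $K\log\pn = o(\sqrt n)$) justifies the lower bound \eqref{disp_lb_OmegaCOmega} used in the main proof and the invertibility of $\whC$ on this event.
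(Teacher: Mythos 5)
Your decomposition is genuinely different from the paper's, though both close at the same rate. The paper starts from the identity $\whC = n^{-1}\wt\X^\top\wt\X - D$, writes $\whC - \C = T_1 + T_2 + T_3 + T_4$ according to the $(\wt\Z,\wt\W)$ expansion (so the quasi-pure corrections are absorbed inside each $T_i$ via $\wt\Z = \Z+\Z\D$ and $\wt\W - \oW$), and then bounds the suprema $T_1',T_2',T_3'$ over $\S^{K-1}$ using the $\eps$-net arguments already packaged in Lemmas~\ref{lem_op} and \ref{lem_wt_WW_D}, which are stated precisely for the $\wt\W$-quantities. You instead split along the index sets first, $\whC - \C = (\whC - \bar\Sigma_Z) + (\bar\Sigma_Z - \C)$, isolating the $L_k$ correction in the first summand. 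This is conceptually cleaner---the bilinear terms in $\bar\Sigma_Z - \C$ involve $\oW$ rather than $\wt\W$, so the $\bar\rho$ factors genuinely drop out of that half---but it costs two things the paper's toolkit doesn't directly supply: (i) $\oW$-analogues of Lemma~\ref{lem_op}'s operator-norm bounds (a routine re-run of the discretization using Lemma~\ref{lem_W}(3) and Lemma~\ref{lem_bernstein}); and (ii) an extension of Lemma~\ref{lem_diag_gamma_hat_td}, which the paper proves only for the diagonal of $\whC - \bar\Sigma_Z$, to off-diagonal entries, before you can pass to the operator norm (e.g.\ via Frobenius, $\|\whC-\bar\Sigma_Z\|_F\lesssim\sqrt{K}\,\bar\rho\,\delta_n$). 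Neither step is deep, but both need to be written out.

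Two small corrections. First, this lemma is stated under the conditions of Theorem~\ref{thm_beta}, i.e.\ Assumption~\ref{ass_J1}, not Assumption~\ref{ass_J1_prime}. You do not need the $\bar\rho\,\delta_n\sqrt K$ contribution to be $o(\delta_n\sqrt K)$, only $O(\delta_n\sqrt K)$, and Assumption~\ref{ass_J1} ($m\bar\rho^2 = O(1)$, hence $\bar\rho\lesssim 1$) already delivers that; invoking the stronger Assumption~\ref{ass_J1_prime} here would be circular with what you later want to prove under it. Second, your closing remark that $\delta_n\sqrt K = o(1)$ needs the stronger Assumption~\ref{ass_H_prime} regime; under Theorem~\ref{thm_beta}'s $K\log\pn\le cn$ one only gets $\delta_n\sqrt K\le\sqrt c$, which is nonetheless enough for the Weyl lower bound in (\ref{disp_lb_OmegaCOmega}) once $c$ is chosen small, so the invertibility of $\whC$ still follows.
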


	\begin{lemma}\label{lem_Delta_c}
		Let $\D_a$, $\D_b$ and $\D_c$ be defined in (\ref{def_Ds}). Let $\wh \D_c$ be defined in (\ref{disp_Delta_c}). 
		Under the conditions of Proposition \ref{prop_est_V}, we have
		\begin{align*}
		|\wh \D_c - \D_c| = o_p(\D_a \D_b).
		\end{align*}
	\end{lemma}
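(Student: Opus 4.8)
The plan is to match $\wh\D_c$ with $\D_c$ by replacing the estimated ingredients $\wh\tau^2$, $\wh m=|\wh I_k|$, $\wh\beta$ and $\wh\Theta^+$ by their population counterparts one at a time, working on the event $\E$, where $\wh K=K$, $\wh A_{i\sbt}=\e_a$ for $i\in I_a$, and $I_a\subseteq\wh I_a\subseteq I_a\cup J_1^a$. Two reductions organize the argument. First, since $\Theta=A\C$, on $\E$ one has $\Theta^+\e_i=(\Theta^\top\Theta)^{-1}[\C]_{\sbt a}$ for \emph{every} $i\in I_a$, so each inner sum collapses and
\[
\D_c=\frac{\tau^4}{m-1}\sum_{a=1}^K\beta_a^2\,g_a^2,\qquad g_a:=\e_k^\top(\Theta^\top\Theta)^{-1}[\C]_{\sbt a}.
\]
Because $[\C]_{\sbt a}^\top(\Theta^\top\Theta)^{-1}[\C]_{\sbt a}=\e_a^\top(A^\top A)^{-1}\e_a\le 1/m$, the Cauchy--Schwarz step already carried out in (\ref{bd_minor_var}) gives $g_a^2\le m^{-1}\e_k^\top(\Theta^\top\Theta)^{-1}\e_k$, hence $\D_c\le(m-1)^{-1}\D_a\D_b$; the analogous estimate for $\wh\D_c$---using $\sum_a\|\e_k^\top\wh\Theta^+ P_{I_a}\|_2^2\le\|\e_k^\top\wh\Theta^+\|_2^2=\e_k^\top(\wh\Theta^\top\wh\Theta)^{-1}\e_k$, the bounds $\|\wh\beta\|_2=O_p(1\vee\|\beta\|_2)$ and $\wh\tau^2=O_p(\tau^2)$ recorded below, and Lemma \ref{lem_H_op}(c)---yields $\wh\D_c=O_p((m-1)^{-1}\D_a\D_b)$. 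Thus when $m\to\infty$ both $\D_c$ and $\wh\D_c$ are $o_p(\D_a\D_b)$ and there is nothing to prove, so henceforth I may assume $m=O(1)$. Second, if the implemented estimator sums over $\wh I_a\supseteq I_a$ rather than $I_a$, the excess is supported on the quasi-pure indices $L_a\subseteq J_1^a$ and is controlled exactly as the term ${\rm Rem}_2$ in Lemma \ref{lem_Rem_2}, via $\max_{k,\,j\in J_1^k}\|A_{j\sbt}-\e_k\|_1\lesssim\delta_n$ and Assumption \ref{ass_J1_prime}, hence is $o_p(\D_a\D_b)$.

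Given these reductions, the scalar and coefficient replacements are routine. Lemma \ref{lem_tau} gives $\wh\tau^2=\tau^2+O_p(\delta_n)$, so $\wh\tau^4=\tau^4(1+o_p(1))$; on $\E$, $|\wh m-m|/\wh m\le\bar\rho=o(1)$ by Assumption \ref{ass_J1_prime}, so $\wh\tau^4/[\wh m(\wh m-1)]=\tau^4/[m(m-1)]$ up to $1+o_p(1)$. Theorem \ref{thm_beta} together with $K\log\pn=o(n)$---which follows from Assumption \ref{ass_H_prime} and $\lambda_K\lesssim p/K$ (Lemma \ref{lem_signal})---gives $\|\wh\beta-\beta\|_2=o_p(1\vee\|\beta\|_2)$ and $\|\wh\beta\|_2=O_p(1\vee\|\beta\|_2)$; writing $\wh\beta_a^2-\beta_a^2=(\wh\beta_a-\beta_a)(\wh\beta_a+\beta_a)$, Cauchy--Schwarz and $g_a^2\le m^{-1}\e_k^\top(\Theta^\top\Theta)^{-1}\e_k$ bound the $\wh\beta$-replacement error in $\frac{\tau^4}{m-1}\sum_a\beta_a^2 g_a^2$ by $\frac{\tau^4}{m(m-1)}(\|\wh\beta\|_2+\|\beta\|_2)\|\wh\beta-\beta\|_2\,\e_k^\top(\Theta^\top\Theta)^{-1}\e_k$, which is $o_p(\D_a\D_b)$ because $\D_a\D_b\gtrsim\tau^2\bigl(\sigma^2\vee\tau^2\|\beta\|_2^2/m\bigr)\e_k^\top(\Theta^\top\Theta)^{-1}\e_k$.

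The remaining, and principal, step is to replace $\wh\Theta^+$ by $\Theta^+$ inside the quadratic forms, i.e.\ to show
\[
\sum_{a=1}^K\frac{\beta_a^2}{m}\sum_{i\in I_a}\Bigl[(\e_k^\top\wh\Theta^+\e_i)^2-(\e_k^\top\Theta^+\e_i)^2\Bigr]=o_p\bigl(\|\beta\|_2^2\,\e_k^\top(\Theta^\top\Theta)^{-1}\e_k\bigr).
\]
Expanding the square, this reduces to a uniform-in-$a$ bound on $\|\e_k^\top(\wh\Theta^+-\Theta^+)P_{I_a}\|_2$, where $P_{I_a}$ selects the $m=|I_a|$ columns indexed by $I_a$ (the pure quadratic piece being of smaller order once the cross piece is controlled). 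Using the perturbation identity (\ref{eq_theta_hat_inv}) for $\wh\Theta^+-\Theta^+$ together with $\Theta^+=\Omega^{1/2}H^+$ and $\wh\Theta=\wh H\C^{1/2}$, this splits into a piece built from $\e_k^\top\Omega^{1/2}(H^\top H)^{-1}(\wh H-H)^\top$ restricted to the rows in $I_a$ and a piece built from $(\wh\Theta^\top\wh\Theta)^{-1}-(\Theta^\top\Theta)^{-1}$. Because the row restriction involves only $m$ coordinates, Lemma \ref{lem_H_op}(d) (applied coordinate-wise, with a union bound) gives these pieces a size of order $(\sqrt m+\sqrt K)\,T_n\delta_n\sqrt{\Omega_{kk}}/\lambda_K$, with $T_n=O(1)$ under Assumption \ref{ass_J1_prime}; comparing against $\e_k^\top(\Theta^\top\Theta)^{-1}\e_k\gtrsim\Omega_{kk}/p$ and invoking $\lambda_K\gtrsim p\log\pn/\sqrt n$ (Assumption \ref{ass_H_prime}) with $mK\le p$ and $K\le p$ makes this $o_p\bigl((m^{-1}\e_k^\top(\Theta^\top\Theta)^{-1}\e_k)^{1/2}\bigr)$. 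This is precisely where the reduction to $m=O(1)$ is used: the bare factor $m/\sqrt{p\log\pn}$ arising in the comparison tends to zero only when $m$ stays bounded, the unbounded case being covered by the trivial negligibility of $\D_c,\wh\D_c$ noted above. I expect this row-restricted perturbation analysis---extracting $\sqrt m$ rather than $\sqrt p$ from the restriction to $I_a$ and then checking the exponents against $\lambda_K$---to be the main obstacle; once it is in place, collecting the bounds from the three replacement steps and the $\wh I_a$-versus-$I_a$ correction yields $|\wh\D_c-\D_c|=o_p(\D_a\D_b)$.
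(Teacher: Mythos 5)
Your scaffolding — working on $\E$, swapping out $\wh\tau^2$, $\wh m$ and $\wh\beta$ one at a time via Lemma \ref{lem_tau}, Theorem \ref{thm_beta} and $|\wh m-m|/\wh m\le\bar\rho$, and reducing everything to a perturbation bound for $\wh\Theta^+-\Theta^+$ against the benchmark $|\e_k^\top\Theta^+\e_i|\le\sqrt{\e_k^\top(\Theta^\top\Theta)^{-1}\e_k/m}$ — is exactly the paper's route. The gap is the one step you yourself defer as ``the main obstacle,'' and it is not a technicality. Your claim that restricting to the $m$ columns indexed by $I_a$ lets you apply Lemma \ref{lem_H_op}(d) coordinatewise and extract a factor $\sqrt m+\sqrt K$ in place of $\sqrt p$ is structurally unsound: in the identity (\ref{eq_theta_hat_inv}), $\wh\Theta^+-\Theta^+=(\Theta^\top\Theta)^{-1}(\wh\Theta-\Theta)^\top P_{\wh\Theta}^{\perp}+\Theta^+(\Theta-\wh\Theta)\wh\Theta^+$, the vector $\e_i$ is hit by $P_{\wh\Theta}^{\perp}$ and by $\wh\Theta^+$, both of which are dense and couple all $p$ rows of $\wh\Theta-\Theta$; only the piece $(\Theta^\top\Theta)^{-1}(\wh\Theta-\Theta)^\top\e_i$ localizes to row $i$. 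So the object you actually have to control is $\|\e_k^\top\Omega^{1/2}(H^\top H)^{-1}(\wh H-H)^\top\|_2$ over all $p$ coordinates, your asserted bound $(\sqrt m+\sqrt K)T_n\delta_n\sqrt{\Omega_{kk}}/\lambda_K$ has no derivation, and the subsequent exponent count (which also degrades to the lossy lower bound $\e_k^\top(\Theta^\top\Theta)^{-1}\e_k\gtrsim\Omega_{kk}/p$) does not close the argument.

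For comparison, the paper accepts the $\sqrt p$ factor but keeps the sharp weight: it bounds the first piece by $\sqrt p\,\delta_n\sqrt{\e_k^\top\Omega^{1/2}(H^\top H)^{-2}\Omega^{1/2}\e_k}\le\delta_n\sqrt{p/\lambda_K(H^\top H)}\,\sqrt{\e_k^\top(\Theta^\top\Theta)^{-1}\e_k}$ and the second by $\Omega_{kk}^{1/2}\delta_n\sqrt K/\sigma_K(\wh H)$ via Lemma \ref{lem_H_op}, so that (\ref{rate_e_Theta_e}) follows from Assumption \ref{ass_H_prime} together with $\lambda_K(H^\top H)\ge m\cl$, with no case split on $m$ and no recourse to $\Omega_{kk}/p$. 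Your dichotomy $m\to\i$ versus $m=O(1)$ is a reasonable idea, but as written its first branch also leaks: Lemma \ref{lem_H_op}(c) controls $\e_k^\top(\wh\Theta^\top\wh\Theta)^{-1}\e_k$ only by $1/\lambda_K(\wh\Theta^\top\wh\Theta)$, which is not comparable to $\D_b$ (whose relevant part can be as small as $\Omega_{kk}\asymp1/\lambda_1(\C)$); you would need the $\Omega$-weighted form $\e_k^\top(\wh\Theta^\top\wh\Theta)^{-1}\e_k\le\Omega_{kk}/\lambda_K(\wh H^\top\wh H)$ from part (b), or the very perturbation estimate you are trying to avoid. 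Finally, the paper's $\wh\D_c$ in (\ref{disp_Delta_c}) already sums over $I_a$, so the ${\rm Rem}_2$-style correction for $\wh I_a$ versus $I_a$ is not needed for the statement as given. In short: same strategy as the paper, but the decisive estimate — the uniform bound (\ref{rate_e_Theta_e}) on $\e_k^\top(\wh\Theta^+-\Theta^+)\e_i$ — is missing, and the mechanism you propose for it would fail.
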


	\subsection{Proof of lemmas in Section \ref{sec_lemmas_var}}\label{sec_proof_lem_var}
	
	\subsubsection{Proof of Lemma \ref{lem_tau}}
	We work on the event $\E$ defined in (\ref{def_event}) that has probability at least $1-\pn^{-c}$ for some $c>0$. Recall that it implies $\wh K = K$ and $I_k \subseteq \wh I_k \subseteq I_k \cup J_1^k$ for all $k\in [K]$. 
	
	Fix any $i\in [p]$. The decomposition  $\X_{\sbt i} = \Z A_{i\sbt}+\W_{\sbt i}$ readily gives
	\begin{align*}
	|\wh \tau_i^2 - \tau_i^2|			&\le \left|{1\over n}A_{i\sbt}^\top \Z^\top \Z A_{i\sbt} -\wh A_{i\sbt}^\top \whC \wh A_{i\sbt} \right| + \left|{1\over n}\W_{\sbt i}^\top \W_{\sbt i} - \tau_i^2 \right| + \left|{2\over n}A_{i\sbt}^\top \Z^\top \W_{\sbt i}\right|
	\end{align*}
	Since $\EE[n^{-1}\W_{\sbt i}^\top \W_{\sbt i}] = \tau_i^2$ and $|A_{i\sbt}^\top \Z^\top \W_{\sbt i}|\le \|\Z^\top \W_{\sbt i}\|_\i$ by $\|A_{i\sbt}\|_1\le 1$, an application of the exponential inequality in Lemma \ref{lem_bernstein} followed by  the union bound yields
	\begin{equation}\label{eq_R2R3}
	\PP\left\{\left|{1\over n}\W_{\sbt i}^\top \W_{\sbt i} - \tau_i^2 \right| +\left|{2\over n}A_{i\sbt}^\top \Z^\top \W_{\sbt i}\right| \lesssim \delta_n\right\}\ge 1-\pn^{-c}.
	\end{equation}
	For  the first term, we argue that, for any $i\in I_k$ and $k\in[K]$, $\wh A_{i\sbt} = A_{i\sbt}$ (on the event $\E$) so that
	\begin{align*}
	\left|{1\over n}A_{i\sbt}^\top \Z^\top \Z A_{i\sbt} -\wh A_{i\sbt}^\top \whC \wh A_{i\sbt} \right|= \left|{1\over n}\Z_{\sbt k}^\top \Z_{\sbt k}-[\C]_{kk}\right|.
	\end{align*}
	This can be bounded by applying (\ref{rate_zz_c}) in Lemma \ref{lem_quad} with $u = v = \e_k$. Taking the union bound concludes the proof of $\wh \tau_i^2 - \tau_i^2$ for $i\in I$.\\  For any $i\in J$, 
	\begin{align*}
	|\tau_i^2-\wh \tau_i^2 |= \left|{1\over n}A_{i\sbt}^\top \Z^\top \Z A_{i\sbt} -\wh A_{i\sbt}^\top \whC \wh A_{i\sbt} \right|&\le \left|(\wh A_{i\sbt}-A_{i\sbt})^\top \whC\wh A_{i\sbt} \right| +\left|(\wh A_{i\sbt} -A_{i\sbt})\whC A_{i\sbt} \right|\\
	&\quad  +\left| A_{i\sbt}^\top \left({1\over n}\Z^\top \Z-\whC\right) A_{i\sbt} \right|. 
	\end{align*}
	By the Cauchy-Schwarz inequality, $\|\wh A_{I\sbt}\|_1\le 1$,
	$\|A_{i\sbt}\|_1\le1$ and Lemma \ref{lem_A}, we obtain
	\[
	\left|(\wh A_{i\sbt} -A_{i\sbt})\whC A_{i\sbt} \right|+ \left|(\wh A_{i\sbt}-A_{i\sbt})^\top \whC\wh A_{i\sbt} \right| \le 2\|\whC(\wh A_{i\sbt} - A_{i\sbt})\|_\i \lesssim \delta_n
	\]
	with probability $1-\pn^{-c}$. 
	Regarding the third term, apply Lemma \ref{lem_1} to find that
	\begin{align*}
	\left| A_{i\sbt}^\top \left({1\over n}\Z^\top \Z-\whC\right) A_{i\sbt} \right| &\le 
	\|n^{-1} \Z^\top \Z-\whC\|_\i \\
	&\le \left\| n^{-1}\Z^\top \Z-\C \right\|_\i + \|\whC - \C\|_\i \lesssim \delta_n
	\end{align*}
	with probability $1-\pn^{-c}$.
	Combining these two displays completes our proof.
	\qed

	\bigskip
	
	\begin{lemma}\label{lem_A}
		Under the same conditions of Theorem  \ref{thm_I}, let $\whC$ be constructed as (\ref{Chat}) and $\wh A$ as in display (\ref{est_AJ}) together with (\ref{est_AI_a}) -- (\ref{est_AI_b}). Let $s_j = \|A_{j\sbt}\|_0$ for any $j\in[p]$. With probability greater than $1-\pn^{-c}$ for some constant $c>0$,
		\begin{equation*}
		\|\wh A_{j\sbt } - A_{j\sbt }\|_2  \lesssim \cl^{-1}\delta_n\sqrt{s_j},\quad  \|\whC(\wh A_{j\sbt } - A_{j\sbt })\|_\i \lesssim  \delta_n
		\end{equation*}
		hold for all $1\le j\le p$.
	\end{lemma}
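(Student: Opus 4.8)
The plan is to prove the two bounds in Lemma \ref{lem_A} by combining the optimal convergence rate of $\wh A$ from \cite{LOVE} with the element-wise control of $\whC$ already available on the event $\E$. First I would recall that, on $\E$, we have $\wh K = K$, the pure variable set is recovered up to quasi-pure variables ($I_k \subseteq \wh I_k \subseteq I_k \cup J_1^k$), and $\|\wh\Sigma - \Sigma\|_\infty \le \delta \asymp \delta_n$. For $j \in I$ the first bound is immediate since $\wh A_{j\sbt} = A_{j\sbt}$ by the construction (\ref{est_AI_a})--(\ref{est_AI_b}) on $\E$, so $s_j = 1$ and the left side is zero. For $j \in J$, I would invoke Theorem 5 of \cite{LOVE}, which gives $\max_{j\in[p]}\|\wh A_{j\sbt} - A_{j\sbt}\|_q \lesssim \cl^{-1}\delta_n$ uniformly in $1 \le q \le \infty$; specializing to $q = 2$ and using $\|v\|_2 \le \sqrt{\|v\|_0 \vee 1}\,\|v\|_\infty$ together with the fact that $\wh A_{j\sbt}$ and $A_{j\sbt}$ have support bounded by a constant multiple of $s_j$ (a consequence of the Dantzig-type selector (\ref{est_AJ}) producing sparse solutions, or more simply by the $\ell_1\to\ell_0$ control in \cite{LOVE}) yields $\|\wh A_{j\sbt} - A_{j\sbt}\|_2 \lesssim \cl^{-1}\delta_n\sqrt{s_j}$.

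For the second bound, the key observation is that $\wh A_{j\sbt}$ with $j\in\wh J$ is, by (\ref{est_AJ}), a feasible point of the Dantzig-type optimization, meaning it satisfies
\[
\left\|\whC \wh A_{j\sbt} - (\wh A_{{\wh I\sbt}}^\top \wh A_{{\wh I\sbt}})^{-1}\wh A_{{\wh I\sbt}}^\top \wh\Sigma_{\wh Ij}\right\|_\infty \le c\,\delta_n.
\]
I would then show that the true row $A_{j\sbt}$ satisfies an analogous inequality with a possibly larger constant. This follows because $\Sigma_{Ij} = A_{I\sbt}\C A_{j\sbt}$ exactly (from the model and diagonal $\Gamma$), so $(A_{I\sbt}^\top A_{I\sbt})^{-1}A_{I\sbt}^\top \Sigma_{Ij} = \C A_{j\sbt}$, and replacing $A_{I\sbt}$, $\Sigma$ by their estimates $\wh A_{\wh I\sbt}$, $\wh\Sigma$ introduces only $O(\delta_n)$ errors in $\|\cdot\|_\infty$ — this uses $\|\wh\Sigma - \Sigma\|_\infty \le \delta_n$ on $\E$, the row-wise consistency of $\wh A_{I\sbt}$, the bound $\|A_{j\sbt}\|_1 \le 1$, and standard perturbation bounds for $(\wh A_{\wh I\sbt}^\top \wh A_{\wh I\sbt})^{-1}$ (which is close to $(A_{I\sbt}^\top A_{I\sbt})^{-1}$ on $\E$). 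Subtracting the two inequalities gives $\|\whC(\wh A_{j\sbt} - A_{j\sbt})\|_\infty \lesssim \delta_n$ directly, without needing to invert $\whC$.

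The main obstacle I anticipate is the bookkeeping in the second bound: carefully tracking how the estimation errors in $\wh A_{\wh I\sbt}$ and $\wh\Sigma_{\wh Ij}$ propagate through the projection matrix $(\wh A_{\wh I\sbt}^\top \wh A_{\wh I\sbt})^{-1}\wh A_{\wh I\sbt}^\top$ to show the true parameter is (approximately) feasible. One must handle the fact that $\wh I \ne I$ in general, using the quasi-pure variable structure — specifically that rows in $\wh I_k \setminus I_k$ have $A_{j\sbt}$ within $O(\delta_n)$ of $\e_k$ by (\ref{rate_Ajek}), so including them only perturbs the relevant averages by $O(\delta_n)$. Most of the supporting pieces (the operator-norm closeness of $\wh A_{\wh I\sbt}^\top \wh A_{\wh I\sbt}$ to $A_{I\sbt}^\top A_{I\sbt}$, the concentration of $\wh\Sigma$, and the $\ell_q$ rate for $\wh A$) are already established in this excerpt or in \cite{LOVE}, so the proof is essentially an assembly argument rather than a new technical derivation. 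I would present it as a short direct computation conditional on $\E$.
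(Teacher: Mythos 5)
Your handling of the second bound (feasibility of both $\wh A_{j\sbt}$ and $A_{j\sbt}$, then triangle inequality) is exactly the paper's argument, just unpacked into the underlying perturbation bookkeeping that the cited LOVE analysis carries out. That part is sound.

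The first bound has a real gap. You assert that Theorem 5 of \cite{LOVE} gives $\max_j\|\wh A_{j\sbt}-A_{j\sbt}\|_q \lesssim \cl^{-1}\delta_n$ uniformly in $q$, and then try to recover the $\sqrt{s_j}$ via $\|v\|_2 \le \sqrt{\|v\|_0}\|v\|_\infty$ plus a claim that the Dantzig estimate is $O(s_j)$-sparse. Neither step holds as stated. The rate in LOVE's display (36) is not $\cl^{-1}\delta_n$; it has the form $\delta_n/\kappa_2(\C,s_j)$ where
\[
\kappa_2(\C,s) := \inf_{|S|\le s}\ \inf_{\substack{\|v\|_2=1 \\ v\in\mathcal{C}_S}} \|\C v\|_\i,
\qquad \mathcal{C}_S := \{v : \|v_{S^c}\|_1 \le \|v_S\|_1\},
\]
so the $\sqrt{s_j}$ factor enters by lower-bounding this restricted-cone constant, not by converting an $\ell_\infty$ rate to an $\ell_2$ rate through support size. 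The paper's proof does exactly this: it shows $\kappa_2(\C,s)\ge \cl/\sqrt{s}$ by restricting the infimum to exactly $s$-sparse unit vectors, bounding $\|[\C]_{SS}v_S\|_\i \ge \|[\C]_{SS}v_S\|_2/\sqrt{s} \ge \cl/\sqrt{s}$, and plugging into LOVE's rate. Your proposed alternative would require $\|\wh A_{j\sbt}\|_0 \lesssim s_j$, but an $\ell_1$-minimizing Dantzig-type estimator does not, in general, return a solution with support comparable to the truth; the standard analysis instead shows that the \emph{error vector} $\wh A_{j\sbt}-A_{j\sbt}$ lies in the cone $\mathcal{C}_S$ (the ``$\ell_1$ of the off-support part dominated by the on-support part'' property), which is precisely what justifies restricting $\kappa_2$ to that cone. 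Without invoking that cone condition and the associated restricted eigenvalue bound, the $\sqrt{s_j}$ does not appear.
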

	\begin{proof}
		Write $s:= s_j$. The rate of $\wh A_{j\sbt } - A_{j\sbt }$ follows immediately from 
		\cite[display (36) in Theorem 5]{LOVE} by observing that 
		\begin{align*}
		\kappa_2(\C, s) := \inf_{|S|\le s} \inf_{\substack{\|v\|=1\\v \in \mathcal {C}_S}} \| \C v \|_\i  
		&\ge \inf_{|S|\le s} \inf_{\substack{\|v\|=1\\\textrm{supp}(v) = S}} \| \C v \|_\i\\ 
		&\ge \inf_{|S|\le s} \inf_{\substack{\|v\|=1\\\textrm{supp}(v) = S}} \left\| [\C]_{SS}\cdot v_S\right\|_\i\\
		& \ge \inf_{|S|\le s} \inf_{\substack{\|v\|=1\\\textrm{supp}(v) = S}} \left\| [\C]_{SS}\cdot v_S\right\| / \sqrt{s}\\
		& \ge \cl/\sqrt{s}
		\end{align*}
		with $\mathcal {C}_S:=\{v\in \R^{K}:\| v_{S^c}\|_1\le \| v_{S}\|_1 \}$ and $S\subseteq [K]$ with $|S|\le s$. The proof of the bound for  $\|\whC(\wh A_{j\sbt } - A_{j\sbt })\|_\i$ follows from, writing $\wh \Pi:= 	\wh A_{{\wh I\sbt}} [\wh A_{{\wh I\sbt}}^\top \wh A_{{\wh I\sbt}}]^{-1}
		$, 
		\begin{align*}
		\|\whC(\wh A_{j\sbt } - A_{j\sbt })\|_\i &\le \|\whC\wh A_{j\sbt } - \wh \Pi^\top \wh \Sigma_{\wh I j}\|_\i +\|\whC A_{j\sbt } - \wh \Pi^\top \wh \Sigma_{\wh I j}\|_\i
		= O(\delta_n)
		\end{align*}
		by using the feasibility of both $\wh A_{j\sbt }$ and $A_{j\sbt }$.
	\end{proof}

	\subsubsection{Proof of Lemma \ref{lem_sigma}}
	We work on the event $\E$ defined in (\ref{def_event}) that has probability at least $1-\pn^{-c}$ for some $c>0$. Recall that it implies $\wh K = K$ and $I_k \subseteq \wh I_k \subseteq I_k \cup J_1^k$ for all $k\in [K]$. 
	
	Recall that, from the definition of $\wh h$ in (\ref{est_h}),
	$$\wh h ={1\over n}(\wh A_{{\wh I\sbt}}^\top \wh A_{{\wh I\sbt}})^{-1}\wh A_{{\wh I\sbt}}^\top \\X_{\sbt \wh I}^\top \y \overset{(\ref{def_tildes})}{=}{1\over n}\wt \X^\top \y.$$
	We observe that the definition of $\whC$ in (\ref{Chat}) yields
	\begin{equation}\label{eq_C_hat}
	\left[\whC\right]_{aa} = {1\over n}\wt \X_{\sbt a}^\top \wt \X_{\sbt a} - d_a,~  \forall a\in [K];\qquad \left[\whC\right]_{ab} = {1\over n}\wt \X_{\sbt a}^\top \wt \X_{\sbt b},~ \forall a\ne b\in [K]
	\end{equation}
	with 
	\begin{equation}\label{def_D}
	d_a= {1\over \wh m_a^2}\sum_{i\in \wh I_a} {1\over n}\X_{\sbt i}^\top \X_{\sbt i} - {1\over \wh m_a^2 (\wh m_a-1)}\sumsum_{i,j\in \wh I_a, i\ne j}{1\over n}\X_{\sbt i}^\top \X_{\sbt i}.
	\end{equation}
	Let $D = \textrm{diag}(d_1, \ldots, d_K)$ so that $\whC = n^{-1}\wt \X^\top \wt \X - D$, and define $\D_\beta:= \Z\beta - \wt \Z\wh \beta$.  
	Since $\y = \Z\beta + \eps$ and  $\wt \X = \wt \Z + \wt \W$, we find
	\begin{align*}
	\wh \sigma^2 &= {1\over n}\y^\top \y - {2\over n}\wh\beta^\top \wt \X^\top \y+ \wh \beta^\top \whC\wh \beta\\ 
	&= {1\over n}\|\y- \wt \X\wh \beta\|_2^2 - \wh \beta^\top \left({1\over n}\wt \X^\top \wt \X - \whC \right) \wh \beta\\
	&= {1\over n}\|\Z\beta + \eps - (\wt \Z+\wt \W)\wh \beta\|_2^2 - \wh \beta^\top D \wh \beta\\
	&= {1\over n}\|\D_\beta\|_2^2 + {1\over n}\eps^\top \eps + {2\over n}\eps^\top \D_\beta - {2\over n}\eps^\top  \wt \W\wh \beta - {2\over n}\wh\beta^\top \wt \W^\top \D_\beta+ \wh\beta^\top \left({1\over n}\wt \W^\top \wt \W - D\right)\wh \beta ,
	\end{align*}
	and consequently we have
	\begin{align*}
	|\wh \sigma^2 - \sigma^2| &\le \left|{1\over n}\eps^\top \eps-\sigma^2\right|  + \left\|{1\over n}\wt \W^\top \wt \W - D\right\|_{{\rm op}}\|\wh \beta\|^2 + {2\over n}|\eps^\top \wt \W^\top \wh\beta|\\
	&\quad + {1\over n}\|\D_\beta\|_2^2+ {2\over n}|\eps^\top  \D_\beta| + {2\over n}|\wh\beta^\top \wt \W^\top \D_\beta|.
	\end{align*}
	The rate of $\|\wh \beta -\beta\|_2$ in Theorem \ref{thm_beta} together with $K\log\pn = o(n)$ implies 
	\begin{equation}\label{wh_beta_2norm}
	\|\wh\beta\| \lesssim 1\vee \|\beta\|_2
	\end{equation}
	with probability $1-\pn^{-c}$.
	From Lemma \ref{lem_bernstein} and Lemma \ref{lem_wt_WW_D} with the same discretization arguments in the proof of Lemma \ref{lem_op_C}, we obtain
	\begin{equation}\label{disp_7}
	\left|{1\over n}\eps^\top \eps-\sigma^2\right| \lesssim \delta_n,\qquad \left\|{1\over n}\wt \W^\top \wt \W - D\right\|_{{\rm op}} \lesssim {\delta_n\sqrt{K} \over m},
	\end{equation}
	with probability greater than $1-\pn^{-c}.$ 
	Using similar arguments as in the proof of Lemma \ref{lem_Z_wt_W_wt}, by substituting  $Z_k$ by $\eps$, we find, with probability $1-\pn^{-c}$,
	\[
	{1\over n}|\eps^\top \wt \W v| \lesssim \left({\|v\|_2\over \sqrt{m}} + \rho \|v\|_1\right)\delta_n\sqrt{K}
	\]
	for any fixed $v\in \R^K$. By choosing $v = \beta$ and $v = \e_k$ for $k\in [K]$ and using the bound
	\[
	{1\over n}\left|\eps^\top \wt \W\wh\beta \right| \le {1\over n}\left|\eps^\top \wt \W\beta \right| + {1\over n}\|\eps^\top \wt \W\|_\i \sqrt{K}\|\wh\beta - \beta\|_2,
	\]
	we have, with probability $1-c\pn^{-c'}$,
	\begin{equation}\label{disp_8}
	{1\over n}\left|\eps^\top \wt \W\wh\beta\right| \lesssim \cl^{-1}\left(1\vee {\|\beta\|_2 \over \sqrt m}\right)\delta_n\sqrt{K}.
	\end{equation}
	We proceed to bound the remaining three terms involving  $\D_\beta$. Recall that 
	\[
	\D_\beta = \Z(\beta - \wh \beta) + (\Z - \wt \Z) \wh\beta = \Z(\beta - \wh \beta) + \Z\D\wh\beta
	\]
	with $\D$ defined in (\ref{def_deltaA_I}). 
	Observe that
	\begin{align*}
	{1\over n}\|\D_\beta\|_2^2 &\le {1\over n}\|\Z(\wh\beta - \beta)\|_2^2 + {1\over n}\|\Z^\top \Z\|_\i \|\D\|_{1,\i}^2\|\wh \beta\|_1^2,\\
	{1\over n}|\eps^\top  \D_\beta| & \le {1\over n}\|\eps^\top  \Z\|_\i \sqrt{K}\|\wh\beta-\beta\|_2 + {1\over n}\|\eps^\top \Z\|_\i \|\D\wh\beta\|_1,
	\end{align*}
	and  $\|\wh\beta\|_1 \le \sqrt{K}\|\wh \beta\|_2$.
	Hence, display (\ref{wh_beta_2norm}),
	Lemmas \ref{lem_bernstein}, \ref{lem_deltaA_I} and \ref{lem_fit} and Assumptions \ref{ass_J1} and \ref{ass_H} yield
	\begin{equation}\label{disp_9}
	{1\over n}\|\D_\beta\|_2^2 + {2\over n}|\eps^\top \D_\beta| \lesssim \cl^{-1}\left(1\vee {\|\beta\|_2^2 \over m}\right)\delta_n\sqrt{K}
	\end{equation}
	with probability at least $1-c\pn^{-c'}.$
	Finally, we obtain the bound 
	\begin{align*}
	{1\over n}|\wh\beta^\top \wt \W^\top \D|&\le {1\over n}|\beta^\top \wt \W^\top \D| + {1\over n}|(\wh\beta - \beta)^\top \wt \W^\top \D|\\
	&\le {1\over n}\|\beta^\top \wt \W^\top \Z\|_\i \sqrt{K}\|\wh\beta-\beta\|_2 + {1\over n}\|\beta^\top \wt \W^\top \Z\|_\i \|\D\wh \beta\|_1\\
	&\quad  + {1\over n}\|\wt \W^\top  \Z\|_\i \sqrt{K}\|\wh\beta - \beta\|_2\|\D\wh \beta\|_1+ {1\over n}|(\wh\beta - \beta)^\top \wt \W^\top  \Z(\wh\beta - \beta)|\\
	& = O_p\left(\cl^{-1/2}\left(1\vee {\|\beta\|_2^2 \over  m}\right)\delta_n \sqrt{K} \right)+{1\over n}|(\wh\beta - \beta)^\top \wt \W^\top  \Z(\wh\beta - \beta)|,
	\end{align*}
	using the same arguments as above, in combination  with Lemma \ref{lem_Z_wt_W_wt}.
	We control the final term by
	\begin{align}\label{disp_12}\nonumber
	{1\over n}|(\wh\beta - \beta)^\top \Z^\top \wt \W(\wh\beta - \beta)| &\le  \left\|\C^{1/2}(\wh\beta - \beta)\right\|_2^2 \sup_{v\in \S^{K-1}}{1\over n}\left|v^\top\Omega^{1/2} \Z^\top \wt \W \Omega^{1/2} v\right|\\
	&= O_p\left(\cl^{-1/2}\left(1\vee {\|\beta\|_2^2 \over  m}\right)\delta_n \sqrt{K} \right)
	\end{align}
	using   inequality (\ref{eq_T3'}), Lemma \ref{lem_fit} and  $K\log\pn = o(n)$.
	Collecting (\ref{disp_7}) -- (\ref{disp_12}) gives the desired result. \qed
	
	\bigskip
	The following lemmas are used in the proof of Lemma \ref{lem_sigma}.
	
	\begin{lemma}\label{lem_fit}
		Under conditions of Theorem \ref{thm_beta}, with probability $1-\pn^{-c}$ for some constant $c>0$, 
		\begin{align*}
		{1\over n}\|\Z \wh \beta - \Z\beta\|_2^2 \lesssim \left(1\vee {\|\beta\|_2^2 \over m}\right)\delta_n^2 K,\qquad \left\|\C^{1/2}(\wh\beta - \beta)\right\|_2^2 \lesssim \left(1\vee {\|\beta\|_2^2 \over m}\right)\delta_n^2 K .
		\end{align*}
	\end{lemma}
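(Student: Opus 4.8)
The plan is to establish the bound on $\|\C^{1/2}(\wh\beta-\beta)\|_2^2$ first, and then read off the bound on $n^{-1}\|\Z(\wh\beta-\beta)\|_2^2$ from it. Theorem \ref{thm_beta} only controls $\|\wh\beta-\beta\|_2$, and translating that to $\|\C^{1/2}(\wh\beta-\beta)\|_2$ naively would cost a factor $\lambda_{\max}(\C)$, which we do not want; instead I would re-run the proof of Theorem \ref{thm_beta} carrying a $\C^{1/2}$ prefactor throughout. On the event $\E$ (on which $\wh K=K$ and, by the standing convention, the signed permutation is the identity), combine (\ref{diff_beta_hat_beta}), (\ref{disp_Theta+diff}) and (\ref{def_Rem_2}) to write $\wh\beta-\beta=\C^{-1}\D_Z+\Theta^+\D_W+{\rm Rem}_1$ with ${\rm Rem}_1$ as in (\ref{def_Rem_2}). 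Since $\Theta=A\C$ gives $\Theta^+=\Omega^{1/2}H^+$ and $\C^{1/2}\Omega^{1/2}={\bI}_K$,
\[
\C^{1/2}(\wh\beta-\beta)=\C^{-1/2}\D_Z+H^+\D_W+\C^{1/2}{\rm Rem}_1 .
\]

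I would then bound the three terms. The first is handled directly by Lemma \ref{lem_Delta_Z}: $\|\C^{-1/2}\D_Z\|_2\lesssim\delta_n\sqrt K\,(1+\|\beta\|_2/\sqrt m+\bar\rho\|\beta\|_2)$. For the second, use $\|H^+\D_W\|_2\le\sqrt K\,\|H^+\D_W\|_\infty$, and note that the proof of Lemma \ref{lem_Delta_W} already bounds $\|H^+\D_W\|_\infty$ — via (\ref{rate_hweps}), (\ref{rate_hwz}), (\ref{rate_hww}) with $u=\e_k$ together with Lemma \ref{lem_deltaA_I} — by $(\delta_n/\sigma_K(H))(1+\|\beta\|_2/\sqrt m+\bar\rho\|\beta\|_2)$; since $\sigma_K(H)=\sqrt{\lambda_K}\ge\sqrt{m\cl}\gtrsim1$, this is $\lesssim\delta_n\sqrt K\,(1+\|\beta\|_2/\sqrt m+\bar\rho\|\beta\|_2)$. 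For the third, use the identity (\ref{eq_theta_hat_inv}) and once more $\C^{1/2}\Omega^{1/2}={\bI}_K$ to get $\C^{1/2}{\rm Rem}_1=\bigl([H^\top H]^{-1}[\wh H-H]^\top P_{\wh\Theta}^{\perp}+H^+[H-\wh H]\wh H^+\bigr)(A\D_Z+\D_W)$; bound the operator norm of the bracketed factor by $\|\wh H-H\|_{\rm op}/\lambda_K(H^\top H)+\|H^+(\wh H-H)\|_{\rm op}/\sigma_K(\wh H)$, plug in Lemma \ref{lem_H_op} (parts (a), (b), (e), with $T_n=O(1)$ by Assumption \ref{ass_J1}), Assumption \ref{ass_H} (so $p\delta_n/\lambda_K=O(1)$ and $\sigma_K^2(H)=\lambda_K$), and $\|A\D_Z+\D_W\|_2\le\sqrt p\,(\|\D_Z\|_\infty+\|\D_W\|_\infty)\lesssim\sqrt p\,\delta_n(1+\|\beta\|_2/\sqrt m+\bar\rho\|\beta\|_2)$ from Lemmas \ref{lem_Delta_Z}--\ref{lem_Delta_W}; this gives $\|\C^{1/2}{\rm Rem}_1\|_2\lesssim\delta_n\sqrt K\,(1+\|\beta\|_2/\sqrt m+\bar\rho\|\beta\|_2)$. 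Adding the three contributions and invoking Assumption \ref{ass_J1} ($m\bar\rho^2=O(1)$, hence $\bar\rho\|\beta\|_2\lesssim\|\beta\|_2/\sqrt m$) collapses $1+\|\beta\|_2/\sqrt m+\bar\rho\|\beta\|_2$ to $1\vee(\|\beta\|_2/\sqrt m)$; squaring yields $\|\C^{1/2}(\wh\beta-\beta)\|_2^2\lesssim(1\vee\|\beta\|_2^2/m)\,\delta_n^2 K$. The probability bound $1-\pn^{-c}$ follows by intersecting the events of the cited lemmas, all of which are subsets of $\E$ of probability at least $1-\pn^{-\alpha}$.

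For the first statement, note that on $\E$, (\ref{rate_op_zz_c}) gives $\|\Omega^{1/2}(n^{-1}\Z^\top\Z-\C)\Omega^{1/2}\|_{\rm op}\lesssim\delta_n\sqrt K=o(1)$ under $K\log\pn\le cn$, so for every $v\in\R^K$, $n^{-1}\|\Z v\|_2^2=(\C^{1/2}v)^\top\bigl({\bI}_K+\Omega^{1/2}(n^{-1}\Z^\top\Z-\C)\Omega^{1/2}\bigr)(\C^{1/2}v)\le2\|\C^{1/2}v\|_2^2$; taking $v=\wh\beta-\beta$ and using the bound just proved completes the proof.

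The argument is essentially bookkeeping over the estimates already assembled for Theorems \ref{thm_beta} and \ref{thm_distr}; the only point requiring care is checking that the $1/\sigma_K(H)$ factors attached to the $\D_W$- and ${\rm Rem}_1$-terms are harmless, which is exactly where Assumption \ref{ass_H} ($\sigma_K^2(H)=\lambda_K$ large) is used — and, pleasantly, the $\cl^{-1/2}$ losses that appear in Theorem \ref{thm_beta} do not arise here because the $\C^{1/2}$ prefactor cancels the $\Omega^{1/2}=\C^{-1/2}$ that produced them.
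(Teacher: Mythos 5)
Your proposal is correct and follows essentially the same route as the paper: re-run the decomposition from the proof of Theorem \ref{thm_beta} with the $\C^{1/2}$ prefactor (so the $\Omega^{1/2}$ factors cancel and no $\cl^{-1/2}$ loss appears), bound the pieces with Lemmas \ref{lem_Delta_Z}, \ref{lem_Delta_W}, \ref{lem_H_op} under Assumptions \ref{ass_J1} and \ref{ass_H}, and transfer between $n^{-1}\|\Z(\wh\beta-\beta)\|_2^2$ and $\|\C^{1/2}(\wh\beta-\beta)\|_2^2$ via the operator-norm concentration of $\Z\Omega^{1/2}$. The only difference is cosmetic (you prove the $\C^{1/2}$ bound first and deduce the $\Z$ bound, while the paper states it in the opposite order), so nothing further is needed.
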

	
	\begin{proof}
		From (\ref{D_Z}), (\ref{D_W}) and  (\ref{diff_beta_hat_beta}),   $\wh\Theta\Omega^{1/2} = \wh H$ and $\Theta\Omega^{1/2} = H$, it follows that, on the event $\E$, 
		\begin{align*}
		{1\over n}\|\Z(\wh \beta - \beta)\|_2^2 &\le \|\D_1\|_2 + \|H^+\D_2\|_2 + \|H^+(\wh H-H)\D_1\|_2 + \|(\C)^{1/2}(\wh\Theta^+-\Theta^+)\D_2\|_{{\rm op}}.
		\end{align*}
		The first result follows immediately by invoking the upper bounds of each term in the proof of Theorem \ref{thm_beta}, except that there is no $\cl^{-1/2}$. The second result follows from Lemma \ref{lem_op_C} and the inequality, on the event $\E$,
		\[
		\|\Z(\wh \beta - \beta)\|_2 \le \|\Z \Omega^{1/2}\|_{\rm op}\|\C^{1/2}(\wh \beta-\beta)\|_2.
		\]
	\end{proof}
	
	\medskip
	
	\begin{lemma}\label{lem_Z_wt_W_wt}
		Let $u, v\in \R^K$ be any fixed vectors. Under the conditions of Theorem \ref{thm_I}, with probability $1-\pn^{-c}$, one has
		\begin{align}\label{rate_Zwt_Wwt}
		{1\over n}\left|u^\top \wt \Z^\top \wt \W v\right| &~\lesssim~ \left(\sqrt{u^\top \C u} + \bar\rho\|u\|_2\right)\left({\|v\|_2 \over \sqrt m} +   \bar\rho \|v\|_2\right)\delta_n.
		\end{align}
	\end{lemma}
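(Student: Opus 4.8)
The plan is to reduce the estimate to two already-established bounds by exploiting the decomposition $\wt\Z = \Z + \Z\D$, where $\D := A_{{\wh I\sbt}}^\top\wh\Pi - {\bI}_K$ is the matrix introduced in (\ref{def_deltaA_I}). Throughout we work on the event $\E$ of (\ref{def_event}), on which $\wh K = K$ and $I_k\subseteq\wh I_k\subseteq I_k\cup J_1^k$ for all $k$, so every object below is well defined. Since $A_{{\wh I\sbt}}^\top\wh\Pi = {\bI}_K + \D$ and $\wt\Z = \Z A_{{\wh I\sbt}}^\top\wh\Pi$, we may write
\[
u^\top\wt\Z^\top\wt\W v = u^\top\Z^\top\wt\W v + (\D u)^\top\Z^\top\wt\W v ,
\]
and it suffices to bound each term on the right.

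For the first term I would invoke (\ref{rate_zw_td}) of Lemma \ref{lem_quad} directly with the given $u,v$, which gives, on an event of probability at least $1-\pn^{-c}$,
\[
{1\over n}\bigl|u^\top\Z^\top\wt\W v\bigr| \lesssim \delta_n\|v\|_2\sqrt{u^\top\C u}\,\Bigl({1\over\sqrt m}+\bar\rho\Bigr) = \sqrt{u^\top\C u}\,\Bigl({\|v\|_2\over\sqrt m}+\bar\rho\|v\|_2\Bigr)\delta_n ,
\]
which is exactly the first half of the claimed bound.

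For the second term I would peel off the random vector $\D u$ by H\"older's inequality, ${1\over n}|(\D u)^\top\Z^\top\wt\W v| \le \|\D u\|_1\cdot{1\over n}\|\Z^\top\wt\W v\|_\infty$. Lemma \ref{lem_deltaA_I} controls $\|\D u\|_1\lesssim\|u\|_2\,\bar\rho\,\delta_n$ with probability $1-\pn^{-c}$. For the sup-norm factor I apply (\ref{rate_zw_td}) with $u$ replaced by each $\e_k$, $k\in[K]$, and take a union bound over $k$ (permissible since $K\le p\le\pn$, after enlarging the constant in $\delta_n$); using $\e_k^\top\C\e_k = [\C]_{kk}\le B_z = O(1)$ this yields ${1\over n}\|\Z^\top\wt\W v\|_\infty \lesssim \delta_n\|v\|_2\,(1/\sqrt m+\bar\rho)$ on an event of probability $1-\pn^{-c}$. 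Multiplying the two bounds, the second term is $\lesssim \bar\rho\|u\|_2\,(\|v\|_2/\sqrt m+\bar\rho\|v\|_2)\,\delta_n^2$, which for $n$ large is dominated by $\bar\rho\|u\|_2\,(\|v\|_2/\sqrt m+\bar\rho\|v\|_2)\,\delta_n$. Combining the two displays and intersecting the finitely many events completes the proof.

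No serious obstacle arises: all the stochastic control has already been done in Lemmas \ref{lem_quad} and \ref{lem_deltaA_I}, so the argument is essentially a decomposition plus H\"older. The only point needing a little care is the union bound over the $K$ coordinates in the estimate of ${1\over n}\|\Z^\top\wt\W v\|_\infty$; if one prefers to avoid the crude $K\le\pn$ step, the same bound follows from a standard $\eps$-net discretization exactly as in the proof of Lemma \ref{lem_op}.
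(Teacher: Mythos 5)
Your proposal is correct and follows essentially the same route as the paper: decompose $\wt\Z = \Z + \Z\D$, bound the leading term by (\ref{rate_zw_td}) of Lemma \ref{lem_quad}, and control the correction term via H\"older together with Lemma \ref{lem_deltaA_I} and a coordinate-wise application of (\ref{rate_zw_td}), absorbing the resulting $\delta_n^2$ factor using $\delta_n\lesssim 1$. No substantive difference from the paper's argument.
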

	\begin{proof}
		We work on the event $\E$ defined in (\ref{def_event}) that has probability at least $1-\pn^{-c}$ for some $c>0$. Recall that it implies $\wh K = K$ and $I_k \subseteq \wh I_k \subseteq I_k \cup J_1^k$ for all $k\in [K]$. 
		
		Since 
		$
		\wt \Z = \Z + \Z\D.
		$
		from (\ref{def_deltaA_I}),
		we have
		\begin{align}\label{disp_ztd_wtd}\nonumber
		{1\over n}|u^\top  \wt \Z^\top \wt \W v| &= 	{1\over n}|u^\top  \Z^\top \wt \W v| + 	{1\over n}|u^\top  \D^\top  \Z^\top \wt \W v|\\
		& \le {1\over n}|u^\top  \Z^\top \wt \W v| + 	\max_{1\le k\le K}{1\over n}|\Z_k^\top \wt \W v| \cdot 
		\|\D u\|_1.
		\end{align}
		Repeated application of   Lemma \ref{lem_quad} in conjunction with Lemma \ref{lem_deltaA_I} gives 
		\begin{align*}
		{1\over n}|u^\top  \wt \Z^\top \wt \W v| &\lesssim \delta_n \sqrt{u^\top \C u} \left(
		{\|v\|_2 \over \sqrt m} + \bar{\rho}\|v\|_2
		\right) + \delta_n \left(
		{\|v\|_2 \over \sqrt m} + \bar{\rho}\|v\|_2
		\right) \rho \|u\|_1 \delta_n
		\\
		&\lesssim \delta_n\left(\sqrt{u^\top \C u} + \bar\rho\|u\|_2\right) \left(
		{\|v\|_2 \over \sqrt m} + \bar{\rho}\|v\|_2
		\right) .
		\end{align*}
		The last inequality uses $\delta_n \lesssim 1$.
	\end{proof}
	
	\bigskip
	
	\begin{lemma}\label{lem_wt_WW_D}
		Let $\wt \W$  and $D = \textrm{\rm diag}(d_1, \ldots d_K)$ be defined in (\ref{def_tildes}) and (\ref{def_D}), respectively. Under the conditions of Theorem \ref{thm_I}, for any fixed vectors $u,v\in\R^K$, with probability $1-\pn^{-c}$,
		\[
		\left|u^\top \left({1\over n}\wt \W^\top \wt \W - D\right) v\right| \lesssim\left({1\over m} +  \bar\rho^2\right)\|u\|_2\|v\|_2\delta_n.
		\]
	\end{lemma}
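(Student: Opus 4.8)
The plan is to reduce the statement to the bilinear estimate (\ref{rate_ww}) of Lemma \ref{lem_quad} and then to cope with the fact that the natural weight vector is data-dependent. Throughout I work on the event $\E$ of (\ref{def_event}), on which $\wh K=K$, $I_k\subseteq\wh I_k\subseteq I_k\cup J_1^k$, and $\wh A_{i\sbt}=\e_a$ for every $i\in\wh I_a$ (by (\ref{est_AI_a})--(\ref{est_AI_b}) and the sign convention fixed in the appendix). The first step is the identity, valid on $\E$,
\[
{1\over n}\wt\W^\top\wt\W-D \;=\; \wh\Pi^\top\Bigl({1\over n}\W_{\sbt\wh I}^\top\W_{\sbt\wh I}-\wh\Gamma_{\wh I\wh I}\Bigr)\wh\Pi,\qquad \wh\Gamma_{\wh I\wh I}:=\textrm{diag}(\wh\Gamma_{ii}:i\in\wh I),
\]
which holds because $n^{-1}\wt\W^\top\wt\W=\wh\Pi^\top(n^{-1}\W_{\sbt\wh I}^\top\W_{\sbt\wh I})\wh\Pi$ by (\ref{def_tildes}), while a short computation from (\ref{est_Gamma_I}), the identity $\wh A_{i\sbt}^\top\whC\wh A_{i\sbt}=[\whC]_{aa}$ for $i\in\wh I_a$, the form (\ref{Chat}) of $\whC$, and the definition (\ref{def_D}) of $d_a$ gives $d_a=\wh m_a^{-2}\sum_{i\in\wh I_a}\wh\Gamma_{ii}$, hence $D=\wh\Pi^\top\wh\Gamma_{\wh I\wh I}\wh\Pi$ (the off-diagonal entries vanish since $\wh I_a\cap\wh I_b=\emptyset$). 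Consequently, viewing $\alpha:=\wh\Pi u$ as an element of $\R^p$ supported on $\wh I$, with $\|\alpha\|_2^2=\sum_a u_a^2/\wh m_a\le\|u\|_2^2/m$, one obtains
\[
u^\top\Bigl({1\over n}\wt\W^\top\wt\W-D\Bigr)v \;=\; \alpha^\top\Bigl({1\over n}\W^\top\wt\W-\wh\Gamma_{\sbt\wh I}\wh\Pi\Bigr)v,
\]
i.e.\ exactly the quantity bounded in (\ref{rate_ww}), but evaluated at the data-dependent $\alpha$.

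If $\alpha$ were a \emph{fixed} vector with $\|\alpha\|_2\le\|u\|_2/\sqrt m$, then (\ref{rate_ww}) would already give $|u^\top(n^{-1}\wt\W^\top\wt\W-D)v|\lesssim\delta_n\|\alpha\|_2\|v\|_2(m^{-1/2}+\bar\rho)\le\delta_n\|u\|_2\|v\|_2(1/m+\bar\rho/\sqrt m)$, and since $\bar\rho/\sqrt m\le\tfrac12(1/m+\bar\rho^2)$ by AM--GM this is precisely the asserted bound. So the only genuine work is to handle the data-dependence of $\alpha=\wh\Pi u$.

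To do this I would split $\alpha=\wh\Pi u=\Pi u+r$, where $\Pi u$ (extended by zeros outside $I$) is \emph{deterministic} with $\|\Pi u\|_2\le\|u\|_2/\sqrt m$, so that (\ref{rate_ww}) applies to it verbatim and produces the terms above. The residual $r=\wh\Pi u-\Pi u$ is supported on the \emph{fixed} set $I\cup J_1$: on $I_a$ it equals $u_a(\wh m_a^{-1}-m_a^{-1})$, i.e.\ there $r=-\Pi(D_\rho u)$ with $D_\rho$ as in (\ref{def_rho_k}); on $L_a\subseteq J_1^a$ it equals $u_a/\wh m_a$. These are exactly the building blocks occurring when $\wt\W$ is decomposed via (\ref{eq_diff_W_td_W_bar}), and by (\ref{def_rho_k}), (\ref{disp_rho_rho_bar}), (\ref{rate_Ajek}) they carry a deterministic factor $\lesssim\bar\rho\,\|u\|_2$. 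Substituting this splitting, together with the decomposition (\ref{eq_diff_W_td_W_bar}) of $\wt\W$ and likewise of $\wt\W v=\W_{\sbt\wh I}\wh\Pi v$, into $\alpha^\top(n^{-1}\W^\top\wt\W-\wh\Gamma_{\sbt\wh I}\wh\Pi)v$ turns every remaining summand into a bilinear form in the columns of $\W$ indexed by the fixed superset $I\cup J_1$, with relevant norms controlled deterministically by $\|u\|_2,\|v\|_2,\ \sum_a\rho_a^2\le\bar\rho^2$ and $1/m$; each is then bounded exactly as in the proof of Lemma \ref{lem_quad} — via Lemma \ref{lem_W}, the exponential inequality Lemma \ref{lem_bernstein} with maxima over $J_1$ and over the $K$ groups, and Lemmas \ref{lem_tau} and \ref{lem_diag_gamma_hat_td} for the terms involving $\wh\Gamma_{ii}-\tau_i^2$ and $\wh\Gamma_{ii}-\wt\Gamma_{ii}$. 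Collecting everything, the $\Pi u$-part yields the $\delta_n(1/m+\bar\rho/\sqrt m)$ contribution and the residual an extra $\lesssim\bar\rho\,\delta_n(1/\sqrt m+\bar\rho)\|u\|_2\|v\|_2$, and the inequalities $\bar\rho/\sqrt m\le\tfrac12(1/m+\bar\rho^2)$ and $\bar\rho/m\le\bar\rho/\sqrt m$ absorb all cross terms into $\delta_n(1/m+\bar\rho^2)\|u\|_2\|v\|_2$.

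The main obstacle is exactly this data-dependence: $\wh\Pi$ is a function of $\wh\Sigma$, hence of $\W$, so $\alpha^\top\W_{t\sbt}$ is not a fixed sub-Gaussian linear functional and (\ref{rate_ww}) cannot be quoted off the shelf for $\alpha=\wh\Pi u$. The remedy — peeling off the deterministic $\Pi u$ and treating the remainder on the fixed index set $I\cup J_1$ with the $\max_{J_1}$/Bernstein machinery — is the same device used pervasively in Section \ref{sec_proof_lem_quad}, but carrying it through both slots of a quadratic form (rather than one slot of a linear form) makes the bookkeeping the delicate part of the argument.
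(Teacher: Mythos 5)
Your reduction is correct as algebra and it is a genuinely different route from the paper's. You first observe $D=\wh\Pi^\top\wh\Gamma_{\wh I\wh I}\wh\Pi$ (equivalently $d_a=\wh m_a^{-2}\sum_{i\in\wh I_a}\wh\Gamma_{ii}$, which holds on $\E$ once the absolute values in (\ref{Chat}) can be dropped because all within-cluster $\wh\Sigma_{ij}$ have positive sign under the appendix convention — the same licence the paper takes in (\ref{Chat_simp})), so that $u^\top(n^{-1}\wt\W^\top\wt\W-D)v$ becomes the bilinear form of (\ref{rate_ww}) evaluated at the data-dependent weight $\wh\Pi u$; you then peel off the deterministic $\Pi u$ and control the residual on the fixed superset $I\cup J_1$ with the $\max$-plus-Bernstein device. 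The norm bookkeeping is right: $\|\Pi u\|_2\le\|u\|_2/\sqrt m$ turns the $(1/\sqrt m+\bar\rho)$ factor of (\ref{rate_ww}) into $(1/m+\bar\rho/\sqrt m)$, the residual carries an extra $\bar\rho$, and AM--GM absorbs $\bar\rho/\sqrt m$ into $\tfrac12(1/m+\bar\rho^2)$, matching the claimed rate. The paper instead never introduces $\wh\Gamma$: it writes $\wt\W_{\sbt a}=\oW_{\sbt a}+R_a$, expands $d_a$ directly through $\X=\Z A^\top+\W$ into the terms $T_1^a,T_2^a,T_3^a$ of (\ref{disp_d_a}) (so the quasi-pure contamination shows up as explicit $\Z$-quadratic and $\Z$--$\W$ cross terms, bounded via (\ref{rate_Ajek}) and Bernstein), and treats diagonal and off-diagonal blocks separately. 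Your route buys a shorter argument by recycling (\ref{rate_ww}); the paper's buys self-containedness.

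That last point is where your write-up has a genuine gap as a drop-in proof for this paper: quoting (\ref{rate_ww}) and Lemma \ref{lem_tau} is circular here. The proof of (\ref{rate_ww}) (and (\ref{rate_hww})) runs through the event $\E_W$ of (\ref{def_event_E_w}), whose probability is supplied by Lemma \ref{lem_tau}; Lemma \ref{lem_tau} in turn invokes Lemma \ref{lem_1}, and Lemma \ref{lem_1} is proved by invoking precisely Lemma \ref{lem_wt_WW_D} (its terms $T_2$--$T_4$). The paper's direct expansion exists exactly to keep this dependency chain acyclic. The gap is repairable: the only input you actually need from that chain is $\max_{i\in I\cup J_1}|\wh\Gamma_{ii}-\Gamma_{ii}|\lesssim\delta_n$, and this can be proved directly on $\E$ without Lemma \ref{lem_1}, since for $i\in\wh I_a$ one has $\wh\Gamma_{ii}-\Gamma_{ii}=\bigl(\wh\Sigma_{ii}-\Sigma_{ii}\bigr)+\bigl(A_{i\sbt}^\top\C A_{i\sbt}-[\C]_{aa}\bigr)+\bigl([\C]_{aa}-[\whC]_{aa}\bigr)$, where the first term is $\le\delta$ on $\E$, the second is $O(\delta_n)$ by (\ref{rate_Ajek}) and $\|\C\|_\i\le B_z$, and the third is $O(\delta_n)$ because every $|\wh\Sigma_{ij}|$ entering (\ref{Chat}) differs from $[\C]_{aa}$ by $O(\delta_n)$ on $\E$, again by (\ref{rate_Ajek}). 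If you add that observation (or re-prove the needed instance of (\ref{rate_ww}) from scratch, as the paper effectively does inside Section \ref{sec_proof_lem_quad}), your argument closes; without it, the proof as stated rests on results that the paper derives downstream of the very lemma you are proving.
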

	
	\begin{proof}
		We work on the event $\E$ defined in (\ref{def_event}) that has probability at least $1-\pn^{-c}$ for some $c>0$. Recall that it implies $\wh K = K$ and $I_k \subseteq \wh I_k \subseteq I_k \cup J_1^k$ for all $k\in [K]$. 
		
		Recall that $\oW:= \W A_{I \sbt}[A_{I \sbt}^\top A_{I \sbt}]^{-1} =\W \Pi$. For any $a\in[K]$, we have
		\begin{equation}\label{eq_wt_W}
		\wt \W_{\sbt a} = {1\over \wh m_a}\sum_{i\in  I_a}\W_{\sbt i} +  {1\over \wh m_a}\sum_{i\in  L_a}\W_{\sbt i} = \oW_{\sbt a} + \underbrace{{|L_a|\over \wh m_a}\left({1\over |L_a|}\sum_{i\in  L_a}\W_{\sbt i} -\oW_{\sbt a}  \right)}_{R_a}.
		\end{equation}
		Expand the quadratic term as 
		\begin{align*}\nonumber
		u^\top \left({1\over n}\wt \W^\top \wt \W - D\right)v &= \sum_a u_av_a\left({1\over n}\wt \W_{\sbt a}^\top \wt \W_{\sbt a} - d_a\right)\\
		&\quad  + {1\over n}\sumsum_{a,b\in [K],a\ne b}u_av_b\left(\oW_{\sbt a}^\top \oW_{\sbt b} + \oW_{\sbt a}^\top R_b + \oW_{\sbt b}^\top R_a + R_a^\top R_b\right).
		\end{align*}
		Since $\X_{\sbt i} = \Z A_{i\sbt} + \W_{\sbt i}$, by  the definition of $d_a$ in (\ref{def_D}), we find
		\begin{align*}
		d_a& = {1\over \wh m_a^2}\sum_{i\in \wh I_a} {1\over n}\X_{\sbt i}^\top \X_{\sbt i} - {1\over \wh m_a^2 (\wh m_a-1)}\sumsum_{i,j\in \wh I_a, i\ne j}{1\over n}\X_{\sbt i}^\top \X_{\sbt i}\\
		& = {1\over \wh m_a^2}\sum_{i\in \wh I_a} \left({1\over n}A_{i\sbt}^\top \Z^\top \Z A_{i\sbt} + {1\over n}\W_{\sbt i}^\top \W_{\sbt i} + {2\over n}A_{i\sbt}^\top \Z^\top \W_{\sbt i} \right)\\
		&\quad - {1\over \wh m_a^2 (\wh m_a-1)}\sumsum_{i,j\in \wh I_a, i\ne j}\left[{1\over n}A_{i\sbt}^\top \Z^\top \Z A_{j\sbt} + {1\over n}\W_{\sbt i}^\top \W_{\sbt j} + {2\over n}A_{i\sbt}^\top \Z^\top \W_{\sbt j}\right].
		\end{align*}
		Rearranging terms yields
		\begin{align}\label{disp_d_a}
		d_a
		&= {1\over \wh m_a^2}\sumsum_{i,j\in \wh I_a} {1\over n}\W_{\sbt i}^\top \W_{\sbt j} - \underbrace{{1\over \wh m_a (\wh m_a-1)}\sumsum_{i,j\in \wh I_a, i\ne j}{1\over n}\W_{\sbt i}^\top \W_{\sbt j}}_{T_1^a}\\\nonumber
		&\quad  + \underbrace{{1\over \wh m_a^2}\sum_{i\in \wh I_a} {1\over n}A_{i\sbt}^\top \Z^\top \Z A_{i\sbt} - {1\over \wh m_a^2 (\wh m_a-1)}\sumsum_{i,j\in \wh I_a, i\ne j}{1\over n}A_{i\sbt}^\top \Z^\top \Z A_{j\sbt}}_{T_2^a}\\\nonumber
		&\quad + \underbrace{{2\over \wh m_a^2}\sum_{i\in \wh I_a}{1\over n}A_{i\sbt}^\top \Z^\top \W_{\sbt i}  - {2\over \wh m_a^2 (\wh m_a-1)}\sumsum_{i,j\in \wh I_a, i\ne j}{1\over n}A_{i\sbt}^\top \Z^\top \W_{\sbt j}}_{T_3^a}.
		\end{align}
		The first term on the right equals $n^{-1}\wt \W_{\sbt a}^\top \wt \W_{\sbt a}$. 
		We further have
		\begin{align}\label{eq_vWWv}\nonumber
		\left|u^\top \left(n^{-1}\wt \W^\top \wt \W - D\right)v\right|
		&\le  \|u\|_2\|v\|_2\max_a(|T_1^a| + |T_2^a| + |T_3^a|)\\
		&\quad + {1\over n}\left|\sumsum_{a,b\in [K],a\ne b}u_av_b\left(\oW_{\sbt a}^\top \oW_{\sbt b}+\oW_{\sbt a}^\top R_b +\oW_{\sbt b}^\top R_a+ R_a^\top R_b\right)\right|\\\nonumber
		&:= \D_1 + \D_2
		\end{align} 
		
		\noindent\textbf{To bound $\D_1$:}
		We first study $T_1^a$, $T_2^a$ and $T_3^a$, separately. For $T_1^a$, expand  $\wh I_a = I_a \cup L_a$ to get the bound
		\begin{align}\label{disp_15}
		&{1\over \wh m_a(\wh m_a - 1)}\left\{\left|\sumsum_{i,j\in I_a, i\ne j}{1\over n}\W_{\sbt i}^\top \W_{\sbt j}\right|+2\left|\sumsum_{i\in I_a, j\in L_a}  {1\over n}\W_{\sbt i}^\top \W_{\sbt j}\right| + \left|\sumsum_{i\ne j\in L_a}  {1\over n}\W_{\sbt i}^\top \W_{\sbt j}\right|\right\}\\\nonumber
		&\le ~{1\over \wh m_a(\wh m_a - 1)}\left|\sumsum_{i,j\in I_a, i\ne j}{1\over n}\W_{\sbt i}^\top \W_{\sbt j}\right| + {2|L_a| \over \wh m_a(\wh m_a-1)}\max_{j\in L_a}\left|\sum_{i\in I_a}{1\over n}\W_{\sbt i}^\top \W_{\sbt j}\right|\\
		&\qquad + {|L_a|(|L_a| - 1) \over \wh m_a (\wh m_a-1)}\max_{i,j\in \wh I_a, i\ne j} \left|{1\over n}\W_{\sbt i}^\top \W_{\sbt j}\right|.\nonumber
		\end{align}
		The first term is no greater than
		\[
		{1\over n}\left|\sum_{t=1}^n {1\over m_a}\sum_{i\in I_a}\W_{ti}{1\over m_a-1} \sum_{j\in I_a\setminus \{i\}}   \W_{tj}\right| = {1\over n}\left|\sum_{t=1}^n \oW_{ta}{1\over m_a-1} \sum_{j\in I_a\setminus \{i\}}   \W_{tj}\right|.
		\]
		Since $(m_a-1)^{-1}\sum_{j\in I_a\setminus \{i\}} \W_{tj}$ is $(\gamma_w/\sqrt{m_a-1})$-sub-Gaussian by the arguments of Lemma \ref{lem_W} and 
		\[
		\EE\left[\sum_{i\in I_a}\W_{ti}\sum_{j\ne i} \W_{tj}\right] = 0,
		\]
		invoking Lemma \ref{lem_bernstein} gives
		\[
		{1\over \wh m_a(\wh m_a - 1)}\left|\sumsum_{i,j\in I_a, i\ne j}{1\over n}\W_{\sbt i}^\top \W_{\sbt j}\right| \le c\w^2\sqrt{\log \pn \over nm_a(m_a-1)}
		\]
		with probability $1-\pn^{-c'}$.
		Note that $|L_a|/\wh m_a\le \|\rho\|_2 \le \bar{\rho}$ and $\wh m_a-1 \ge m_a$ when $L_a \ne \emptyset$. We further obtain
		\begin{align*}
		|T_1^a|&\le {1\over \wh m_a(\wh m_a - 1)}\left|\sumsum_{i,j\in I_a, i\ne j}{1\over n}\W_{\sbt i}^\top \W_{\sbt j}\right| + 2\bar\rho \max_{j\in L_a}{1\over n}\left|\sum_{t=1}^n \W_{tj}\oW_{ta}\right|+ \bar\rho^2\max_{i,j\in \wh I_a, i\ne j} \left|{1\over n}\W_{\sbt i}^\top \W_{\sbt j}\right|.
		\end{align*}
		Invoking Lemma \ref{lem_bernstein} and taking an union bound yields
		\begin{equation}\label{eq_T_21}
		\max_a|T_1^a|\le c\w^2\left({1\over \sqrt{m(m-1)}} \vee {\bar\rho \over \sqrt{m}}\vee  \bar\rho^2 \right)\delta_n,
		\end{equation}
		with probability greater than $1-c\pn^{-c'}$.
		We bound $T_2^a$ by writing 
		\begin{align*}
		T_2^a &= {m_a \over \wh m_a^2}{1\over n}\Z_{\sbt a}^\top \Z_{\sbt a} + {1\over \wh m_a^2}\sum_{i\in L_a}{1\over n}A_{i\sbt}^\top \Z^\top \Z A_{i\sbt}\\
		&\quad - {1\over \wh m_a^2 (\wh m_a-1)}\bigg\{m_a(m_a-1){1\over n}\Z_{\sbt a}^\top \Z_{\sbt a}+ 2m_a\sum_{i\in L_a}{1\over n}\Z_{\sbt a}^\top \Z A_{i\sbt}\\
		&\hspace{3.3cm}+
		\sum_{i\ne j\in L_a}{1\over n}A_{i\sbt}^\top \Z^\top \Z A_{j\sbt} \bigg\}\\
		& = {1\over \wh m_a^2}\sum_{i\in L_a}\left({1\over n}A_{i\sbt}^\top \Z^\top \Z A_{i\sbt} - {1\over n}\Z_{\sbt a}^\top \Z_{\sbt a}\right)- {2m_a\over \wh m_a^2 (\wh m_a-1)}\sum_{i\in L_a}{1\over n}\Z_{\sbt a}^\top \Z(A_{i\sbt}-\e_a)\\
		&\quad -{1\over \wh m_a^2 (\wh m_a-1) }
		\sum_{i\ne j\in L_a}\left({1\over n}A_{i\sbt}^\top \Z^\top \Z A_{j\sbt} - {1\over n}\Z_{\sbt a}^\top \Z_{\sbt a} \right).
		\end{align*}
		From (\ref{rate_Ajek}),  on the event $\E$ defined in (\ref{def_event}), we find
		\begin{equation}\label{disp_1_1}
		\left|{1\over n}\Z_{\sbt a}^\top \Z(A_{I\sbt}-\e_b) \right|\le {8\delta_n\over \nu}\max_\ell \left|{1\over n}\Z_{\sbt a}^\top \Z_\ell\right| \overset{\E}{\le} {8B_z\over \nu}\delta_n,
		\end{equation}
		for any $i\in L_b$ and $a\in [K]$, which in conjunction with $\|A_{j\sbt}\|_1\le 1$ further gives
		\begin{align*}
		\left|{1\over n}A_{i\sbt}^\top \Z^\top \Z A_{j\sbt} - {1\over n}\Z_{\sbt a}^\top \Z_{\sbt a} \right| &\le \left|{1\over n}(A_{i\sbt}-\e_a)^\top \Z^\top \Z A_{j\sbt}\right| +\left|{1\over n}\Z_{\sbt a}^\top \Z(A_{j\sbt}^\top -\e_a) \right|\\ 
		&\le  {16B_z\over \nu}\delta_n.
		\end{align*} 
		We thus obtain, on the event $\E$,
		\begin{align}\label{eq_T_22}\nonumber
		\max_a |T_2^a| &\le \max_a \left\{{|L_a| \over \wh m_a^2}+ {m_a |L_a| \over \wh m_a^2 (\wh m_a - 1)}+ {|L_a|(|L_a| - 1) \over \wh m_a^2 (\wh m_a-1)}\right\}{16B_z\over \nu}\delta_n\\
		& \le  {32B_z\over \nu}{\bar\rho \delta_n\over m}\lesssim  \bar\rho \sqrt{\log \pn\over nm^2}.
		\end{align}
		Regarding $T_3^a$, notice that
		\begin{align*}
		\sum_{i \in I_a}{1\over n}A_{i\sbt}^\top  \Z^\top  \W_{\sbt i} = m_a{1\over n}\Z_{\sbt a}^\top  \oW_{\sbt a} +  \sum_{i \in L_a}{1\over n}A_{i\sbt}^\top  \Z^\top  \W_{\sbt i}
		\end{align*}
		and
		\begin{align*}
		&\sumsum_{j,\ell\in \wh I_a,j\ne \ell } {1\over n}A_{i\sbt}^\top  \Z^\top  \W_{\sbt i}\\
		&= m_a(m_a-1){1\over n}\Z_{\sbt a}^\top \oW_{\sbt a} + m_a \sum_{\ell\in L_a}{1\over n}\Z_{\sbt a}^\top \W_{\sbt \ell} + \sumsum_{j\in L_a, \ell \in \wh I_a\setminus \{j\}}{1\over n}A_{j\sbt}^\top  \Z^\top  \W_{\sbt \ell}.
		\end{align*}
		After a bit algebra, by also using $\|A_{j\sbt}\|_1 \le 1$ for $1\le j\le p$, we can establish that 
		\begin{align*}
		|T_3^a|~&~ \lesssim {|L_a| \over \wh m_a^2}\max_{i\in \wh I_a}{1\over n}\|\Z^\top \W_{\sbt i}\|_\i +  {m_a|L_a| \over \wh m_a^2(\wh m_a - 1)}{1\over n}|\Z_{\sbt a}^\top \oW_{\sbt a}|\\
		~&~\le {|L_a| \over \wh m_a^2}\max_{i\in \wh I_a}{1\over n}\|\Z^\top \W_{\sbt i}\|_\i +  {|L_a| \over \wh m_a^2}{1\over n}|\Z_{\sbt a}^\top \oW_{\sbt a}|. 
		\end{align*}
		By (\ref{def_rho_k}) and (\ref{disp_rho_rho_bar}), 
		invoking Lemmas \ref{lem_W} -- \ref{lem_bernstein} and taking the union bounds  over $a\in [K]$, $i,j\in I_k\cup J_1^k$ yield 
		\begin{equation}\label{eq_T_23}
		\max_a |T_3^a| \lesssim \bar{\rho}\sqrt{\log \pn\over nm^2}
		\end{equation}
		with probability $1 - \pn^{-c}.$
		Collecting displays (\ref{eq_T_21}) - (\ref{eq_T_23}) concludes 
		\begin{equation}\label{bd_D1}
		\D_1 \lesssim \|u\|_2\|v\|_2 \left({1\over \sqrt{m(m-1)}} \vee {\bar\rho \over \sqrt{m}}\vee  \bar\rho^2 \right) \sqrt{\log \pn\over n},
		\end{equation}
		with probability greater than $1-c\pn^{-c'}.$\\

		\noindent \textbf{To bound $\D_2$:} We study the first term
		\[
		{1\over n}\sumsum_{a,b\in [K], a\ne b}u_av_b\oW_{\sbt a}^\top \oW_{\sbt b} = {1\over n}\sum_{t=1}^n\sum_{a}u_a\oW_{ta}\sum_{b\ne a}v_b\oW_{tb}.
		\]
		Since Lemma \ref{lem_W} guarantees $\sum_a u_a \oW_{ta}$ is $(\|u\|_2\w/\sqrt{m})$-sub-Gaussian and, similarly,  $\sum_{b\ne a}v_b\oW_{tb}$ is $(\|v\|_2\w/\sqrt{m})$-sub-Gaussian, invoking Lemma \ref{lem_bernstein} and noting that 
		\[
		\EE\left[\sum_a u_a\oW_{ta}\sum_{b\ne a}v_b\oW_{tb}\right] = \sumsum_{a,b\in [K], a\ne b}u_av_b\EE\left[\oW_{ta}\oW_{tb}\right]=0
		\]
		gives
		\begin{equation}\label{eq_over_W}
		\PP\left\{{1\over n}\sumsum_{a,b\in [K], a\ne b}u_av_b\oW_{\sbt a}^\top \oW_{\sbt b} \lesssim \|u\|_2\|v\|_2\w^2\sqrt{\log \pn\over nm^2} \right\}\ge 1-(p \vee n)^{-c}.
		\end{equation}
		We then bound the rest term by term. By recalling  (\ref{eq_wt_W}), we obtain
		\begin{align*}
		&{1\over n}\left|\sumsum_{a,b\in [K], a\ne b}u_a v_b\oW_{\sbt a}^\top R_b\right|\\
		&= {2\over n}\left|\sum_{b=1}^K{v_b \over \wh m_b}\sum_{t=1}^n \left(|L_b|\oW_{tb} - \sum_{i\in L_b}\W_{ti}\right)\sumsum_{a,b\in [K], a\ne b}u_a\oW_{ta}\right|\\
		&\le  2\|D_\rho v\|_1\max_{b\in [K]} \left|{1\over n}\sum_{t=1}^n \left(\oW_{tb} - {1\over |L_b|}\sum_{i\in L_b}\W_{ti}\right)\sumsum_{a,b\in [K], a\ne b}u_a\oW_{ta}\right|\\
		&\le 2\bar{\rho}\|v\|_2\max_{b\in [K]} \left|{1\over n}\sum_{t=1}^n \oW_{tb}\sumsum_{a,b\in [K], a\ne b}u_a\oW_{ta}\right|+2\bar{\rho}\|v\|_2\max_{
			\substack{b\in [K]\\ i\in L_b}} \left|{1\over n}\sum_{t=1}^n \W_{ti}\sumsum_{a,b\in [K], a\ne b}u_a\oW_{ta}\right|
		\end{align*}
		where we used  (\ref{def_rho_k}) and (\ref{disp_rho_rho_bar}) in the last two lines. 
		Invoke Lemmas \ref{lem_W} and \ref{lem_bernstein} and take union bounds to  conclude
		\begin{equation}\label{eq_WR}
		{1\over n}\left|\sumsum_{a,b\in [K], a\ne b}u_a v_b\oW_{\sbt a}^\top R_b\right| \le c \bar{\rho}\|u\|_2\|v\|_2 \sqrt{\log \pn \over nm}
		\end{equation}
		with probability $1-\pn^{-c'}$.
		A similar bound can be obtained for $|\sumsum_{a,b\in [K], a\ne b}u_a v_b\oW_{\sbt b}^\top R_a|$.
		Finally, to bound the last term, we write 
		\begin{align*}
		&{1\over n}\left|\sumsum_{a,b\in [K], a\ne b}u_a v_b R_b^\top R_a\right|\\\nonumber
		&= {1\over n}\left|\sum_{a\neq b}{u_av_b \over \wh m_b\wh m_a}\sum_{t=1}^n \left(|L_b|\oW_{tb} - \sum_{i\in L_b}\W_{ti}\right)\left(|L_a|\oW_{ta} - \sum_{i\in L_a}\W_{ti}\right)\right|,
		\end{align*}
		which can be  bounded above by
		\begin{align}\label{eq_RR_0}\nonumber
		&\bar{\rho}^2\|u\|_2\|v\|_2\max_{a\ne b}\left|{1\over n}\sum_{t=1}^n \left(\oW_{tb} - {1\over |L_b|}\sum_{i\in L_b}\W_{ti}\right)  \left(\oW_{ta} - {1\over |L_a|}\sum_{i\in L_a}\W_{ti}\right)\right|\\
		&\le \bar{\rho}^2\|u\|_2\|v\|_2\max_{a\ne b} \left|{1\over n}\sum_{t=1}^n \oW_{tb}\oW_{ta}\right|+2\bar{\rho}^2\|u\|_2\|v\|_2\max_{
			\substack{a\ne b\in [K]\\ i\in L_b}} \left|{1\over n}\sum_{t=1}^n \W_{ti}\oW_{ta}\right|\\\nonumber
		&\qquad +\bar{\rho}^2\|u\|_2\|v\|_2 \max_{
			\substack{a\ne b\in [K]\\ i\in L_a, j\in L_b}} \left|{1\over n}\sum_{t=1}^n \W_{ti}\W_{tj}\right|.
		\end{align}
		Invoke Lemmas \ref{lem_W} and \ref{lem_bernstein} and take the union bounds to obtain
		\begin{equation}\label{eq_RR}
		{1\over n}\left|\sumsum_{a,b\in [K], a\ne b}u_a v_bR_a^\top R_b\right| \lesssim \bar{\rho}^2\|u\|_2\|v\|_2\sqrt{\log \pn \over n}
		\end{equation}
		with probability $1-\pn^{-c}.$
		Combining (\ref{eq_over_W}), (\ref{eq_WR}) and (\ref{eq_RR}) concludes 
		\begin{equation}\label{bd_D2}
		\D_2 \lesssim \left({\|u\|_2 \over \sqrt m} + \bar{\rho}\|u\|_2\right)\left({\|v\|_2 \over \sqrt m} + \bar\rho \|v\|_2\right)\sqrt{\log \pn\over n}
		\end{equation}
		with probability $1-c\pn^{-c'}$.
		Finally, combine (\ref{bd_D1}) and (\ref{bd_D2}) to complete the proof.
	\end{proof}

	\subsubsection{Proof of Lemma \ref{lem_op_C}}
	
	We first state and prove the following lemma which is used for proving Lemma \ref{lem_op_C}. 
	
	\begin{lemma}\label{lem_1}
		Under the conditions of Theorem \ref{thm_I}, for any fixed vector $v, u \in \R^K$, we have 
		\[
		\left|u^\top (\whC-\C)v\right| \lesssim \left(\sqrt{u^\top \C u} +  \bar\rho\|u\|_2\right)\left(\sqrt{v^\top \C v} +  \bar\rho\|v\|_2\right)\delta_n
		\]
		with probability greater than $1-c\pn^{-c'}$ for some constant $c>0$.
	\end{lemma}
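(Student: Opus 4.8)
\textbf{Proof plan for Lemma \ref{lem_1}.} The plan is to exploit the representation of $\whC$ in terms of $\wt\X$ given in (\ref{eq_C_hat}) and, more directly, to decompose $\whC - \C$ into contributions coming from (i) the pure-variable part that would be present even with $\wh I = I$ and (ii) the extra part caused by the quasi-pure variables in $L_k = \wh I_k \setminus I_k \subseteq J_1^k$. Concretely, I would write $\whC = \bar\Sigma_Z + (\whC - \bar\Sigma_Z)$, where $\bar\Sigma_Z$ is the ``oracle'' estimator defined in (\ref{Cbar}) using only the true sets $I_k$. For the oracle part I would further expand $\X_{\sbt i} = \Z A_{i\sbt} + \W_{\sbt i}$ for $i \in I_a$ (so $\Z A_{i\sbt} = \Z_{\sbt a}$), which gives, for the diagonal entries, $\bar\Sigma_{Z,aa} = n^{-1}\Z_{\sbt a}^\top \Z_{\sbt a} + (\text{cross } \Z\W \text{ terms}) + (\text{centered } \W\W \text{ terms})$, and similarly for off-diagonal entries, using $\Cov(W)$ diagonal so that $\EE$ of the $\W\W$ cross terms is zero. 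Then $u^\top(\bar\Sigma_Z - \C)v$ is a finite sum of terms each controlled by Lemma \ref{lem_quad}: the $\Z\Z$ term by (\ref{rate_zz_c}) giving $\delta_n\sqrt{u^\top\C u}\sqrt{v^\top\C v}$, the $\oW$-type terms by the sub-Gaussian bounds of Lemma \ref{lem_W} combined with the Bernstein bound Lemma \ref{lem_bernstein} (noting $\oW_{tk}$ is $\gamma_w/\sqrt m$-sub-Gaussian), contributing factors with $1/\sqrt m$, which is absorbed into the stated bound.

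\textbf{Handling the quasi-pure correction.} The harder piece is $u^\top(\whC - \bar\Sigma_Z)v$. Its diagonal contribution is exactly $\sum_a u_a v_a(\wh\Gamma_{aa} - \wt\Gamma_{aa})$ type quantities, which is precisely what Lemma \ref{lem_diag_gamma_hat_td} controls: $\max_{i\in I_a}|\wh\Gamma_{ii} - \wt\Gamma_{ii}| = O_p(|J_1^a|/(|J_1^a|+m_a)\,\delta_n) = O_p(\rho_a\delta_n)$. For the off-diagonal entries I would perform the same ``expand $\wh I_a = I_a \cup L_a$'' bookkeeping that appears in the proof of Lemma \ref{lem_wt_WW_D} (the $T_1^a, T_2^a, T_3^a$ decomposition and the $R_a$ remainder in (\ref{eq_wt_W})): each extra sum over $L_a$ carries a prefactor $|L_a|/\wh m_a \le \rho_a \le \bar\rho$, and the inner random quantities ($n^{-1}\Z^\top\W$, $n^{-1}\W^\top\W$ centered, $n^{-1}\Z^\top\Z$ differences via (\ref{disp_1_1}) and (\ref{rate_Ajek})) are each $O(\delta_n)$ on the event $\E$ after a union bound over $i,j \in I_a \cup J_1^a$ and $a \in [K]$. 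Summing against $u,v$ via $|\sum_a u_a v_a \rho_a \cdot (\cdots)| \le \|u\|_2\|v\|_2\,\bar\rho\,\delta_n$ (Cauchy--Schwarz plus $\|\rho\|_\infty \le \bar\rho$) produces the $\bar\rho\|u\|_2$ and $\bar\rho\|v\|_2$ terms in the claimed bound.

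\textbf{Assembly.} Collecting the two pieces, $|u^\top(\whC - \C)v| \lesssim \delta_n\sqrt{u^\top\C u}\sqrt{v^\top\C v} + \delta_n\bar\rho(\sqrt{u^\top\C u}\,\|v\|_2 + \|u\|_2\sqrt{v^\top\C v}) + \delta_n\bar\rho^2\|u\|_2\|v\|_2$, which factors as $(\sqrt{u^\top\C u} + \bar\rho\|u\|_2)(\sqrt{v^\top\C v} + \bar\rho\|v\|_2)\delta_n$, as desired. All bounds hold on $\E$ intersected with $\E_W$ (from (\ref{def_event_E_w})) and the events of Lemmas \ref{lem_quad} and \ref{lem_diag_gamma_hat_td}; since each of these has probability $\ge 1 - C\pn^{-\alpha}$ and we take only $O(K^2)$ unions of sub-Gaussian tail events — absorbed using $K\log\pn \lesssim n$ as in the proof of Lemma \ref{lem_op} — the total failure probability is $c\pn^{-c'}$. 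The main obstacle is the careful off-diagonal bookkeeping for $\whC - \bar\Sigma_Z$: one must track that every term created by enlarging $I_a$ to $\wh I_a$ genuinely carries a $\rho_a$ (and not merely a $1$), which requires using $\wh m_a - 1 \ge m_a$ when $L_a \neq \emptyset$ and being careful with the $|L_a|(|L_a|-1)/(\wh m_a(\wh m_a - 1))$-type prefactors, exactly as in (\ref{disp_15})--(\ref{eq_T_23}). Once Lemma \ref{lem_1} is in hand, Lemma \ref{lem_op_C} follows by the standard $\eps$-net discretization over $\S^{K-1}$ (as in the proof of Lemma \ref{lem_op}), applying Lemma \ref{lem_1} with $u = \C^{1/2}\tilde u$, $v = \C^{1/2}\tilde v$ so that $\sqrt{u^\top\C u} = \|\C^{1/2}u\|$... more precisely with $u,v$ chosen so that $\C^{-1/2}u, \C^{-1/2}v$ range over a net of the sphere, and using $\bar\rho\|\C^{-1/2}u\|_2 \lesssim \cl^{-1/2}\bar\rho \lesssim \cl^{-1/2}$, which under Assumption \ref{ass_J1} (or the stronger Assumption \ref{ass_J1_prime}) is dominated by the $\sqrt K$ factor coming from the $5^K$ net cardinality, yielding $\|\C^{-1/2}(\whC - \C)\C^{-1/2}\|_{\rm op} \lesssim \delta_n\sqrt K$.
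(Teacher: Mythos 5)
Your overall plan (split $\whC-\C$ into the oracle part $\bar\Sigma_Z-\C$ built from the true sets $I_k$ plus a quasi-pure correction $\whC-\bar\Sigma_Z$) is a legitimate alternative to the paper's route, which instead writes $\whC = n^{-1}\wt\X^\top\wt\X - D$, decomposes $\wt\X=\wt\Z+\wt\W$ into $T_1+T_2+T_3+T_4$, and invokes the bilinear-form bounds of Lemmas \ref{lem_Z_wt_W_wt}, \ref{lem_wt_WW_D} and \ref{lem_deltaA_I}. The gap is in your assembly of the \emph{off-diagonal} part of the correction. Your only summation device is $|\sum_a u_a v_a \rho_a(\cdots)|\le \bar\rho\|u\|_2\|v\|_2\delta_n$, which pairs $u$ and $v$ diagonally; but the correction $[\whC]_{ab}-[\bar\Sigma_Z]_{ab}$ is nonzero for $a\neq b$ as well, and with the entrywise bounds you propose (union bound over $i,j\in I_a\cup J_1^a$, each inner quantity $O(\delta_n)$) the best per-entry bound is of order $(\rho_a+\rho_b)\delta_n$. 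Summing $\sum_{a\neq b}|u_a||v_b|(\rho_a+\rho_b)\delta_n$ then forces $\|u\|_1$ or $\|v\|_1$ into the bound, i.e.\ an extra $\sqrt K$ on the $\bar\rho$ cross terms, so the stated dimension-free bound $(\sqrt{u^\top\C u}+\bar\rho\|u\|_2)(\sqrt{v^\top\C v}+\bar\rho\|v\|_2)\delta_n$ is not obtained. This is not cosmetic: Lemma \ref{lem_1} is invoked with general vectors, and your own concluding derivation of Lemma \ref{lem_op_C} from it would then produce $\delta_n K$ rather than $\delta_n\sqrt K$.

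The repair is to never reduce the off-diagonal correction to scalar entries: keep one of the fixed vectors inside the random quantity before taking maxima. Concretely, for the $\Z$-part write the quasi-pure contribution as $\sum_a u_a\rho_a\cdot|L_a|^{-1}\sum_{i\in L_a}n^{-1}(A_{i\sbt}-\e_a)^\top\Z^\top\Z v$ and bound $\max_{i\in J_1}n^{-1}|(A_{i\sbt}-\e_a)^\top\Z^\top\Z v|\lesssim \delta_n\sqrt{v^\top\C v}$ via (\ref{rate_Ajek}) and (\ref{rate_zz}); for the $\W$-parts use $\max_{i\in J_1}n^{-1}|u^\top\Z^\top\W_{\sbt i}|\lesssim\delta_n\sqrt{u^\top\C u}$ and the sub-Gaussianity of $v^\top\oW_{t\sbt}$, together with $\sum_k\rho_k|v_k|\le\bar\rho\|v\|_2$ as in (\ref{disp_D_rho}). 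This is exactly the mechanism of Lemma \ref{lem_deltaA_I} and of the proofs of Lemmas \ref{lem_Z_wt_W_wt}--\ref{lem_wt_WW_D}, and once you adopt it your argument essentially collapses into the paper's $T_1$--$T_4$ proof. A secondary caution: deducing Lemma \ref{lem_op_C} by a $5^{2K}$-point net requires per-point failure probability of order $\pn^{-cK}$ (obtained by re-running the concentration at level $\delta_n\sqrt K$, as in Lemma \ref{lem_op}); a union bound over the net of the fixed-$(u,v)$ statement of Lemma \ref{lem_1}, whose failure probability is only $\pn^{-c'}$, does not suffice.
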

	\begin{proof}
		We work on the event $\E$ defined in (\ref{def_event}) that has probability at least $1-\pn^{-c}$ for some $c>0$. Recall that it implies $\wh K = K$ and $I_k \subseteq \wh I_k \subseteq I_k \cup J_1^k$ for all $k\in [K]$. 
		
		Recall that, from the proof of Theorem \ref{thm_beta}, $\whC = n^{-1}\wt \X^\top \wt \X - D$ with $D= \textrm{diag}(d_1,\ldots, d_K)$ and $d_k$ defined in (\ref{def_D}). By $\wt \X = \wt \Z + \wt \W$, we obtain
		\begin{align*}
		u^\top (\whC-\C)v
		&= u^\top \left({1\over n} \wt \X^\top \wt \X- \C\right) v - u^\top Dv\\
		&= u^\top \left({1\over n} \wt \Z^\top \wt \Z- \C\right) v + u^\top \left( {1\over n}\wt \W^\top \wt \W - D\right)v\\
		&\qquad  + {1\over n}u^\top \wt \Z^\top \wt \W v+{1\over n}v^\top \wt \Z^\top \wt \W u\\
		&:= T_1 + T_2 + T_3+T_4.
		\end{align*}
		Plugging $\wt \Z = \Z + \Z\D$ into $T_1$ yields
		\begin{align}\label{def_T1}\nonumber
		|T_1| &\le \left|u^\top \left( {1\over n}\Z^\top \Z  - \C\right)v\right| + {1\over n}\left|u^\top \Z^\top \Z \D v\right| + {1\over n}\left|u^\top \D^\top \Z^\top \Z v\right| + {1\over n}\left|u^\top \D^\top \Z^\top \Z \D v\right|\\\nonumber
		&\le \left|u^\top \left( {1\over n}\Z^\top \Z  - \C\right)v\right| + {1\over n}\left\|u^\top \Z^\top \Z \right\|_\i \|\D v\|_1+ {1\over n}\left\|\Z^\top \Z v\right\|_\i \|\D u\|_1\\
		&\qquad  + {1\over n}\max_{a}|\Z_{\sbt a}^\top \Z_{\sbt a}|\cdot \|\D u\|_1 \|\D v\|_1 
		\end{align}
		By using Lemmas \ref{lem_bernstein} and \ref{lem_deltaA_I}, with probability $1-\pn^{-c}$, one has 
		\begin{align}\label{eq_T_1}\nonumber
		|T_1| &\lesssim \delta_n\sqrt{u^\top \C u}\sqrt{v^\top \C v} + \delta_n \bar\rho \left(\|u\|_2\|v\|_2  \bar\rho \delta_n + \|u\|_2\sqrt{v^\top \C v} + \|v\|_2 \sqrt{u^\top \C u}\right)\\
		&\lesssim \delta_n \left(\sqrt{u^\top \C u}+ \bar\rho\|u\|_2
		\right)\left(\sqrt{v^\top \C v}+ \bar\rho\|v\|_2
		\right)
		\end{align}
		where we have used $\delta_n \lesssim 1$ in the second line. 
		The proof is completed by invoking Lemmas \ref{lem_Z_wt_W_wt} -- \ref{lem_wt_WW_D} for $T_2$ -- $T_4$ and using $u^\top \C u\ge \cl\|u\|_2^2$ to simplify expressions.  
	\end{proof}
	\medskip

	\begin{proof}[Proof of Lemma \ref{lem_op_C}]
		We again work on the event $\E$ such that $\wh K = K$ and $I_k \subseteq \wh I_k \subseteq I_k \cup J_1^k$ for all $k\in [K]$.
		
		To prove the upper bound for $\|\Omega^{1/2}(\whC- \C)\Omega^{1/2}\|_{{\rm op}}$, as in the proof of Lemma \ref{lem_1}, we consider the terms $T_1$ -- $T_4$ separately (except here $T_3 = T_4$ since $u=v$). Specifically, we will upper bound 
		\begin{align*}
		T_1' + T_2' + T_3' &:=  \sup_{v\in \S^{K-1}}\left| v^\top \Omega^{1/2}\left({1\over n} \wt \Z^\top \wt \Z- \C\right) \Omega^{1/2}v\right| + \|\Omega\|_{{\rm op}}\sup_{v\in \S^{K-1}}\left|v^\top \left( {1\over n}\wt \W^\top \wt \W - D\right)v\right|\\
		&\quad + \sup_{v\in \S^{K-1}}{2\over n}\left|v^\top \Omega^{1/2}\wt \Z^\top \wt \W \Omega^{1/2}v\ \right|. 
		\end{align*}
		For $T_1'$, (\ref{def_T1}) implies
		\begin{align*}
		T_1' &\le \left\|\Omega^{1/2}\left({1\over n}\Z^\top \Z  - \C\right)\Omega^{1/2}\right\|_{{\rm op}}+ 2\sup_{v\in \S^{K-1}} {1\over n}\left\|v^\top  \Omega^{1/2}\Z^\top \Z \right\|_\i \|\D \Omega^{1/2}v\|_1\\
		& \quad + \sup_{v\in \S^{K-1}} {1\over n}\max_{1\le a\le K}|\Z_{\sbt a}^\top \Z_{\sbt a}|\cdot \|\D \Omega^{1/2}v\|_1^2\\
		&\lesssim \delta_n\sqrt{K}+  \bar\rho \delta_n\sqrt{K/\cl}+  \bar\rho^2\delta_n^2{\sqrt K/\cl}
		\end{align*}
		with probability $1-\pn^{-c} - p^{-cK}$. The last inequality uses Lemma \ref{lem_deltaA_I} with $\|\Omega^{1/2}v\|_2\le \|v\|_2/\sqrt{\cl}$, (\ref{rate_op_zz_c}) in Lemma \ref{lem_op} and (\ref{rate_zz}) in Lemma \ref{lem_quad}. We thus find 
		\begin{equation}\label{eq_T_1'}
		T_1'\lesssim \delta_n\sqrt{K}(1\vee \bar\rho^2/\cl)
		\end{equation}
		with probability $1-\pn^{-c'}.$
		To bound $T_2'$, from (\ref{eq_vWWv}), it suffices to bound $\sup_{v\in \S^{K-1}}\D_2$ only, since the bound of $\D_1$ is uniformly in $v$. 
		Display (\ref{eq_vWWv}) gives 
		\begin{align*}
		\sup_{v\in \S^{K-1}}\D_2 &=   \sup_{v\in \S^{K-1}}  {1\over n}\left|\sumsum_{a,b\in [K], a\ne b}v_av_b\oW_{\sbt a}^\top \oW_{\sbt b}\right|\\
		&\quad +   \sup_{v\in \S^{K-1}} {1\over n}\left|\sumsum_{a,b\in [K], a\ne b}v_av_b\left(2 \oW_{\sbt a}^\top R_b +  R_a^\top R_b\right)\right|.
		\end{align*}
		By repeating a discretization argument similar to the one above, we can show from (\ref{eq_over_W}) that
		\begin{equation}\label{eq_over_W_sup}
		\PP\left\{ \sup_{v\in \S^{K-1}}  {1\over n}\left|\sumsum_{a,b\in [K], a\ne b}v_av_b\oW_{\sbt a}^\top \oW_{\sbt b}\right|\le c\sqrt{K\log \pn \over nm^2} \right\}\ge 1- \pn^{-c'K}
		\end{equation}
		and that, from (\ref{eq_WR}),
		\begin{equation}\label{eq_WR_sup}
		\sup_{v\in \S^{K-1}}{2\over n}\left|\sumsum_{a,b\in [K], a\ne b}v_a v_b\oW_{\sbt a}^\top R_b\right| \le c \bar\rho  \sqrt{K\log \pn \over nm}
		\end{equation}
		with probability $1-\pn^{-c'K}.$
		From (\ref{eq_RR_0}), the last term $n^{-1}\sum v_av_bR_a^\top R_b$ can be upper bounded by
		\begin{align*}
		&\sup_{v\in \S^{K-1}}{1\over n}\left|\sumsum_{a,b\in [K], a\ne b} v_av_bR_a^\top R_b\right|\\
		&\le
		\bar\rho^2\max_{a\ne b} \left|{1\over n}\sum_{t=1}^n \oW_{tb}\oW_{ta}\right|+\bar\rho^2\max_{
			\substack{a\ne b\in [K]\\ i\in L_b}} \left|{1\over n}\sum_{t=1}^n \W_{ti}\oW_{ta}\right|\\
		&\quad + \bar\rho^2\max_{
			\substack{a\ne b\in [K]\\ i\in L_a}} \left|{1\over n}\sum_{t=1}^n \W_{ti}\oW_{tb}\right|+\bar\rho^2\max_{
			\substack{a\ne b\in [K]\\ i\in L_a, j\in L_b}} \left|{1\over n}\sum_{t=1}^n \W_{ti}\W_{tj}\right|. 
		\end{align*}
		Invoking Lemmas \ref{lem_W} - \ref{lem_bernstein} and taking the union bound conclude
		that 
		\begin{equation}\label{eq_RR_sup}
		\sup_{v\in \S^{K-1}}{1\over n}\left|\sumsum_{a,b\in [K], a\ne b} v_av_bR_a^\top R_b\right| \le c\bar\rho^2\delta_n
		\end{equation}
		with probability $1-\pn^{-c'}.$
		Finally, from $\|\Omega\|_{{\rm op}} \le \cl^{-1}$, collecting (\ref{eq_over_W_sup}) -- (\ref{eq_RR_sup}) and invoking the bound of $\D_1$ via (\ref{eq_T_21}), (\ref{eq_T_22}) and (\ref{eq_T_23}) yield, with probability $1-c\pn^{-c'}$,
		\begin{equation}\label{eq_T2'}
		T_2' \le c\left({1\over m} \vee {\bar\rho^2 }\right)\cl^{-1}\delta_n\sqrt{K}.
		\end{equation}
		
		We then proceed to bound $T_3'$. From (\ref{disp_ztd_wtd}), by using Lemma \ref{lem_deltaA_I} and $\|\Omega^{1/2}v\|_2 \le 1/\sqrt{\cl}$ for any $v\in \S^{K-1}$, we know
		\[
		{1\over n}|v^\top  \Omega^{1/2}\wt \Z^\top \wt \W v|  \le {1\over n}|v^\top  \Omega^{1/2}\Z^\top \wt \W \Omega^{1/2}v| + 	\max_k{1\over n}|\Z_k^\top \wt \W\Omega^{1/2} v| \cdot 
		\bar\rho \delta_n/\sqrt{\cl}
		\]
		with probability $1-\pn^{-c}$. 
		By (\ref{rate_op_zwtd}) in Lemma \ref{lem_op} and $K\log\pn = O(n)$, we have
		\begin{equation}\label{eq_T3'}
		T_3' \le c\left({1 \over \sqrt m}\vee \bar{\rho}  \right)\delta_n\sqrt{K/\cl} 
		\end{equation}
		with probability $1-c\pn^{-c'}.$
		Collecting the bounds for $T_1'$, $T_2'$ and $T_3'$ completes the proof.
	\end{proof}

	\subsubsection{Proof of Lemma \ref{lem_Delta_c}}
	We work on the event $\E$ defined in (\ref{def_event}) that has probability at least $1-\pn^{-c}$ for some $c>0$. Recall that it implies $\wh K = K$ and $I_k \subseteq \wh I_k \subseteq I_k \cup J_1^k$ for all $k\in [K]$. 
	
	From definition, by adding and subtracting terms and using the fact $\wh m \ge 2$, we have
	\begin{align*}
	&|\wh \D_c - \D_c|\\ &\le \left|
	{\wh \tau^4\over \wh m-1}\sum_{a=1}^K{\wh \beta_a^2 \over \wh m}\sum_{i \in I_a} \left[\left(\e_k^\top \wh \Theta^+ \e_i\right)^2 -\left(\e_k^\top \Theta^+ \e_i\right)^2 \right]
	\right|\\
	&\quad  + \left|
	{\wh \tau^4\over \wh m-1}\sum_{a=1}^K\left[{\wh \beta_a^2 \over \wh m} - {\beta_a^2 \over m}\right]\sum_{i \in I_a} \left(\e_k^\top \Theta^+ \e_i\right)^2\right|+ 
	\left|{\wh \tau^4\over \wh m-1} - {\tau^4\over m-1}\right|\sum_{a=1}^K{\beta_a^2 \over m}\sum_{i \in I_a} \left(\e_k^\top \Theta^+ \e_i\right)^2\\
	&\le 
	{\wh \tau^4}{\|\wh \beta\|_2^2 \over \wh m}\max_{a\in [K]} \sum_{i \in I_a} \left|\left(\e_k^\top \wh \Theta^+ \e_i\right)^2 -\left(\e_k^\top \Theta^+ \e_i\right)^2 \right|\\
	&\quad  +\wh \tau^4 \max_{a\in [K]}  \left|{\wh \beta_a^2 \over \wh m} - {\beta_a^2 \over m}\right| 
	\sum_{a=1}^K\sum_{i \in I_a} \left(\e_k^\top \Theta^+ \e_i\right)^2+ 
	\left|{\wh \tau^4\over \wh m-1} - {\tau^4\over m-1}\right|\sum_{a=1}^K{\beta_a^2 \over m}\sum_{i \in I_a} \left(\e_k^\top \Theta^+ \e_i\right)^2.
	\end{align*}
	We bound each term separately. First, note that Theorem \ref{thm_I} guarantees $\wh m \ge m$ and (\ref{eq_beta_tau}) yields
	\[
	\wh \tau^4 \le \t^4 + (\t^2 + \wh \t^2)|\wh \t^2 - \t^2| = O_p(\t^4).
	\]
	Recall from (\ref{bd_minor_var}) that
	\begin{align}\label{bd_e_Theta_e}
	|\e_k^\top\Theta^+\e_i| \le \sqrt{\e_k^\top (\Theta^\top \Theta)^{-1}\e_k \over m}.
	\end{align}
	Provided that  
	\begin{align}\label{rate_e_Theta_e}
	\max_{i\in I}\left|\e_k^\top \wh \Theta^+ \e_i -\e_k^\top \Theta^+ \e_i\right| = o_p\left(\sqrt{\e_k^\top (\Theta^\top \Theta)^{-1}\e_k \over m}
	\right),
	\end{align}
	we can conclude 
	\begin{align*}
	&{\wh \tau^4}{\|\wh \beta\|_2^2 \over \wh m}\max_{1\le a\le K}\sum_{i \in I_a} \left|\left(\e_k^\top \wh \Theta^+ \e_i\right)^2 -\left(\e_k^\top \Theta^+ \e_i\right)^2 \right|\\ 
	&\le {\wh \tau^4}{\|\wh \beta\|_2^2 \over \wh m}\cdot m\cdot \max_{i\in I}
	\left(
	\left| \e_k^\top \wh \Theta^+ e_i\right| + \left| \e_k^\top \Theta^+ \e_i\right|
	\right)\left|
	\e_k^\top \wh \Theta^+ \e_i - \e_k^\top \Theta^+ \e_i
	\right|\\
	&=o_p\left(\tau^4{\|\beta\|_2^2 \over m} \e_k^\top (\Theta^\top \Theta)^{-1}\e_k \right) = o_p(\D_a \D_b).
	\end{align*}
	For the other two terms, note that 
	\begin{align*}
	\left|{\wh \beta_a^2 \over \wh m} - {\beta_a^2 \over m}\right|  &\le  (\|\wh \beta\|_2 + \|\beta\|_2){\|\wh \beta - \beta\|_2\over \wh m} + \|\beta\|_2^2 {|m - \wh m|\over \wh m m}\\
	& =O_p\left(
	\left(
	1\vee {\|\beta\|_2^2 \over m}
	\right) \left(\cl^{-1/2}\delta_n\sqrt{K} + \bar{\rho}\right)
	\right)= o_p(\D_a)
	\end{align*}
	by using (\ref{eq_beta_tau}), $|m-\wh m|/\wh m \le \bar{\rho}$, (\ref{rate_beta})  and $\delta_n\sqrt{K} = o(1)$. Moreover, $\wh \t^2 = O_p(\t^2)$, $\wh m \ge 2$, $m \ge 2$, $|m-\wh m|/\wh m \le \bar{\rho}$ and  Lemma \ref{lem_tau} yield
	\[
	\left|{\wh \tau^4\over \wh m-1} - {\tau^4\over m-1}\right| \le {(\wh \t^2 + \t^2)|\wh \tau^2 - \tau^2|\over \wh m-1} +  {\tau^4|m-\wh m| \over (\wh m-1)(m-1)} = o_p(1).
	\]
	By observing
	\[
	\sum_{a=1}^K{\beta_a^2 \over m}\sum_{i \in I_a} \left(\e_k^\top \Theta^+ \e_i\right)^2 \le {\|\beta\|_2^2 \over m}\left(\e_k^\top \Theta^+ \e_i\right)^2 \overset{(\ref{bd_e_Theta_e})}{\le} {\|\beta\|_2^2 \over m}{\e_k^\top (\Theta^\top \Theta)^{-1}\e_k \over m},
	\]
	and also using (\ref{bd_e_Theta_e}), we conclude 
	\begin{align*}
	&\wh \tau^4 \max_a  \left|{\wh \beta_a^2 \over \wh m} - {\beta_a^2 \over m}\right| 
	\sum_{a=1}^K\sum_{i \in I_a} \left(\e_k^\top \Theta^+ \e_i\right)^2 = o_p(\D_a\D_b),\\
	&\left|{\wh \tau^4\over \wh m-1} - {\tau^4\over m-1}\right|\sum_{a=1}^K{\beta_a^2 \over m}\sum_{i \in I_a} \left(\e_k^\top \Theta^+ \e_i\right)^2= o_p(\D_a \D_b).
	\end{align*}
	It then suffices to verify (\ref{rate_e_Theta_e}). By (\ref{eq_theta_hat_inv}), for any $i\in I$, we have 
	\begin{align*}
	\left|\e_k^\top (\wh \Theta^+- \Theta^+) \e_i\right|  &\le 
	\left|\e_k^\top (\Theta^T\Theta)^{-1}(\wh \Theta- \Theta)^TP_{\wh \Theta}^{\perp}\e_i\right| +  \left|\e_k^\top\Theta^+(\Theta-\wh \Theta)\wh \Theta^+\e_i\right|\\
	&\le \left\|\e_k^\top\Omega^{1/2} (H^TH)^{-1}(\wh H- H)^T\right\|_2 +   \left|\e_k^\top\Omega^{1/2}H^+(H-\wh H)\wh H^+\e_i\right|\\
	&\le \sqrt{p} \max_j \left|\e_k^\top\Omega^{1/2}(H^TH)^{-1}(\wh H - H)^T\e_j\right| +   {\Omega_{kk}^{1/2}}{\|H^+(\wh H - H)\|_{op}\over \sigma_K(\wh H)}.
	\end{align*}
	Invoking (\ref{disp_H})  and  part $(d)$ of Lemma \ref{lem_H_op}, we gives
	\begin{align*}
	\left|\e_k^\top (\wh \Theta^+- \Theta^+) \e_i\right| &\lesssim \sqrt{p} \delta_n \sqrt{\e_k^\top\Omega^{1/2}(H^\top H)^{-2}\Omega^{1/2}\e_k}+  \Omega_{kk}^{1/2}{\delta_n\sqrt{K} \over \sigma_K(H)}\\
	&\lesssim  \delta_n\sqrt{p \over \lambda_K(H^\top H)}\sqrt{\e_k^\top (\Theta^\top \Theta)^{-1}\e_k} +   \Omega_{kk}^{1/2}{\delta_n\sqrt{K} \over \sigma_K(H)}
	\end{align*}
	with probability $1-c\pn^{-c'}$. By $\sigma_K^2(H) = \lambda_K(H^\top H) \ge  m \cl $, invoking Assumption \ref{ass_H_prime} concludes the proof. 
	\qed
	
	\bigskip

	\section{Theoretical guarantees of $\wh\beta^{(I)}$: convergence rate and asymptotic normality}\label{sec_thm_beta_I}
	
	We provide statistical guarantees for the estimator defined in (\ref{est_beta_I}) with $\whC$ defined in (\ref{Chat}) and $\wh A_{{\wh I\sbt}}$ obtained from (\ref{est_AI_a}) -- (\ref{est_AI_b}). Their proofs can be found in \cite{bing2019essential}.
	The following theorem states the convergence rate of $\min_{P\in \H_K}\|\wh\beta^{(I)}-P\beta\|_2$. 
	\begin{thm}\label{thm_beta_I}
		Assume 
		Assumptions \ref{ass_model} -- \ref{ass_C} hold. Let $K\log\pn \le cn$ for some sufficiently small constant $c>0$. Then,
		with probability greater than $1-\pn^{-c'}$ for some constant $c'>0$, $\wh K = K$,
		the matrix	$\whC$ is non-singular and  the estimator
		$\wh \beta_{(I)}$ given by (\ref{est_beta_I}) satisfies:
		\begin{equation}\label{init_rate_beta}
		\min_{P\in \H_K}\left\|\wh \beta^{(I)} - P\beta\right\|_2 \lesssim   \left(1\vee {\|\beta\|_2\over \sqrt m}\right)\sqrt{K\log \pn \over n}.
		\end{equation}
	\end{thm}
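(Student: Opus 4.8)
\textbf{Proof proposal for Theorem \ref{thm_beta_I}.}

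The plan is to mimic the structure of the proof of Theorem \ref{thm_beta}, but exploiting the fact that $\wh\beta^{(I)}$ depends only on $\wh A_{{\wh I\sbt}}$ and $\whC$ rather than the full matrix $\wh\Theta$. Throughout I would work on the event $\E$ of (\ref{def_event}), on which $\wh K = K$, $I_k \subseteq \wh I_k \subseteq I_k \cup J_1^k$, and (after intersecting with the high-probability event from Lemma \ref{lem_op_C}) $\whC$ is invertible. Starting from the identity (\ref{iden_beta_I}), namely $\beta = \C^{-1}(A_{I\sbt}^\top A_{I\sbt})^{-1}A_{I\sbt}^\top \Cov(X_I,Y)$, and the definition (\ref{est_beta_I}), I would write
\[
\wh\beta^{(I)} - \beta = \whC^{-1}\wh\Pi^\top \Bigl({1\over n}\X_{\sbt \wh I}^\top\y\Bigr) - \C^{-1}\Pi^\top\Cov(X_I,Y),
\]
with $\Pi = A_{I\sbt}(A_{I\sbt}^\top A_{I\sbt})^{-1}$ and $\wh\Pi = \wh A_{{\wh I\sbt}}(\wh A_{{\wh I\sbt}}^\top\wh A_{{\wh I\sbt}})^{-1}$. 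Expanding $\X = \Z A^\top + \W$ and using $\wt\X = \X_{\sbt\wh I}\wh\Pi = \wt\Z + \wt\W$ with $\wt\Z = \Z A_{{\wh I\sbt}}^\top\wh\Pi = \Z(\bI_K + \D)$, this decomposes $n^{-1}\wh\Pi^\top\X_{\sbt\wh I}^\top\y$ into a ``signal'' term $n^{-1}\wt\Z^\top\wt\Z\,\beta$ plus cross terms in $\eps$, $\Z$, $\wt\W$. Subtracting $\C\beta$ and the analogous population identity, and inserting $\whC^{-1} - \C^{-1} = \C^{-1}(\C - \whC)\whC^{-1}$, I would obtain
\[
\wh\beta^{(I)} - \beta = \C^{-1}\bigl(\C - \whC\bigr)\wh\beta^{(I)} + \C^{-1}\Bigl({1\over n}\wt\Z^\top\eps - {1\over n}\wt\Z^\top\wt\W\beta + \bigl({1\over n}\wt\Z^\top\wt\Z - \C\bigr)\beta + R\Bigr),
\]
where $R$ collects the diagonal-correction (i.e.\ $\wt\Gamma$-type) remainder built into $\whC$ via (\ref{Chat}).

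Next I would bound the Euclidean norm of each piece. The term $\|\C^{-1}(\C-\whC)\wh\beta^{(I)}\|_2$ is controlled by $\cl^{-1}\|\C^{-1/2}(\whC-\C)\C^{-1/2}\|_{\mathrm{op}}\cdot\|\C^{1/2}\wh\beta^{(I)}\|_2$, and Lemma \ref{lem_op_C} gives $\|\C^{-1/2}(\whC-\C)\C^{-1/2}\|_{\mathrm{op}} \lesssim \delta_n\sqrt K$; a preliminary crude bound shows $\|\wh\beta^{(I)}\|_2 = O(1\vee\|\beta\|_2)$, which closes a short bootstrap argument. For the remaining terms I would use $\|\C^{-1/2}v\|_2 \le \sqrt{K}\,\|v\|_\infty/\sqrt{\cl}$ applied coordinatewise, reducing everything to entrywise estimates: $n^{-1}|\e_k^\top\Omega^{1/2}\Z^\top\eps|$ and $n^{-1}|\e_k^\top\Omega^{1/2}\D^\top\Z^\top\eps|$ via (\ref{rate_zeps}) and Lemma \ref{lem_deltaA_I}; $n^{-1}|\e_k^\top\Omega^{1/2}\wt\Z^\top\wt\W\beta|$ via Lemma \ref{lem_Z_wt_W_wt} (which already handles the $\wt\Z,\wt\W$ product); the term $|\e_k^\top\Omega^{1/2}(n^{-1}\wt\Z^\top\wt\Z - \C)\beta|$ via Lemma \ref{lem_1}; and the diagonal-correction remainder $R$ via Lemma \ref{lem_diag_gamma_hat_td} together with (\ref{rate_Ajek}). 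Each of these contributes at most $\cl^{-1/2}\delta_n\sqrt K\,(1 + \|\beta\|_2/\sqrt m + \bar\rho\|\beta\|_2)$, and invoking Assumption \ref{ass_J1} ($m\bar\rho^2 = O(1)$, hence $\bar\rho\|\beta\|_2 \lesssim \|\beta\|_2/\sqrt m$) collapses this to the claimed rate $(1\vee\|\beta\|_2/\sqrt m)\sqrt{K\log\pn/n}$.

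The main obstacle, and the reason this proof is genuinely shorter than that of Theorem \ref{thm_beta}, is precisely what is \emph{absent}: there is no $\wh\Theta^+ - \Theta^+$ term, so none of the delicate operator-norm control of $(\wh\Theta^\top\wh\Theta)^{-1}$ from Lemma \ref{lem_H_op} and Lemma \ref{lem_Rem2} is needed, and in particular Assumption \ref{ass_H} plays no role in this bound. The price is that the inversion of $\whC$ must be established and its perturbation controlled in operator norm — but Lemma \ref{lem_op_C} already does exactly this — and that the bootstrap on $\|\wh\beta^{(I)}\|_2$ must be run carefully to avoid circularity. Beyond that, the only real care needed is in tracking the $\wt\Gamma$-type diagonal correction buried in the definition (\ref{Chat}) of $\whC$, where the quasi-pure set $J_1$ enters; Lemma \ref{lem_diag_gamma_hat_td} shows this contributes $O_p(\bar\rho\,\delta_n)$ per coordinate, which is absorbed into the final bound under Assumption \ref{ass_J1}. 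Since none of the steps introduces a probability tail worse than $\pn^{-c}$, a union bound yields the stated high-probability guarantee, and the signed-permutation matrix $P$ is carried through exactly as in the identifiability discussion of Section \ref{sec_ident_beta}.
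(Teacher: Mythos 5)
The paper does not actually reprint a proof of Theorem \ref{thm_beta_I} (it defers to an external reference), so I am judging your sketch against the natural argument built from the paper's own lemmas. Your general plan — work on $\E$, get invertibility of $\whC$ from Lemma \ref{lem_op_C}, reduce to coordinatewise bounds, and note that Assumption \ref{ass_H} is never needed — is right, but your decomposition has a genuine gap. By splitting $\whC^{-1}=\C^{-1}+\C^{-1}(\C-\whC)\whC^{-1}$ and comparing $\wh h:=n^{-1}\wt\X^\top\y$ to $\C\beta$ rather than to $\whC\beta$, you create the two terms $\C^{-1}(\C-\whC)\wh\beta^{(I)}$ and $\C^{-1}\bigl(n^{-1}\wt\Z^\top\wt\Z-\C\bigr)\beta$ and propose to bound them \emph{separately}. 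Each of them is genuinely of size $\asymp\delta_n\sqrt{K}\,\|\beta\|_2$: the empirical Gram fluctuation of $Z$ applied to $\beta$ has coordinates of order $\|\C^{1/2}\beta\|_2/\sqrt n\ \gtrsim\ \sqrt{\cl}\,\|\beta\|_2/\sqrt n$, and your operator-norm bound for the first term additionally picks up $\|\C^{1/2}\wh\beta^{(I)}\|_2\le\sqrt{\lambda_{\max}(\C)}\|\wh\beta^{(I)}\|_2$, where only $\|\C\|_\i\le B_z$ (not $\lambda_{\max}(\C)=O(1)$) is assumed, and your bootstrap then needs $K^2\log\pn\lesssim n$ rather than $K\log\pn\le cn$. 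So your claim that every piece is at most $\cl^{-1/2}\delta_n\sqrt K(1+\|\beta\|_2/\sqrt m+\bar\rho\|\beta\|_2)$ fails for exactly these two pieces, and your route proves at best $(1\vee\|\beta\|_2)\sqrt{K\log\pn/n}$, not the claimed $(1\vee\|\beta\|_2/\sqrt m)\sqrt{K\log\pn/n}$. The $1/\sqrt m$ gain comes precisely from a cancellation your splitting forfeits: since $\whC=n^{-1}\wt\Z^\top\wt\Z+n^{-1}\wt\Z^\top\wt\W+n^{-1}\wt\W^\top\wt\Z+(n^{-1}\wt\W^\top\wt\W-D)$ by (\ref{eq_C_hat}), the two large terms cancel each other to leading order, so bounding them separately is not a loose step that can be tightened — it changes the rate.

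The repair keeps your toolkit but changes the algebra so the $Z$-quadratic error never has to be compared with $\C$. With $\wh h=n^{-1}\wt\X^\top\y$, $\wt\X=\wt\Z+\wt\W$, $\wt\Z=\Z(\bI_K+\D)$ and $\y=\Z\beta+\eps$, write
\begin{align*}
\wh\beta^{(I)}-\beta &= \whC^{-1}\bigl(\wh h-\whC\beta\bigr),\\
\wh h-\whC\beta &= \tfrac1n\wt\Z^\top\eps+\tfrac1n\wt\W^\top\eps-\tfrac1n\wt\Z^\top\Z\D\beta-\tfrac1n\wt\W^\top\Z\D\beta-\tfrac1n\wt\Z^\top\wt\W\beta-\bigl(\tfrac1n\wt\W^\top\wt\W-D\bigr)\beta .
\end{align*}
Every surviving coordinate is bounded by $\delta_n\bigl(1+\|\beta\|_2/\sqrt m+\bar\rho\|\beta\|_2\bigr)$ using (\ref{rate_zeps}), (\ref{rate_zz}), Lemma \ref{lem_deltaA_I} for the $\D$ terms, Lemma \ref{lem_Z_wt_W_wt} for $\wt\Z^\top\wt\W\beta$, Lemma \ref{lem_wt_WW_D} for $\wt\W^\top\wt\W-D$, and the $\eps$-analogue used in the proof of Lemma \ref{lem_sigma} for $\wt\W^\top\eps$. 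Then $\|\wh\beta^{(I)}-\beta\|_2\le\|\whC^{-1}\|_{{\rm op}}\sqrt K\,\|\wh h-\whC\beta\|_\i$, with $\|\whC^{-1}\|_{{\rm op}}\lesssim\cl^{-1}$ (and non-singularity of $\whC$) following from Lemma \ref{lem_op_C}, Weyl's inequality and Assumption \ref{ass_C} under $K\log\pn\le cn$; Assumption \ref{ass_J1} converts $\bar\rho\|\beta\|_2$ into $\|\beta\|_2/\sqrt m$, yielding (\ref{init_rate_beta}) with no bootstrap on $\|\wh\beta^{(I)}\|_2$ and no dependence on $\lambda_{\max}(\C)$.
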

	
	The following theorem establishes the asymptotic normality of each coordinate of $\wh\beta^{(I)}$. For ease of the presentation, we assume $\Gamma = \tau^2{\bI}_p$ and $|I_1| = \ldots = |I_K| = m$ while the proof holds for the general case. 
	\begin{thm}\label{thm_distr_I}
		Under 
		Assumptions \ref{ass_model}, \ref{ass_subg}, \ref{ass_J1_prime}, \ref{ass_C} and $K\log \pn = o(\sqrt{n})$, assume $\gamma_{\eps} / \sigma = O(1)$ and $\gamma_w / \tau = O(1)$. Then
		$\whC$ is non-singular  with probability tending to one,  and 
		for any $1\le k\le K$,
		\[
		\sqrt{n/U_k}\left(\wh \beta^{(I)}_k - \beta_k\right) \overset{d}{\to} N\left(0,1\right),\qquad \text{as}\quad n\to \i,
		\]
		where
		\[
		U_k =\left(\sigma^2 + {\t^2\over  m}\|\beta\|_2^2\right)\left(\Omega_{kk} + {\tau^2\over m}\|\Omega_{k\sbt }\|_2^2\right) + {\t^4 \over m^2(m-1)}\sum_{a=1}^K\beta_a^2\Omega_{ka}^2.
		\]
	\end{thm}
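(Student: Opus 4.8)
The plan is to follow, step by step, the four-part scheme used for Theorem \ref{thm_distr}, with one structural simplification: $\wh\beta^{(I)}$ never involves $\wh\Theta$ or its inverse, so no signal-strength condition (no Assumption \ref{ass_H_prime}) is needed — the only matrix that must be inverted is $\whC$, and by Lemma \ref{lem_op_C} one has $\|\C^{-1/2}(\whC-\C)\C^{-1/2}\|_{\rm op}\lesssim\delta_n\sqrt K<1/2$ on an event of probability $1-c\pn^{-c'}$ using only $K\log\pn=o(n)$, so $\whC\succ0$ there. Throughout I would work on the event $\E$ of (\ref{def_event}) intersected with $\{\whC\succ0\}$, on which $\wh K=K$ and $I_k\subseteq\wh I_k\subseteq I_k\cup J_1^k$. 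Writing $\wh\Pi=\wh A_{\wh I\sbt}(\wh A_{\wh I\sbt}^\top\wh A_{\wh I\sbt})^{-1}$ and $\wt\X=\X_{\sbt\wh I}\wh\Pi=\wt\Z+\wt\W$ as in (\ref{def_tildes}), we have $\wh\beta^{(I)}=\whC^{-1}n^{-1}\wt\X^\top\y$, while $\beta=\C^{-1}\Pi^\top\Cov(X_I,Y)$ by (\ref{iden_beta_I}) with $\Pi=A_{I\sbt}(A_{I\sbt}^\top A_{I\sbt})^{-1}$. Using $\whC=n^{-1}\wt\X^\top\wt\X-D$ with $D=\textrm{diag}(d_1,\dots,d_K)$ from (\ref{def_D}), the identity $\y-\wt\X\beta=\eps-\Z\D\beta-\wt\W\beta$ with $\D$ of (\ref{def_deltaA_I}), and Lemma \ref{lem_WW} to identify $D$ with the diagonal correction of $n^{-1}\oW^\top\oW$, I would split
\begin{align*}
\wh\beta^{(I)}-\beta=\C^{-1}\Bigl(\tfrac1n\wt\X^\top\y-\whC\beta\Bigr)+(\whC^{-1}-\C^{-1})\Bigl(\tfrac1n\wt\X^\top\y-\whC\beta\Bigr)
\end{align*}
and then, within the first term, peel off a clean part built from $\Z,\oW,\eps,\wt\Gamma^{(t)}$ and a quasi-pure part collecting the $\D$-terms and the discrepancies $\wt\W-\oW$ and $D-(\text{diagonal correction on }I)$, exactly as in the passage around (\ref{def_Rem_2}). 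This gives $\wh\beta^{(I)}_k-\beta_k=n^{-1}\sum_{t=1}^n\zeta_{tk}+[{\rm Rem}_1]_k+[{\rm Rem}_2]_k$ with the independent summands
\begin{align*}
\zeta_{tk}=\bigl(\Omega_{k\sbt}^\top\oX_{t\sbt}\bigr)\bigl(\eps_t-\oW_{t\sbt}^\top\beta\bigr)+\tfrac1{m^2}\sum_{a=1}^K\Omega_{ka}\beta_a\sum_{i\in I_a}\wt\Gamma^{(t)}_{ii},\qquad \oX_{t\sbt}=\Z_{t\sbt}+\oW_{t\sbt}.
\end{align*}

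Next I would compute the moments (the analogue of Lemma \ref{lem_moment}). One checks $\EE[\zeta_{tk}]=0$ — the mean $\tfrac{\tau^2}{m}\Omega_{k\sbt}^\top\beta$ of the second piece cancels $\EE[-\Omega_{k\sbt}^\top\oW_{t\sbt}\,\oW_{t\sbt}^\top\beta]$ — and, using independence of $Z,W,\eps$ and $\Cov(\oW_{t\sbt})=(\tau^2/m)\bI_K$ (so that $\e_k^\top\C^{-1}\Pi^\top\Pi\C^{-1}\e_k=m^{-1}\|\Omega_{k\sbt}\|_2^2$), together with the second moment of the diagonal blocks $\tfrac1{m(m-1)}\sum_{j\ne\ell\in I_a}\W_{tj}\W_{t\ell}$, one obtains $\EE[\zeta_{tk}^2]=U_k$; the pairing of the quartic-in-$W$ pieces across $(\Omega_{k\sbt}^\top\oW)(\oW^\top\beta)$, its square, and the $\wt\Gamma^{(t)}$-term is precisely what makes the fourth-moment contributions collapse to the three groups in $U_k$. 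For the general $\Gamma$ and $|I_a|$ case one follows the same bookkeeping as in (\ref{def_V_k_general}). I would then invoke the classical Lyapunov CLT for triangular arrays: by the Orlicz-norm / Rosenthal argument of Lemmas \ref{lem_third_moment} and \ref{lem_var_UI} one gets $\EE[|\zeta_{tk}|^3]\lesssim(\gamma_\eps+\tau\|\beta\|_2/\sqrt m)^3\Omega_{kk}^{3/2}+(\gamma_w\|\beta\|_2/\sqrt m+\gamma_\eps)^3\gamma_w^3(m^{-1}\|\Omega_{k\sbt}\|_2^2)^{3/2}$, hence $\sum_t\EE[|\zeta_{tk}|^3]/(nU_k)^{3/2}\lesssim n^{-1/2}\to0$.

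Finally I would show both remainders are $o_p(\sqrt{U_k/n})$. For ${\rm Rem}_1$: bounding $\|n^{-1}\wt\X^\top\y-\whC\beta\|_2\lesssim\cl^{-1/2}\delta_n\sqrt K(1+\|\beta\|_2/\sqrt m+\bar\rho\|\beta\|_2)$ (the analogue of Lemmas \ref{lem_Delta_Z}--\ref{lem_Delta_W}, via Lemmas \ref{lem_quad}, \ref{lem_op}, \ref{lem_deltaA_I}) and, via the identity (\ref{eq_Omega_hat_Omega}) and Lemma \ref{lem_op_C}, $|\e_k^\top(\whC^{-1}-\C^{-1})v|\lesssim\Omega_{kk}^{1/2}\cl^{-1/2}\|\C^{-1/2}(\whC-\C)\C^{-1/2}\|_{\rm op}\|v\|_2$, gives $\sqrt n\,|[{\rm Rem}_1]_k|\lesssim\Omega_{kk}^{1/2}\cl^{-1/2}\,\frac{K\log\pn}{\sqrt n}\,(1+\|\beta\|_2/\sqrt m)=o_p(\sqrt{U_k})$ since $K\log\pn=o(\sqrt n)$ and $\sqrt{U_k}\gtrsim\Omega_{kk}^{1/2}(1+\|\beta\|_2/\sqrt m)$; this is where the $\wh\Theta$-free structure pays off, as no control of $\lambda_K(A\C A^\top)$ enters. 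For ${\rm Rem}_2$: the argument of Lemma \ref{lem_Rem_2}, with $\Theta^+$ replaced by $\C^{-1}\Pi^\top$ (using $\e_k^\top\C^{-1}\Pi^\top\Pi\C^{-1}\e_k\le m^{-1}\|\Omega_{k\sbt}\|_2^2$) and Lemmas \ref{lem_deltaA_I}, \ref{lem_diag_gamma_hat_td}, yields $\sqrt n\,|[{\rm Rem}_2]_k|=O_p(\bar\rho\|\beta\|_2\sqrt{\log\pn}\,\Omega_{kk}^{1/2})=o_p(\sqrt{U_k})$ under Assumption \ref{ass_J1_prime}; Slutsky then finishes the proof. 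The main obstacle is Steps 1--2: producing the \emph{exact} representation $\wh\beta^{(I)}_k-\beta_k=n^{-1}\sum_t\zeta_{tk}+{\rm Rem}$ in which $\zeta_{tk}$ is grouped so that the quartic-in-$W$ and cross terms cancel down to the precise $U_k$ — the remaining rate estimates are genuinely lighter than in Theorem \ref{thm_distr} because no signal-strength assumption is invoked.
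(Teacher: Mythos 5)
The paper does not actually contain a proof of Theorem \ref{thm_distr_I} (it is deferred to the companion reference), so I can only measure your plan against the template it imitates, namely the proof of Theorem \ref{thm_distr}. At the level of architecture your proposal is sound and matches that template: $\wh\beta^{(I)}=\whC^{-1}\tfrac1n\wt\X^\top\y$ indeed involves no $\wh\Theta$, so the only invertibility needed is that of $\whC$ (Lemma \ref{lem_op_C}); your Step 1 summand is forced by the algebra — on the clean event one does get $\zeta_{tk}=\Omega_{k\sbt}^\top\oX_{t\sbt}(\eps_t-\oW_{t\sbt}^\top\beta)+m^{-2}\sum_a\Omega_{ka}\beta_a\sum_{i\in I_a}\wt\Gamma^{(t)}_{ii}$, with mean zero, and the $\W_{ti}^2$ pieces cancel exactly as you suggest; and your Rem$_1$, Rem$_2$ bounds are of the right order under $K\log\pn=o(\sqrt n)$ and Assumption \ref{ass_J1_prime}.

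The genuine gap is Step 2, which you assert rather than compute, and it is exactly the step on which the statement rests. Writing $Q_a=\tfrac{1}{m(m-1)}\sum_{j\ne\ell\in I_a}\W_{tj}\W_{t\ell}$, the mean-centred $W$-quadratic part of your $\zeta_{tk}$ is
\begin{equation*}
-\sum_{a=1}^K\Omega_{ka}\beta_a Q_a-\sum_{a\ne b}\Omega_{ka}\beta_b\,\oW_{ta}\oW_{tb},
\end{equation*}
and since $\EE[Q_a^2]=\tfrac{2\tau^4}{m(m-1)}$ (the factor $2$ comes from the U-statistic) and $\EE\bigl[(\oW_{ta}\oW_{tb})(\oW_{tb}\oW_{ta})\bigr]=\tau^4/m^2\ne0$, its second moment equals
\begin{equation*}
\frac{2\tau^4}{m(m-1)}\sum_a\Omega_{ka}^2\beta_a^2+\frac{\tau^4}{m^2}\sum_{a\ne b}\bigl(\Omega_{ka}^2\beta_b^2+\Omega_{ka}\Omega_{kb}\beta_a\beta_b\bigr),
\end{equation*}
whereas the corresponding part of the displayed $U_k$ is $\tfrac{\tau^4}{m(m-1)}\sum_a\Omega_{ka}^2\beta_a^2+\tfrac{\tau^4}{m^2}\sum_{a\ne b}\Omega_{ka}^2\beta_b^2$. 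The surplus, $\tfrac{\tau^4}{m(m-1)}\sum_a\Omega_{ka}^2\beta_a^2+\tfrac{\tau^4}{m^2}\sum_{a\ne b}\Omega_{ka}\Omega_{kb}\beta_a\beta_b$, is of the same order as the retained $\tfrac{\tau^4}{m^2(m-1)}$-term (in fact $m$ times larger on the diagonal), so it cannot be pushed into the remainders; it only becomes negligible relative to $U_k$ when $m\to\i$, while the theorem is stated for any fixed $m\ge2$. A concrete check: for $K=1$, $m=2$, Gaussian data, $\wh\beta^{(I)}=\wh h/\whC$ and the delta method give asymptotic variance $\sigma^2\Omega_{11}+\tfrac{\sigma^2\tau^2}{2}\Omega_{11}^2+\tfrac{\tau^2\beta^2}{2}\Omega_{11}+\tau^4\beta^2\Omega_{11}^2$, whose last term is twice the one in $U_1$. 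So the claim "$\EE[\zeta_{tk}^2]=U_k$; the pairing of the quartic-in-$W$ pieces collapses to the three groups in $U_k$" is precisely what does not go through as written: either your bookkeeping must be carried out explicitly and reconciled with the displayed $U_k$ (note that mimicking the bookkeeping of Lemma \ref{lem_moment} does not discharge this, because the exchange covariances $\EE[(\oW_{ta}\oW_{tb})(\oW_{tb}\oW_{ta})]$ and the factor $2$ in $\EE[Q_a^2]$ have to be accounted for in this new influence function), or you must explain what additional cancellation or modification recovers the stated formula. Until that is resolved, Step 2 is a missing — and, on its face, failing — step, and since the paper's own proof lives in the companion reference you cannot simply appeal to it.
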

	
	To estimate the asymptotic variance $U_k$, we can also use a plug-in estimator by using $|\wh I_k|$ for each $1\le k\le K$, $\whC^{-1}$, $\wh\beta^{(I)}$, $\wh\tau_i^2$ for $1\le i\le p$ obtained as (\ref{est_Gamma}) and $\wh\sigma^2$ obtained as (\ref{est_sigma}) by using $\wh\beta^{(I)}$. The following proposition shows that the plug-in estimator $\wh U_k$ consistently estimates 
	the asymptotic variance $U_k$ of $\wh \beta^{(I)}_k$.
	
	\begin{prop}\label{prop_est_Q}
		Under 
		conditions of Theorem \ref{thm_distr_I},  we have 
		\[
		\left|{\wh U_k^{1/2}/ U_k^{1/2}} - 1\right|  = o_p(1).
		\]
		Consequently, we have
		\[
		\sqrt{n/\wh U_k}\left(\wh \beta^{(I)}_k - \beta_k \right)\overset{d}{\to} N(0, 1),\qquad \text{as }n\to\i, \quad k\in [K].
		\]
	\end{prop}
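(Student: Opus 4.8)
The plan is to follow the proof of Proposition~\ref{prop_est_V} essentially verbatim, exploiting that the limiting variance $U_k$ here has a strictly simpler structure than $V_k$: it involves only $\Omega=\C^{-1}$ and \emph{not} $(\Theta^{\top}\Theta)^{-1}$, so that no signal-strength hypothesis (Assumption~\ref{ass_H} or \ref{ass_H_prime}) is needed, and the only dimensional constraint used is $K\log\pn=o(\sqrt n)$. This constraint already gives $\delta_n\sqrt K=o(1)$ and, through Theorem~\ref{thm_beta_I}, the rate $\|\wh\beta^{(I)}-\beta\|_2=O\big((1\vee\|\beta\|_2/\sqrt m)\sqrt{K\log\pn/n}\big)$ in (\ref{init_rate_beta}); it also drives the consistency rates for $\wh\sigma^2$ (built from $\wh\beta^{(I)}$), $\max_i|\wh\tau_i^2-\tau_i^2|$ and $\|\Omega^{1/2}(\whC-\C)\Omega^{1/2}\|_{\mathrm{op}}$, which are the analogues of Lemmas~\ref{lem_tau}, \ref{lem_sigma} and \ref{lem_op_C} and are established in \cite{bing2019essential}. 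As in Proposition~\ref{prop_est_V}, a Taylor expansion reduces the claim to $U_k^{-1}|\wh U_k-U_k|=o_p(1)$, after which the stated weak convergence follows from Theorem~\ref{thm_distr_I} and Slutsky's lemma; we work throughout on the event $\E$ (so $\wh K=K$ and $I_k\subseteq\wh I_k\subseteq I_k\cup J_1^k$) intersected with $\{\whC\text{ invertible}\}$, which has probability tending to one.

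First I would split $U_k=\D_a\D_b+\D_c$ with $\D_a=\sigma^2+\frac{\tau^2}{m}\|\beta\|_2^2$, $\D_b=\Omega_{kk}+\frac{\tau^2}{m}\|\Omega_{k\sbt}\|_2^2$ and $\D_c=\frac{\tau^4}{m^2(m-1)}\sum_a\beta_a^2\Omega_{ka}^2$, and correspondingly $\wh U_k=\wh\D_a\wh\D_b+\wh\D_c$ obtained by replacing $m,\Omega,\sigma^2,\tau^2,\beta$ by $|\wh I_k|,\whC^{-1},\wh\sigma^2,\wh\tau_i^2,\wh\beta^{(I)}$; then I bound $|\wh\D_a-\D_a|/\D_a$, $|\wh\D_b-\D_b|/\D_b$ and $|\wh\D_c-\D_c|/(\D_a\D_b)$ separately. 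For $\D_a$, the analogues of Lemmas~\ref{lem_tau} and \ref{lem_sigma}, the rate (\ref{init_rate_beta}), together with $\big||\wh I_k|-m\big|/|\wh I_k|\le\bar\rho$ and $\bar\rho=o(1)$ from Assumption~\ref{ass_J1_prime}, yield $|\wh\D_a-\D_a|=O\big((1\vee\|\beta\|_2^2/m)(\bar\rho+\cl^{-1/2}\delta_n\sqrt K)\big)=o(\D_a)$. For $\D_b$, the key is to convert the operator-norm bound of Lemma~\ref{lem_op_C} into \emph{relative} per-coordinate bounds: writing $M:=\Omega^{1/2}(\C-\whC)\Omega^{1/2}$, one has $\wh\Omega=\Omega^{1/2}(\bI-M)^{-1}\Omega^{1/2}$ with $\|M\|_{\mathrm{op}}=O(\delta_n\sqrt K)$, hence $|\wh\Omega_{kk}-\Omega_{kk}|\le\Omega_{kk}\|M\|_{\mathrm{op}}/\lambda_K(\bI-M)=O(\Omega_{kk}\delta_n\sqrt K)$ and, since $\|\Omega^{1/2}\e_k\|_2^2=\Omega_{kk}$, also $\big|[\wh\Omega^2]_{kk}-[\Omega^2]_{kk}\big|=O(\Omega_{kk}\|M\|_{\mathrm{op}}\|\Omega\|_{\mathrm{op}})$; combining with the analogue of Lemma~\ref{lem_tau}, $\wh\tau^2=\tau^2+o(1)$, $\|\Omega\|_{\mathrm{op}}\le\cl^{-1}$ and the trivial bound $\D_b\ge\Omega_{kk}$ gives $|\wh\D_b-\D_b|=o(\D_b)$. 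For $\D_c$, I would first note — exactly as in (\ref{bd_minor_var}), using $\Omega_{ka}^2\le\Omega_{kk}\Omega_{aa}$ and $\lambda_{\min}(\C)\ge\cl$ — that $\D_c=O(\D_a\D_b/m^2)$ is already lower order, so it suffices to bound $\wh\D_c$ crudely via $\wh\tau^4=O(\tau^4)$, $|\wh I_k|\ge m$, $\|\wh\beta^{(I)}\|_2=O(1\vee\|\beta\|_2)$ (a consequence of (\ref{init_rate_beta}) and $K\log\pn=o(n)$), $\|\wh\Omega\|_{\mathrm{op}}=O(\cl^{-1})$ and the relative bound on $\wh\Omega_{kk}$, to conclude $|\wh\D_c-\D_c|=o(\D_a\D_b)$; this is the argument of Lemma~\ref{lem_Delta_c} adapted to $U_k$.

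The main obstacle, as in Proposition~\ref{prop_est_V}, is precisely this translation of an operator-norm concentration bound into scale-invariant \emph{relative} error bounds on functionals of $\whC^{-1}$ while the eigenvalues of $\C$ (hence of $\Omega$) are allowed to vary with $n$ and $K$: the error terms must be compared against $\D_b\ge\Omega_{kk}$, which may itself shrink with $K$, so only relative estimates — obtained by factoring $\wh\Omega-\Omega=\Omega^{1/2}(\bI-M)^{-1}M\,\Omega^{1/2}$ and extracting $\|\Omega^{1/2}\e_k\|_2^2=\Omega_{kk}$ from the $\e_k$ coordinates of $\wh\Omega_{kk}$, $[\wh\Omega^2]_{kk}$ and $\wh\Omega_{ka}$ — are admissible. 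A secondary point is that $\wh\sigma^2$ is now built from $\wh\beta^{(I)}=\whC^{-1}\wh h$ with $\wh h=n^{-1}\wt\X^{\top}\y$ rather than from $\wh\beta$; but the consistency proof of $\wh\sigma^2$ uses only the least-squares form $\wh h=n^{-1}\wt\X^{\top}\y$, $\whC=n^{-1}\wt\X^{\top}\wt\X-D$ and the $\ell_2$-rate of the plugged-in estimator, so the argument of Lemma~\ref{lem_sigma} carries over with (\ref{init_rate_beta}) in place of (\ref{rate_beta}). Everything else — Lemmas~\ref{lem_tau}, \ref{lem_op_C}, the exponential inequalities, and the control of quasi-pure variables through $\bar\rho$ — is used off the shelf, and because $U_k$ carries no $(\Theta^{\top}\Theta)^{-1}$ term, no signal-strength assumption enters at any point of the proof.
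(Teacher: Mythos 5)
First, a point of comparison: the paper itself does not prove Proposition~\ref{prop_est_Q} (or Theorem~\ref{thm_distr_I}); Appendix~G defers these proofs to \cite{bing2019essential}. So your plan can only be judged against the in-paper proof of the analogous Proposition~\ref{prop_est_V}, and as an adaptation of that argument it is largely on the right track: the decomposition $U_k=\D_a\D_b+\D_c$, the reduction via Taylor expansion and Slutsky, the treatment of $\D_a$ through Lemmas~\ref{lem_tau} and \ref{lem_sigma} and the rate (\ref{init_rate_beta}), and your observation that the absence of $(\Theta^\top\Theta)^{-1}$ in $U_k$ removes any need for Assumption~\ref{ass_H_prime} are all correct; the factorization $\wh\Omega-\Omega=\Omega^{1/2}(\bI_K-M)^{-1}M\,\Omega^{1/2}$ is exactly the right device for relative bounds on functionals of $\whC^{-1}$.

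The genuine gap is in your handling of $\D_c$. You assert that $\D_c=O(\D_a\D_b/m^2)$ is ``already lower order,'' so that a crude bound on $\wh\D_c$ suffices. The bound one actually gets from $\Omega_{ka}^2\le\Omega_{kk}\Omega_{aa}$ and $\D_a\D_b\ge \frac{\tau^2\|\beta\|_2^2}{m}\cdot\frac{\tau^2}{m}\|\Omega_{k\sbt}\|_2^2$ is only $\D_c\lesssim \D_a\D_b/(m-1)$, and Theorem~\ref{thm_distr_I} allows $m$ to stay bounded (Assumption~\ref{ass_model} only forces $m\ge2$). For bounded $m$ the ratio $\D_c/(\D_a\D_b)$ need not vanish: with $\C=\bI_K$, $\beta$ proportional to $\e_k$ with large norm and $m=2$, it converges to the constant $\tau^2/(2+\tau^2)$. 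In that regime $|\wh\D_c-\D_c|\le\wh\D_c+\D_c=O(\D_a\D_b)$ yields nothing, so the crude-bound shortcut fails. You must instead run the term-by-term difference argument of Lemma~\ref{lem_Delta_c}: compare $\wh\tau^4$ with $\tau^4$, $\wh\beta_a^2/\wh m$ with $\beta_a^2/m$, and, crucially, $\wh\Omega_{ka}$ with $\Omega_{ka}$ uniformly over $a$ (not only $a=k$), which your own factorization delivers as $\max_a|\wh\Omega_{ka}-\Omega_{ka}|=O\bigl(\sqrt{\Omega_{kk}\Omega_{aa}}\,\delta_n\sqrt K\bigr)=o\bigl(\sqrt{\Omega_{kk}\Omega_{aa}}\bigr)$; this is the exact analogue of (\ref{rate_e_Theta_e}) and repairs the step. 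A secondary, smaller issue: Lemma~\ref{lem_sigma} does not use only the plain $\ell_2$ rate of the plugged-in coefficient — display (\ref{disp_12}) also needs the weighted rate $\|\C^{1/2}(\wh\beta-\beta)\|_2^2$ from Lemma~\ref{lem_fit}, and since $\|\C\|_{\rm op}$ may grow with $K$ this is not implied by (\ref{init_rate_beta}) alone; before claiming that Lemma~\ref{lem_sigma} ``carries over'' you need the analogue of Lemma~\ref{lem_fit} for $\wh\beta^{(I)}$, obtained from the decomposition underlying Theorem~\ref{thm_beta_I} rather than from the $\ell_2$ rate.
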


	\section{Data-driven choice of the tuning parameter $\delta$ in Algorithm \ref{alg1}}\label{app_cv}
	A selection procedure of choosing $\delta$ is proposed in \cite[Section 5.1]{LOVE}. For the reader's convenience, we restate it here as well as an illustrative example in \cite{LOVE}.  
	
	Display (\ref{def_delta}) specifies the theoretical  rate of $\delta$, but only up to constants that   depend on the underlying data generating mechanism. We propose below a data-dependent way to select $\delta$, based on data splitting.  Specifically, we split the data set into two independent parts, of equal sizes. On the first set, we calculate the sample covariance matrix $\wh \Sigma^{(1)}$. On the second set, we choose a  fine  grid of values $\delta_\ell = c_\ell\sqrt{\log p/n}$, with $1\le \ell\le M$, for $\delta$, by varying the proportionality constants $c_\ell$. For each $\delta_\ell$, we obtain the estimated number of clusters $\wh K(\ell)$ and the pure variable set $\wh I(\ell)$ with its partition $\wh \I(\ell)$. Then we construct the $|\wh I(\ell)|\times \wh K(\ell)$ submatrix $\wh A_{\wh I(\ell)}$ of $\wh A$, and estimate $\whC(\ell)$ via formula (\ref{Chat}). Finally, we calculate the $|\wh I(\ell)|\times |\wh I(\ell)|$ matrix $W_\ell = \wh A_{\wh I(\ell)}\whC(\ell)\wh A_{\wh I(\ell)}^T$. In the end, we have constructed a family $\mathcal{F} = \{ W_1, \ldots,  W_M\}$ of the fitted matrices  $W_\ell$, each corresponding to different $\wh \I(\ell)$ that depend in turn on $\delta_\ell$, for $\ell\in \{1,\ldots,M\}$. Define
	\begin{equation}\label{cvcrit}
	CV(\wh \I(\ell)) := \frac{1}{\sqrt{| \wh I(\ell)|\bigl(| \wh I(\ell)|-1\bigr)}}\left\|\wh \Sigma_{\wh I(\ell)\wh I(\ell)}^{(1)}-W_\ell\right\|_{\textrm{F-off}},
	\end{equation}
	where $\|B\|_{\textrm{F-off}}:= \| B -\text{diag}(B)\|_F$ denotes the Frobenius norm over the off-diagonal elements of a square matrix $B$. We choose $\delta^{cv}$ as the value $\delta_\ell$ that minimizes $CV(\wh \I(\ell))$ over the grid $\ell\in [M]$.
	To illustrate how the selection procedure works, we provide an example below.\\

	We consider a simple case, when $\C$ is diagonal and the signed permutation matrix $P$ is the identity, to illustrate our cross-validation method.\\
	
	\noindent\textbf{Example 1.}
	Let $\C = \textrm{diag}(\tau,\tau,\tau)$, $\I = \bigl\{\{1,2\},\{3,4\},\{5,6\}\bigr\}$ and
	\begin{eqnarray*}
		A = \begin{bmatrix}
			1 & 0 & 0 \\
			-1 & 0 & 0\\
			0 & 1 & 0\\
			0 & 1  & 0\\
			0 & 0 & -1\\
			0 & 0 & -1\\
			0.4 & 0.6 & 0\\
			-0.5 & 0 & 0.4
		\end{bmatrix},\quad A_I\C A_I^T = \begin{bmatrix}
			* & \tau & 0 & 0 & 0& 0\\
			\tau & * & 0 & 0 & 0& 0\\
			0 & 0 & * & \tau & 0 & 0\\
			0 & 0 & \tau & * & 0& 0\\
			0 & 0 & 0& 0& * & \tau\\
			0 & 0 & 0& 0 & \tau & * \\
		\end{bmatrix},
	\end{eqnarray*}
	where we use $*$ to reflect the fact that our algorithm  ignores the diagonal elements. For the true $I$ and $\I$, we have $\wh A_{I} = A_I$,
	\begin{eqnarray*}
		\left\|\wh \Sigma_{II}^{(1)}-A_I \whC A_I^T\right\|_{\textrm{F-off}} &\le&
		\left\|\wh \Sigma_{II}^{(1)}-\Sigma_{II}\right\|_{\textrm{F-off}}+\left\|A_I \whC A_I^T-\Sigma_{II}\right\|_{\textrm{F-off}}\\
		& \le &\left\|\wh \Sigma_{II}^{(1)}-\Sigma_{II}\right\|_{\textrm{F-off}} + \sqrt{|I|(|I|-1)}\cdot\|\whC-\C\|_\i.
	\end{eqnarray*}
	For
	$$\epsilon = \left(\max_{i\ne j}\left|\wh\Sigma^{(1)}_{ij}-\Sigma_{ij}\right|\right) \vee \left(\max_{i\ne j}\left|\wh\Sigma^{(2)}_{ij}-\Sigma_{ij}\right|\right),$$ 
	we obtain
	\begin{eqnarray*}
		CV(\I) = \frac{1}{\sqrt{| I|\bigl(|I|-1\bigr)}}\left\|\wh \Sigma_{II}^{(1)}-A_I \whC A_I^T\right\|_{\textrm{F-off}}\le 2\epsilon.
	\end{eqnarray*} 
	Suppose that  $\wh \I = \bigl\{\{1,2\},\{3,5\},\{4,6\}\bigr\}$, so $\wh I=I$, yet $\wh\I\ne\I$, we would have 
	\begin{eqnarray*}
		\wh A_{\wh I}\whC\wh A_{\wh I}^T = \begin{bmatrix}
			* & \wh \tau_1 & 0 & 0 & 0& 0\\
			\wh \tau_1 & * & 0 & 0 & 0& 0\\
			0 & 0 & * & 0 & \wh \tau_2 & 0\\
			0 & 0 & 0 & * & 0 & \wh \tau_3\\
			0 & 0 & \wh \tau_2&  0& * & 0\\
			0 & 0 & 0& \wh \tau_3 & 0 & * \\
		\end{bmatrix}
	\end{eqnarray*}
	and 
	\[
	\wh A_{\wh I}\wh C\wh A_{\wh I}^T - \Sigma_{\wh I\wh I} = 
	\begin{bmatrix}
	* &  \Delta \tau_1 & 0 & 0 & 0& 0\\
	\Delta \tau_1 & * & 0 & 0 & 0& 0\\
	0 & 0 & * & \bm{-\tau} & \bm{\wh \tau_2} & 0\\
	0 & 0 & \bm{-\tau} & * & 0& \bm{\wh \tau_3}\\
	0 & 0 &\bm{ \wh \tau_2}& 0& * & \bm{-\tau}\\
	0 & 0 & 0& \bm{\wh \tau_3} & \bm{-\tau}& * \\
	\end{bmatrix}.
	\]
	Here $\Delta\tau_a = \wh\tau_a - \tau_a$, using estimates $\wh \tau_a$ defined in lieu of $[\whC]_{aa}$ from (\ref{Chat}) for each $a\in [\wh K]$. Thus, the cross-validation criterion  in (\ref{cvcrit}) would satisfy 
	\begin{eqnarray*}
		CV(\wh \I) &\ge&  \frac{1}{\sqrt{|\wh I|\bigl(|\wh I|-1\bigr)}} \left\|\wh A_{\wh I}\whC\wh A_{\wh I}^T - \Sigma_{\wh I\wh I}\right\|_{\rm{F-off}} -\frac{1}{\sqrt{| \wh I|\bigl(|\wh I|-1\bigr)}}\left\|\wh \Sigma_{\wh I\wh I}^{(1)}-\Sigma_{\wh I\wh I}\right\|_{\textrm{F-off}}\\
		& \ge & \sqrt{\frac{4\tau^2 + 2\wh\tau_2^2+2\wh \tau_3^2}{|\wh I|\bigl(|\wh I|-1\bigr)}}-2\epsilon.
	\end{eqnarray*}
	From noting that $|\wh \tau_a-\tau|\le \epsilon$, for $a = 2,3$, it gives
	\begin{eqnarray*}
		CV(\wh \I) \ge \sqrt{\frac{4\tau^2-4\tau\epsilon+2\epsilon^2}{15}}-2\epsilon > 2\epsilon \ge CV(\I),
	\end{eqnarray*}
	for $\tau \ge 9\epsilon$. We conclude in this example, with $\wh I=I$, incorrectly specifying $  \I$ will induce a large loss. It is easily   verified that this is also the case when $\wh I = I$ but $\wh K \ne K$ and $\wh\I \ne \I$. 
	
	On the other hand, suppose we mistakenly  included    some non-pure variable in $\wh I$. For instance, suppose we found $\wh \I = \bigl\{\{1,2\},\{3,4\},\{5,6,7\}\bigr\}$. Then we would  have
	\begin{eqnarray*}
		\Sigma_{\wh I'\wh I'} = \begin{bmatrix}
			* & \tau & 0 & 0 & 0& 0 & 0.4\tau\\
			\tau & * & 0 & 0 & 0& 0 & -0.4\tau\\
			0 & 0 & * & \tau & 0 & 0 & 0.6\tau\\
			0 & 0 & \tau & * & 0& 0 & 0.6\tau\\
			0 & 0 & 0& 0& * & \tau & 0\\
			0 & 0 & 0& 0 & \tau & * & 0\\
			0.4\tau & -0.4\tau & 0.6\tau & 0.6\tau & 0 & 0 & *
		\end{bmatrix}
	\end{eqnarray*}
	and 
	\[
	\wh A_{\wh I'}\whC\wh A_{\wh I'}^T =  
	\begin{bmatrix}
	* & \wh\tau_1 & 0 & 0 & 0& 0 & 0\\
	\wh\tau_1 & * & 0 & 0 & 0& 0 & 0\\
	0 & 0 & * & \wh \tau_2 & 0 & 0 & 0\\
	0 & 0 & \wh \tau_2 & * & 0& 0 & 0\\
	0 & 0 & 0& 0& * & \wh\tau_3 & \wh\tau_3\\
	0 & 0 & 0& 0 & \wh\tau_3 & * & \wh\tau_3\\
	0 & 0& 0 & 0& \wh\tau_3 & \wh\tau_3 & *
	\end{bmatrix}.
	\]
	We thus have 
	\begin{eqnarray*}
		\wh A_{\wh I'}\whC\wh A_{\wh I'}^T - \Sigma_{\wh I'\wh I'} = 
		\begin{bmatrix}
			* &  \Delta \tau_1 & 0 & 0 & 0& 0 & \bm{-0.4\tau}\\
			\Delta \tau_1 & * & 0 & 0 & 0& 0 & \bm{0.4\tau}\\
			0 & 0 & * & \Delta \tau_2& 0 & 0 & \bm{-0.6\tau}\\
			0 & 0 & \Delta \tau_2 & * & 0& 0 &\bm{-0.6\tau}\\
			0 & 0 & 0& 0 & * & \Delta \tau_3 & \bm{\wh\tau_3}\\
			0 & 0 & 0& 0 & \Delta \tau_3 & * & \bm{\wh\tau_3}\\
			\bm{-0.4\tau} & \bm{0.4\tau} & \bm{-0.6\tau} & \bm{-0.6\tau} & \bm{\wh\tau_3} & \bm{\wh\tau_3} & *
		\end{bmatrix}
	\end{eqnarray*}
	and, by similar arguments, for $\tau \ge 12\epsilon$, we find
	\begin{eqnarray*}
		CV(\wh \I') \ge 
		\sqrt{\frac{4\wh \tau_3^2 + 4\times 0.36\tau^2+4\times 0.16\tau^2}{42}}-2\epsilon > 2\epsilon.
	\end{eqnarray*}
	Thus, the cross-validation loss in this example will be large even if  only one non-pure variable is mistakenly classified as pure variable. 
	In rare cases, the cross-validation criterion might miss a very small subset of $I$ but this can be rectified in our later estimation of $A_J$.

	\section{Additional simulation results: histograms of the standardized estimates of $\beta_1$}\label{sec_histograms}
	
	We further corroborate the validity of Theorem \ref{thm_distr} and Proposition \ref{prop_est_V}
	in Figure \ref{fig_hist_1}. This figure depicts histograms based on 200 values  of  $\sqrt{n/\wh V_k}(\wh\beta_1 - \beta_1)$.

	\begin{figure}[ht]
		\begin{tabular}{c}
			\includegraphics[width=\textwidth,height=0.25\textheight]{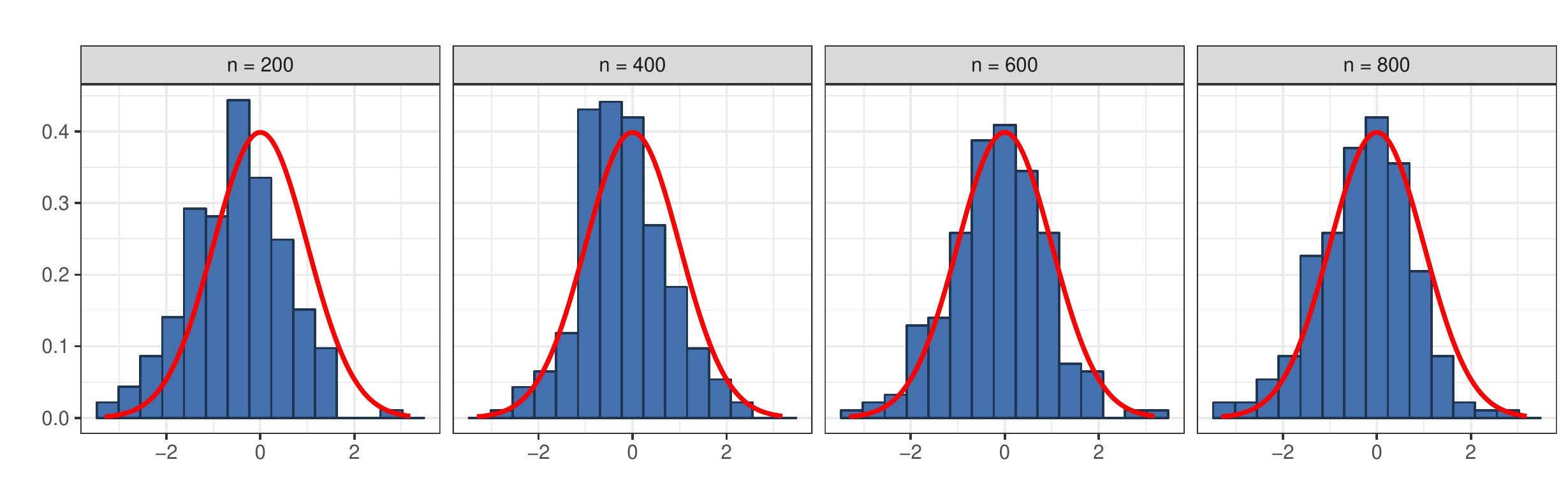}\vspace{-3mm}\\
			\includegraphics[width=\textwidth,height=0.25\textheight]{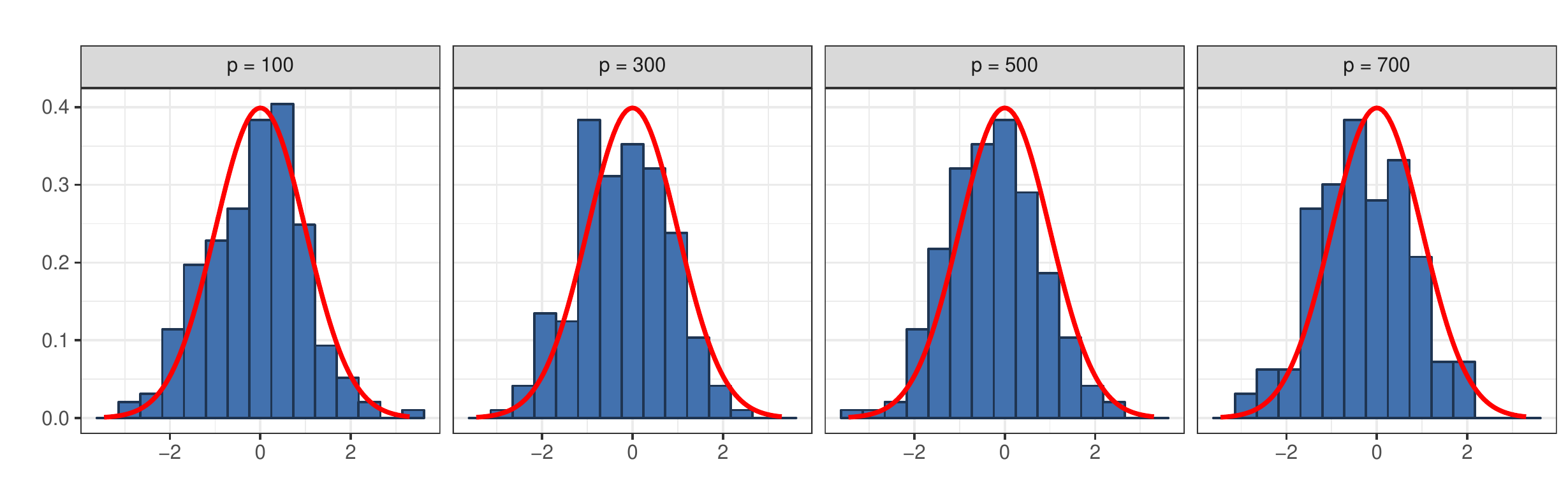}\vspace{-3mm}\\
			\includegraphics[width=\textwidth,height=0.25\textheight]{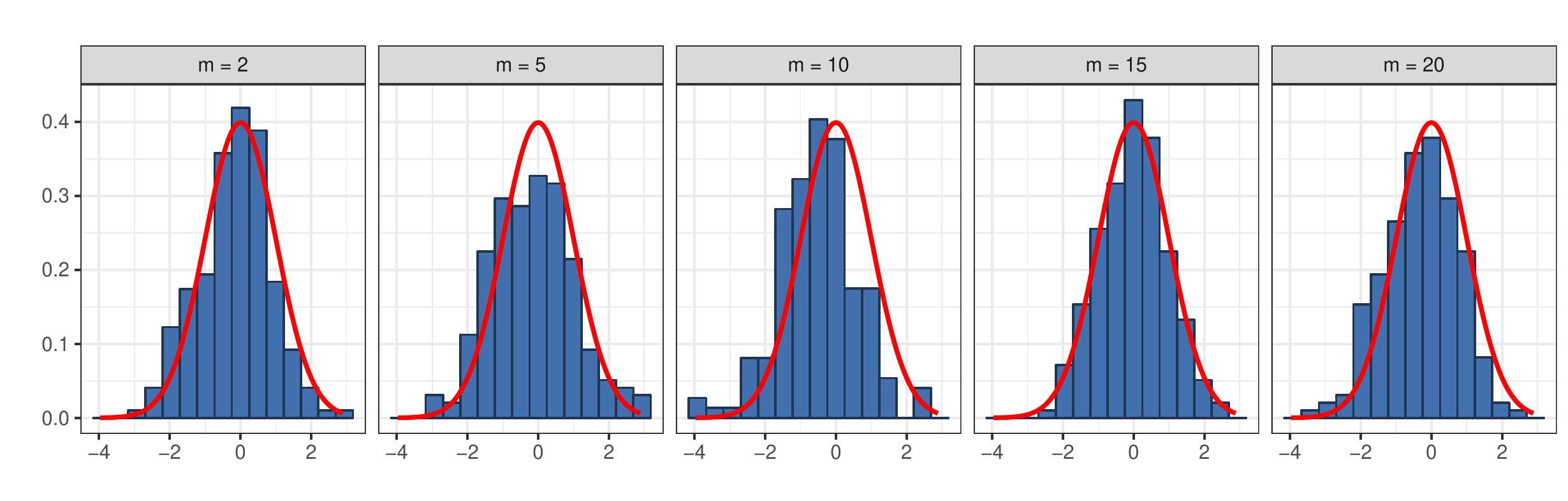}\vspace{-3mm}\\
			\includegraphics[width=\textwidth,height=0.25\textheight]{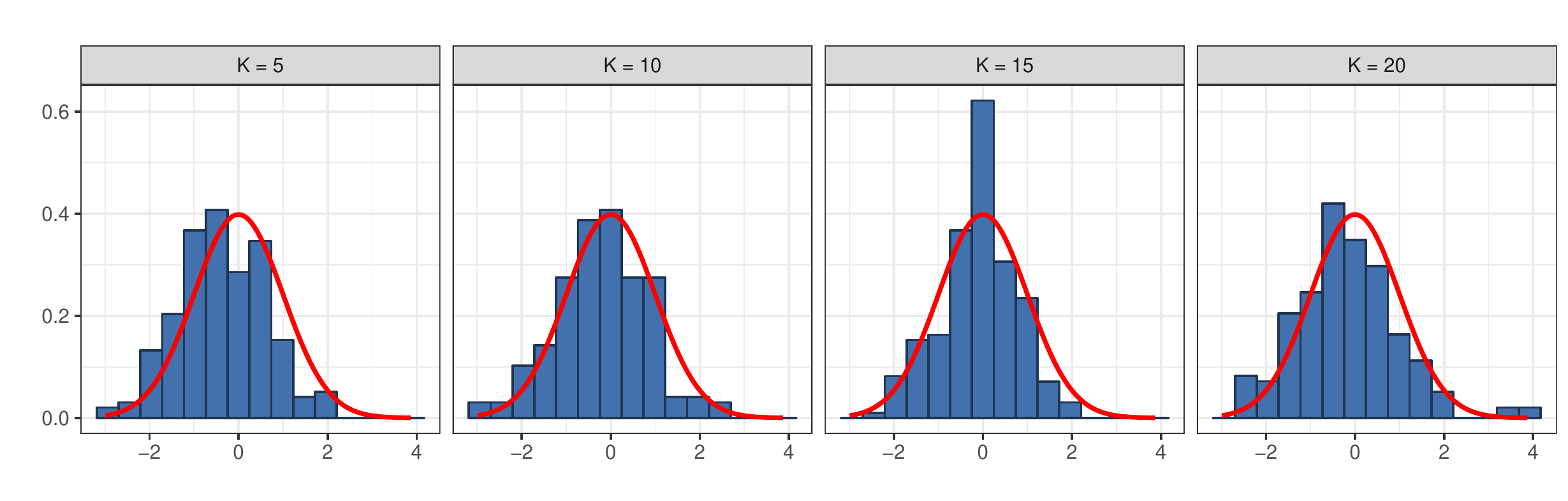}\vspace{-3mm}
		\end{tabular}
		\caption{Histograms of the standardized $\wh\beta_1$. The red curves are the density of $N(0,1)$.}
		\label{fig_hist_1}
	\end{figure}

	\section{Additional simulation results: estimation and inference of $\beta$ when the number of latent factors is not consistently estimated}\label{app_sim}
	
	We discuss the impact of selecting $\wh K$ with $\wh K\ne K$ on the estimation and inference of $\beta$, particularizing to the case when some columns of $A$ have very weak signals such that our estimate $\wh K$ is smaller than the true $K$. We  start by offering some intuition here.  Intuitively, when the submatrix $A_{\cdot S^c}$ contains many zero entries for some index set $S\subseteq [K]$, our procedure of estimating $K$ is likely to miss the latent factors in $S^c$, but may still recover  $Z_S$. 
	As a result, only $A_{\cdot S}$ can be well estimated. Consider the simple case $\C = \bI_K$ and recall that $\beta = (A^TA)^{-1}A^T\Sigma_{XY}$. Replacing $A$ by $A_{\cdot S}$ yields
	\[
	(A_{\cdot S}^T A_{\cdot S})^{-1}A_{\cdot S}^T \Sigma_{XY} = (A_{\cdot S}^T A_{\cdot S})^{-1}A_{\cdot S}^T A\beta = \beta_S + \underbrace{(A_{\cdot S}^T A_{\cdot S})^{-1}A_{\cdot S}^T A_{\cdot S^c}\beta_{S^c}}_{\Delta}.
	\]
	Therefore, when $A_{\cdot S^c}$ is small such that $\Delta$ is small, we still have 
	\[
	(A_{\cdot S}^T A_{\cdot S})^{-1}A_{\cdot S}^T \Sigma_{XY} ~ \approx ~ \beta_S.
	\]
	When $\C$ is not diagonal, the dependence among the latent factors also affects this approximation.\\ 
	
	To empirically investigate the impact of the estimating error of $\wh K$ on the subsequent inferential result, we conduct the following two simulation studies.\\ 
	
	(1) We take the same generating mechanism as described in Section 6 and consider $p = 400$, $n = 300$, $K = 10$ and $m = 5$. The matrix $\C$ is set to be $\sigma_Z^2\bI_K$ with $\sigma_Z^2 = 3$. After generating the matrix $A$, we draw an index set from $p$ i.i.d. Bernoulli$(1-\theta)$ with $\theta \in (0,1)$, and manually change the entries $A_{jK}$ into zero, for all $j$ in this index set.
	The parameter $\theta$ controls the overall sparsity (or signal) of the $K$th column of $A$. Small values of $\theta$ correspond to small signal $A_{\cdot K}$. 
	Note that $A$ no longer necessarily meets our model requirement (A1), since 
	the entries in the $K$th column of $A$ corresponding to pure variables are not necessarily  1, and  allowed to be set to zero.  We allowed for this misspecification since we empirically found that  as long as there exists at least two entries in the $K$th column of $A$ corresponding to pure variables (even though the rest entries of $A_{\cdot K}$ are zero), our algorithm continues to  consistently estimate $K$. 
	
	Our goal is to verify if the resulting estimator $\wh\beta$ estimates $\beta$ well expect for the $K$th coordinate and to examine the empirical coverage of the 95\% confidence intervals of our proposed estimator. 
	The same estimators mentioned in Section 6 of the main paper are considered. For the oracle estimator $\wh\beta_{\rm oracle}$, to illustrate the effect of using a subset of $Z$, we change it to 
	\[
	\wh\beta_{\rm oracle} = \left({\bf Z}_{\cdot \wh K}^T {\bf Z}_{\cdot \wh K}\right)^{-1}{\bf Z}_{\cdot \wh K}^T {\bf Y}
	\] 
	where $\wh K$ is estimated from Algorithm 1. 
	We vary $\theta \in \{0.1, 0.3, 0.5, 0.8\}$ and, for each setting, we calculate the averaged mean squared error of the subvector $[\wh\beta - \beta]_{S(1)}$, with $S(1) := \{1,2,\ldots, K-1\}$, for different estimators, as well as the averaged coverage of the 95\% confidence intervals for $\beta_1$, over 200 repetitions. These results together with the averaged estimated $K$ are reported in Table \ref{tab_new}. 
	
	{\bf Result:} When $\theta$ is small, that is, $A_{\cdot K}$ has very weak signal, our procedure of estimating $K$ is likely to miss the $K$th latent factor, leading to $\wh K = 9$, as expected.  The estimation of the first $K-1$ entries of $\beta$ 
	is only slightly affected. 
	Our proposed estimator $\wh\beta$ has the second best performance, after the oracle estimator. The coverage of the 95\% confidence intervals constructed from $\wh\beta$ is slightly under the nominal 95\% level. As $\theta$ increases ($\theta \ge 0.5$), that is, the signal of $A_{\cdot K}$ gets larger, we   consistently estimate $K$, and consequently, our proposed estimator performs increasingly better in terms of both the estimation error and the coverage of 95\% confidence intervals.\\

	(2) We now consider the case when there are multiple, sparse columns of $A$.
	We take the same setting $p = 400$, $n = 300$, $K = 10$, $m = 5$ and $\C = \sigma_Z^2\bI_K$ with $\sigma_Z^2 =3$. For each  $n_A \in \{1,2,\ldots, 5\}$, we threshold the last $n_A$ columns of $A$ according to the procedure described in the previous paragraph with $\theta = 0.1$. For each $n_A$, let $S(n_A) = \{1,2,\ldots,K-n_A\}$. The  averaged mean squared error of $[\wh\beta - \beta]_{S(n_A)}$, the averaged coverage of the 95\% confidence intervals for $\beta_1$ and the averaged estimated $K$ are reported in Table \ref{tab_new}. 
	
	{\bf Result:} As expected, the estimated number of factors is smaller than the true value, $10$, and is close to
	$K-n_A$, the number of columns of $A$ that have strong signals. As long as $n_A \le 4$, the estimation of $\beta_{S(n_A)}$ seems fairly good, and $\wh\beta$ still has the best performance (after $\wh \beta_{\rm oracle}$). The coverage of the 95\% confidence intervals constructed from $\wh\beta$ is   still close to, though slightly lower than, 95\%. As $n_A$ increases, more latent factors have weak signals. 
	This makes estimation of $K$ more difficult and clearly affects the estimation of $\beta$ and the coverage of confidence intervals.\\

	(3) We further investigate the effect of the correlation of $Z$ on the impact of inconsistently estimating $K$. We choose $\C$ as $[\C]_{ij} = \sigma_Z^2(-1)^{i+j}\rho^{|i-j|}$ for each $i,j\in [K]$ with $\sigma_Z^2 = 3$. We vary $\rho \in \{0.1, 0.2, 0.3,0.4, 0.5\}$ and fix $\theta = 0.1$, $n_A=1$, $p = 400$, $n = 300$, $K=10$ and $m=5$. The averaged mean squared error of the subvector $[\wh\beta - \beta]_{S(1)}$, the averaged coverage of the 95\% confidence intervals for $\beta_1$ and the averaged estimated $K$ are reported in Table \ref{tab_new}. 
	
	{\bf Result:} The estimation errors of all estimators (including $\wh\beta_{\rm oracle}$) increase as $\rho$ gets larger. $\wh\beta^{(I)}$ turns out to be more robust than $\wh\beta$ in the presence of correlated factors. One possible explanation is that $\wh\beta^{(I)}$ targets $\C^{-1}(A_{I\cdot}^TA_{I\cdot})^{-1}A_{I\cdot}^T\Cov(X_I, Y)$ which, due to the diagonal structure of $A_{I\cdot}^TA_{I\cdot}$, is less affected by a non-diagonal $\C$ compared to $\wh\beta$. The coverage of the 95\% confidence intervals constructed from both $\wh\beta$ and $\wh\beta^{(I)}$ is close to 95\%, and slightly decreases as $\rho$ gets larger.

	\begin{table}[ht]
		\centering
		{\renewcommand{\arraystretch}{1.2}
			\begin{tabular}{lccccc|cc|cc|c}
				\hline
				& $\wh\beta$ & $\wh\beta^{(I)}$ &  $\wh\beta_{\rm naive}$ &  $\wh\beta^{(A)}$   & $\wh\beta_{\rm oracle}$ & \multicolumn{2}{c|}{CIs of $\wh\beta$} &  \multicolumn{2}{c|}{CIs of $\wh\beta^{(I)}$} & $\wh K$\\\hline  
				\multicolumn{6}{l|}{
					{\bf	Vary $\theta$ with $n = 300$, $p = 400$, $K = 10$, $m=5$}} & coverage & length & coverage & length & \\
				$\theta = 0.1$ & 0.07 & 0.08 & 0.12 & 0.08 & 0.02 & 0.90  & 0.84 & 0.91 & 0.88 & 9.1\\ 
				$\theta = 0.3$  & 0.06 & 0.06 & 0.11 & 0.07 & 0.03 & 0.90  & 0.84 & 0.92 & 0.87 & 9.0 \\ 
				$\theta = 0.5$  & 0.03 & 0.04 & 0.10 & 0.06 & 0.00 & 0.93  & 0.68 & 0.93 & 0.72 & 10.0 \\ 
				$\theta = 0.8$  & 0.03 & 0.03 & 0.11 & 0.06 & 0.00 & 0.95  & 0.64 &  0.94 & 0.67 & 10.0 \\ \hline
				\multicolumn{6}{l|}{{\bf Vary $n_A$ with $n = 300$, $p = 400$, $K=10$, $m=5$}} & \multicolumn{2}{c|}{}  & \multicolumn{2}{c|}{} & \\
				$n_A = 1$ &	0.07 & 0.07 & 0.12 & 0.08 & 0.02 & 0.92 & 0.83 & 0.92 & 0.87 & 9.1 \\ 
				$n_A = 2$ & 0.10 & 0.11 & 0.13 & 0.11 & 0.05 & 0.91 & 1.05 & 0.93 & 1.09 & 8.1 \\ 
				$n_A = 3$ & 0.12 & 0.13 & 0.14 & 0.13 & 0.07 & 0.91  & 1.14 & 0.93 & 1.19 & 7.1 \\ 
				$n_A = 4$ & 0.13 & 0.15 & 0.13 & 0.13 & 0.07 &  0.93  & 1.16 & 0.93 & 1.21 &  6.1 \\ 
				$n_A = 5$ & 0.18 & 0.20 & 0.18 & 0.18 & 0.10 & 0.89  & 1.32 & 0.89 & 1.37 & 5.2 \\ 
				\hline
				\multicolumn{6}{l|}{{\bf Vary $\rho$ with $n = 300$, $p = 400$, $K=10$, $m=5$}} & \multicolumn{2}{c|}{}  & \multicolumn{2}{c|}{} & \\
				$\rho =0.1$ & 0.07 & 0.07 & 0.16 & 0.10 & 0.03 & 0.94 & 0.83 & 0.94 & 0.87 & 9.1 \\ 
				$\rho =0.2$ & 0.11 & 0.09 & 0.24 & 0.16 & 0.06 & 0.95  & 0.83 & 0.96 & 0.88 & 9.0 \\ 
				$\rho =0.3$ & 0.16 & 0.13 & 0.33 & 0.22 & 0.09 & 0.92  & 0.85 & 0.94 & 0.89 & 9.0 \\ 
				$\rho =0.4$ & 0.24 & 0.19 & 0.46 & 0.31 & 0.15 & 0.92  & 0.85 & 0.93 & 0.91 & 9.0 \\ 
				$\rho =0.5$ & 0.34 & 0.28 & 0.61 & 0.42 & 0.21 & 0.91  & 0.87 & 0.93 & 0.96 & 9.0 \\ 
				\hline
		\end{tabular}}\vspace{2mm}
		\caption{$\ell_2$ error  of various estimators, the coverage and the averaged length of the 95\% CIs of $\beta_1$ and the estimated number of latent factors.}
		\label{tab_new}
	\end{table}

	\end{document}